\theoremstyle{plain}
\newtheorem{theorem}{Theorem}[section]
\newtheorem{lemma}[theorem]{Lemma}
\newtheorem{proposition}[theorem]{Proposition}
\newtheorem{corollary}[theorem]{Corollary}
\theoremstyle{remark}
\newtheorem*{remark}{Remark}
\newcommand{\rTV}{\right \|_{TV}}
\newcommand{\lTV}{\left \|}
\newcommand{\rW}{\right \|_{W}}
\newcommand{\lW}{\left \|}
\def\sN {\mathcal{N}}
\def\sP {\mathcal{P}}
\def\sO {\mathcal{O}}
\def\One {\bm{1}}
\def \L {\mathcal{L}}
\def \sX {\mathcal{X}}
\def \sY {\mathcal{Y}}
\def \E {\mathbb{E}}
\def \R {\mathbb{R}}
\def\d {\text{d}}
\def\x {\textbf{x}}
\def\betag {\bm{\beta}}
\def\y {\textbf{y}}
\def \I {\mathbb{I}}
\def \Fisher {\mathcal{I}}
\def \H {\mathbb{H}}
\def \bT {\bm{T}}
\def \det {\text{det}}
\def \Var {\text{Var}}
\def \simiid {\overset{\text{iid}}{\sim}}
\title{Dimension-free mixing times of Gibbs samplers for Bayesian hierarchical models}
\author{Filippo Ascolani\footnote{Department of Decision Sciences and BIDSA, Bocconi University, \url{filippo.ascolani@phd.unibocconi.it}}\;  and Giacomo Zanella\footnote{Department of Decision Sciences and BIDSA, Bocconi University,\url{giacomo.zanella@unibocconi.it}
}}
\begin{document}
\maketitle

\abstract{Gibbs samplers are popular algorithms to approximate posterior distributions arising from Bayesian hierarchical models. 
Despite their popularity and good empirical performances, however, there are still relatively few quantitative results on their convergence properties, 
e.g. much less than for gradient-based sampling methods. 
In this work we analyse the behaviour of total variation mixing times of Gibbs samplers targeting hierarchical models using tools from Bayesian asymptotics.
We obtain dimension-free convergence results under random data-generating assumptions, for a broad class of two-level models with generic likelihood function.
Specific examples with Gaussian, binomial and categorical likelihoods are discussed.}

\section{Introduction}\label{section_introduction}

Gibbs samplers \cite{C92} are a family of Markov Chain Monte Carlo (MCMC) algorithms \cite{B11} commonly used in various scientific fields. In the context of Bayesian Statistics, they are routinely employed to draw samples from posterior distributions of unknown parameters conditional to the observed data \cite{G15,MFR23}. 
Like most MCMC methods, they are guaranteed to converge to the correct posterior distribution as the number of iterations tends to infinity under mild assumptions \citep{R94}. However, understanding how quickly this convergence occurs, for example by quantifying the so-called mixing time of the Markov chain generated by the algorithm, is in general a hard task. 
In this paper we address this question for Gibbs samplers targeting certain classes of high-dimensional Bayesian hierarchical models. 
Analysing convergence properties, such as mixing times, is the key technical step needed to rigorously quantify the computational cost of MCMC algorithms.

\subsection{Hierarchical models}
Our motivating example is given by classical Bayesian hierarchical models of the form
\begin{equation}\label{eq:one_level_nested_intro}
\begin{aligned}
Y_j
\mid \theta_j & \sim f(\cdot \mid \theta_j) & j = 1, \dots, J,\\
 \theta_j\mid \psi &\overset{\text{iid}}{\sim} p(\cdot \mid \psi)& j = 1, \dots, J,\\
 \psi &\sim p_0(\cdot)\,.&
\end{aligned}
\end{equation}
Here the observed dataset $Y_{1:J}=(Y_j)_{j=1,\dots,J}$ is divided into $J$ groups, with data for each group typically containing multiple observations, e.g.\ $Y_j=(Y_{j1},\dots,Y_{jm})$.
Each group features some local (i.e.\ group-specific) parameters $\theta_j\in\R^\ell$, while $\psi\in\R^D$ are global (hyper)-parameters. 
Above $f(\cdot \mid \theta)$, $p(\cdot \mid \psi)$ and $p_0(\cdot)$ denote some likelihood function, local prior and global prior, respectively. See Section \ref{hierarchical} for the assumptions we require on each of those.
Given model \eqref{eq:one_level_nested_intro}, posterior inferences are based on the conditional distribution of $\psi$ and $\bm{\theta} = (\theta_1, \dots, \theta_J)$ given $Y_{1:J}$, which we denote as $\L(\d\bm{\theta}, \d \psi |Y_{1:J})$. 
Hierarchical models such as \eqref{eq:one_level_nested_intro} are the workhorse of Bayesian Statistics and are commonly employed in many applied contexts (see e.g. \cite{GH07,G13} and references therein). 
In this paper, we are mostly interested in the high-dimensional regime where $J\to\infty$, so that both the number of datapoints and parameters, i.e. $n=Jm$ and $p=J\ell+D$ respectively, diverge. 

One iteration of a Gibbs sampler targeting $\L(\d\bm{\theta}, \d \psi|Y_{1:J})$ sequentially samples each parameter from its full-conditional distribution, i.e.\ it performs the updates $\theta_j\sim\L(\d\theta_j|Y_{1:J},\psi)$ for $j=1,\dots,J$ and $\psi\sim\L(\d\psi|Y_{1:J},\bm{\theta})$.
Algorithms based on conditional updates are well-suited to model \eqref{eq:one_level_nested_intro}, since they naturally exploit the underlying sparse dependence structure. 
In particular, the conditional independence of $\theta_1,\dots,\theta_J$ given $Y_{1:J}$ and $\psi$ implies that the sequence of updates from the low-dimensional distributions $\L(\d\theta_j|Y_{1:J},\psi)$ for $j=1,\dots,J$ is equivalent to an exact  joint update from the high-dimensional distribution $\L(\d\bm{\theta}|Y_{1:J},\psi)$. 
Also, since local parameters interact only with local data conditional on $\psi$, i.e.\ $\L(\d\theta_j|Y_{1:J},\psi)=\L(\d\theta_j|Y_j,\psi)$, one iteration of the Gibbs sampler can typically be implemented with a computational cost that scales linearly with $J$.
For the sake of comparisons, a similar cost is required by a single likelihood evaluation or a single posterior gradient evaluation for model \eqref{eq:one_level_nested_intro}. 
See also Remark \ref{computational_cost} in Section \ref{section:main_result} for related discussion. 

The key question to properly assess the effectiveness of Gibbs samplers targeting model \eqref{eq:one_level_nested_intro} is how fast the resulting Markov chain converges to its stationary distribution $\L(\d\bm{\theta}, \d \psi|Y_{1:J})$.
Interestingly, such chain often enjoys dimension-free convergence speed, 
meaning that  the number of iterations required to converge does not grow (or grows only logarithmically) with $J$.
 \begin{figure}
\centering
\includegraphics[width=.4\textwidth]{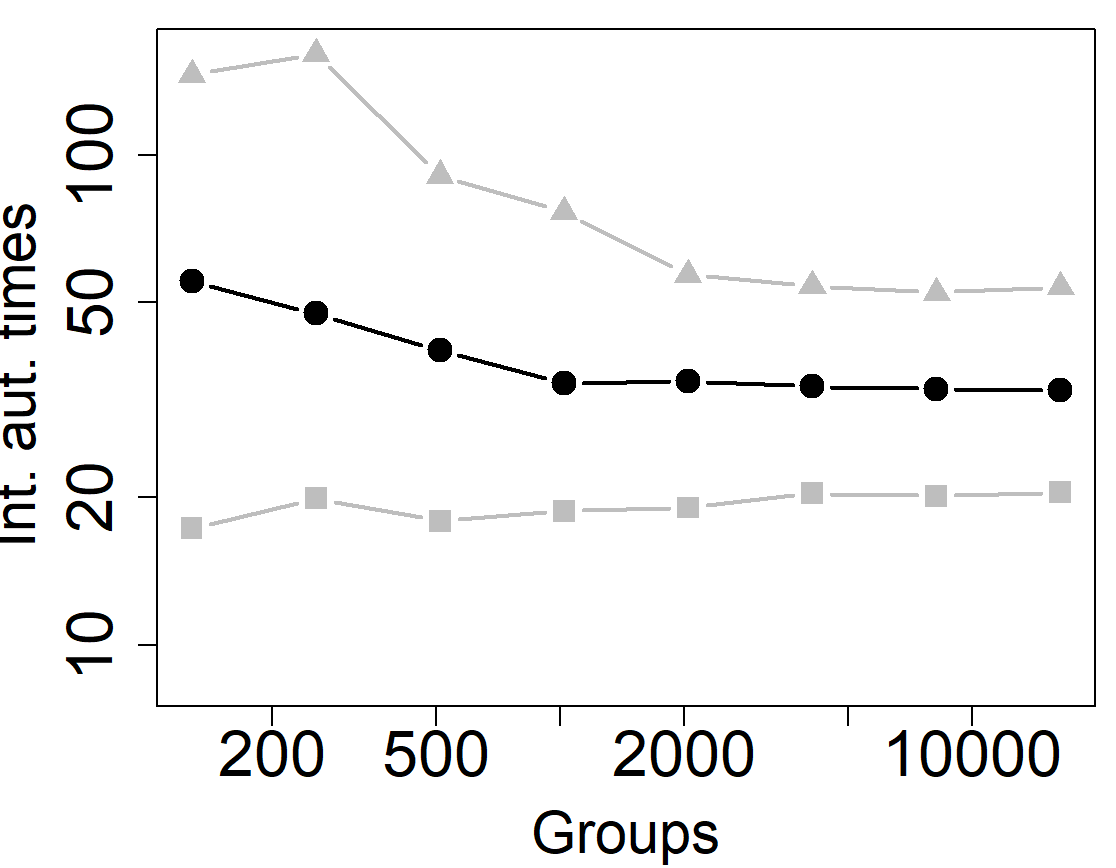} \quad
\includegraphics[width=.4\textwidth]{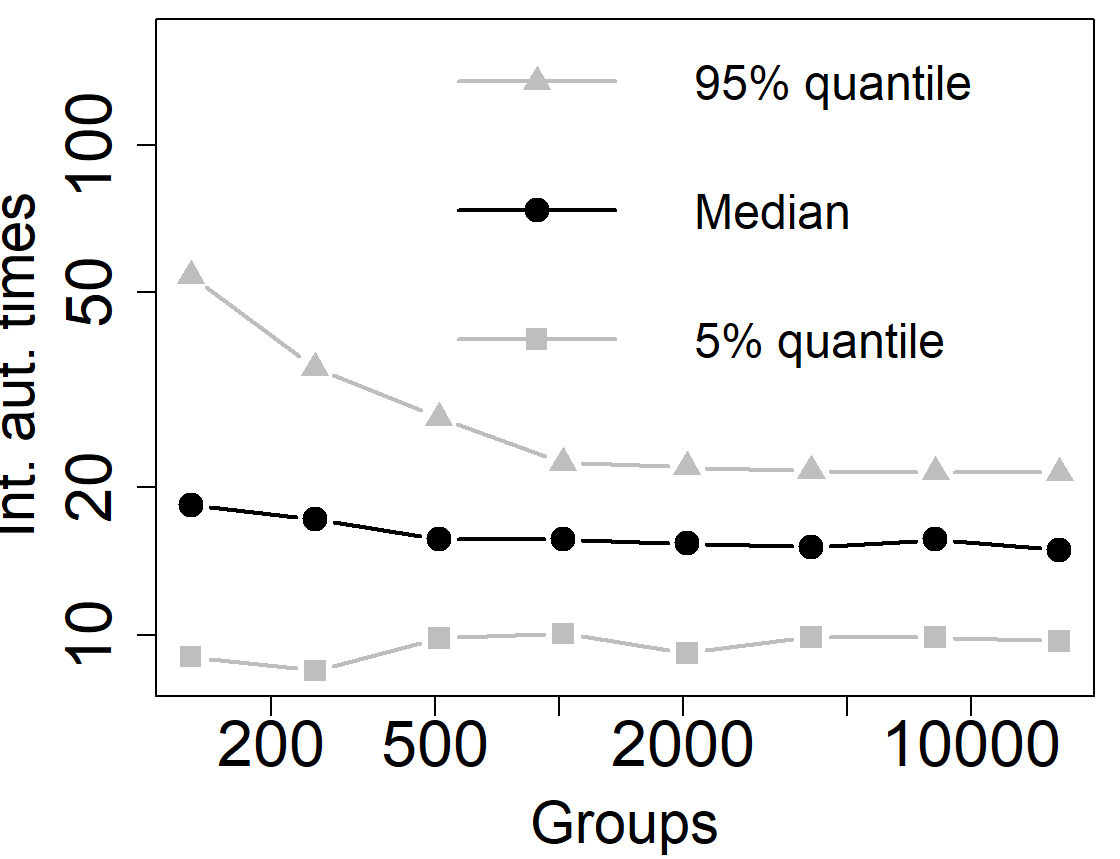}
 \caption{\small{
 Integrated autocorrelation times (on log-scale) of Gibbs samplers targeting the posterior distribution of model \eqref{eq:one_level_nested_intro} with specification \eqref{specification_figure1}.
Quantiles refer to repetitions over datasets randomly generated according to the model with true parameters $\mu^* = \tau^* = 1$. Left: $m = 3$. Right: $m = 5$. 
See Section \ref{binary_data} for more details.
  }}
 \label{fig:binom_simulations}
\end{figure}
Figure \ref{fig:binom_simulations} illustrates numerically this behaviour on a hierarchical logistic model, where the likelihood and prior in \eqref{eq:one_level_nested_intro} are specified as
\begin{equation}\label{specification_figure1}
f(y \mid \theta) = \binom{m}{y}\frac{e^{y\theta}}{(1+e^\theta)^m}, \quad p(\theta \mid \psi) = N(\theta \mid \mu, \tau^{-1}), \quad \psi = (\mu, \tau),
\end{equation} 
with $y \in \{0,\dots, m\}$ and $m$ being a positive integer.
The prior for $\psi = (\mu, \tau)$ is set to $\mu \mid \tau \sim N\left(0, 10^{3}/\tau\right)$ and $\tau \sim \text{Gamma}(0.1, 0.1)$. 
Full details on the simulation set-up of Figure \ref{fig:binom_simulations} are described in Section \ref{binary_data}. 
The results suggest that the number of iterations required by the Gibbs sampler to draw each sample from $\L(\d\bm{\theta}, \d \psi|Y_{1:J})$ remains bounded as 
$J$ grows and asymptotes to a finite value as $J\to\infty$. 
Combined with cost per iteration, this implies a computational complexity that grows linearly with $J$. 
Note that this complexity is smaller than the one of popular gradient-based MCMC methods when applied to these models (see Section \ref{sec:literature} for more details), supporting the idea that Gibbs samplers can achieve state-of-the-art performances for hierarchical models with sparse dependence structures.

In Section \ref{hierarchical} we provide rigorous support to the above empirical evidences. In particular, we study the asymptotic behavior of mixing times of Gibbs samplers targeting model \eqref{eq:one_level_nested_intro}. 
There we prove that mixing times remain bounded as $J\to\infty$ under mild assumptions on the likelihood $f$ and the global prior $p_0$. 
We instead require stronger assumptions on the local priors $p(\cdot \mid \psi)$, which we assume to be in the exponential family. 
Our results (see e.g.\ Theorem \ref{theorem_one_level_nested}) are average-case ones and hold with high probability with respect to the law of the data-generating process. 
To do so we assume the observed data $Y_{1:J}$ to be randomly generated. This allows to use tools of Bayesian asymptotics, such as Bernstein-von Mises type statements (see e.g.\ Chapter $10$ of \cite{V00}), to characterize the asymptotic posterior behaviour as $J\to\infty$ and then extract information about the limiting behaviour of the associated sequence of MCMC algorithms.

\subsection{Related literature}\label{sec:literature}
The literature on performances of MCMC methods is very broad.
The most well-studied classes of algorithm are probably gradient-based ones, such as Langevin \citep{R96} and Hamiltonian \citep{N11} Monte Carlo, see e.g.\ \cite{D17, DM17,D19} and related literature.
Available results suggest that the number of iterations (or target gradient evaluations) required by those algorithm to converge to stationarity increases with dimensionality, e.g.\ growing as $\mathcal{O}(J^{\alpha})$ with the dimensionality $J$, for some $\alpha>0$ that depends on the setup and type of algorithm \citep{R98,B13,WSC22}.
In the context of hierarchical models, given that each target gradient evaluation has a linear cost in $J$, this leads to a computational cost to sample from $\L(\d\bm{\theta}, \d \psi |Y_{1:J})$ that scales super-linearly with $J$, e.g.\ as $\mathcal{O}(J^{1+\alpha})$ with $\alpha>0$.
Comparing these results to the one we develop here for Gibbs samplers suggests that, while being state-of-the-art black-box schemes to sample from generic high-dimensional distributions with appropriate regularity conditions (e.g.\ log-concavity), default gradient-based MCMC schemes can be suboptimal for high-dimensional hierarchical models. See also \cite{SPZ23} for related numerical evidences.

Compared to gradient-based MCMC, results for Gibbs-type schemes are less abundant and more model-dependent. 
Notable recent examples include \cite{Y17,JH21,Q22}, which provide convergence bounds for hierarchical models, similar to \eqref{eq:one_level_nested_intro}, with Gaussian and Poisson likelihoods.
Another recent result is given by \cite{Q19}, which provides dimension-free convergence bounds for Gibbs samplers for high-dimensional probit regression models under appropriate regimes. 
Providing sharp non-asymptotic analyses like the ones above requires proof techniques, such as drift-and-minorization techniques \citep{R95} and random mappings \cite{Q19}, that are usually likelihood-specific and potentially hard to construct. 
For example, they may require to devise and study a suitable Lyapunov function that depends on the specific choices of both likelihood and priors in \eqref{eq:one_level_nested_intro} (see e.g.\ formulae (6) and (33) in \cite{JH21} and \cite{Y17}, respectively).
On the other hand, these approaches provide non-asymptotic bounds that apply to fixed sample size and dimensionality, thus being complimentary to the high-dimensional asymptotic analysis we develop here.

Interestingly, there are relatively few papers combining the tools of Bayesian asymptotics and MCMC theory in rigorous ways. The work in \cite{BC09} uses Bernstein-von Mises Theorem to provide polynomial bounds on the convergence of random walk Metropolis-Hastings schemes. After that, very recent papers use similar techniques to provide complexity analysis of MCMC schemes, see e.g.\ \cite{NW20,N22, T22} dealing with gradient-based methods, the first in the context of inverse problems. A brief discussion about the use of asymptotic posterior characterisations to study the convergence properties of Gibbs samplers is given in \cite{RS01}. 
A more in-depth use of Bayesian asymptotics to study data augmentation procedures is given in \cite{K14}, which also considers hierarchical models.  See Remark \ref{connection_localConsistency} in Section \ref{hierarchical} for more details on the results in \cite{K14}.
Finally, an interesting exception is given by Bayesian variable selection models, where multiple works have exploited the asymptotic behaviour of the posterior distribution to characterize the computational performances of Bayesian methods \cite{Y16,A21,Z21}.

\subsection{Sketch of the main arguments and structure of the paper}\label{sec:sketch}
The argument we employ to study Gibbs samplers targeting $\L(\d\bm{\theta}, \d \psi \mid Y_{1:J})$ can be decomposed in three main parts. First, if $p(\cdot \mid \psi)$ belongs to the exponential family, there exists a set of sufficient statistics $\bT = \bT(\bm{\theta})$, whose dimensionality does not depend on $J$, such that $\L\left(\d\psi \mid \bm{\theta}, Y_{1:J}\right)=\L\left(\d\psi \mid \bm{T}(\bm{\theta}), Y_{1:J}\right)$. 
Lemma \ref{sufficient_lemma} in Section \ref{sec:dim_red} shows that, as a result, the Gibbs sampler on $\L\left(\d\bm{\theta}, \d \psi\mid Y_{1:J}\right)$ has the same mixing times as the one on $\L\left(\d\bm{T}, \d \psi\mid Y_{1:J}\right)$. This allows to focus on the latter distribution which, unlike the former, is intractable but fixed dimensional. Note that this dimensionality reduction does not require the likelihood $f$ to admit sufficient statistics (see Remark \ref{rmk:lik_non_exp}) and is a peculiar property of Gibbs samplers, since it exploits the presence of exact updates. 
The second step consists in studying the asymptotic behaviour of $\L\left(\d\bm{T}, \d \psi \mid Y_{1:J}\right)$ as $J$ increases. In particular, Proposition \ref{limiting_sigma} shows that a suitable rescaling of $\left(\bT, \psi \right)$ converges to a multivariate Gaussian distribution in total variation distance. The proof combines a classical Bernstein-von Mises Theorem for $\psi$ (Lemma \ref{asymptotic_distribution_psi}) with a less standard Central Limit Theorem for $\bm{T}$ conditional on $\psi$ (Lemma \ref{asymptotic_distribution_T}). More details can be found in Section \ref{section_sketch_main_result}.
The final and key point is then to connect the convergence of the target distributions, in this case $\{\L\left(\d\bm{T}, \d \psi \mid Y_{1:J}\right)\}_{J\geq1}$, to the convergence of the associated Gibbs sampler operators. Theorem \ref{thm:mixing_limit} proves that the limiting behaviour of a sequence of Gibbs samplers is equivalent to the behaviour of the Gibbs sampler on the limiting distribution: this is shown in total variation distance and under warm start assumption. The fundamental link is given by Proposition \ref{prop:convergence_operators}, which provides an upper bound on the distance between Gibbs sampler operators in terms of the one between the target distributions. Since those results are of independent interest and are not specific to hierarchical models, we start by developing those in a general setup in Section \ref{section:gibbs}. 
Then, Section \ref{sec:parametric_model} recalls the Bernstein-von Mises Theorem and illustrates the results of Section \ref{section:gibbs} to the fixed-dimensional setting. 
Section \ref{hierarchical} develops the main results of the paper dealing with general hierarchical models (see e.g.\ Theorem \ref{theorem_one_level_nested}) and Section \ref{examples} verifies the general conditions for some specific likelihood families, e.g. Gaussian, binomial and categorical, together with providing numerical simulations and extension to different graphical model structures. 
Since a warm start initialization for the sampler is assumed throughout, the availability of feasible starts is discussed in Section \ref{Feasible_start}. 
Finally, Section \ref{sec:conclusions} discusses extensions and future work.

\section{Gibbs sampler and asymptotics}\label{section:gibbs}
In this section, after recalling basic definitions about Gibbs kernels and mixing times, we connect the convergence of a sequence of target distributions to the convergence of the associated Gibbs kernels. This leads to Theorem \ref{thm:mixing_limit}, which characterizes the limiting behaviour of the Gibbs samplers mixing times. 
Throughout this section, the target distributions are assumed to have fixed dimensionality.

\subsection{Setup and notation}
Let $(\pi_n)_{n \geq 1}=\left(\pi_n(\cdot\mid Y^{(n)})\right)_{n \geq 1}$ 
  be a sequence of probability distributions on a common product space $\sX = \sX_1 \times \dots \times \sX_K$, where each $\pi_n$ is allowed to depend on some observed data $Y^{(n)}\in\sY^{{(n)}}$.
In our applications, $\pi_n(\cdot \mid Y^{(n)})$ represents the posterior distribution 
of some unknown parameter $\x\in\sX$ conditioned on the data $Y^{(n)}$. 
For the sake of brevity, we will often omit the explicit dependence on $Y^{(n)}$.

Let $P_n$ be the Markov transition kernel of the deterministic-scan Gibbs sampler targeting $\pi_n$, defined as the product of $K$ kernels
\begin{equation}\label{initialGibbs}
P_n = P_{n,1} \cdot\cdot\cdot P_{n,K}\,.
\end{equation}
For each $i\in\{1,\dots,K\}$, $P_{n,i}$ is the transition kernel on $\sX$ that updates the $i$-th coordinate drawing it from its conditional distribution $\pi_n(\d x_i|\x^{(-i)})$, where $\x^{(-i)}=(x_j)_{j\neq i}$, while leaving the other components unchanged. Equivalently
\[
P_{n,i}\left(\x, S_{\x, i, A}\right) = \int_A \pi_n\left(\d y_i \mid \x^{(-i)}\right), \quad A \subset \sX_i, \quad i = 1, \dots, n,
\]
with
$
S_{\x, i, A} = \left\{\y \in \sX \, : \, y_j = x_j \, \forall  \, j \neq i \text{ and } y_i \in A \right\}
$. 
It is easy to show that $P_{n,i}$ is reversible with respect to $\pi_n$ for every $i$, 
so that $\pi_n$ is the invariant distribution of $P_n$ \citep{R04,H11,CLM23}.

Given $\epsilon\in (0,1)$, define the $\epsilon$-total variation mixing time of $P_n$ with starting distribution $\mu_n\in \mathcal{P}(\sX)$, where $\mathcal{P}(\sX)$ denotes the set of probability distribution on $\sX$, as 
\begin{align}
t^{(n)}_{mix}(\epsilon,\mu_n)&=\inf\left\{t\geq 0\,:\, \lTV \mu_nP_n^t-\pi_n \rTV<\epsilon \right\},
\end{align}
where $P^t$ denotes the $t$-th power of $P$, $\mu_n P^t_n(A) = \int_{\sX} P^t_n(\x,A)\mu_n(\d \x)$ for any $A\subseteq \sX$ and $\|\cdot\|_{TV}$ denotes the total variation norm.
By definition, mixing times quantify the number of Markov chain's iterations required to obtain a sample from the target distribution $\pi_n$ up to error $\epsilon$.
We will focus on worst-case mixing times with respect to $M$-warm starts.
The set of $M$-warm starts relative to a distribution $\pi$ is defined as
\begin{align}\label{N_class}
\sN\left(\pi, M \right)=&
\left\{\mu\in\mathcal{P}(\sX)\,:\,\mu(A)\leq M\pi(A) \hbox{ for all }A\subseteq \sX\right\}, &M \geq 1,\,\pi \in \mathcal{P}(\sX) \,,
\end{align}
and the associated worst-case mixing times for $P_n$ targeting $\pi_n$ are
\begin{align}\label{eq:t_mix1}
t^{(n)}_{mix}(\epsilon, M)&=\sup_{\mu_n\in \sN \left(\pi_n, M \right)}
t^{(n)}_{mix}(\epsilon,\mu_n)\,.
\end{align}
\begin{remark}
While being common in the literature, see e.g. \cite{D17, D19, T22} for gradient-based methods, the warm start assumption can be quite stringent and potentially unrealistic. In particular, assuming that the algorithm can be initialised by sampling the starting configuration from a warm start with relatively small $M$ (e.g.\ one that does not grow exponentially fast with dimensionality) may be unrealistic. In Section \ref{Feasible_start} we show that in the specific case of hierarchical models as in \eqref{eq:one_level_nested_intro} a feasible start, i.e. a starting distribution which can be implemented in practice and allows to control the value of $M$, is available under some assumptions.
\end{remark}

\subsection{Assumptions on the sequence of target distributions}
We consider settings where a rescaled version of the sequence $(\pi_n)_{n \geq 1}$ converges to a well defined limiting distribution as $n\to\infty$. This is often the case in a Bayesian context where some version of the Bernstein von-Mises theorem holds (see e.g.\ Theorem \ref{BvM} below).
The convergence of $(\pi_n)_{n \geq 1}$ occurs with high probability assuming the data $Y^{(n)} $ is randomly generated from some distribution. 
In particular, we assume for the rest of this section that $Y^{(n)}$ is random with distribution $Q^{(n)} \in \mathcal{P}\left(\sY^{{(n)}} \right)$.
The following assumption specifies the convergence we require for $(\pi_n)_{n \geq 1}$: 
\begin{enumerate}
\item [(A1)] There exists $\tilde{\pi} \in \mathcal{P}(\sX)$ and a sequence of transformations $\phi_n  \, : \, \sX \, \to \, \sX$ that act \emph{coordinate-wise}, i.e. where
\begin{align}\label{trans_phi}
\phi_n(\x) &= \left( \phi_{n,1}(x_1), \dots, \phi_{n, K}(x_K)\right)\,,&\x\in\sX
\end{align}
with $\phi_{n, j} \, : \, \sX_j \, \to \, \sX_j$  injective and measurable,
such that
\begin{align}\label{eq:conv_to_pi_inf}
\lTV \tilde{\pi}_n - \tilde{\pi}\rTV &\to 0 &\hbox{ as }n \to \infty\,,
\end{align}
in $Q^{(n)}$-probability, i.e. such that $\lim_{n \to \infty}Q^{(n)}(\lTV \tilde{\pi}_n - \tilde{\pi}\rTV>\epsilon)=0$ for every $\epsilon \in (0,1)$, where $\tilde{\pi}_n = \pi_n \circ \phi_n^{-1}$ is the law of $\tilde{\x} = \phi_n(\x)$ under $\x\sim\pi_n$. 
\end{enumerate}
\begin{remark}
The necessity of rescaling $\x$ by some transformation $\phi_n$ in \eqref{trans_phi} comes from the typical behaviour of posterior distributions in Bayesian models. Indeed, without rescaling, $\pi_n$ often converges to a random variable which is degenerate to a Dirac delta at a fixed value (e.g. the underlying data-generating parameter). Thus, in order to have a non-trivial limit and total variation convergence, which is essential for our purposes, a suitable rescaling is needed. In our context the specific form of this transformation is dictated by the theory of Bayesian asymptotics, see e.g. Theorem \ref{BvM} below. 
Moreover, we assume $\phi_n$ to act coordinate-wise because this class of transformations leaves 
Gibbs samplers invariant (see e.g.\ Lemma \ref{equivalent_statement} below), while general one-to-one transformations can alter the Gibbs sampler dynamics and change its convergence speed \citep{papaspiliopoulos2007general}.
\end{remark}
\begin{remark}
The results we develop below could be extended to more general versions of assumption (A1), including ones where the co-domain of $\phi_n$ is not equal to the domain, i.e.\ $\phi_n  \, : \, \sX \, \to \, \mathcal{Z}$ for some $\mathcal{Z}$, and where the limiting distribution $\tilde{\pi}$ is random, i.e.\ allowed to depend on the sequence $( Y^{(n)} )_n$. Since $(A1)$ is enough for our purposes and motivating applications, we do not consider such extensions here to keep notation simple.
\end{remark}

Let $\tilde{P}$ and $\tilde{P}_n$ be the kernels of the Gibbs samplers targeting $\tilde{\pi}$ and $\tilde{\pi}_n$, respectively.
The following lemma shows that studying total variation convergence from $M$-warm starts for the sequence of kernels $(P_n)_{n \geq 1}$ is equivalent to doing it for the sequence $(\tilde{P}_n)_{n \geq 1}$ .
The proof, which can be found in Appendix $C$, relies on the coordinate-wise and bijective requirements of (A1).
\begin{lemma}\label{equivalent_statement}
Under Assumption (A1) we have
\[
\sup_{\mu_n\in \sN \left(\pi_n, M \right)} \, \lTV \mu_n P_n^t-\pi_n\rTV  = \sup_{\tilde{\mu}_n\in \sN \left(\tilde{\pi}_n, M \right)} \, \lTV \tilde{\mu}_n \tilde{P}_n^t-\tilde{\pi}_n\rTV.
\]
\end{lemma}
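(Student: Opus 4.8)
The plan is to exploit two facts: the total variation norm is invariant under measurable bijections, and the deterministic-scan Gibbs kernel is \emph{equivariant} under coordinate-wise reparametrisations. Write $\phi_n$ for the map in (A1) and, abusing notation, also write $\phi_n$ for the induced pushforward on measures, $\phi_n\mu = \mu\circ\phi_n^{-1}$, so that $\tilde{\pi}_n=\phi_n\pi_n$ by definition. Since each $\phi_{n,j}$ is injective and measurable (on standard Borel spaces a Borel isomorphism onto its image, by the Lusin--Souslin theorem), $\phi_n$ is a bijection between the measurable subsets of $\sX$ and the measurable subsets of its image, which makes pushforwards and the set correspondence $A\leftrightarrow\phi_n(A)$ well defined; I would state this measure-theoretic setup once and then not belabour it.

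The first and main step is the equivariance of the component kernels: for each $i\in\{1,\dots,K\}$ I would show $\tilde{P}_{n,i}(\phi_n(\x),\phi_n(S))=P_{n,i}(\x,S)$ for all $\x\in\sX$ and measurable $S$, equivalently $\phi_n(\mu P_{n,i})=(\phi_n\mu)\tilde{P}_{n,i}$ on measures. The crucial point, and the only place the coordinate-wise hypothesis is used, is that $\sigma(\tilde{\x}^{(-i)})=\sigma(\x^{(-i)})$ when $\tilde{\x}=\phi_n(\x)$, because $\phi_n$ does not mix coordinates and each $\phi_{n,j}$ is injective; hence the disintegration of $\tilde{\pi}_n$ along its $i$-th coordinate is the $\phi_{n,i}$-pushforward of that of $\pi_n$, i.e. $\tilde{\pi}_n(\d\tilde{x}_i\mid\tilde{\x}^{(-i)})=\phi_{n,i}\bigl(\pi_n(\d x_i\mid\x^{(-i)})\bigr)$ with $\x^{(-i)}$ read off coordinate-wise from $\tilde{\x}^{(-i)}$. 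Substituting this into the integral defining $P_{n,i}$ and $\tilde{P}_{n,i}$, together with the fact that $\phi_n$ maps the slice $S_{\x,i,A}$ bijectively onto $S_{\phi_n(\x),i,\phi_{n,i}(A)}$, gives the identity. Composing over $i=1,\dots,K$ yields $\phi_n(\mu P_n)=(\phi_n\mu)\tilde{P}_n$, and iterating gives $\phi_n(\mu_n P_n^t)=(\phi_n\mu_n)\tilde{P}_n^t$ for every $t\geq 0$ and every $\mu_n\in\mathcal{P}(\sX)$.

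With equivariance in hand the rest is bookkeeping. Using invariance of TV under the bijection $\phi_n$, for any $\mu_n$ I get $\lTV(\phi_n\mu_n)\tilde{P}_n^t-\tilde{\pi}_n\rTV=\lTV\phi_n(\mu_n P_n^t)-\phi_n\pi_n\rTV=\lTV\mu_n P_n^t-\pi_n\rTV$, the last equality because TV is a supremum over measurable sets and $A\mapsto\phi_n(A)$ is a bijection on the relevant $\sigma$-algebra. Finally I would check that $\mu_n\mapsto\phi_n\mu_n$ is a bijection from $\sN(\pi_n,M)$ onto $\sN(\tilde{\pi}_n,M)$: the inequality $\mu_n(A)\leq M\pi_n(A)$ for all measurable $A$ is equivalent, via $B=\phi_n(A)$ and $(\phi_n\mu_n)(B)=\mu_n(\phi_n^{-1}(B))$, $\tilde{\pi}_n(B)=\pi_n(\phi_n^{-1}(B))$, to $(\phi_n\mu_n)(B)\leq M\tilde{\pi}_n(B)$ for all measurable $B$; and since any $\tilde{\mu}_n\in\sN(\tilde{\pi}_n,M)$ is supported on the image of $\phi_n$ (its complement is $\tilde{\pi}_n$-null, hence $\tilde{\mu}_n$-null), it has a preimage $\phi_n^{-1}\tilde{\mu}_n\in\sN(\pi_n,M)$. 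Taking the supremum over $\mu_n\in\sN(\pi_n,M)$ on both sides of the displayed TV identity then yields exactly the claim. The expected bottleneck is the equivariance/disintegration argument and making the "conditioning on $\tilde{\x}^{(-i)}$ equals conditioning on $\x^{(-i)}$" claim rigorous; everything downstream is routine.
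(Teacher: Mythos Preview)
Your proposal is correct and follows essentially the same approach as the paper: both establish that the pushforward of the Gibbs kernel $P_n$ under the coordinate-wise bijection $\phi_n$ coincides with the Gibbs kernel $\tilde P_n$ targeting $\tilde\pi_n$, then combine this equivariance with TV-invariance under bijections and the obvious correspondence between the warm-start classes $\sN(\pi_n,M)$ and $\sN(\tilde\pi_n,M)$. The paper compresses the TV-equality step into a citation (Corollary~2 of \cite{R01}) and writes the kernel equivariance as a direct change-of-variables in the defining integral, whereas you spell out the disintegration/$\sigma$-algebra argument and the support consideration for non-surjective $\phi_n$ explicitly; these are stylistic differences, not substantive ones.
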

 
 \subsection{Convergence of Gibbs samplers operators}
Since by $(A1)$ the stationary distribution of $\tilde{P}_n$, the Gibbs samplers targeting $\tilde{\pi}_n$, converges to the one of $\tilde{P}$, one may be tempted to translate such convergence at the level of the kernels, e.g. $\| \tilde{P}_n(\x, \cdot) - \tilde{P}(\x, \cdot) \|_{TV} \to 0$ for ($\tilde{\pi}$-almost) every $\x \in \sX$. However this is not only false for generic Markov operators, but even in the special class of Gibbs sampler operators: one can have $\lTV \tilde{\pi}_n - \tilde{\pi}\rTV \to 0$ as $n \to \infty$, while  $\| \tilde{P}_n(\x, \cdot) - \tilde{P}(\x, \cdot) \|_{TV}\nrightarrow 0$ for any $\x\in\sX$, see e.g. Example $A.1$ in Appendix A. 
The reason is that convergence of the joint distribution $\tilde{\pi}_n$ in total variation distance does not imply convergence of the associated conditional distributions, that are the building blocks of the Gibbs sampler operator. 
However, it turns out that a control on the total variation distance between two target distributions is in general sufficient to control the distance between the corresponding Gibbs sampler operators applied to warm starts. 
The following Proposition makes the connection precise. Interestingly, no assumptions on the target distribution and Gibbs samplers are required.
\begin{proposition}\label{prop:convergence_operators}
Let $P_1$ and $P_2$ be the transition kernels of Gibbs samplers targeting $\pi_1 \in \mathcal{P}(\sX)$ and $\pi_2 \in \mathcal{P}(\sX)$, respectively. 
Then we have
\begin{equation}\label{eq:tv_gibbs_kernel_bound}
\lTV \mu P_1 - \mu P_2  \rTV \leq 2MK\lTV \pi_1- \pi_2  \rTV,
\end{equation}
for every $\mu \in \sN (\pi_1, M) \cup \sN (\pi_2, M)$ and $M\geq 1$.
\end{proposition}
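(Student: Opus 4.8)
The plan is to reduce the claimed bound, by a telescoping argument, to a single-coordinate statement, and then to an elementary inequality comparing the conditional distributions of $\pi_1$ and $\pi_2$ with their joints. Write $P_i=P_{i,1}\cdots P_{i,K}$ and assume first $\mu\in\sN(\pi_1,M)$; the case $\mu\in\sN(\pi_2,M)$ is symmetric. Telescoping the two ordered products gives
\[
P_1-P_2=\sum_{j=1}^K P_{1,1}\cdots P_{1,j-1}\,(P_{1,j}-P_{2,j})\,P_{2,j+1}\cdots P_{2,K},
\]
so, with $\nu_j:=\mu P_{1,1}\cdots P_{1,j-1}$ (and $\nu_1:=\mu$), $\mu P_1-\mu P_2=\sum_{j=1}^K \nu_j(P_{1,j}-P_{2,j})\,P_{2,j+1}\cdots P_{2,K}$. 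Each $\nu_j(P_{1,j}-P_{2,j})$ is a signed measure of total mass zero, and right-multiplying such a measure by a Markov kernel does not increase its total variation norm; hence by the triangle inequality it suffices to prove $\lTV\nu_j(P_{1,j}-P_{2,j})\rTV\le 2M\lTV\pi_1-\pi_2\rTV$ for each $j$, which upon summing the $K$ terms yields \eqref{eq:tv_gibbs_kernel_bound}. Finally, each $P_{1,i}$ is $\pi_1$-reversible, hence $\pi_1$-invariant, so the inequality $\mu\le M\pi_1$ between measures is preserved under right-multiplication by $P_{1,i}$; inductively $\nu_j\in\sN(\pi_1,M)$.

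Next I would compute $\lTV\nu(P_{1,j}-P_{2,j})\rTV$ for an arbitrary $\nu\in\sN(\pi_1,M)$ via the marginal--conditional decomposition of the update. Writing $\x^{(-j)}$ for the coordinates other than $j$ and $\nu^{(-j)}$ for the corresponding marginal of $\nu$, the kernel $P_{i,j}$ fixes $\x^{(-j)}$ and resamples $x_j$ from $\pi_i(\cdot\mid\x^{(-j)})$, so $\nu P_{i,j}$ has $\x^{(-j)}$-marginal $\nu^{(-j)}$ and conditional $\pi_i(\cdot\mid\x^{(-j)})$; comparing the two updated laws section by section,
\[
\lTV\nu P_{1,j}-\nu P_{2,j}\rTV=\int\lTV\pi_1(\cdot\mid\x^{(-j)})-\pi_2(\cdot\mid\x^{(-j)})\rTV\,\nu^{(-j)}(\d\x^{(-j)}).
\]
Since the integrand is nonnegative and marginalization preserves the warm-start bound, i.e. $\nu^{(-j)}\le M\pi_1^{(-j)}$, the right-hand side is at most $M\int\lTV\pi_1(\cdot\mid\x^{(-j)})-\pi_2(\cdot\mid\x^{(-j)})\rTV\,\pi_1^{(-j)}(\d\x^{(-j)})$.

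It remains to establish
\[
\int\lTV\pi_1(\cdot\mid\x^{(-j)})-\pi_2(\cdot\mid\x^{(-j)})\rTV\,\pi_1^{(-j)}(\d\x^{(-j)})\le 2\lTV\pi_1-\pi_2\rTV,
\]
which I would obtain by passing to densities $p_i(\x)=p_i^{(-j)}(\x^{(-j)})\,p_i(x_j\mid\x^{(-j)})$ with respect to a common dominating product measure: the left-hand side equals, up to the normalization constant of the total variation norm, the $L^1$-norm of $p_1(\x)-p_1^{(-j)}(\x^{(-j)})\,p_2(x_j\mid\x^{(-j)})$, and adding and subtracting $p_2(\x)=p_2^{(-j)}(\x^{(-j)})\,p_2(x_j\mid\x^{(-j)})$ splits this into $\lTV\pi_1-\pi_2\rTV$ plus a term that integrates out (over $x_j$, using that $p_2(\cdot\mid\x^{(-j)})$ is a probability density) to $\lTV\pi_1^{(-j)}-\pi_2^{(-j)}\rTV$; both summands are at most $\lTV\pi_1-\pi_2\rTV$ because marginalization is a total-variation contraction, yielding the factor $2$. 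Combining the three displays gives $\lTV\nu_j(P_{1,j}-P_{2,j})\rTV\le 2M\lTV\pi_1-\pi_2\rTV$ and hence the Proposition for $\mu\in\sN(\pi_1,M)$; for $\mu\in\sN(\pi_2,M)$ one uses instead the mirror telescoping $P_1-P_2=\sum_{j}P_{2,1}\cdots P_{2,j-1}(P_{1,j}-P_{2,j})P_{1,j+1}\cdots P_{1,K}$ (so the prefix measures lie in $\sN(\pi_2,M)$) together with the symmetric version of the last inequality, where $\pi_1^{(-j)}$ is replaced by $\pi_2^{(-j)}$. I expect the last inequality to be the only real obstacle: as the preceding remark stresses, total-variation closeness of the joints does \emph{not} imply closeness of the conditionals, and the estimate survives only because one needs the conditional discrepancy merely on $\pi_1^{(-j)}$-average --- and even then pays the extra marginal term $\lTV\pi_1^{(-j)}-\pi_2^{(-j)}\rTV$, which is what produces the constant $2$.
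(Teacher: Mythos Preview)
Your proof is correct and takes a genuinely different route from the paper's.

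The paper first lifts $\lTV\mu P_1-\mu P_2\rTV$ to a product-space quantity $M\lTV\pi_2^{(-1)}\otimes P_1-\pi_2^{(-1)}\otimes P_2\rTV$ via the Radon--Nikodym bound $d\mu^{(-1)}/d\pi_2^{(-1)}\le M$ (Lemma~\ref{inequality_product}), and then establishes a chain inequality (Lemma~\ref{inequality_chain}) that peels off one coordinate at a time by exploiting the identity $\pi_i^{(-j)}\otimes P_{i,j}=\pi_i$; each step costs $2\lTV\pi_1-\pi_2\rTV$, and iterating $K$ times yields the bound. In effect the paper telescopes over the \emph{reference marginal}, gradually swapping $\pi_2$ for $\pi_1$ on the lifted space $\sX^{(-i)}\times\sX$.

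You instead telescope directly over the \emph{kernel}, writing $P_1-P_2=\sum_j P_{1,1}\cdots P_{1,j-1}(P_{1,j}-P_{2,j})P_{2,j+1}\cdots P_{2,K}$, observe that the prefix laws $\nu_j=\mu P_{1,1}\cdots P_{1,j-1}$ remain in $\sN(\pi_1,M)$ by $\pi_1$-invariance, and bound each one-coordinate discrepancy by the explicit $L^1$ estimate
\[
\int\lTV\pi_1(\cdot\mid\x^{(-j)})-\pi_2(\cdot\mid\x^{(-j)})\rTV\,\pi_1^{(-j)}(\d\x^{(-j)})\le \lTV\pi_1-\pi_2\rTV+\lTV\pi_1^{(-j)}-\pi_2^{(-j)}\rTV\le 2\lTV\pi_1-\pi_2\rTV\,.
\]
This is more elementary: it avoids the product-space machinery entirely and makes the origin of the factor $2$ (one joint term plus one marginal term) completely transparent. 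The paper's route, on the other hand, packages the argument in a way that makes the recursive structure of the Gibbs sweep more explicit and reuses the same lemmas (\ref{inequality_composition}--\ref{inequality_chain}) in other proofs. Both decompositions ultimately hinge on the same fact you identify at the end: conditional discrepancies need only be controlled on $\pi_i^{(-j)}$-average, not pointwise.
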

Proposition \ref{prop:convergence_operators} translates convergence of the stationary distributions, given by $(A1)$, into convergence of the Gibbs samplers operators when a warm start is considered. 
It is worth noting that a bound of this form cannot hold for generic Markov transition kernels. Indeed, consider transition kernels $P_1$ and $P_2$ with the same stationary distribution $\pi$: by basic properties of the total variation distance it holds $\lTV \mu P_1 - \mu P_2 \rTV \leq 2 \lTV \mu-\pi \rTV$. The latter bound cannot be improved in general, meaning that it is possible to find ergodic kernels $P_1$ and $P_2$ that get arbitrarily close to the above upper bound, see Example A.2 in Appendix A.

Proposition \ref{prop:convergence_operators} is used in the proof of Theorem \ref{thm:mixing_limit}, which shows that the limiting behaviour of $P_n$, in terms of distance to stationarity from $M$-warm starts, is completely characterized by the behaviour of the limiting operator $\tilde{P}$.
The proof of Theorem \ref{thm:mixing_limit} also relies on the fact that 
 the total variation distance between $\pi_1$ and $\pi_2$ provides a control on the distance between the two sets $\sN(\pi_1, M)$ and $\sN(\pi_2, M)$, as shown in the following Lemma. 
\begin{lemma}\label{constructive_lemma}
Let $\pi_1,\pi_2 \in \mathcal{P}(\sX)$. Then, for every $\mu_1 \in \sN(\pi_1, M)$, there exists $\mu_2 \in \sN(\pi_2, M)$ such that
$\lTV \mu_1-\mu_2 \rTV \leq M\lTV  \pi_1 - \pi_2 \rTV$.
\end{lemma}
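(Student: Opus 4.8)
The plan is to build $\mu_2$ explicitly by transporting the Radon--Nikodym profile of $\mu_1$ from $\pi_1$ onto $\pi_2$ through their overlap. Fix a common dominating measure, say $\lambda=\pi_1+\pi_2$, and write $f_1=\d\pi_1/\d\lambda$, $f_2=\d\pi_2/\d\lambda$, $g_1=\d\mu_1/\d\lambda$; the hypothesis $\mu_1\in\sN(\pi_1,M)$ is equivalent to $g_1\le M f_1$ holding $\lambda$-almost everywhere. Decompose $\pi_1=\nu+r_1$ and $\pi_2=\nu+r_2$, where $\nu=\min(f_1,f_2)\,\lambda$ is the overlap measure and $r_1=(f_1-f_2)^+\lambda$, $r_2=(f_2-f_1)^+\lambda$ are mutually singular, with $r_1(\sX)=r_2(\sX)=\lTV\pi_1-\pi_2\rTV=:\delta$. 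If $\delta=0$ then $\pi_1=\pi_2$ and $\mu_2=\mu_1$ works, so assume $\delta>0$.

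Next I would split $\mu_1$ into the part that fits under $M\nu$ and a small excess part: set $\beta=(\mu_1-M\nu)^+$ (the positive part of the signed measure $\mu_1-M\nu$) and $\alpha=\mu_1-\beta=\min(\mu_1,M\nu)$, so that $\alpha\le M\nu$ and $0\le\beta\le\mu_1$ by construction. The crucial point is that $\beta$ is supported on $\{f_1>f_2\}$ and controlled there: on $\{f_1\le f_2\}$ we have $g_1\le M f_1=M\min(f_1,f_2)$, so the density of $\mu_1-M\nu$ is nonpositive and $\beta$ vanishes; while on $\{f_1>f_2\}$ the density of $\beta$ equals $(g_1-Mf_2)^+\le M(f_1-f_2)$, using again $g_1\le Mf_1$. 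Hence $\beta\le M r_1$, and in particular its total mass $a:=\beta(\sX)$ satisfies $a\le M r_1(\sX)=M\delta$ (and $a\le1$).

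The candidate is then $\mu_2:=\alpha+(a/\delta)\,r_2$, and I would verify the three required properties in turn. First, $\mu_2$ is a probability measure: it is nonnegative and $\mu_2(\sX)=\alpha(\sX)+(a/\delta)\,\delta=(1-a)+a=1$. Second, $\mu_2\in\sN(\pi_2,M)$: since $a/\delta\le M$, $\mu_2\le M\nu+(a/\delta)r_2\le M\nu+M r_2=M\pi_2$. Third, the displacement is $\mu_1-\mu_2=(\alpha+\beta)-(\alpha+(a/\delta)r_2)=\beta-(a/\delta)r_2$; since $\beta$ is carried by $\{f_1>f_2\}$ and $r_2$ by $\{f_2>f_1\}$, these two nonnegative measures are mutually singular and each has total mass $a$, so $\lTV\mu_1-\mu_2\rTV=a\le M\delta=M\lTV\pi_1-\pi_2\rTV$, as required.

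I expect the only delicate choice to be splitting $\mu_1$ against $M\nu$ rather than against $\pi_1$ or $r_1$ directly: a crude attempt to decompose $\mu_1$ as a piece dominated by $M\nu$ plus a piece dominated by $M r_1$ fails, because $\mu_1$ may concentrate on the overlap region $\{f_1>f_2\}$ without being locally dominated by $r_1$ there. Isolating the excess $\beta=(\mu_1-M\nu)^+$ circumvents this, and the elementary bound $g_1-Mf_2\le M(f_1-f_2)$ on $\{f_1>f_2\}$ is exactly what keeps the transported mass at most $M\delta$. Everything else is a routine verification, and the argument is insensitive to the normalisation adopted for $\lTV\cdot\rTV$.
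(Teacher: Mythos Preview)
Your proof is correct and follows essentially the same construction as the paper: both truncate $\mu_1$ at $M\pi_2$ (your $\alpha=\min(\mu_1,M\nu)$ coincides with the paper's $\bar\mu=\min(\mu_1,M\pi_2)$, since $\nu=\pi_2$ on $\{f_1>f_2\}$ and $\mu_1\le M\nu$ on $\{f_1\le f_2\}$), bound the deficit by $M(\pi_1-\pi_2)^+$, and redistribute it under the remaining headroom of $M\pi_2$. The only cosmetic difference is where the deficit is placed---the paper spreads it over $(M\pi_2-\mu_1)^+$ while you put it on $r_2=(\pi_2-\pi_1)^+$, which gives you a slightly cleaner mutual-singularity computation of the TV distance.
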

Lemma \ref{constructive_lemma} implies that, under assumption $(A1)$, for every $\tilde{\mu} \in \sN(\tilde{\pi}, M)$ there exists a sequence $\{\tilde{\mu}_n\}_n$ such that $\tilde{\mu}_n \in \sN(\tilde{\pi}_n, M)$ and $\lTV \tilde{\mu}_n-\tilde{\mu} \rTV \to 0$ as $n \to \infty$ in $Q^{(n)}$-probability. We can now state Theorem \ref{thm:mixing_limit}.

\begin{theorem}\label{thm:mixing_limit}
Let assumption $(A1)$ holds. Then for every $t \in \mathbb{N}$ and $M\geq 1$ it holds
\[
\lim_{n \to \infty} \, \sup_{\mu_n\in \sN (\pi_n, M)} \, \lTV \mu_n P_n^t-\pi_n\rTV = \sup_{\tilde{\mu} \in \sN (\tilde{\pi}, M)} \, \lTV \tilde{\mu} \tilde{P}^t-\tilde{\pi} \rTV,
\]
in $Q^{(n)}$-probability.
\end{theorem}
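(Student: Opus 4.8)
By Lemma \ref{equivalent_statement} the quantity $\sup_{\mu_n\in\sN(\pi_n,M)}\lTV \mu_n P_n^t-\pi_n\rTV$ equals $a_n:=\sup_{\tilde{\mu}_n\in\sN(\tilde{\pi}_n,M)}\lTV \tilde{\mu}_n\tilde{P}_n^t-\tilde{\pi}_n\rTV$, and the right-hand side of the theorem is $a:=\sup_{\tilde{\mu}\in\sN(\tilde{\pi},M)}\lTV \tilde{\mu}\tilde{P}^t-\tilde{\pi}\rTV$. So it suffices to show $a_n\to a$ in $Q^{(n)}$-probability. The plan is to prove the \emph{deterministic} bound $|a_n-a|\le (2MKt+M+1)\,\delta_n$, where $\delta_n:=\lTV \tilde{\pi}_n-\tilde{\pi}\rTV$; since $\delta_n\to0$ in $Q^{(n)}$-probability by (A1), and the bound holds conditionally on $Y^{(n)}$, the conclusion is immediate.

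The workhorse is a $t$-step version of Proposition \ref{prop:convergence_operators}: for every $\nu\in\sN(\tilde{\pi}_n,M)$,
\[
\lTV \nu\tilde{P}_n^t-\nu\tilde{P}^t\rTV\le 2MKt\,\delta_n .
\]
I would prove this by telescoping, $\nu\tilde{P}_n^t-\nu\tilde{P}^t=\sum_{k=0}^{t-1}\nu\tilde{P}_n^k(\tilde{P}_n-\tilde{P})\tilde{P}^{t-1-k}$, then dropping the trailing factor $\tilde{P}^{t-1-k}$ using that Markov kernels are contractions on finite signed measures for $\lTV\cdot\rTV$. The key point is that the warm-start property is preserved by a stationary kernel: since $\tilde{\pi}_n\tilde{P}_n=\tilde{\pi}_n$, one has $\nu\tilde{P}_n^k\in\sN(\tilde{\pi}_n,M)$ for all $k$, because $\nu\tilde{P}_n^k(A)\le M(\tilde{\pi}_n\tilde{P}_n^k)(A)=M\tilde{\pi}_n(A)$. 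Hence each summand $\lTV (\nu\tilde{P}_n^k)\tilde{P}_n-(\nu\tilde{P}_n^k)\tilde{P}\rTV$ is bounded by $2MK\,\delta_n$ by Proposition \ref{prop:convergence_operators} (with $\pi_1=\tilde{\pi}_n$, $\pi_2=\tilde{\pi}$, $\mu=\nu\tilde{P}_n^k$), and summing gives the claim.

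With this in hand, both inequalities follow from triangle inequalities, using Lemma \ref{constructive_lemma} to move between $\sN(\tilde{\pi}_n,M)$ and $\sN(\tilde{\pi},M)$. For $a_n\le a+(2MKt+M+1)\delta_n$: take $\tilde{\mu}_n\in\sN(\tilde{\pi}_n,M)$, pick $\tilde{\mu}\in\sN(\tilde{\pi},M)$ with $\lTV \tilde{\mu}_n-\tilde{\mu}\rTV\le M\delta_n$, and bound
\[
\lTV \tilde{\mu}_n\tilde{P}_n^t-\tilde{\pi}_n\rTV\le \lTV \tilde{\mu}_n\tilde{P}_n^t-\tilde{\mu}_n\tilde{P}^t\rTV+\lTV \tilde{\mu}_n\tilde{P}^t-\tilde{\mu}\tilde{P}^t\rTV+\lTV \tilde{\mu}\tilde{P}^t-\tilde{\pi}\rTV+\lTV \tilde{\pi}-\tilde{\pi}_n\rTV,
\]
where the four terms are controlled by $2MKt\delta_n$ (the step above), $M\delta_n$ (contraction of $\tilde{P}^t$), $a$, and $\delta_n$ respectively; take the supremum over $\tilde{\mu}_n$. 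For the reverse inequality $a\le a_n+(2MKt+M+1)\delta_n$: take $\tilde{\mu}\in\sN(\tilde{\pi},M)$, pick $\tilde{\mu}_n\in\sN(\tilde{\pi}_n,M)$ with $\lTV \tilde{\mu}-\tilde{\mu}_n\rTV\le M\delta_n$, and bound $\lTV \tilde{\mu}\tilde{P}^t-\tilde{\pi}\rTV$ by inserting $\tilde{\mu}_n\tilde{P}^t$, $\tilde{\mu}_n\tilde{P}_n^t$, and $\tilde{\pi}_n$, again landing on $a_n+(2MKt+M+1)\delta_n$.

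The main obstacle is conceptual rather than computational: one must resist the (false, per the discussion preceding Proposition \ref{prop:convergence_operators}) temptation to work kernel-by-kernel, and instead iterate the warm-start bound of Proposition \ref{prop:convergence_operators}; this works precisely because warm starts are stable under the Gibbs kernel, which is what keeps $\nu\tilde{P}_n^k$ in the admissible class at every step of the telescoping sum. The remaining parts are routine triangle-inequality bookkeeping, and the passage from the deterministic estimate to convergence in $Q^{(n)}$-probability is immediate because the constant $2MKt+M+1$ does not depend on the data.
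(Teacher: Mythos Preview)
Your proof is correct and follows essentially the same route as the paper: reduction via Lemma~\ref{equivalent_statement}, a $t$-step extension of Proposition~\ref{prop:convergence_operators} via closure of warm starts under the invariant kernel (the paper states this as a separate lemma and argues by induction, you do it by an equivalent telescoping sum), and then Lemma~\ref{constructive_lemma} plus triangle inequalities to compare the two suprema. The only cosmetic difference is that you maintain warmth with respect to $\tilde{\pi}_n$ (using $\nu\tilde{P}_n^k\in\sN(\tilde{\pi}_n,M)$) whereas the paper maintains warmth with respect to $\tilde{\pi}$ (using $\tilde{\mu}\tilde{P}^k\in\sN(\tilde{\pi},M)$); both work because Proposition~\ref{prop:convergence_operators} is symmetric in $\pi_1,\pi_2$.
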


\begin{remark}
An alternative approach to derive convergence statements on the sequence of Gibbs kernels 
would be to consider stronger forms of convergence for the sequence $(\tilde{\pi}_n)_{n \geq 1}$ than the one in total variation distance in \eqref{eq:conv_to_pi_inf}.
However, we prefer to derive results under weaker convergence requirements for $(\tilde{\pi}_n)_{n \geq 1}$ to allow for a more direct use of standard asymptotic results in the Bayesian literature (e.g.\ common formulations of the Bernstein-von Mises theorem), which are usually derived in terms of weaker metrics such as total variation one.
\end{remark}

\subsection{Implications for mixing times}
Denote the mixing times of $\tilde{P}$ as
\[
\tilde{t}_{mix}(\epsilon, M) = \underset{\tilde{\mu} \in \sN \left(\tilde{\pi}, M \right)}{\sup} \, 
\inf\left\{t\geq 1\,:\, \lTV \tilde{\mu}\tilde{P}^t-\tilde{\pi} \rTV<\epsilon \right\}.
\]
The following corollary of Theorem \ref{thm:mixing_limit} shows how to use $\tilde{t}_{mix}(\epsilon, M) $ to deduce statements on the behaviour of the sequence of mixing times of interest, $(t^{(n)}_{mix}(\epsilon, M))_{n \geq 1}$.
\begin{corollary}\label{mixingCorollary}
Let assumption $(A1)$ holds. If $(M, \epsilon)\in[1,\infty)\times (0,1)$ is such that $\tilde{t}_{mix}(\epsilon, M) < \infty$, then
\begin{align}\label{eq:mix_limiting_bounds}
Q^{(n)}\left(t^{(n)}_{mix}(\epsilon, M)\leq \tilde{t}_{mix}(\epsilon, M)\right) \to 1
\end{align}
as $n \to \infty$. Otherwise, if $(M, \epsilon)\in[1,\infty)\times (0,1)$ is such that $\tilde{t}_{mix}(\epsilon, M) = \infty$, then it holds
\[
Q^{(n)}\left(t^{(n)}_{mix}(\underline{\epsilon}, M)<T\right) \to 0
\]
as $n \to \infty$, for every $\underline{\epsilon} < \epsilon$ and $T > 0$.
\end{corollary}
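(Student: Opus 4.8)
The plan is to express both parts of the corollary as statements about the single scalar quantity $d_n(t):=\sup_{\mu_n\in\sN(\pi_n,M)}\lTV \mu_nP_n^t-\pi_n\rTV$, and to use monotonicity in $t$ to reduce each to one application of Theorem \ref{thm:mixing_limit}. First I would record the elementary fact that, for any Markov kernel $P$ with invariant distribution $\pi$ and any starting law $\mu$, the map $t\mapsto\lTV \mu P^t-\pi\rTV$ is non-increasing, since $\lTV \mu P^{t+1}-\pi\rTV=\lTV (\mu P^t)P-\pi P\rTV\le\lTV \mu P^t-\pi\rTV$ by contractivity of total variation under a common Markov kernel. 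Taking suprema over $M$-warm starts, both $d_n(\cdot)$ and $\tilde d(t):=\sup_{\tilde\mu\in\sN(\tilde\pi,M)}\lTV \tilde\mu\tilde P^t-\tilde\pi\rTV$ are non-increasing. Consequently $\{d_n(s)<\eta\}\subseteq\{t^{(n)}_{mix}(\eta,M)\le s\}\subseteq\{d_n(s)\le\eta\}$ for every non-negative integer $s$, and likewise for $\tilde d$ and $\tilde t_{mix}$; in particular $\tilde t_{mix}(\epsilon,M)<\infty$ provides a time $T:=\tilde t_{mix}(\epsilon,M)$ with $\tilde d(T)<\epsilon$, while $\tilde t_{mix}(\epsilon,M)=\infty$ forces $\tilde d(t)\ge\epsilon$ for all $t\ge1$, hence, by monotonicity, for all $t\ge0$.

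In the case $\tilde t_{mix}(\epsilon,M)<\infty$, I would set $T=\tilde t_{mix}(\epsilon,M)$ and apply Theorem \ref{thm:mixing_limit} with $t=T$: this gives $d_n(T)\to\tilde d(T)$ in $Q^{(n)}$-probability, the limit being a deterministic constant with $\tilde d(T)<\epsilon$. Hence $Q^{(n)}(d_n(T)<\epsilon)\to1$, and since $\{d_n(T)<\epsilon\}\subseteq\{t^{(n)}_{mix}(\epsilon,M)\le T\}$ this is exactly \eqref{eq:mix_limiting_bounds}.

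In the case $\tilde t_{mix}(\epsilon,M)=\infty$, I would fix $\underline\epsilon<\epsilon$ and $T>0$ and let $s$ be the largest integer strictly below $T$, so that $\{t^{(n)}_{mix}(\underline\epsilon,M)<T\}\subseteq\{d_n(s)\le\underline\epsilon\}$ by monotonicity. Theorem \ref{thm:mixing_limit} at $t=s$ gives $d_n(s)\to\tilde d(s)$ in $Q^{(n)}$-probability with $\tilde d(s)\ge\epsilon>\underline\epsilon$, so $Q^{(n)}(d_n(s)\le\underline\epsilon)\to0$, which yields the claim. The only value of $s$ not directly covered by Theorem \ref{thm:mixing_limit} is $s=0$, arising when $T\le1$; there $d_n(0)=\sup_{\tilde\mu_n\in\sN(\tilde\pi_n,M)}\lTV \tilde\mu_n-\tilde\pi_n\rTV$ by Lemma \ref{equivalent_statement}, and combining Lemma \ref{constructive_lemma} with (A1) gives $|d_n(0)-\tilde d(0)|\le(M+1)\lTV \tilde\pi_n-\tilde\pi\rTV\to0$ in $Q^{(n)}$-probability, so the same argument applies.

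The main obstacle is conceptual rather than computational: Theorem \ref{thm:mixing_limit} controls the distance to stationarity only at each fixed time $t$, whereas mixing times involve an infimum over $t$, so the whole reduction rests on the monotonicity of $t\mapsto d_n(t)$. The remaining delicate points are to verify that the deterministic limit $\tilde d(T)$ (respectively $\tilde d(s)$) sits strictly on the correct side of $\epsilon$ (respectively $\underline\epsilon$) — which is exactly where the dichotomy $\tilde t_{mix}(\epsilon,M)<\infty$ versus $=\infty$ enters — and to keep track of the index conventions ($t\ge1$ in the definition of $\tilde t_{mix}(\epsilon,M)$ versus $t\ge0$ for $t^{(n)}_{mix}(\epsilon,M)$), which is what makes the $s=0$ bookkeeping above necessary.
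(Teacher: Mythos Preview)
Your proposal is correct and follows essentially the same approach as the paper's proof: apply Theorem \ref{thm:mixing_limit} at the single relevant time ($t^*=\tilde t_{mix}(\epsilon,M)$ in the first case, the appropriate integer time in the second) and compare $d_n$ with the deterministic limit $\tilde d$. The paper packages the monotonicity step you spell out into a separate ``swap sup and inf'' lemma (Lemma \ref{equivalent_representation}), and is somewhat less explicit than you are about non-integer $T$ and the $s=0$ boundary case, but the underlying argument is the same.
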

\begin{remark}[Mixing times bounded in probability]
When $\tilde{t}_{mix}(\epsilon, M) < \infty$, the statement in \eqref{eq:mix_limiting_bounds} implies that $t^{(n)}_{mix}(\epsilon, M)=\sO_P(1)$ as $n\to\infty$, i.e.\ that the sequence of random variables $(t^{(n)}_{mix}(\epsilon, M))_{n\geq 1}$ is bounded in probability. The latter means that for every $\delta > 0$ there exist an integer $N_\delta$ and a real constant $B_\delta < \infty$ such that
$Q^{(n)}(t^{(n)}_{mix}(\epsilon, M) \leq B_\delta) \geq 1-\delta$ for every $n \geq N_\delta$, which holds by \eqref{eq:mix_limiting_bounds} taking $B_\delta = \tilde{t}_{mix}(\epsilon, M)$.
\end{remark}

By Corollary \ref{mixingCorollary}, establishing whether $\tilde{P}$ is ergodic (in the sense of yielding finite mixing times) or not is enough to discriminate between sequences of kernels $(P_n)_{n \geq 1}$ whose mixing times diverge as $n\to\infty$ as opposed to ones that do not (see e.g.\ Figure \ref{fig:normal_simulations} in Section \ref{examples} for an illustration). 
Since ergodicity of Gibbs samplers can be established 
under very mild assumptions \citep{R94}, 
in practice one can expect $\tilde{P}$ to be ergodic and thus $(t^{(n)}_{mix}(\epsilon, M))_{n \geq 1}$ to be bounded in probability whenever $(A1)$ holds for a well-behaved, non-singular limiting distribution $\tilde{\pi}$. 
Sections \ref{hierarchical} and \ref{examples} combine Corollary \ref{mixingCorollary} with dimensionality reduction techniques to provide results on Gibbs samplers targeting high-dimensional hierarchical models.

\begin{remark}[Alternative metrics]
It is natural to wonder whether the result of Corollary \ref{mixingCorollary} may hold for weaker metrics, like the one induced by the Wasserstein distance. However, it is possible to find examples where the convergence of the stationary distributions (in Wasserstein distance) does not imply convergence of the associated mixing times (neither the ones defined based on the TV distance nor the ones defined based on the Wasserstein one). 
The intuition is that the limiting distribution in weaker metrics (e.g.\ Wasserstein, weak convergence, etc) may ignore features of the joint distribution, such as full conditionals behaviours, that have a relevant impact on Gibbs sampler dynamics. 
For example, a sequence of increasingly correlated random variables (whose Gibbs samplers converge slower and slower) may converge to a single point mass, for which independence and immediate convergence automatically holds. See Example $A.3$ in Appendix A.
\end{remark}

\subsection{Explicit limiting bounds}
Corollary \ref{mixingCorollary} can also be used to derive quantitative bounds on the limiting behaviour of the mixing times $(t^{(n)}_{mix}(\epsilon, M))_{n \geq 1}$.
In particular, if one is able to establish explicit bounds on $\tilde{t}_{mix}(\epsilon, M)$,  then \eqref{eq:mix_limiting_bounds} implies a corresponding bound in high probability on $t^{(n)}_{mix}(\epsilon, M)$ for large $n$.
While deriving quantitative bounds on Gibbs samplers mixing times is in general hard, the limiting distribution $\tilde{\pi}$ is often more tractable than the original sequence $(\pi_n)_{n \geq 1}$, a common case being the one where $\tilde{\pi}$ is multivariate Gaussian while $(\pi_n)_{n \geq 1}$ is not.
In those scenarios explicit bounds on $\tilde{t}_{mix}(\epsilon, M)$ can be derived using available results on the convergence properties of Gibbs samplers targeting multivariate Gaussian distributions, see e.g. \cite{A91, K09, R97}. For example, Theorem $2$ in \cite{A91} provides an explicit bound for deterministic scan Gibbs samplers on Gaussian targets in $L^2$-distance (and therefore total variation \cite{AL22}).

In Sections \ref{hierarchical} and \ref{examples} we will apply this strategy mostly to cases where $K=2$, meaning that $\tilde{P}$ is a two-block Gibbs sampler. In this situation, one can use spectral gaps to bound Gibbs samplers mixing times, as shown in the Corollary \ref{mixing_gap}. 
Given a $\pi$-invariant kernel $P$ with $\pi \in \mathcal{P}(\sX)$ we define its spectral gap as
 \[
\begin{aligned}
\text{Gap}(P) = \inf_{f\,:\,\pi(f^2) < \infty,\, \text{Var}_\pi(f) > 0} \, \left\{\frac{\int_{\sX^2}\left[f(\y)-f(\x) \right]^2\pi(\d \x)P(\x, \d \y)}{2\text{Var}_\pi(f)}
\right\},
\end{aligned}
\]
where $f \, : \,  \sX \, \to \, \mathbb{R}$ are measurable functions, 
$\pi(f) = \int_{\sX}f(\x) \pi(\d \x)$ and $\text{Var}_\pi(f) = \int_{\sX} \left[f(\x)-\pi(f) \right]^2\pi(\d \x)$.
We refer to \cite{RR15} and the proof of Corollary \ref{mixing_gap} for discussion on why spectral gaps, which are commonly used for $\pi$-reversible chains, can be used to analyse two-block Gibbs samplers, which are technically not reversible. 
We also note that Corollary \ref{mixing_gap} is only one possible approach to bound $\tilde{t}_{mix}(\epsilon, M)$ and that any quantitative bound on the latter can be combined with Corollary \ref{mixingCorollary} to deduce limiting statements on $(t^{(n)}_{mix}(\epsilon, M))_{n \geq 1}$.

\begin{corollary}\label{mixing_gap}
Let $K=2$, assumption $(A1)$ be satisfied and Gap$(\tilde{P}) >0$. Then, for every $(M, \epsilon) \in [1, \infty) \times (0,1)$ 
 it holds
\begin{align*}
Q^{(n)}\left(t^{(n)}_{mix}({\epsilon}, M)\leq1+\frac{\log(M/2)-\log(\epsilon)}{-\log (1-\text{Gap}(\tilde{P}))}\right) \to 1
\quad\hbox{ as }n\to\infty\,.
\end{align*}
\end{corollary}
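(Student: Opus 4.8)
The plan is to combine Corollary~\ref{mixingCorollary} with an explicit spectral-gap bound on the mixing time of the limiting two-block Gibbs sampler $\tilde{P}=\tilde{P}_1\tilde{P}_2$. By the first part of Corollary~\ref{mixingCorollary} it is enough to show that the right-hand side of the stated display bounds $\tilde{t}_{mix}(\epsilon,M)$ from above, and for that it suffices to prove the warm-start estimate
\begin{equation}\label{eq:gap_warm_bound}
\lTV \tilde{\mu}\tilde{P}^t - \tilde{\pi} \rTV \;\le\; \frac{M}{2}\left(1-\text{Gap}(\tilde{P})\right)^{t-1}, \qquad t\ge 1,\quad \tilde{\mu}\in\sN(\tilde{\pi},M),
\end{equation}
since solving $\tfrac{M}{2}(1-\text{Gap}(\tilde{P}))^{t-1}<\epsilon$ for $t$ yields exactly $1+\frac{\log(M/2)-\log\epsilon}{-\log(1-\text{Gap}(\tilde{P}))}$, which is finite because $\text{Gap}(\tilde{P})>0$.

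The first step is a de-initialisation argument. Writing $\tilde{P}_i$ for the update of the $i$-th coordinate and $\tilde{\pi}_1$ for the $x_1$-marginal of $\tilde{\pi}$, a direct computation shows that $\tilde{\mu}\tilde{P}_1$ has the same $x_2$-marginal as $\tilde{\mu}$ but the \emph{target} conditional $\tilde{\pi}(\d x_1\mid x_2)$, hence $\tilde{\mu}\tilde{P}=(\tilde{\mu}\tilde{P}_1)\tilde{P}_2$ takes the product-conditional form $\nu_1(\d x_1)\,\tilde{\pi}(\d x_2\mid x_1)$, where $\nu_1$ is the image of the $x_2$-marginal of $\tilde{\mu}$ under the Markov kernel $x_2\mapsto\tilde{\pi}(\cdot\mid x_2)$. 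Marginals of $M$-warm starts are $M$-warm and that kernel preserves $M$-warmness, so $\nu_1\in\sN(\tilde{\pi}_1,M)$. Iterating the identity gives, by induction, $\tilde{\mu}\tilde{P}^t(\d x_1,\d x_2)=(\nu_1\hat{P}^{t-1})(\d x_1)\,\tilde{\pi}(\d x_2\mid x_1)$ for every $t\ge 1$, with $\hat{P}(x_1,\d x_1')=\int \tilde{\pi}(\d x_2\mid x_1)\,\tilde{\pi}(\d x_1'\mid x_2)$ the marginal (``conditional''/data-augmentation) chain on $\sX_1$. Since $\tilde{\mu}\tilde{P}^t$ and $\tilde{\pi}$ share the conditional $\tilde{\pi}(\d x_2\mid x_1)$, the total variation distance between these joints equals that between their $x_1$-marginals, i.e.\ $\lTV \tilde{\mu}\tilde{P}^t-\tilde{\pi}\rTV=\lTV \nu_1\hat{P}^{t-1}-\tilde{\pi}_1\rTV$.

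It then remains to control the mixing of $\hat{P}$ from the $M$-warm start $\nu_1$. The kernel $\hat{P}$ is reversible with respect to $\tilde{\pi}_1$ and positive semidefinite on $L^2(\tilde{\pi}_1)$ (it is a one-step data-augmentation operator), so its operator norm on mean-zero functions equals $1-\text{Gap}(\hat{P})$; writing $h=\d\nu_1/\d\tilde{\pi}_1\le M$ and using $\hat{P}\mathbf{1}=\mathbf{1}$, Cauchy--Schwarz and self-adjointness give $\lTV \nu_1\hat{P}^{s}-\tilde{\pi}_1\rTV \le \tfrac12\|h-1\|_{L^2(\tilde{\pi}_1)}(1-\text{Gap}(\hat{P}))^{s}\le \tfrac12\sqrt{M-1}\,(1-\text{Gap}(\hat{P}))^{s}$. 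Combining this with $\sqrt{M-1}\le M$, with the comparison $\text{Gap}(\hat{P})\ge\text{Gap}(\tilde{P})$, and taking $s=t-1$, yields \eqref{eq:gap_warm_bound}, and hence the corollary via Corollary~\ref{mixingCorollary}.

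The step I expect to be the main obstacle is the comparison $\text{Gap}(\hat{P})\ge\text{Gap}(\tilde{P})>0$: it must be handled carefully because $\tilde{P}=\tilde{P}_1\tilde{P}_2$ is not reversible and the Dirichlet-form spectral gap used here only sees the additive reversibilisation $\tfrac12(\tilde{P}_1\tilde{P}_2+\tilde{P}_2\tilde{P}_1)$. This is precisely where the discussion of ``spectral gaps for non-reversible two-block Gibbs samplers'', promised earlier in the section, belongs. The natural route is to view $\tilde{P}_1,\tilde{P}_2$ as orthogonal projections in $L^2(\tilde{\pi})$ and apply the two-subspaces decomposition of Halmos: $L^2(\tilde{\pi})$ splits into the constants, aligned pieces on which both operators vanish, and generic two-dimensional invariant blocks indexed by angles $\theta$; on such a block $\hat{P}$ has eigenvalue $\cos^2\theta$ while $\tfrac12(\tilde{P}_1\tilde{P}_2+\tilde{P}_2\tilde{P}_1)$ has top eigenvalue $\tfrac12(\cos^2\theta+|\cos\theta|)\ge\cos^2\theta$. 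Hence, denoting by $\rho\in[0,1)$ the maximal correlation between the two coordinates under $\tilde{\pi}$, one obtains $\text{Gap}(\hat{P})=1-\rho^2\ge 1-\tfrac12(\rho+\rho^2)=\text{Gap}(\tilde{P})$, which is positive by assumption (see also \cite{RR15}). The remaining ingredients --- the product-conditional identity, preservation of $M$-warmness, and the reversible-chain bound --- are routine, so the argument comes down to this gap comparison plus an application of Corollary~\ref{mixingCorollary}.
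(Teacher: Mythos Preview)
Your approach mirrors the paper's: reduce to a reversible, positive-semidefinite marginal (data-augmentation) chain via de-initialisation, bound warm-start total variation by an $L^2$/spectral-gap contraction, compare $\text{Gap}(\hat{P})\ge\text{Gap}(\tilde{P})$, and invoke Corollary~\ref{mixingCorollary}. Two differences are worth flagging.

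First, there is an off-by-one that prevents you from reaching the stated constant. You project to the $x_1$-marginal and (correctly) obtain $\lTV\tilde{\mu}\tilde{P}^t-\tilde{\pi}\rTV=\lTV\nu_1\hat{P}^{\,t-1}-\tilde{\pi}_1\rTV$, so your bound \eqref{eq:gap_warm_bound} carries exponent $t-1$. But $\tfrac{M}{2}(1-\text{Gap}(\tilde{P}))^{t-1}<\epsilon$ only forces $t>1+C$ with $C=\frac{\log(M/2)-\log\epsilon}{-\log(1-\text{Gap}(\tilde{P}))}$; the smallest admissible integer is then $\lfloor C\rfloor+2$, which is \emph{not} $\le 1+C$. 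The paper avoids this by projecting to the $x_2$-marginal (the coordinate updated \emph{last}) and invoking the de-initialising equality of \cite{R01} to write $\lTV\tilde{\mu}\tilde{P}^t-\tilde{\pi}\rTV=\lTV\mu^{(-1)}\hat{P}^t-\tilde{\pi}^{(-1)}\rTV$, with $\mu^{(-1)}\in\sN(\tilde{\pi}^{(-1)},M)$ used directly and no ``half-step'' lost; this gives exponent $t$ and hence $\tilde{t}_{mix}\le\lfloor C\rfloor+1\le 1+C$.

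Second, your Halmos two-projections route to $\text{Gap}(\hat{P})\ge\text{Gap}(\tilde{P})$ is heavier than needed. The paper's argument is one line: for any $f:\sX_2\to\R$, the Dirichlet-form ratio of $\hat{P}$ at $f$ equals that of $\tilde{P}$ at the lift $g(\x)=f(x_2)$ (the $x_2$-transition under $\tilde{P}$ is exactly $\hat{P}$, and the variances coincide), so the infimum over the larger class of test functions on $\sX$ is no larger, giving $\text{Gap}(\hat{P})\ge\text{Gap}(\tilde{P})$ immediately.
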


Given the result of Corollary \ref{mixing_gap}, it is natural to ask whether the convergence proved in Theorem \ref{thm:mixing_limit} could be rephrased in terms of spectral gaps, i.e. $\text{Gap}(P_n) \to \text{Gap}(\tilde{P})$. However, once again, convergence in total variation is too weak for this purpose: indeed it is not difficult to find examples where $(A1)$ holds and the associated Gibbs sampler spectral gaps do not converge, even under the stronger condition requiring $\| \tilde{P}_n(\x, \cdot) - \tilde{P}(\x, \cdot) \|_{TV}\to 0$ for any $\x\in\sX$, see Example $A.4$ in Appendix $A$. Controlling directly the spectral gaps would require extremely stringent conditions on the convergence of $\tilde\pi_n$ to $\tilde \pi$ that are rarely satisfied (e.g.\ uniform convergence of the associated densities on the log-scale, i.e.\ $\sup_{\x\in\sX}|\log\tilde{\pi}_n(\x) - \log\tilde{\pi}(\x)|\to 0$). An alternative approach to the direct warm-start mixing time analysis that we perform here, would be to consider asymptotic behaviours of \emph{approximate} spectral measures, such as approximate spectral gaps, see e.g.\ \cite{A21,T22}.

\section{Illustrative example: fixed-dimensional parametric models}\label{sec:parametric_model}
We first consider the fixed-dimensional case.
While this is not our main interest or motivating application, it allows to show the type of results we will derive and also introduce notation about classical Bayesian asymptotic results that we will use. 
In this setting $\pi_n(\d\psi) = p(\d\psi \mid Y^{(n)})$ is the posterior distribution of the Bayesian model defined as
\begin{equation}\label{parametric_model}
\begin{aligned}
Y_i \mid\psi  \overset{iid}{\sim} f(Y \mid \psi), \quad \psi \sim p_0(\psi),
\end{aligned}
\end{equation}
where $\psi = (\psi_1, \dots, \psi_K)$, with $\sX \in \R^K$, and $Y^{(n)} = (Y_1, \dots, Y_n)$, with $Y_i \in \sY$, $i = 1, \dots, n$, so that $\sY^{{(n)}} = \sY^n$. Moreover, if $Y_i \simiid Q$ for some $Q \in \mathcal{P}(\sY)$, we denote with $Q^{(n)}$ and $Q^{(\infty)}$ the associated product measures. We study the mixing times of the Gibbs sampler that updates one coordinate of $\psi$ at the time 
 as $n$ grows. In order to apply the results of Theorem \ref{thm:mixing_limit} we need a suitable transformation of $\psi$, that is given by the celebrated Bernstein-von Mises Theorem, which we now recall. 
The version we provide here, which makes stronger than needed assumptions, can be obtained combining Theorem $10.1$ in \cite{V00}, with other remarks in Chapter 10 therein, incuding Lemmas 10.4 and 10.6.
\begin{theorem}[Bernstein-von Mises]\label{BvM}
Consider model \eqref{parametric_model} and let the map $\psi \to f(\cdot \mid \psi)$ be one-to-one. Let the map $\psi \to \sqrt{f(y \mid \psi)}$ be continously differentiable for every $y\in\sY$, with non-singular and continuous Fisher Information $\mathcal{I}(\psi)$. Let the prior measure be absolutely continuous in a neighborhood of $\psi^* \in \sX$ with a continuous positive density at $\psi^*$. Finally, let $\Psi$ be a compact neighborhood of $\psi^*$ for which there exists a sequence of tests $u_n$ such that
\begin{equation}\label{tests}
\begin{aligned}
&\int_{\sY^{{(n)}}} u_n(y_1, \dots, y_n) \, \prod_{i = 1}^nf(\d y_i \mid \psi^*) \to 0, &\\
& \sup_{\psi \not\in \Psi}\int_{\sY^{{(n)}}} \left[1-u_n(y_1, \dots, y_n)\right] \, \prod_{i = 1}^nf(\d y_i \mid \psi) \to 0,&\hbox{as }n \to \infty\,.
\end{aligned}
\end{equation}
Then, if $Y_i \simiid Q_{\psi^*}$ for $i=1,2,\dots$ with $Q_{\psi^*}$ admitting density $f(y \mid \psi^*)$, it holds
\begin{align*}
\lTV \L\left(\d \tilde{\psi} \mid Y^{(n)}\right) - 
N\left(\mathcal{I}^{-1}(\psi^*)\Delta_{n,\psi^*}, \mathcal{I}^{-1}(\psi^*)\right) \rTV &\to 0,
&\hbox{as }n \to \infty
\end{align*}
in $Q_{\psi^*}^{(\infty)}$-probability, where $\tilde{\psi} = \sqrt{n}(\psi-\psi^*)$ and $\Delta_{n, \psi^*} = \frac{1}{\sqrt{n}}\sum_{i = 1}^n\nabla \log f(Y_i \mid \psi) \bigr\rvert_{\psi = \psi^*}$.
\end{theorem}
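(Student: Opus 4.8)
The plan is to follow the classical route to the Bernstein--von Mises theorem, organised around three ingredients: a local asymptotic normality (LAN) expansion of the log-likelihood, posterior consistency at the parametric $\sqrt n$ rate, and a convergence-of-densities argument upgrading pointwise (or uniform-on-compacts) density convergence to total variation convergence. Throughout I reparametrise by the local parameter $h = \tilde\psi = \sqrt n(\psi - \psi^*)$ and work with the posterior density of $h$, proportional to $h \mapsto p_0(\psi^* + h/\sqrt n)\prod_{i=1}^n f(Y_i \mid \psi^* + h/\sqrt n)$ with respect to Lebesgue measure. Continuous differentiability of $\psi \mapsto \sqrt{f(y\mid\psi)}$ with continuous non-singular $\mathcal{I}(\psi)$ gives differentiability in quadratic mean at $\psi^*$, hence (Chapter 7 of \cite{V00}) the expansion
\[
\log \prod_{i=1}^n \frac{f(Y_i \mid \psi^* + h/\sqrt n)}{f(Y_i\mid\psi^*)} = h^\top \Delta_{n,\psi^*} - \tfrac12 h^\top \mathcal{I}(\psi^*) h + o_{Q_{\psi^*}^{(\infty)}}(1),
\]
with the remainder uniform over $h$ in compact sets. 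Combined with continuity and positivity of the prior density at $\psi^*$, on any fixed compact ball the unnormalised posterior density of $h$ equals, up to a deterministic constant and a multiplicative $1+o_P(1)$ factor, $\exp\{h^\top\Delta_{n,\psi^*} - \tfrac12 h^\top\mathcal{I}(\psi^*)h\}$, i.e.\ after renormalisation it is close to the $N(\mathcal{I}^{-1}(\psi^*)\Delta_{n,\psi^*}, \mathcal{I}^{-1}(\psi^*))$ density.

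Next I would establish posterior concentration: for every $M_n \to \infty$, the posterior mass of $\{\|h\| > M_n\}$ tends to $0$ in $Q_{\psi^*}^{(\infty)}$-probability. This is where the testing hypothesis \eqref{tests} and compactness of $\Psi$ enter, along the line of Lemmas 10.3--10.6 and Theorem 10.1 of \cite{V00}. One splits $\{\psi \neq \psi^*\}$, outside a shrinking ball, into the region $\{\psi \notin \Psi\}$, where the tests $u_n$ control the numerator of the posterior mass while the normalising integral is bounded below by restricting to a small ball around $\psi^*$ and using the LAN lower bound together with the prior mass there; and the compact annulus $\{\psi \in \Psi\} \cap \{\|\psi - \psi^*\| > \delta\}$, handled by a Kullback--Leibler/Hellinger separation argument that uses injectivity of $\psi \mapsto f(\cdot\mid\psi)$ and non-singularity of $\mathcal{I}$ to get a uniform lower bound on the Hellinger distance there.

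The two steps are then combined. On the high-probability event that the posterior puts mass at most $\eta$ outside a large ball $B_M$, the total variation distance between the posterior law of $\tilde\psi$ and $N(\mathcal{I}^{-1}(\psi^*)\Delta_{n,\psi^*}, \mathcal{I}^{-1}(\psi^*))$ is bounded, up to the posterior mass outside $B_M$ (at most $\eta$) and the Gaussian mass outside $B_M$, by the total variation distance between the two measures renormalised on $B_M$. On $B_M$ the first step gives uniform-in-$h$ convergence of the ratio of the normalised densities to $1$ (the normalising-constant ratio converging by the concentration step plus the small-ball lower bound); a convergence theorem for densities (Scheffé/dominated convergence) then forces this restricted total variation distance to $0$. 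Sending $n\to\infty$ and then $M\to\infty$, so that $\eta \to 0$ and the Gaussian tail vanishes, yields the claim, which is exactly the stated convergence of $\L(\d\tilde\psi \mid Y^{(n)})$.

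I expect the main obstacle to be the bookkeeping of the posterior concentration step: turning the abstract tests in \eqref{tests} into a usable bound on the posterior mass of $\{\psi\notin\Psi\}$ requires, simultaneously, an upper bound on the numerator (via $u_n$ and the exponential decay of the likelihood ratio off $\Psi$) and a matching lower bound on the normalising constant (via a small-ball LAN estimate and positivity of the prior density), and the two estimates have to be calibrated at compatible rates; this lower bound also feeds into the convergence of the normalising-constant ratio used in the final step. By contrast, the LAN expansion and the density-convergence argument are essentially routine once concentration is in hand.
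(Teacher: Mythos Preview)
The paper does not prove this theorem at all: it states it as a classical result and explicitly says that ``The version we provide here \ldots\ can be obtained combining Theorem 10.1 in \cite{V00}, with other remarks in Chapter 10 therein, including Lemmas 10.4 and 10.6.'' Your proposal is a correct outline of precisely that standard route (LAN from DQM, posterior concentration via the tests in \eqref{tests}, then a density-convergence argument on growing compacts), so it matches what the paper defers to.
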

\begin{remark}
Differentiability of $\sqrt{f(y \mid \psi)}$ and continuity of $\Fisher(\psi)$ imply that the model is \emph{differentiable in quadratic mean}, which allows to prove local asymptotic normality of the log-likelihood function. See Theorem $7.2$ and Lemma $7.6$ in \cite{V00}.
\end{remark}
\begin{remark}
A \emph{test} is a measurable function $u \, : \, \sY^{{(n)}}\, \to \, [0,1]$. The integrals in \eqref{tests} represent probabilities of errors of first and second kind, respectively, when the null hypothesis $H_0 \, : \, \psi = \psi^*$ is rejected with probability $u(y_1, \dots, y_n)$.\end{remark}
   Loosely speaking, Theorem \ref{BvM} implies that, if the model is well-specified and $\psi$ is suitably rescaled, the posterior distribution converges to a multivariate normal. The result holds under some identifiability requirements: first of all, the true parameter $\psi^*$ must belong to the support of the prior; moreover, we must be able to separate $\psi^*$ from the complements of its neighborhood, given infinitely many data. Such assumption is mild in most interesting cases and it is implied by the existence of uniformly consistent estimators for $\psi$ (that is guaranteed if the support of $p_0$ is compact). See Chapter $10$ in \cite{V00} for more details. 
Finally, the Fisher Information matrix must be non singular. 
\begin{remark}
Notice that Theorem \ref{BvM} requires the model to be (perfectly) well-specified, which rarely happens in practice. However there exist extended versions for the case of misspecified likelihoods \citep{KV12}, where the limiting distribution is still Gaussian with a different covariance matrix. Indeed, we expect the results of this and the following sections to hold in a similar way under misspecification: of course the different limiting distribution will have an impact on the final result, especially in the application of Corollary \ref{mixing_gap}.
\end{remark}
We can now use Theorem \ref{thm:mixing_limit} and Corollary \ref{mixingCorollary} to bound the mixing times of the Gibbs sampler associated to model \eqref{parametric_model} as $n$ diverges.
\begin{proposition}\label{convergence_finite_dim}
Let model \eqref{parametric_model} satisfy the hypotheses of Theorem \ref{BvM} and  let $P_n$ be the Gibbs sampler kernel targeting $\pi_n(\d\psi) = p(\d\psi \mid Y^{(n)})$ by updating one coordinate of $\psi = (\psi_1, \dots, \psi_K)$ at a time. Then, for every $(M, \epsilon) \in [1, \infty) \times (0,1)$ there exists $T\left(\psi^*, \epsilon, M \right) < \infty$ such that
\[
\lim_{n\to\infty}
Q_{\psi^*}^{(n)}\left(t^{(n)}_{mix}(\epsilon, M)\leq T\left(\psi^*, \epsilon, M \right)\right) = 1\,.
\]
\end{proposition}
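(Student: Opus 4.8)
The plan is to verify the hypotheses of Corollary \ref{mixingCorollary} (equivalently Theorem \ref{thm:mixing_limit}) for the sequence $(\pi_n)_{n\geq 1}$ with $\pi_n(\d\psi) = p(\d\psi \mid Y^{(n)})$, and then invoke it with the correct choice of limiting distribution. Concretely, I would take $\tilde\pi = N(0, \mathcal{I}^{-1}(\psi^*))$ — note this is a \emph{fixed} (non-random) distribution, as required by (A1) — and take the coordinate-wise transformation $\phi_n(\psi) = \sqrt{n}(\psi - \psi^*)$, which indeed acts coordinate-wise since $\phi_{n,j}(\psi_j) = \sqrt{n}(\psi_j - \psi_j^*)$, and each $\phi_{n,j}$ is injective and measurable (affine). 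The issue is that Bernstein-von Mises gives convergence of $\tilde\pi_n = \pi_n \circ \phi_n^{-1}$ to $N(\mathcal{I}^{-1}(\psi^*)\Delta_{n,\psi^*}, \mathcal{I}^{-1}(\psi^*))$, whose mean $\mathcal{I}^{-1}(\psi^*)\Delta_{n,\psi^*}$ is a nondegenerate random vector (it is asymptotically $N(0,\mathcal{I}^{-1}(\psi^*))$). So the stated BvM limit is random, while (A1) wants a fixed $\tilde\pi$. I would resolve this either by (i) shifting: defining $\phi_n(\psi) = \sqrt{n}(\psi-\psi^*) - \mathcal{I}^{-1}(\psi^*)\Delta_{n,\psi^*}$ — but this breaks the coordinate-wise structure in general since $\mathcal{I}^{-1}(\psi^*)$ is not diagonal — or, more cleanly, (ii) observing that a data-dependent \emph{location} shift of the target leaves the Gibbs sampler mixing time unchanged (it is a coordinate-wise translation composed appropriately), so one can first reduce to a recentered sequence whose TV-limit is the fixed Gaussian $N(0,\mathcal{I}^{-1}(\psi^*))$. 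Handling this recentering carefully, while staying within the coordinate-wise constraint of (A1), is the main obstacle; the natural fix is to allow the limiting distribution to depend on the data (as flagged in the second Remark after (A1)), or to note that $\Delta_{n,\psi^*}$ is $O_P(1)$ so one works on events where it lies in a fixed compact set and absorbs the shift into the warm-start constant.

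Granting that reduction, the verification of (A1) is immediate: BvM (Theorem \ref{BvM}) gives $\|\tilde\pi_n - \tilde\pi\|_{TV} \to 0$ in $Q_{\psi^*}^{(\infty)}$-probability, where here $Q^{(n)} = Q_{\psi^*}^{(n)}$ and $Q^{(\infty)} = Q_{\psi^*}^{(\infty)}$ are the relevant product measures. The hypotheses of Theorem \ref{BvM} — one-to-one $\psi \mapsto f(\cdot\mid\psi)$, continuous differentiability of $\psi\mapsto\sqrt{f(y\mid\psi)}$, nonsingular continuous Fisher information, prior with continuous positive density near $\psi^*$, and the testing condition \eqref{tests} on a compact neighborhood $\Psi$ of $\psi^*$ — are exactly what Proposition \ref{convergence_finite_dim} assumes, so this step is a direct citation.

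Next I would check that the limiting operator $\tilde P$ (the deterministic-scan Gibbs sampler that updates one coordinate of $\psi$ at a time, targeting the nondegenerate Gaussian $\tilde\pi = N(0,\mathcal{I}^{-1}(\psi^*))$) has finite mixing time $\tilde t_{mix}(\epsilon, M) < \infty$ for every $(M,\epsilon) \in [1,\infty)\times(0,1)$. Since $\mathcal{I}(\psi^*)$ is nonsingular, the Gaussian is nondegenerate on $\R^K$, and the deterministic-scan Gibbs sampler on a nondegenerate multivariate Gaussian is geometrically ergodic with positive spectral gap — this is classical (e.g.\ \cite{A91, R97}), and for $K=2$ one could invoke Corollary \ref{mixing_gap} directly, while for general $K$ one uses the explicit $L^2$ bound of Theorem $2$ in \cite{A91} (which transfers to TV via \cite{AL22}). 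In either case $\tilde t_{mix}(\epsilon, M)$ is bounded by an explicit finite quantity depending only on $\mathcal{I}^{-1}(\psi^*)$, $\epsilon$, and $M$; call it $T(\psi^*, \epsilon, M)$.

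Finally, applying Corollary \ref{mixingCorollary} (the first case, since $\tilde t_{mix}(\epsilon, M) < \infty$) yields $Q_{\psi^*}^{(n)}(t^{(n)}_{mix}(\epsilon,M) \leq \tilde t_{mix}(\epsilon,M)) \to 1$, i.e.\ the claim with $T(\psi^*,\epsilon,M) = \tilde t_{mix}(\epsilon,M)$. If the recentering in the first paragraph has been handled by conditioning on a compact event for $\Delta_{n,\psi^*}$ rather than by an exact argument, one should also absorb the resulting inflation of $M$ (which stays bounded on that event, up to a constant depending on $\psi^*$) into the constant $T(\psi^*,\epsilon,M)$; the $Q_{\psi^*}^{(n)}$-probability of the bad event vanishes, so the $\to 1$ conclusion is unaffected. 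The only genuinely delicate point throughout remains the interplay between the data-dependent BvM centering and the coordinate-wise requirement of (A1); everything else is assembling the machinery of Section \ref{section:gibbs} with the classical Gaussian Gibbs sampler bounds.
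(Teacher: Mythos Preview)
Your overall strategy matches the paper's proof exactly: verify (A1) via Bernstein--von~Mises, then apply Corollary~\ref{mixingCorollary}, noting that the limiting Gibbs sampler targets a nondegenerate Gaussian and hence has finite mixing times by the classical results in \cite{A91}. The paper's proof is precisely this, in three lines.

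However, the ``main obstacle'' you identify is not an obstacle at all. You write that the transformation $\phi_n(\psi)=\sqrt{n}(\psi-\psi^*)-\mathcal{I}^{-1}(\psi^*)\Delta_{n,\psi^*}$ ``breaks the coordinate-wise structure in general since $\mathcal{I}^{-1}(\psi^*)$ is not diagonal.'' This is a confusion: the vector $\mathcal{I}^{-1}(\psi^*)\Delta_{n,\psi^*}$ does not depend on $\psi$ --- it is a fixed (data-dependent) element of $\R^K$. Hence $\phi_n$ is an affine map of the form $\psi\mapsto\sqrt{n}\,\psi - c_n$ for a constant vector $c_n$, and its $j$-th component is $\phi_{n,j}(\psi_j)=\sqrt{n}\,\psi_j-(c_n)_j$, which depends only on $\psi_j$. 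So $\phi_n$ is coordinate-wise in the sense of \eqref{trans_phi}, regardless of whether the matrix used to \emph{compute} $c_n$ is diagonal. The paper simply takes this $\phi_n$, obtains $\tilde\pi=N(\bm{0},\mathcal{I}^{-1}(\psi^*))$ directly from Theorem~\ref{BvM}, and is done; no recentering argument, no conditioning on compact events for $\Delta_{n,\psi^*}$, and no inflation of $M$ is needed. (Note also that (A1) implicitly allows $\phi_n$ to be $Y^{(n)}$-measurable, as the convergence \eqref{eq:conv_to_pi_inf} is stated in $Q^{(n)}$-probability; the paper uses exactly such data-dependent transformations in \eqref{tilde_psi} and \eqref{tilde_T}.) Once you correct this, your proposal collapses to the paper's proof.
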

   Proposition \ref{convergence_finite_dim} shows that, under the conditions of Theorem \ref{BvM} and starting from an $M$-warm distribution, the number of iterations required to get $\epsilon$-close to the posterior distribution does not grow as $n \to \infty$. 
An application to the normal model with unknown mean and precision is given by Corollary $C.7$ in Section $C.10$ of Appendix C.

The main take-away of this Section is that, under relatively mild conditions, the Gibbs sampler behaves well with models of fixed dimensionality and growing number of observations. In the remaining of the paper we consider the more challenging setting of hierarchical models, where the number of parameters grows with the number of observations: in particular we will explore situations in which the number of required iterations remains fixed even with a growing dimensionality of the problem. 

\section{Hierarchical models with exponential family priors and generic likelihood}\label{hierarchical}
We consider a general class of hierarchical models, with data divided in $J$ groups, each having a set of group-specific parameters $\theta_j$. The latter share a common prior with hyper-parameters $\psi$.
Recalling \eqref{eq:one_level_nested_intro}, the model under consideration is
\begin{equation}\label{one_level_nested}
Y_j \mid \theta_j \sim f(\cdot \mid \theta_j)\,,\quad
 \theta_j\mid \psi\overset{\text{iid}}{\sim} p(\cdot \mid \psi)\,,\quad
 \psi \sim p_0(\cdot).
\end{equation}
We assume that the prior for $\theta_j \in \R^{\ell}$ belongs to the exponential family, that is
\begin{equation}\label{exponential_family}
p(\theta \mid \psi) = h(\theta)\text{exp}\left\{ \sum_{s = 1}^S\eta_s(\psi)T_s(\theta)-A(\psi)\right\},
\end{equation}
where $\psi \in \R^D$, $h \, : \, \R^{\ell} \, \to \, \R_+$ is a non-negative function and $\eta_s(\psi)$, $T_s(\theta)$ and $A(\psi)$ are known real-valued functions with domains $\R^D$, $\R^\ell$ and $\R^D$ respectively.
We will always assume the family to be minimal, that is both $(\eta_1(\psi),\dots,\eta_S(\psi))$ and $(T_1(\theta),\dots,T_S(\theta))$ are linearly independent.
On the other hand, we let $f(y \mid \theta)$ be an arbitrary likelihood function with data $y\in\R^m$ and parameters $\theta \in \R^{\ell}$, dominated by a suitable $\sigma$-finite measure (usually Lebesgue or counting one).

Denoting $\bm{\theta} = (\theta_1, \dots, \theta_J)$, $Y_{1:J} = \left(Y_1, \dots, Y_J\right)$ and $\pi_J(\d\bm{\theta}, \d\psi)=\L\left(\d\bm{\theta}, \d\psi \mid Y_{1:J}\right)$, we are interested in studying the two-block Gibbs sampler targeting $\pi_J(\d\bm{\theta}, \d\psi)$, i.e.\ the kernel defined as
\begin{equation}\label{two_blocks_gibbs_nested}
P_J\left(\left(\bm{\theta}^{(t-1)}, \psi^{(t-1)}\right), \left(\d \bm{\theta}^{(t)}, \d \psi^{(t)}\right)  \right) = \pi_J \left(\d \bm{\theta}^{(t)} \mid \psi^{(t-1)} \right)\pi_J\left(\d \psi^{(t)}\mid \bm{\theta}^{(t)} \right)\,.
\end{equation}
Throughout Section \ref{hierarchical} we denote by $\left(\bm{\theta}^{(t)}, \psi^{(t)} \right)_{t \geq 1}$ the Markov chain with operator $P_J$, and by $t^{(J)}_{mix}$ the associated mixing times, i.e.
\[
t^{(J)}_{mix}(\epsilon,\mu)=\inf\left\{t\geq 0\,:\, \lTV \mu P_J^t-\pi_J \rTV<\epsilon \right\}, \quad t^{(J)}_{mix}(\epsilon,M) = \sup_{\mu \in \sN \left(\pi_J, M \right)} t^{(J)}_{mix}(\epsilon, \mu).
\]

\subsection{Dimensionality reduction}\label{sec:dim_red}
In order to apply Corollary \ref{mixingCorollary} to characterize $t^{(J)}_{mix}$, we would need to study the asymptotic distribution of $\pi_J$ as $J \to \infty$. The latter is a distribution over $\ell J+D$ parameters, therefore its dimensionality grows with the size of the data. However, the next lemma shows that the convergence properties of $P_J$ can be described through a Gibbs sampler on an intractable, but fixed-dimensional target, namely $\hat{\pi}_J(\d\bT, \d\psi)=\L\left(\d\bT, \d\psi \mid Y_{1:J}\right)$ where $\bm{T} = \left(\sum_{j = 1}^JT_1(\theta_j), \dots, \sum_{j = 1}^JT_S(\theta_j) \right)$, with $T_s$ as in \eqref{exponential_family}.
Let $\left(\bT^{(t)}, \psi^{(t)} \right)_{t \geq 1}=\left(\bT(\bm{\theta}^{(t)}), \psi^{(t)} \right)_{t \geq 1}$ be the stochastic process obtained as a time-wise mapping of $\left(\bm{\theta}^{(t)}, \psi^{(t)} \right)_{t \geq 1}$ under $(\bm{\theta}, \psi)\mapsto (\bT(\bm{\theta}), \psi)$. The latter process contains all the information characterising the convergence of $\left(\bm{\theta}^{(t)}, \psi^{(t)} \right)_{t \geq 1}$, in the sense made precise in the following lemma. Below we denote by $\hat{P}_J$ the kernel of the two-block Gibbs sampler targeting $\hat{\pi}_J$.
\begin{lemma}\label{sufficient_lemma}
For each $J\geq 1$, the process $\left(\bT^{(t)}, \psi^{(t)} \right)_{t \geq 1}$ is a Markov chain, its transition kernel coincides with $\hat{P}_J$, and its mixing times $\hat{t}^{(J)}_{mix}$ satisfy
\begin{align*}
\sup_{\mu \in \sN \left(\pi_J, M \right)} t^{(J)}_{mix}(\epsilon, \mu)
&= \sup_{\nu \in \sN \left(\hat{\pi}_J, M \right)} \hat{t}^{(J)}_{mix}(\epsilon, \nu)
&(M, \epsilon) \in [1, \infty) \times (0,1)\,.
\end{align*}
\end{lemma}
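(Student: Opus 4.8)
The plan is to establish three facts in sequence: (i) the mapped process $(\bT^{(t)},\psi^{(t)})_{t\geq1}$ is Markov with kernel $\hat P_J$; (ii) the two Gibbs samplers are intertwined via the deterministic map $g:(\bm\theta,\psi)\mapsto(\bT(\bm\theta),\psi)$; and (iii) this intertwining, together with the exponential-family structure, forces the $M$-warm worst-case mixing times to coincide. The key structural input is that, because $p(\cdot\mid\psi)$ belongs to the exponential family \eqref{exponential_family}, the full conditional of $\psi$ depends on $\bm\theta$ only through $\bT(\bm\theta)$, i.e.\ $\pi_J(\d\psi\mid\bm\theta)=\hat\pi_J(\d\psi\mid\bT(\bm\theta))$: indeed the joint density factorises as $p_0(\psi)\,\big[\prod_j f(Y_j\mid\theta_j)h(\theta_j)\big]\exp\{\sum_s\eta_s(\psi)T_s(\bm\theta)-JA(\psi)\}$, and only the last exponential factor involves $\psi$, through $\bT(\bm\theta)$ alone. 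Meanwhile $\pi_J(\d\bm\theta\mid\psi)$ pushes forward under $\bT$ to $\hat\pi_J(\d\bT\mid\psi)$ by definition of $\hat\pi_J$ as the law of $\bT$ under $\pi_J$.

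For step (i)–(ii), I would verify that one step of $P_J$ followed by the map $g$ has the same law as one step of $\hat P_J$ applied to $g$ of the current state. Starting from $(\bm\theta^{(t-1)},\psi^{(t-1)})$: the $\psi$-update draws $\psi^{(t)}\sim\pi_J(\d\psi\mid\bm\theta^{(t-1)})=\hat\pi_J(\d\psi\mid\bT^{(t-1)})$, which depends on the past only through $\bT^{(t-1)}$; the $\bm\theta$-update draws $\bm\theta^{(t)}\sim\pi_J(\d\bm\theta\mid\psi^{(t)})$, and pushing forward under $\bT$ gives $\bT^{(t)}\sim\hat\pi_J(\d\bT\mid\psi^{(t)})$, again depending on the past only through $\psi^{(t)}$. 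Hence $(\bT^{(t)},\psi^{(t)})$ evolves autonomously with exactly the transition kernel $\hat P_J(\,(\bT^{(t-1)},\psi^{(t-1)}),\cdot\,)=\hat\pi_J(\d\bT^{(t)}\mid\psi^{(t-1)})\hat\pi_J(\d\psi^{(t)}\mid\bT^{(t)})$; note the order of the two blocks is inherited consistently because the $\psi$-update is a function of $\bT$ only. This is a standard ``lumping''/Markov-function argument, and the exponential-family sufficiency is precisely what makes the lumping exact here.

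For step (iii), the identity of mixing times, I would argue both inequalities. First, by (ii) the chain $\mu P_J^t$ pushed forward under $g$ equals $(g_\#\mu)\hat P_J^t$, so $\|\mu P_J^t-\pi_J\|_{TV}\geq\|(g_\#\mu)\hat P_J^t-\hat\pi_J\|_{TV}$ (total variation is non-increasing under pushforward), and since $\mu\in\sN(\pi_J,M)$ implies $g_\#\mu\in\sN(\hat\pi_J,M)$, this gives $\sup_{\mu\in\sN(\pi_J,M)}t^{(J)}_{mix}(\epsilon,\mu)\geq\sup_{\nu\in\sN(\hat\pi_J,M)}\hat t^{(J)}_{mix}(\epsilon,\nu)$. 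For the reverse inequality I need that the mapping $g$ does not lose total variation information about the part of $\pi_J$ the Gibbs sampler can move, and this is where the argument is most delicate: the point is that, conditionally on $\bT(\bm\theta)$ (and hence on $\psi$), the coordinates of $\bm\theta$ are already at stationarity after the first $\bm\theta$-update, because $\pi_J(\d\bm\theta\mid\psi)$ depends on nothing else. Concretely, for $t\geq1$ one can write $\mu P_J^t(\d\bm\theta,\d\psi)=\int\pi_J(\d\bm\theta\mid\psi)\,[(g_\#\mu)\hat P_J^t](\d\bT_{\mathrm{dummy}},\d\psi)$ — the law of $\bm\theta$ given $\psi$ is the stationary conditional regardless of the start — so $\|\mu P_J^t-\pi_J\|_{TV}=\|(g_\#\mu)\hat P_J^t-\hat\pi_J\|_{TV}$ exactly. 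Then, given any $\nu\in\sN(\hat\pi_J,M)$, I would exhibit a lift $\mu\in\sN(\pi_J,M)$ with $g_\#\mu=\nu$, e.g.\ $\mu(\d\bm\theta,\d\psi)=\pi_J(\d\bm\theta\mid\bT(\bm\theta)=\bT,\psi)\,\nu(\d\bT,\d\psi)$ suitably interpreted; checking $\mu\in\sN(\pi_J,M)$ reduces to the warm-start bound for $\nu$ because $\pi_J$ disintegrates the same way. Combining the two directions yields the claimed equality.

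The main obstacle I anticipate is the reverse inequality in step (iii) — specifically, making rigorous the claim that after one $\bm\theta$-update the law of $\bm\theta$ given $\psi$ is \emph{exactly} the stationary conditional $\pi_J(\d\bm\theta\mid\psi)$, and constructing the measurable lift of an arbitrary warm start $\nu$ back to a warm start $\mu$ on the full space with the correct pushforward. Both hinge on a clean disintegration of $\pi_J$ with respect to the (generally non-injective, non-smooth) statistic map $\bT$, so some care with regular conditional distributions is needed; modulo that measure-theoretic bookkeeping the argument is essentially the exponential-family sufficiency identity applied twice.
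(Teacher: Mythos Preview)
Your approach is essentially the paper's: show the lumped process has kernel $\hat P_J$ via exponential-family sufficiency, obtain the exact TV identity $\lTV\mu P_J^t-\pi_J\rTV=\lTV(g_\#\mu)\hat P_J^t-\hat\pi_J\rTV$, and use the lift $\mu(\d\bm\theta,\d\psi)=\int\pi_J(\d\bm\theta\mid\bT,\psi)\,\nu(\d\bT,\d\psi)$ to show that $g_\#$ maps $\sN(\pi_J,M)$ onto $\sN(\hat\pi_J,M)$. The only packaging difference is that the paper obtains the TV identity in one line by recognising the two chains as \emph{co-deinitializing} and invoking Corollary~2 of \cite{R01}, whereas you argue it directly from the observation that the last-updated block sits at its stationary conditional after each sweep; both routes amount to the same thing.

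Two small points to tidy up. First, your narrative in step (i)--(ii) updates $\psi$ before $\bm\theta$, but both your displayed kernel $\hat\pi_J(\d\bT^{(t)}\mid\psi^{(t-1)})\hat\pi_J(\d\psi^{(t)}\mid\bT^{(t)})$ and the paper's definition \eqref{two_blocks_gibbs_nested} update $\bm\theta$ first; with that order it is $\psi\mid\bT(\bm\theta)$, not $\bm\theta\mid\psi$, that is exactly at its stationary conditional after each sweep, so your displayed formula in (iii) should be adjusted accordingly. Second, the first inequality you claim in (iii) (``pushforward of TV plus $g_\#\mu\in\sN(\hat\pi_J,M)$ gives $\sup_\mu\geq\sup_\nu$'') does not follow as written, because $g_\#$ only lands in a subset of $\sN(\hat\pi_J,M)$; you need the lift for that direction too, so you may as well skip the inequality and go straight to the exact TV identity, after which pushforward and lift give the two sup-inequalities cleanly.
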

\begin{remark}[Prior and likelihood assumptions]\label{rmk:lik_non_exp}
In order to reduce the dimensionality of the Markov chain under consideration, Lemma \ref{sufficient_lemma} requires the existence of sufficient statistics only for the prior density of the group-specific parameters. It does not require any condition on the likelihood function in model \eqref{one_level_nested}.
In particular, we have $\L\left(\d\psi \mid \bm{\theta}, Y_{1:J} \right) = \L\left(\d\psi \mid \bm{T}(\bm{\theta}), Y_{1:J} \right) $, while $\L\left(\d Y_{1:J} \mid \bm{\theta}, \psi  \right)\neq \L\left(\d Y_{1:J} \mid \bm{T}(\bm{\theta}), \psi \right)$ in general.
\end{remark}

 Lemma \ref{sufficient_lemma} allows to focus the analysis on the convergence speed of $\left(\bT^{(t)}, \psi^{(t)} \right)_{t \geq 1}$, 
 which is a chain whose dimensionality does not grow with the size of the data. 
 Note that its target distribution $\hat{\pi}_J$ is usually not available in closed form, and the corresponding two-block Gibbs sampler $\hat{P}_J$ cannot be implemented directly (unless by implementing the original algorithm $P_J$ and keeping track of $\left(\bT^{(t)}, \psi^{(t)} \right)_{t \geq 1}$). In this sense the latter chain is useful for convergence analysis purposes but less so as an algorithmic shortcut. 

The result of Lemma \ref{sufficient_lemma} is a peculiar property of the Gibbs sampler, which naturally ignores ancillary information about $\psi$ in $\bm{\theta}$. 
Indeed, the proof of Lemma \ref{sufficient_lemma} crucially relies on the fact that the algorithm is performing exact conditional updates and analogous reductions do not occur for most other MCMC schemes (e.g.\ Metropolis-Hastings based schemes, including gradient-based ones). 

This dimensionality reduction trick can be applied beyond hierarchical models and has already been employed in similar settings, mainly with the idea of obtaining suitable drift functions \citep{R95}: for example, in \cite{Q19} it is used to derive the convergence complexity of a data augmentation algorithm for the Bayesian probit regression model, while in \cite{RS15} a similar tecnique allows to study the geometric convergence rate of a Gibbs sampler for high dimensional Bayesian linear regression.

\subsection{Regularity assumptions and main result}\label{section:main_result}
In order to apply the techniques of Theorem \ref{thm:mixing_limit}, we need to provide an asymptotic characterization of $\hat{\pi}_J$. 
To do so we require the technical assumptions listed in this section. 
The assumptions will be verified in specific examples in Section \ref{normal_case} and \ref{binary_data}.

The approach we use to analyse $\hat{\pi}_J$, which is discussed after Theorem \ref{theorem_one_level_nested}, is based on the decomposition $\hat{\pi}_J(\d\bT, \d \psi)=\hat{\pi}_J(\d\psi)\hat{\pi}_J(\d\bT\mid\psi)$.
The first set of assumptions contains standard regularity and identifiability conditions to study the marginal distribution $\hat{\pi}_J(\d\psi)$. In particular, assumptions $(B1)-(B3)$ allow the application of Theorem \ref{BvM} to the posterior distribution of $\psi$. Their applicability has been discussed in Section \ref{sec:parametric_model}.
We denote the marginal likelihood of the model, obtained by integrating out the group specific parameter $\theta$, as
\begin{equation}\label{likelihood_data}
g(y \mid \psi) = \int_{\R^\ell}f(y \mid \theta)p(\theta \mid \psi) \, \d \theta\,,
\end{equation}
and its Fisher Information matrix as
\[
\left[\Fisher(\psi)\right]_{d, d'} = E \biggl[\left\{\partial_ {\psi_d} \log g(Y \mid \psi)\right\} \, \left\{\partial_{\psi_{d'}} \log g(Y \mid \psi)\right\} \biggr], \quad d, d' = 1, \dots, D.
\]
We will assume the following:
\begin{enumerate}
\item[$(B1)$] There exists $\psi^* \in \R^D$ such that
$
Y_j \simiid Q_{\psi^*}$ for $j = 1,2, \dots$,
where $Q_{\psi^*}$ admits density $g(y \mid \psi^*)$. Moreover the map $\psi \to g(\cdot \mid \psi)$ is one-to-one and the map $\psi \to \sqrt{g(x \mid \psi)}$ is continuously differentiable for every $x$.  Finally, the prior density $p_0$ is continuous and strictly positive in a neighborhood of $\psi^*$.
\item[$(B2)$] There exist a compact neighborhood $\Psi$ of $\psi^*$ and a sequence of tests $u_j \, : \, \R^{mJ} \, \to \, [0,1]$ such that
$\int_{\R^{mJ}}u_j\left(y_1, \dots, y_J\right) \prod_{j = 1}^Jg(y_j \mid \psi^*) \, \d y_{1:J} \to 0$
and\\
$
\sup_{\psi \not\in \Psi} \, \int_{\R^{mJ}}\left[ 1-u_j\left(y_1, \dots, y_J\right)\right] \prod_{j = 1}^Jg(y_j \mid \psi) \, \d y_{1:J} \to 0
$, as $J \to \infty$. 
\item[$(B3)$] The Fisher Information matrix $\Fisher(\psi)$ is non-singular and continuous w.r.t.\ $\psi$.
\end{enumerate}

The second set of regularity assumptions (B4)-(B6) are described and discussed in Appendix B. They deal with smoothness and regularity of the conditional distribution $\hat{\pi}_J(\bT|\psi)$ and they allow to derive a suitable conditional Central Limit Theorem in total variation for $\hat{\pi}_J(\bT|\psi)$ as $J\to\infty$.

We can now state the main result of this section. Below we denote the product measures associated to $Q_{\psi^*}$ by $Q^{(J)}_{\psi^*}$ and $Q^{(\infty)}_{\psi^*}$.
\begin{theorem}\label{theorem_one_level_nested}
Consider model \eqref{one_level_nested} and the Gibbs sampler defined as in \eqref{two_blocks_gibbs_nested}, with mixing times $t^{(J)}_{mix}(\epsilon,M)$. Then, under assumptions $(B1)$-$(B6)$, for every $(M, \epsilon) \in [1, \infty) \times (0,1)$ there exists $T\left(\psi^*, \epsilon, M \right) < \infty$ such that
\[
Q_{\psi^*}^{(J)}\left(t^{(J)}_{mix}(\epsilon, M)\leq T\left(\psi^*, \epsilon, M \right)\right) \to 1,
\]
as $J \to \infty$. 
It follows that $t^{(J)}_{mix}(\epsilon, M)=\sO_P(1)$ as $J\to\infty$.
\end{theorem}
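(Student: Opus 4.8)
The plan is to reduce to a fixed-dimensional problem via Lemma \ref{sufficient_lemma}, verify assumption (A1) of Section \ref{section:gibbs} for the reduced target, and then invoke Corollary \ref{mixingCorollary} (or the sharper Corollary \ref{mixing_gap}, since $K=2$). By Lemma \ref{sufficient_lemma} we have $\sup_{\mu\in\sN(\pi_J,M)}t^{(J)}_{mix}(\epsilon,\mu)=\sup_{\nu\in\sN(\hat\pi_J,M)}\hat t^{(J)}_{mix}(\epsilon,\nu)$, so it suffices to control the warm-start mixing times of the two-block Gibbs sampler $\hat P_J$ targeting $\hat\pi_J(\d\bT,\d\psi)=\L(\d\bT,\d\psi\mid Y_{1:J})$ on $\R^{S}\times\R^{D}$, whose dimension does not depend on $J$. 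I would then show that $(\hat\pi_J)_{J\ge1}$ satisfies (A1) with $K=2$ and a non-degenerate Gaussian limit $\tilde\pi$, so that $\tilde t_{mix}(\epsilon,M)<\infty$, and set $T(\psi^*,\epsilon,M)=\tilde t_{mix}(\epsilon,M)$.

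The heart of the argument is the verification of (A1), via the factorisation $\hat\pi_J(\d\bT,\d\psi)=\hat\pi_J(\d\psi)\,\hat\pi_J(\d\bT\mid\psi)$. The $\psi$-marginal $\hat\pi_J(\d\psi)=\L(\d\psi\mid Y_{1:J})$ is the posterior of the parametric model with i.i.d.\ data $Y_j\sim g(\cdot\mid\psi)$ and $\psi\sim p_0$, where $g$ is the marginal likelihood \eqref{likelihood_data}; assumptions (B1)--(B3) are precisely the hypotheses of Theorem \ref{BvM} for this model, so the law of $\sqrt{J}(\psi-\psi^*)$, recentred by the efficient-score term, converges in total variation to $N(\mathbf{0},\Fisher^{-1}(\psi^*))$ in $Q^{(\infty)}_{\psi^*}$-probability. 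For the conditional law, given $\psi$ the coordinates $\theta_1,\dots,\theta_J$ are independent with $\theta_j\sim\L(\d\theta_j\mid Y_j,\psi)\propto f(Y_j\mid\theta_j)\,p(\theta_j\mid\psi)$, so $\bT=\big(\sum_{j}T_1(\theta_j),\dots,\sum_{j}T_S(\theta_j)\big)$ is a sum of independent but non-identically-distributed terms; assumptions (B4)--(B6) supply the regularity/smoothness needed for a total-variation central limit theorem for this array, uniformly over $\psi$ near $\psi^*$. Centring $\bT$ at $J\,\E_{\psi^*}[T(\theta)]$ and rescaling by $\sqrt{J}$, the residual $\sqrt{J}\big(\E_\psi[T(\theta)]-\E_{\psi^*}[T(\theta)]\big)$ is asymptotically linear in the rescaled $\psi$, so a suitable coordinate-wise affine (possibly data-dependent) map $\phi_J$, acting separately on the $\bT$- and $\psi$-blocks, sends $\hat\pi_J$ to a law converging in TV to a fixed jointly Gaussian $\tilde\pi$ on $\R^{S+D}$; patching the marginal and (uniform) conditional convergences into joint TV convergence is routine, e.g.\ via $\lTV\hat\pi_J-\tilde\pi\rTV\le\lTV\hat\pi_J(\d\psi)-\tilde\pi(\d\psi)\rTV+\int\lTV\hat\pi_J(\cdot\mid\psi)-\tilde\pi(\cdot\mid\psi)\rTV\,\hat\pi_J(\d\psi)$. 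This is the content of Proposition \ref{limiting_sigma}.

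It then remains to show $\text{Gap}(\tilde P)>0$, where $\tilde P$ is the two-block Gibbs sampler on $\tilde\pi$. Since $\tilde\pi$ is Gaussian with $\psi$-block covariance $\Fisher^{-1}(\psi^*)$ (non-singular by (B3)) and $\bT$-given-$\psi$ conditional covariance non-singular (by (B4)--(B6) together with minimality of the exponential family \eqref{exponential_family}), the full covariance of $\tilde\pi$ is non-singular, and a two-block Gibbs sampler on a non-degenerate Gaussian has strictly positive spectral gap and hence finite $L^2$- and total-variation mixing times \citep{A91,AL22}. Corollary \ref{mixing_gap}, applied to $(\hat\pi_J)_J$ and transported back through Lemma \ref{sufficient_lemma}, then yields $Q^{(J)}_{\psi^*}\big(t^{(J)}_{mix}(\epsilon,M)\le 1+(\log(M/2)-\log\epsilon)/(-\log(1-\text{Gap}(\tilde P)))\big)\to1$, which is finite, and $t^{(J)}_{mix}(\epsilon,M)=\sO_P(1)$ follows as in the remark after Corollary \ref{mixingCorollary}.

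The main obstacle is the conditional total-variation CLT for $\bT$ given $\psi$: unlike a weak CLT, TV convergence of $\sum_j T_s(\theta_j)$ needs a non-lattice/smoothing-type (Prokhorov- or Edgeworth-style) input, it must hold uniformly over $\psi$ in a neighbourhood of $\psi^*$ and jointly with the randomness of $Y_{1:J}$ (the summands have $j$-dependent laws $\L(\d\theta_j\mid Y_j,\psi)$), and the limiting conditional covariance must be certified non-degenerate so that $\text{Gap}(\tilde P)>0$ — this is exactly what assumptions (B4)--(B6) are engineered to deliver, and verifying them in concrete likelihoods (Section \ref{examples}) is itself substantial work. Reconciling the data-dependent recentring inside $\phi_J$ with the ``fixed $\tilde\pi$'' form of (A1) is a minor bookkeeping point (alternatively one recentres deterministically, or uses that Gibbs-sampler mixing times on a Gaussian depend only on its covariance).
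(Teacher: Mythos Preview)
Your proposal is correct and follows essentially the same route as the paper: reduce via Lemma~\ref{sufficient_lemma}, verify (A1) for $\hat\pi_J$ by combining the Bernstein--von~Mises result for $\psi$ (Lemma~\ref{asymptotic_distribution_psi}) with a conditional total-variation CLT for $\bT$ given $\psi$ (Lemma~\ref{asymptotic_distribution_T}), patch these into joint TV convergence, check non-singularity of the limiting Gaussian covariance, and apply Corollary~\ref{mixingCorollary}. The only minor discrepancies are that the paper centres $\bT$ at the data-dependent $\sum_j M^{(1)}(\psi^*\mid Y_j)$ rather than at $J\,\E_{\psi^*}[T(\theta)]$, proves the conditional CLT pointwise in $\tilde\psi$ (then integrates via dominated convergence, Lemma~C.18) rather than uniformly, and attributes non-singularity of $V(\psi^*)$ directly to (B4) rather than to minimality.
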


\begin{remark}\label{computational_cost}
Theorem \ref{theorem_one_level_nested} provides a formal proof of the linear in $J$ cost for Gibbs samplers on hierarchical models. Indeed, it proves that a bounded (in $J$) number of iterations suffices to get a good mixing: assuming that the cost of a single iteration scales linearly with $J$, which is typically the case, this implies an overall computational cost of order $\sO_P(J)$. Note that a single evaluation of the likelihood of $(\bm{\theta}, \psi)$, or the associated gradients, which is required at every iteration of usual gradient-based methods, yields a cost of the same order.
\end{remark}
\begin{remark}\label{connection_localConsistency}
The conclusions of Theorem \ref{theorem_one_level_nested} are similar in spirit to those of \cite[Thm.1]{K14}. Also there the convergence of Gibbs Samplers targeting two-level hierarchical models is studied using tools from Bayesian asymptotics. The results therein, which deal with convergence of ergodic averages 
when the algorithm is started in stationarity, are quite different from ours, which deal with mixing times. Nonetheless they also support the idea that Gibbs samplers targeting two-level hierarchical models can exhibit $\sO_P(1)$ convergence as $J\to\infty$.
\end{remark}

\subsection{Posterior convergence lemmas for Theorem \ref{theorem_one_level_nested}}\label{section_sketch_main_result}
The proof of Theorem \ref{theorem_one_level_nested} can be found in Appendix C.
It relies on Lemma \ref{sufficient_lemma}, which allows to focus on the two-blocks Gibbs sampler targeting $\hat{\pi}_J(\d\bT,\d \psi)$, and on Lemmas \ref{asymptotic_distribution_psi} and \ref{asymptotic_distribution_T} below. 
These two lemmas imply that $\hat{\pi}_J(\d\bT, \d\psi)$ satisfies assumption $(A1)$ as $J\to\infty$ and that the associated limiting kernel is ergodic, thus allowing to apply Corollary \ref{mixingCorollary}. 

In order to prove $(A1)$ for $\hat{\pi}_J(\d\bT, \d\psi)=\L\left(\d\bm{T}, \d\psi \mid Y_{1:J}\right)$, we need to identify a suitable transformation of $\left(\bm{T}, \psi\right)$, denoted by $\left( \tilde{\bm{T}}, \tilde{\psi}\right)$. 
We define a one-to-one transformation of $\psi$ as
\begin{equation}\label{tilde_psi}
\tilde{\psi} = \sqrt{J}\left(\psi-\psi^* \right)-\Delta_J, \quad \Delta_J = \frac{1}{\sqrt{J}}\sum_{j = 1}^J\Fisher^{-1}(\psi^*) \nabla \log g(Y_j \mid \psi^*).
\end{equation}
The asymptotic distribution of $\tilde{\psi}$ follows directly through Theorem \ref{BvM}, as summarized in the next lemma.
\begin{lemma}\label{asymptotic_distribution_psi}
Define $\tilde{\psi}$ as in \eqref{tilde_psi}. Under assumptions $(B1)-(B3)$ it holds
\[
\lTV \L(\d\tilde{\psi} \mid Y_{1:J})-N\left(\bm{0}, \Fisher^{-1}(\psi^*) \right) \rTV \to 0,
\]
as $J \to \infty$, in $Q_{\psi^*}^{(\infty)}$-probability.
\end{lemma}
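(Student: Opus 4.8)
The plan is to deduce Lemma~\ref{asymptotic_distribution_psi} directly from the Bernstein--von Mises Theorem~\ref{BvM} applied to an appropriate fixed-dimensional parametric model. The starting point is the observation that, after integrating out $\bm{\theta}$, the marginal posterior $\L(\d\psi\mid Y_{1:J})$ coincides with the posterior of $\psi$ in model~\eqref{parametric_model} with the likelihood $f(\cdot\mid\psi)$ replaced by the marginal likelihood $g(\cdot\mid\psi)$ of~\eqref{likelihood_data} and sample size $n=J$. Indeed, by the conditional independence structure of~\eqref{one_level_nested} and Tonelli's theorem,
\[
\L(\d\psi\mid Y_{1:J})\ \propto\ p_0(\d\psi)\prod_{j=1}^J g(Y_j\mid\psi),
\]
which is exactly the form of the posterior arising from~\eqref{parametric_model} with likelihood $g$.

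Next I would check that assumptions $(B1)$--$(B3)$ are precisely the hypotheses required by Theorem~\ref{BvM} for this parametric model. Concretely, $(B1)$ supplies that $Y_j\simiid Q_{\psi^*}$ with density $g(\cdot\mid\psi^*)$, that $\psi\mapsto g(\cdot\mid\psi)$ is one-to-one, that $\psi\mapsto\sqrt{g(x\mid\psi)}$ is continuously differentiable for every $x$, and that $p_0$ is continuous and strictly positive near $\psi^*$; $(B2)$ supplies the compact neighbourhood $\Psi$ of $\psi^*$ and the sequence of tests $u_j$ satisfying the analogue of~\eqref{tests} with $g$ in place of $f$; and $(B3)$ supplies that the Fisher Information $\Fisher(\psi)$ of $g$ is non-singular and continuous. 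Applying Theorem~\ref{BvM} then yields, in $Q_{\psi^*}^{(\infty)}$-probability,
\[
\lTV \L\left(\d\,\sqrt{J}(\psi-\psi^*)\mid Y_{1:J}\right) - N\left(\Fisher^{-1}(\psi^*)\Delta_{J,\psi^*},\,\Fisher^{-1}(\psi^*)\right)\rTV\to 0,
\]
where $\Delta_{J,\psi^*}=\frac{1}{\sqrt J}\sum_{j=1}^J\nabla\log g(Y_j\mid\psi)\bigr\rvert_{\psi=\psi^*}$.

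Finally I would translate this into the stated claim about $\tilde\psi$. By~\eqref{tilde_psi}, $\tilde\psi=\sqrt{J}(\psi-\psi^*)-\Delta_J$ with $\Delta_J=\Fisher^{-1}(\psi^*)\Delta_{J,\psi^*}$, so $\tilde\psi$ is obtained from $\sqrt{J}(\psi-\psi^*)$ by subtracting the vector $\Delta_J$, which is measurable with respect to $Y_{1:J}$ and hence a constant once we condition on the data. Since the total variation distance between two laws is unchanged when the same bijective measurable map is applied to both arguments --- here the translation $z\mapsto z-\Delta_J$ --- and since translating $N(\Fisher^{-1}(\psi^*)\Delta_{J,\psi^*},\Fisher^{-1}(\psi^*))$ by $-\Delta_J$ gives $N(\bm 0,\Fisher^{-1}(\psi^*))$, we obtain
\[
\lTV \L(\d\tilde\psi\mid Y_{1:J})-N\left(\bm 0,\,\Fisher^{-1}(\psi^*)\right)\rTV = \lTV \L\left(\d\,\sqrt{J}(\psi-\psi^*)\mid Y_{1:J}\right) - N\left(\Fisher^{-1}(\psi^*)\Delta_{J,\psi^*},\,\Fisher^{-1}(\psi^*)\right)\rTV,
\]
whose right-hand side vanishes in $Q_{\psi^*}^{(\infty)}$-probability by the previous display. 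I do not expect any substantial obstacle: the only points needing minor care are the marginalisation identity for $\L(\d\psi\mid Y_{1:J})$ and the fact that recentring by $\Delta_J$ is legitimate precisely because $\Delta_J$ is measurable with respect to the conditioning $\sigma$-algebra, so the translation invariance of the total variation distance may be invoked conditionally on the data.
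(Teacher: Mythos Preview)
Your proposal is correct and follows essentially the same approach as the paper: the paper's proof is a one-liner stating that the result follows immediately from Theorem~\ref{BvM} applied with likelihood $g(y\mid\psi)$, since $(B1)$--$(B3)$ are exactly its hypotheses. You have simply spelled out the marginalisation identity and the translation step in more detail than the paper bothers to.
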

 Let $M^{(1)}(\psi \mid y) = \left(M_1^{(1)}(\psi \mid y), \dots, M_S^{(1)}(\psi \mid y) \right) \in \R^S$ with $M^{(1)}_{s}(\psi \mid y) = E \left[T_s(\theta_j) \mid Y_j = y, \psi \right]$ and
\begin{equation}\label{def_C_V}
\left[C(\psi)\right]_{s,d} = E_{Y_j} \left[\partial_{\psi_d}M_s^{(1)}\left(\psi \mid Y_j \right) \right], \quad \left[V(\psi) \right]_{s,s'} = E_{Y_j} \left[\text{Cov}\left(T_s(\theta_j), T_{s'}(\theta_j) \mid Y_j, \psi \right) \right],
\end{equation}
with $s, s' = 1, \dots S$ and $d = 1, \dots, D$. We use the notation $E_{Y_j}[\cdot]$ for expectations with respect to the law of $Y_j$ as defined in $(B1)$. Then we define a one-to-one transformation of $\bm{T}$ as
\begin{equation}\label{tilde_T}
\tilde{\bm{T}} = \frac{1}{\sqrt{J}}\sum_{j = 1}^J\left[T(\theta_j)-M^{(1)}\left(\psi^* \mid Y_j \right) \right]-C(\psi^*) \Delta_J,
\end{equation}
with $C(\psi^*)$ defined in \eqref{def_C_V}. The next lemma proves the required asymptotic normality of $\tilde{\bm{T}}$, conditional to $\tilde{\psi}$ .
\begin{lemma}\label{asymptotic_distribution_T}
Let $\tilde{\bm{T}}$ be as in \eqref{tilde_T}. Under assumptions $(B1)$-$(B6)$ for every $\tilde{\psi}$ it holds
\[
\lTV \L(\d\tilde{\bm{T}} \mid Y_{1:J}, \tilde{\psi})-N\left(C(\psi^*)\tilde{\psi}, V(\psi^*) \right) \rTV \to 0,
\]
as $J \to \infty$, for $Q_{\psi^*}^{(\infty)}$-almost every $(Y_1,Y_2,\dots)$.
\end{lemma}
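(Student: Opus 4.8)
The plan is to establish a conditional Bernstein-von Mises / central limit statement for $\tilde{\bm{T}}$ given $\tilde\psi$ by reducing to a classical CLT for a sum of independent (but not identically distributed) random vectors, combined with a Laplace-type or change-of-measure argument to handle the conditioning. First I would write down the conditional density of $\bm T$ given $\psi$ and $Y_{1:J}$ explicitly. Since $\bm T = (\sum_j T_1(\theta_j),\dots,\sum_j T_S(\theta_j))$ and, conditionally on $\psi$ and $Y_{1:J}$, the $\theta_j$ are independent with $\theta_j \sim \L(\d\theta_j \mid Y_j,\psi) \propto f(Y_j\mid\theta_j)\,p(\theta_j\mid\psi)$, the vector $\bm T$ is a sum of $J$ independent $\R^S$-valued summands $T(\theta_j)$. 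Conditioning additionally on $\psi$ fixed, the key observation is that $\E[T_s(\theta_j)\mid Y_j,\psi] = M^{(1)}_s(\psi\mid Y_j)$ and $\mathrm{Cov}(T_s,T_{s'}\mid Y_j,\psi)$ are exactly the quantities whose $Y_j$-averages define $C(\psi^*)$ (via the derivative identity) and $V(\psi^*)$ in \eqref{def_C_V}. So a Lindeberg- or Lyapunov-type multivariate CLT (the summands are independent, row-wise triangular-array, with the averaged covariance converging to $V(\psi^*)$ by a law of large numbers over the random $Y_j$'s — here is where $Q^{(\infty)}_{\psi^*}$-almost-sure control enters) gives that $\frac{1}{\sqrt J}\sum_j\big(T(\theta_j)-M^{(1)}(\psi\mid Y_j)\big)$, given $\psi$ and $Y_{1:J}$, converges to $N(\bm 0, V(\psi^*))$ in total variation. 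The TV (rather than merely weak) convergence is the nonstandard part and presumably is where assumptions $(B4)$–$(B6)$ do their work: one needs a local-limit / smoothness condition (e.g. a Cramér-type condition on the characteristic function of $T(\theta_j)$ under $\L(\d\theta_j\mid Y_j,\psi)$, or boundedness/regularity of the conditional densities) to upgrade weak convergence of a lattice-free sum to TV convergence of its density.

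Next I would deal with the two deterministic shifts. The recentering by $M^{(1)}(\psi\mid Y_j)$ is evaluated at $\psi^*$ in \eqref{tilde_T}, not at the conditioning value $\psi$; but the conditioning is on $\tilde\psi$, i.e. on $\psi = \psi^* + (\tilde\psi+\Delta_J)/\sqrt J$, which is within $O(1/\sqrt J)$ of $\psi^*$ (for $\tilde\psi$ fixed and $\Delta_J = O_P(1)$). A first-order Taylor expansion of $M^{(1)}(\psi\mid Y_j)$ around $\psi^*$ gives $M^{(1)}(\psi\mid Y_j) = M^{(1)}(\psi^*\mid Y_j) + [\partial_\psi M^{(1)}(\psi^*\mid Y_j)](\tilde\psi+\Delta_J)/\sqrt J + \text{remainder}$; summing over $j$, dividing by $\sqrt J$, and using the law of large numbers $\frac1J\sum_j \partial_\psi M^{(1)}(\psi^*\mid Y_j) \to C(\psi^*)$ produces exactly the mean shift $C(\psi^*)\tilde\psi$ (the $C(\psi^*)\Delta_J$ term is what the explicit $-C(\psi^*)\Delta_J$ correction in \eqref{tilde_T} cancels). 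The remainder terms must be shown to vanish uniformly, which again should be guaranteed by the regularity hypotheses $(B4)$–$(B6)$ (uniform integrability / domination of second derivatives of $M^{(1)}$). Also the conditional covariance evaluated at $\psi$ differs from the one at $\psi^*$ by $O(1/\sqrt J)$, negligible in the limit by continuity. Assembling: conditionally on $\tilde\psi$ and $Y_{1:J}$, $\tilde{\bm T}$ has (approximately Gaussian, by the CLT step) density converging in TV to $N(C(\psi^*)\tilde\psi, V(\psi^*))$.

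The main obstacle I anticipate is the total-variation strengthening of the conditional CLT: weak convergence of a normalized sum of independent non-identically-distributed vectors is classical, but TV convergence requires controlling the densities, which forces either a non-lattice/Cramér condition on the distribution of $T(\theta_j)$ under the conditional law $\L(\d\theta_j\mid Y_j,\psi)$, or direct smoothness estimates on $\hat\pi_J(\d\bT\mid\psi)$ — this is exactly what assumptions $(B4)$–$(B6)$ (stated in Appendix B) are designed to deliver, so the real work is in verifying that those assumptions yield the needed uniform (in $j$ and locally in $\psi$) bounds. A secondary technical nuisance is bookkeeping the Taylor remainders and the "almost every $(Y_1,Y_2,\dots)$" qualifier: one wants the LLN convergences $\frac1J\sum_j(\cdot)\to(\cdot)$ and the Lindeberg-ratio convergences to hold $Q^{(\infty)}_{\psi^*}$-almost surely simultaneously, which is fine by the strong law but needs the summands to have the requisite moments — presumably another consequence of $(B4)$–$(B6)$.
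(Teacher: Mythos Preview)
Your proposal is correct and follows essentially the same two-step strategy as the paper: first a Lyapunov-type multivariate CLT for the conditionally independent summands $T(\theta_j)$ (with the mean shift $C(\psi^*)\tilde\psi$ emerging from a first-order Taylor expansion of $M^{(1)}(\psi\mid Y_j)$ around $\psi^*$, the covariance $V(\psi^*)$ from the strong law over the $Y_j$'s, and (B4) controlling the Taylor remainders and third moments), then an upgrade from weak to TV convergence via characteristic functions, where (B5) gives integrability and (B6) is exactly the Cram\'er-type condition you anticipate. The only minor sharpening is that the paper carries out the TV step purely Fourier-analytically (splitting $\int|\tilde\varphi_J(t)|\,\d t$ into $|t|<\epsilon\sqrt J$ and $|t|\geq \epsilon\sqrt J$ and handling the tails with (B5)--(B6)), rather than via direct density smoothness estimates.
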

Lemma $C.18$ in Section $C.14$ of Appendix C combines Lemmas \ref{asymptotic_distribution_psi} and \ref{asymptotic_distribution_T} to prove that $\L(\d\tilde{\bm{T}}, \tilde{\psi} \mid Y_{1:J})$ converges in total variation to a multivariate  Gaussian vector with non singular covariance matrix, which allows to apply  Corollary \ref{mixingCorollary} as desired.
\begin{remark}
The definition of $\tilde{\bm{T}}$ and Lemma \ref{asymptotic_distribution_T} are an important part of the proof of Theorem \ref{theorem_one_level_nested}. Lemma \ref{asymptotic_distribution_T} relies on the fact that, conditional to $\tilde{\psi}$ and $Y_{1:J}$,  $\bm{T}$ is a sum of independent (but not identically distributed) terms. The proof of convergence in total variation requires more than the usual tools from Lindeberg-Feller Central Limit Theorem, as discussed in Appendix B after assumptions $(B5)$ and $(B6)$.
\end{remark}
\subsection{Analysis of the limiting chain}
 As a byproduct of the proof of Theorem \ref{theorem_one_level_nested}, it is possible to characterize the limiting distribution of the rescaled vector $\left(\tilde{\bm{T}}, \tilde{\psi} \right)$, as the next proposition shows.
\begin{proposition}\label{limiting_sigma}
Consider the same assumptions of Theorem \ref{theorem_one_level_nested}. Then
\[
\lTV \L(\d\tilde{\bm{T}}, \d\tilde{\psi} \mid Y_{1:J})-N\left(\bm{0}, \Sigma \right) \rTV \to 0,
\]
as $J \to \infty$, in $Q_{\psi^*}^{(\infty)}$-probability, where
\begin{equation}\label{eq:Sigma_def}
\Sigma =
\begin{bmatrix}
V(\psi^*)+C(\psi^*)\mathcal{I}^{-1}(\psi^*)C^\top(\psi^*) & \quad C(\psi^*)\mathcal{I}^{-1}(\psi^*)\\
&\\
\mathcal{I}^{-1}(\psi^*)C^\top(\psi^*) & \mathcal{I}^{-1}(\psi^*)
\end{bmatrix}
\end{equation}
with $C(\psi^*)$ and $V(\psi^*)$ defined in \eqref{def_C_V}.
\end{proposition}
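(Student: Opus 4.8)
The plan is to derive the joint total-variation convergence by gluing together the two posterior convergence lemmas through the product decomposition of both the posterior and the target Gaussian. Write $\L(\d\tilde{\bm{T}}, \d\tilde{\psi} \mid Y_{1:J}) = \L(\d\tilde{\psi}\mid Y_{1:J})\otimes K_J$, where $K_J(\tilde\psi,\cdot)=\L(\d\tilde{\bm{T}}\mid Y_{1:J},\tilde\psi)$ is the conditional kernel. I would first record two elementary facts about total variation: for probability measures $\mu_1,\mu_2$ on the $\tilde\psi$-space and any Markov kernel $K$ one has $\lTV \mu_1\otimes K-\mu_2\otimes K\rTV=\lTV \mu_1-\mu_2\rTV$, and for a single $\mu$ and two kernels $K,K'$ one has $\lTV \mu\otimes K-\mu\otimes K'\rTV=\int\lTV K(\tilde\psi,\cdot)-K'(\tilde\psi,\cdot)\rTV\,\mu(\d\tilde\psi)$.

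Next I would check that $N(\bm{0},\Sigma)$ with $\Sigma$ as in \eqref{eq:Sigma_def} factorises exactly as the marginal $N(\bm{0},\Fisher^{-1}(\psi^*))$ on the $\tilde\psi$-coordinates together with the Gaussian kernel $\bar K(\tilde\psi,\cdot)=N(C(\psi^*)\tilde\psi,V(\psi^*))$ on $\tilde{\bm{T}}$. This is a routine Schur-complement computation: the lower-right block of $\Sigma$ is $\Fisher^{-1}(\psi^*)$, the regression matrix is $\big(C(\psi^*)\Fisher^{-1}(\psi^*)\big)\Fisher(\psi^*)=C(\psi^*)$, and the conditional covariance is $\big(V(\psi^*)+C(\psi^*)\Fisher^{-1}(\psi^*)C^\top(\psi^*)\big)-C(\psi^*)\Fisher^{-1}(\psi^*)C^\top(\psi^*)=V(\psi^*)$. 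Hence the target in the statement is $N(\bm{0},\Fisher^{-1}(\psi^*))\otimes\bar K$.

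Then, by the triangle inequality through the intermediate measure $\L(\d\tilde\psi\mid Y_{1:J})\otimes\bar K$ together with the two facts above, the joint distance $\lTV \L(\d\tilde{\bm{T}},\d\tilde\psi\mid Y_{1:J})-N(\bm{0},\Sigma)\rTV$ is bounded by $\int\lTV K_J(\tilde\psi,\cdot)-\bar K(\tilde\psi,\cdot)\rTV\,\L(\d\tilde\psi\mid Y_{1:J})$ plus $\lTV \L(\d\tilde\psi\mid Y_{1:J})-N(\bm{0},\Fisher^{-1}(\psi^*))\rTV$. The second term tends to $0$ in $Q^{(\infty)}_{\psi^*}$-probability by Lemma \ref{asymptotic_distribution_psi}. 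For the first term, Lemma \ref{asymptotic_distribution_T} gives, along a $Q^{(\infty)}_{\psi^*}$-full set of data sequences, that the integrand $\lTV K_J(\tilde\psi,\cdot)-\bar K(\tilde\psi,\cdot)\rTV$ converges to $0$ for every fixed $\tilde\psi$, and it is moreover uniformly bounded.

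The delicate point, and the main obstacle, is that this integrand is integrated against the \emph{moving} measure $\L(\d\tilde\psi\mid Y_{1:J})$, so dominated convergence cannot be invoked directly. I would resolve this by replacing $\L(\d\tilde\psi\mid Y_{1:J})$ with the fixed Gaussian $N(\bm{0},\Fisher^{-1}(\psi^*))$: since the integrand is bounded, the replacement costs at most a fixed constant times $\lTV \L(\d\tilde\psi\mid Y_{1:J})-N(\bm{0},\Fisher^{-1}(\psi^*))\rTV$, again controlled by Lemma \ref{asymptotic_distribution_psi}, whereas $\int\lTV K_J(\tilde\psi,\cdot)-\bar K(\tilde\psi,\cdot)\rTV\,N(\bm{0},\Fisher^{-1}(\psi^*))(\d\tilde\psi)\to0$ by dominated convergence along any good data sequence. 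Combining the three estimates yields the stated $Q^{(\infty)}_{\psi^*}$-probability convergence. Beyond this interchange-of-limits step, the only mild care needed is the measurability of $\tilde\psi\mapsto\lTV K_J(\tilde\psi,\cdot)-\bar K(\tilde\psi,\cdot)\rTV$ and, in case the exceptional data-null set in Lemma \ref{asymptotic_distribution_T} were allowed to depend on $\tilde\psi$, a Fubini argument to upgrade to pointwise-in-$\tilde\psi$ convergence for $Q^{(\infty)}_{\psi^*}$-a.e.\ data.
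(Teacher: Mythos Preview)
Your proposal is correct and follows essentially the same route as the paper, which packages the marginal--conditional gluing as Lemma~\ref{lemma_marginal_conditional} and then reads off the blocks of $\Sigma$ from the conditional moments. The one cosmetic difference is the choice of intermediate point in the triangle inequality: the paper inserts $\pi_1\otimes K_J$ (limiting marginal, moving kernel) rather than your $\pi_{n,1}\otimes\bar K$, so the conditional-TV integral is taken directly against the fixed Gaussian and your ``delicate point'' about the moving measure never arises.
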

The expression for the limiting covariance in \eqref{eq:Sigma_def} can be used to investigate the convergence properties of the limiting Gibbs sampler, since the spectral gap is explicitly computable from that. 
  We can then  apply Corollary \ref{mixing_gap} and obtain the following result.
\begin{corollary}\label{spectral_radius}
Under the assumptions of Theorem \ref{theorem_one_level_nested}, for every $(M, \epsilon) \in [1, \infty) \times (0,1)$, we have $
Q_{\psi^*}^{(J)}\left(t^{(J)}_{mix}(\epsilon, M)\leq T\left(\psi^*, \epsilon, M \right)\right) \to 1
$ 
 as $J \to \infty$, with
\[
\begin{aligned}
&T\left(\psi^*, \epsilon, M \right)= 1 + \frac{\log(M/2)-\log(\epsilon)}{-\log \left(1-\gamma(\psi^*) \right)},\\
&\gamma(\psi^*)= \min \left\{  \frac{1}{1+\lambda_i} \, : \, \lambda_i \text{ eigenvalue of } V^{-1}(\psi^*)C(\psi^*)\mathcal{I}^{-1}(\psi^*)C^\top(\psi^*) \right\}.
\end{aligned}
\]
\end{corollary}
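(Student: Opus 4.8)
The plan is to refine the proof of Theorem \ref{theorem_one_level_nested} by invoking Corollary \ref{mixing_gap} in place of Corollary \ref{mixingCorollary} and then computing explicitly the spectral gap of the limiting chain. The proof of Theorem \ref{theorem_one_level_nested} already establishes, via Lemma \ref{sufficient_lemma}, that $P_J$ on $\pi_J(\d\bm{\theta},\d\psi)$ has the same warm-start mixing times as the two-block chain $\hat{P}_J$ on $\hat{\pi}_J(\d\bT,\d\psi)$, and that the latter sequence satisfies assumption $(A1)$ with $K=2$, coordinate-wise rescaling $(\bT,\psi)\mapsto(\tilde{\bm{T}},\tilde{\psi})$, and non-singular Gaussian limit $\tilde{\pi}=N(\bm{0},\Sigma)$ with $\Sigma$ as in \eqref{eq:Sigma_def} (Lemmas \ref{asymptotic_distribution_psi}--\ref{asymptotic_distribution_T} and Proposition \ref{limiting_sigma}). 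Hence the hypotheses of Corollary \ref{mixing_gap} hold as soon as $\text{Gap}(\tilde{P})>0$, where $\tilde{P}$ is the deterministic-scan two-block Gibbs sampler targeting $N(\bm{0},\Sigma)$, and the conclusion then follows with $\text{Gap}(\tilde{P})$ in the role of $\gamma(\psi^*)$. So it remains to show $\text{Gap}(\tilde{P})=\gamma(\psi^*)>0$.

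For a jointly Gaussian target the two-block Gibbs sampler has a well-understood spectrum: writing it as a product of two conditional-expectation projections, the relevant spectral quantity is governed by the maximal correlation between the blocks, which for Gaussians is attained on linear functionals, so the spectral gap equals $1-\rho_{\max}^2$, where $\rho_{\max}^2$ is the largest eigenvalue of $\Sigma_{11}^{-1}\Sigma_{12}\Sigma_{22}^{-1}\Sigma_{21}$ with $\Sigma_{11},\Sigma_{12},\Sigma_{21},\Sigma_{22}$ the blocks of $\Sigma$ associated with $(\tilde{\bm{T}},\tilde{\psi})$; I would establish this via \cite{A91,RR15} and the argument in the proof of Corollary \ref{mixing_gap}. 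Substituting the block entries of \eqref{eq:Sigma_def} (all evaluated at $\psi^*$) gives $\Sigma_{12}\Sigma_{22}^{-1}\Sigma_{21}=C\,\mathcal{I}^{-1}\mathcal{I}\,\mathcal{I}^{-1}C^\top=C\mathcal{I}^{-1}C^\top$ and $\Sigma_{11}=V+C\mathcal{I}^{-1}C^\top$. Writing $A=C\mathcal{I}^{-1}C^\top$, the squared canonical correlations are the eigenvalues of $(V+A)^{-1}A$, and the elementary identity $(V+A)^{-1}Av=\mu v\iff V^{-1}Av=\tfrac{\mu}{1-\mu}v$ maps them bijectively onto the eigenvalues $\lambda_i\ge 0$ of $V^{-1}C\mathcal{I}^{-1}C^\top$; since $x\mapsto\tfrac{x}{1+x}$ is increasing, this yields $\text{Gap}(\tilde{P})=1-\rho_{\max}^2=1-\max_i\tfrac{\lambda_i}{1+\lambda_i}=\min_i\tfrac{1}{1+\lambda_i}=\gamma(\psi^*)$.

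For positivity, note that $V(\psi^*)$ is non-singular by minimality of the exponential family together with the standing regularity assumptions, and $\mathcal{I}(\psi^*)$ is non-singular by $(B3)$; hence $V^{-1}C\mathcal{I}^{-1}C^\top$ is similar to a symmetric positive semidefinite matrix, its eigenvalues $\lambda_i$ are finite and non-negative, and $\gamma(\psi^*)=\min_i\tfrac{1}{1+\lambda_i}>0$ (equivalently $\rho_{\max}<1$, which also follows from non-singularity of $\Sigma$). Combining $\gamma(\psi^*)>0$ with Corollary \ref{mixing_gap} applied to $(\hat{\pi}_J)_J$ gives the asserted high-probability bound on $\hat{t}^{(J)}_{mix}(\epsilon,M)$, which equals $t^{(J)}_{mix}(\epsilon,M)$ by Lemma \ref{sufficient_lemma}. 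I expect the main obstacle to be pinning down precisely the notion of spectral gap used in Corollary \ref{mixing_gap} for the non-reversible two-block chain and matching it with the canonical-correlation formula $1-\rho_{\max}^2$; once that identification is in place, the reduction to the eigenvalues of $V^{-1}C\mathcal{I}^{-1}C^\top$ is routine linear algebra.
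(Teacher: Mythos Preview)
Your proposal is correct and reaches the same conclusion as the paper, but via a slightly different entry point. The paper proceeds by writing the limiting Gaussian Gibbs sampler as a first-order vector autoregression $(\tilde{\bm{T}}^{(t)},\tilde{\psi}^{(t)})^\top = B(\tilde{\bm{T}}^{(t-1)},\tilde{\psi}^{(t-1)})^\top + \text{noise}$ (Lemma~\ref{autoregressive}), invokes Theorem~1 of \cite{R97} to identify $\text{Gap}(\tilde{P})=\min\{1-|\lambda_i|:\lambda_i\text{ eigenvalue of }B\}$, and then reduces the block-triangular $B$ to the $D\times D$ block $W=\mathcal{I}^{-1}C^\top(V+C\mathcal{I}^{-1}C^\top)^{-1}C$ via two elementary eigenvalue lemmas (Lemmas~\ref{auxiliary_spectral_radius} and~\ref{eig_product}). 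From there the algebra is essentially yours: the nonzero eigenvalues of $W$ coincide with those of $(V+A)^{-1}A$ with $A=C\mathcal{I}^{-1}C^\top$, and the identity $(V+A)^{-1}A=I-(V+A)^{-1}V$ together with inversion gives the correspondence $\mu_i=\lambda_i/(1+\lambda_i)$.

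Your canonical-correlation route is more direct and bypasses the autoregressive representation; the paper's route has the advantage that the identification of $\text{Gap}(\tilde{P})$ rests on a single cited result (\cite{R97}) rather than on separately justifying $\text{Gap}(\tilde{P})=1-\rho_{\max}^2$ for the non-reversible two-block chain, which is exactly the obstacle you flag. One small correction: non-singularity of $V(\psi^*)$ is not a consequence of minimality of the exponential family; it is an explicit part of assumption $(B4)$ (and hence of the hypotheses of Theorem~\ref{theorem_one_level_nested}), so you should cite that rather than minimality.
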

 Thus, once the limiting distribution is obtained, an upper bound on the mixing times can be derived by computing the eigenvalues of a $S \times S$ matrix. As an application, the next corollary provides the value of $\gamma$ when $S = D = 1$.
\begin{corollary}\label{spectral_single}
Consider the same setting of Corollary \ref{spectral_radius}, with $S = D = 1$. Then we have
\begin{equation}\label{gamma_single}
\gamma(\psi^*) = \frac{\text{Var}_{Y_j}\left(E \left[T(\theta_j) \mid \psi^*,Y_j \right] \right)}{\text{Var}\left(T(\theta_j) \mid \psi^* \right)}.
\end{equation}
\end{corollary}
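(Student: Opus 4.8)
The plan is to specialize Corollary~\ref{spectral_radius} to the scalar case $S=D=1$ and then evaluate the three one-dimensional quantities $C(\psi^*)$, $V(\psi^*)$ and $\Fisher(\psi^*)$ appearing there in terms of posterior moments of $T(\theta_j)$, using the exponential-family form \eqref{exponential_family}. When $S=D=1$ every object in \eqref{def_C_V} is a scalar, so the matrix $V^{-1}(\psi^*)C(\psi^*)\Fisher^{-1}(\psi^*)C^\top(\psi^*)$ has the single eigenvalue $\lambda = C(\psi^*)^2/(V(\psi^*)\Fisher(\psi^*))$, and Corollary~\ref{spectral_radius} gives $\gamma(\psi^*) = 1/(1+\lambda) = V(\psi^*)\Fisher(\psi^*)/(V(\psi^*)\Fisher(\psi^*)+C(\psi^*)^2)$. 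It then remains to identify $C(\psi^*)$ and $\Fisher(\psi^*)$.

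First I would observe that by \eqref{exponential_family} the full conditional of $\theta_j$ given $Y_j=y$ and $\psi$ is $p(\theta\mid y,\psi)\propto f(y\mid\theta)h(\theta)\exp\{\eta(\psi)T(\theta)\}$, i.e.\ a one-parameter exponential family in the natural parameter $\beta=\eta(\psi)$ with sufficient statistic $T(\theta)$ and log-partition function $\kappa(y,\beta)=\log\int f(y\mid\theta)h(\theta)e^{\beta T(\theta)}\,\d\theta$. The standard cumulant identities then give $M^{(1)}(\psi\mid y) = (\partial_\beta\kappa)(y,\eta(\psi))$ and $\text{Var}(T(\theta_j)\mid Y_j=y,\psi) = (\partial_\beta^2\kappa)(y,\eta(\psi))$, so differentiating the first in $\psi$ via the chain rule yields $\partial_\psi M^{(1)}(\psi\mid y) = \eta'(\psi)\,\text{Var}(T(\theta_j)\mid Y_j=y,\psi)$; taking $E_{Y_j}[\cdot]$ and evaluating at $\psi^*$ gives $C(\psi^*) = \eta'(\psi^*)V(\psi^*)$. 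For the Fisher information, \eqref{likelihood_data} gives $g(y\mid\psi) = e^{-A(\psi)}\int f(y\mid\theta)h(\theta)e^{\eta(\psi)T(\theta)}\,\d\theta = e^{-A(\psi)+\kappa(y,\eta(\psi))}$, hence the score is $\partial_\psi\log g(y\mid\psi) = -A'(\psi)+\eta'(\psi)M^{(1)}(\psi\mid y)$; at $\psi=\psi^*$ the score has mean zero under $Q_{\psi^*}$, so $\Fisher(\psi^*)$ equals its variance, and dropping the constant $-A'(\psi^*)$ this is $(\eta'(\psi^*))^2\,\text{Var}_{Y_j}(M^{(1)}(\psi^*\mid Y_j))$. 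Note $\eta'(\psi^*)\neq 0$, since otherwise $\Fisher(\psi^*)=0$ would contradict $(B3)$.

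Substituting $C(\psi^*) = \eta'(\psi^*)V(\psi^*)$ and $\Fisher(\psi^*) = (\eta'(\psi^*))^2\text{Var}_{Y_j}(M^{(1)}(\psi^*\mid Y_j))$ into the scalar expression for $\gamma(\psi^*)$, the common factor $(\eta'(\psi^*))^2V(\psi^*)$ cancels, giving $\lambda = V(\psi^*)/\text{Var}_{Y_j}(M^{(1)}(\psi^*\mid Y_j))$ and hence $\gamma(\psi^*) = \text{Var}_{Y_j}(M^{(1)}(\psi^*\mid Y_j))/(V(\psi^*)+\text{Var}_{Y_j}(M^{(1)}(\psi^*\mid Y_j)))$. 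Finally, the law of total variance applied to $T(\theta_j)$ conditionally on $Y_j$ at $\psi=\psi^*$ reads $\text{Var}(T(\theta_j)\mid\psi^*) = E_{Y_j}[\text{Var}(T(\theta_j)\mid Y_j,\psi^*)]+\text{Var}_{Y_j}(E[T(\theta_j)\mid Y_j,\psi^*]) = V(\psi^*)+\text{Var}_{Y_j}(M^{(1)}(\psi^*\mid Y_j))$, which is exactly the denominator above, establishing \eqref{gamma_single}. The only genuinely non-mechanical step is recognising this nested exponential-family structure — once it is in hand, everything reduces to bookkeeping — and the remaining care needed is simply to check that the differentiation-under-the-integral and zero-mean-score manipulations are licensed by the regularity assumed in $(B1)$--$(B6)$, which is precisely what those assumptions are designed to provide.
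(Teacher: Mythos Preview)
Your proof is correct and follows essentially the same route as the paper: both specialize Corollary~\ref{spectral_radius} to the scalar case, exploit the exponential-family structure to obtain $C(\psi^*)=\eta'(\psi^*)V(\psi^*)$ and $\Fisher(\psi^*)=(\eta'(\psi^*))^2\text{Var}_{Y_j}(M^{(1)}(\psi^*\mid Y_j))$, and finish with the law of total variance. Your presentation via the log-partition function $\kappa$ and the score-variance form of the Fisher information is slightly more streamlined than the paper's, which computes $\Fisher(\psi^*)$ through the expected negative Hessian and an auxiliary lemma on prior moments, but the substance is identical.
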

By the law of total variance, we have that $\gamma(\psi^*) \to 0$
if and only if
\begin{equation*}
\frac{\text{Var}_{Y_j}\left(E \left[T(\theta_j) \mid \psi^*,Y_j \right] \right)
}{
E \left[\text{Var}_{Y_j}\left(T(\theta_j) \mid \psi^*,Y_j \right)\right]} \to 0,
\end{equation*}
i.e., loosely speaking, when the data $Y_j$ yield little information about $T(\theta_j)$ and therefore about $\psi$. This phenomenon arises since model \eqref{one_level_nested} is an example of centered parametrization, see e.g.\  \cite{G95,PRS03, PRS07}. The formula in \eqref{gamma_single} resembles the definition of the so-called Bayesian fraction of missing information \citep{L94}, with the notable difference of not involving an infimum over a set of test functions.

\section{Examples}\label{examples}
In this section various examples, which differ by the choice of likelihoods and priors, are discussed.

\subsection{Hierarchical normal model}\label{normal_case}
Consider the following hierarchical specification:
\begin{equation}\label{normal_model}
\begin{aligned}
Y_{j,i} \mid \theta_j
&\sim N\left(\theta_j,\tau_0^{-1} \right), 
& i = 1, \dots, m,\; j = 1, \dots, J\\
 \theta_j \mid\mu, \tau_1 &\simiid N( \mu, \tau_1^{-1}) \, ,
&j=1,\dots,J \\
(\mu,  \tau_1) & \sim p_0(\cdot)\,.
\end{aligned}
\end{equation}
where $(\mu, \tau_1)$ are unknown hyperparameters. 
In this section we assume $\tau_0$ to be fixed and known, see Section \ref{sec:lik_pars} for the case with $\tau_0$ unknown. 
The prior $p_0$ can be any distribution satisfying the assumptions stated in Proposition \ref{prop_normal_model} below.
It can be seen that \eqref{normal_model} is a particular case of model \eqref{one_level_nested}, with
$f(Y_j \mid \theta_j) = \prod_{i = 1}^mN(Y_{j,i} \mid \theta_j, \tau_0^{-1})$, 
$ p(\cdot \mid \mu, \tau_1) = N(\mu, \tau_1^{-1})$.
The marginal likelihood of $Y_j$ conditional to $(\mu, \tau_1, \tau_0)$ is given by
\begin{align}\label{marginal_normal}
g(y \mid \mu, \tau_1, \tau_0) &= N \left(y\mid\mu, \tau_0^{-1}I + \tau_1^{-1}\H \right)&y\in\R^m,
\end{align}
where $I$ is the $m \times m$ identity matrix and $\H$ is the $m \times m$ matrix of ones. 

We consider three Gibbs sampler specifications, which vary depending on which parameters are unknown and treated as random and which blocking rules are used. 
First, when $\tau_1$ is fixed, we define $P_1$ as the transition kernel of the Gibbs sampler that targets 
$\L\left(\d \bm{\theta}, \d\mu\mid Y_{1:J} \right)$ by alternating updates from 
$\L\left(\d\bm{\theta}\mid \mu, Y_{1:J} \right)$ and 
$\L\left(\d\mu\mid \bm{\theta}, Y_{1:J} \right)$.
If instead $\mu$ and $\tau_1$ are unknown, we define $P_2$ and $P_3$ as the transition kernels of the two Gibbs samplers targeting 
$\L\left( \d\bm{\theta}, \d\mu,\d\tau_1\mid Y_{1:J} \right)$ 
by alternating updates from 
$\L\left(\d\bm{\theta}, \d\mu\mid \tau_1, Y_{1:J} \right)$
and 
$\L\left(\d\tau_1\mid \bm{\theta}, \mu,Y_{1:J} \right)$
for $P_2$; and 
$\L\left(\d\bm{\theta}\mid \tau_1, Y_{1:J} \right)$,
$\L\left(\d\mu\mid \bm{\theta}, \tau_1,Y_{1:J} \right)$,
$\L\left(\d\tau_1\mid \bm{\theta}, \mu,Y_{1:J} \right)$
 for $P_3$.  In the following we will show that the asymptotic behaviour of $P_2$ and $P_3$ is essentially the same.

It is possible to prove that $P_1$ falls directly in the setting of Theorem \ref{theorem_one_level_nested}, with $T(\theta_j) = \theta_j$ for $P_1$. Even if $P_2$ and $P_3$ are not exactly particular cases of the general theorem, since different update schemes are considered, it turns out that they can be studied with the same tools introduced in the previous section, with $T(\theta_j) = \left(\theta_j, (\theta_j-\mu^*)^2 \right)$.

The next proposition shows that the settings introduced above lead to well-behaved asymptotic regimes. Here $t^{(J)}_{mix, l}(\epsilon, M)$ denotes the mixing times of the Gibbs sampler defined by $P_l$ with $l \in\{1, 2,3\}$.

\begin{proposition}\label{prop_normal_model}
 Let $Y_j \simiid Q_{\psi^*}$, with $Q_{\psi^*}$ admitting density $g(y \mid \psi^*)$ as in \eqref{marginal_normal}, where $\psi^* = (\mu^*, \tau_1^*, \tau_0^*)$, and consider model \eqref{normal_model} with $\tau_0 = \tau_0^*$.
Consider the Gibbs sampler with operator $P_l$, with $l\in\{1,2,3\}$, and let the prior density $p_0$ be continuous and strictly positive in a neighborhood of $\mu^*$ when $l=1$ and $(\mu^*, \tau_1^*)$ when $l\in\{2,3\}$. Finally, when $l = 1$ let $\tau_1 = \tau_1^*$.
Then for every $(M, \epsilon) \in [1, \infty) \times (0,1)$ there exists $T_l\left(\psi^*, \epsilon, M \right) < \infty$ such that
\begin{align}\label{eq:bound_i}
Q_{\psi^*}^{(J)}\left(t^{(J)}_{mix, l}(\epsilon, M)\leq T_l\left(\psi^*, \epsilon, M \right)\right) &\to 1
&\hbox{ as }J \to \infty,\;
l=1,2,3\,.
\end{align}
\end{proposition}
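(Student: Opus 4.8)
The plan is to split the three cases into ``$P_1$, which is a direct instance of Theorem \ref{theorem_one_level_nested}'' and ``$P_2,P_3$, which are not but can be handled by re-running its proof''. For $P_1$ one has $\tau_1=\tau_1^*$ fixed, so model \eqref{normal_model} is exactly \eqref{one_level_nested} with $\psi=\mu$, $f(Y_j\mid\theta_j)=\prod_{i=1}^mN(Y_{j,i}\mid\theta_j,\tau_0^{-1})$, $p(\cdot\mid\mu)=N(\mu,(\tau_1^*)^{-1})$, which is a minimal exponential family with $S=D=1$ and sufficient statistic $T(\theta)=\theta$. Hence it suffices to verify assumptions $(B1)$--$(B6)$ and invoke the theorem. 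I would check $(B1)$--$(B3)$ directly: the marginal likelihood \eqref{marginal_normal} is a Gaussian location family in $\mu$, so $\psi\mapsto g(\cdot\mid\psi)$ is one-to-one, smooth, with constant strictly positive Fisher information $\Fisher(\mu)=\bm{1}^\top(\tau_0^{-1}I+(\tau_1^*)^{-1}\H)^{-1}\bm{1}$, and consistency/testability of $\mu^*$ follows from the grand-mean estimator; $(B4)$--$(B6)$ are easy here because, conditionally on $\mu$ and $Y_{1:J}$, the $\theta_j$ are independent Gaussians with explicit mean and variance, so $\bT=\sum_j\theta_j$ is \emph{exactly} Gaussian and the regularity required of $\hat\pi_J(\bT\mid\mu)$ holds trivially. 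This yields \eqref{eq:bound_i} for $l=1$.

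For $P_2$ and $P_3$ the global parameter is $\psi=(\mu,\tau_1)$, and writing $N(\mu,\tau_1^{-1})$ in exponential-family form gives a minimal family with $S=2$ and sufficient statistic spanned by $(\theta,\theta^2)$, equivalently $T(\theta)=(\theta,(\theta-\mu^*)^2)$. Since the update schemes of $P_2,P_3$ differ from the canonical two-block scheme \eqref{two_blocks_gibbs_nested}, I would re-derive the three steps underlying Theorem \ref{theorem_one_level_nested}. \emph{Step 1 (dimensionality reduction):} because $(\sum_j\theta_j,\sum_j\theta_j^2)$ is sufficient for $(\mu,\tau_1)$, every full conditional appearing in $P_2$ and $P_3$ depends on $\bm{\theta}$ only through $\bT$, so — exactly as in Lemma \ref{sufficient_lemma} — the image of the chain under $(\bm{\theta},\mu,\tau_1)\mapsto(\bT,\mu,\tau_1)$ is Markov with identical warm-start mixing times; moreover the first two updates of $P_3$ compose to a joint draw of $(\bm{\theta},\mu)\mid\tau_1,Y_{1:J}$, so $P_3$ coincides with $P_2$ as a transition operator, and both project to the two-block Gibbs sampler on $\hat\pi_J(\d\bT,\d\mu,\d\tau_1\mid Y_{1:J})$ with blocks $(\bT,\mu)$ and $\tau_1$ (this explains why their asymptotics agree). \emph{Step 2 (posterior asymptotics):} I would verify $(B1)$--$(B6)$ for $\psi=(\mu,\tau_1)$ — now \eqref{marginal_normal} is a regular two-parameter exponential family, so $(B1)$--$(B3)$ follow as above, and $(B4)$--$(B6)$ use that $\theta_j\mid\mu,\tau_1,Y_j$ is Gaussian with explicit parameters, making the conditional moments of $T(\theta_j)$ in \eqref{tilde_T} smooth — so Proposition \ref{limiting_sigma} applies and the projected target satisfies $(A1)$ with non-singular Gaussian limit $N(\bm{0},\Sigma)$. \emph{Step 3 (limiting chain):} the limiting kernel $\tilde P_l$ is a block Gibbs sampler on a non-degenerate Gaussian, hence geometrically ergodic with $\mathrm{Gap}(\tilde P_l)>0$ (see e.g.\ \cite{A91,RR15}), so Corollary \ref{mixingCorollary} (or \ref{mixing_gap}) gives \eqref{eq:bound_i} for $l=2,3$.

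The main obstacle I expect is verifying $(B4)$--$(B6)$, i.e.\ the conditional central limit theorem for $\bT$ in total variation (Lemma \ref{asymptotic_distribution_T}) for the $\sum_j(\theta_j-\mu^*)^2$ component: conditionally on $(\mu,\tau_1,Y_{1:J})$ this is a sum of independent, non-identically-distributed scaled noncentral chi-square variables, for which Lindeberg--Feller gives only weak convergence; upgrading to total variation requires a local-CLT/smoothing argument controlling the decay of characteristic functions, which is precisely the content of $(B5)$--$(B6)$ and must be checked for the Gaussian model's conditional laws (the first component $\sum_j\theta_j$ is unproblematic since it is itself Gaussian).

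A secondary, more routine issue is the bookkeeping around the rescaling $\phi_J$ of \eqref{tilde_psi}--\eqref{tilde_T}: one checks that $\tilde\bT$ is affine in $\bT$ alone while $\tilde\mu,\tilde\tau_1$ are affine in $\mu,\tau_1$ respectively, so $\phi_J$ acts coordinate-wise and is compatible with all three blockings (automatic for $P_3$ once $P_3=P_2$); and one must accommodate the fact that the $\tau_1$-coordinate of $\phi_J$ maps $(0,\infty)$ onto an expanding half-line rather than all of $\R$, which is harmless for large $J$ along the lines indicated in the remarks following assumption $(A1)$.
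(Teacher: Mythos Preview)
Your overall plan matches the paper's: for $P_1$ apply Theorem \ref{theorem_one_level_nested} directly, and for $P_2,P_3$ re-run its three ingredients (sufficient-statistic reduction, posterior asymptotics for $(\bT,\psi)$ via Lemmas \ref{asymptotic_distribution_psi}--\ref{asymptotic_distribution_T}, ergodicity of the limiting Gaussian sampler). The paper verifies $(B1)$--$(B6)$ for $\psi=(\mu,\tau_1)$ and $T(\theta)=(\theta,(\theta-\mu^*)^2)$ in three dedicated lemmas, and you correctly single out $(B5)$--$(B6)$ for the non-Gaussian component $\sum_j(\theta_j-\mu^*)^2$ as the real work; the paper handles it via an explicit closed-form bound on $|E[e^{i(a\theta^2+b\theta)}]|$ for Gaussian $\theta$, which yields both square-integrability of the product characteristic function (with $k=3$) and its uniform separation from $1$ outside any ball.

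There is one substantive discrepancy. Your shortcut ``the first two updates of $P_3$ compose to a joint draw of $(\bm{\theta},\mu)\mid\tau_1,Y_{1:J}$, so $P_3$ coincides with $P_2$'' hinges on reading the first update of $P_3$ as the \emph{collapsed} draw $\bm{\theta}\mid\tau_1,Y_{1:J}$ with $\mu$ integrated out. The paper's proof does not take this route: it explicitly treats $P_2$ and $P_3$ as distinct operators, projects $P_3$ to a \emph{three}-block Gibbs sampler on $\hat\pi_J(\d\bT,\d\mu,\d\tau_1)$, and obtains $\gamma_2(\psi^*)=\gamma_3(\psi^*)$ only \emph{in the limit}, from the block-diagonal structure of the limiting covariance (where $(\tilde\mu,\tilde\bT_1)$ is independent of $(\tilde\tau_1,\tilde\bT_2)$), so that the three-block limit decouples into two independent two-block samplers with the same gap. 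If the intended first update of $P_3$ is the standard full conditional $\bm{\theta}\mid\mu,\tau_1,Y_{1:J}$---which is how the paper's proof analyses it---then your composition argument fails outright: drawing $\bm{\theta}\mid\mu^{(t-1)},\tau_1$ followed by $\mu\mid\bm{\theta},\tau_1$ does \emph{not} produce a sample from $(\bm{\theta},\mu)\mid\tau_1$, since the result still depends on $\mu^{(t-1)}$. The repair is easy (project $P_3$ to the three-block Gibbs on $\hat\pi_J$ and invoke ergodicity of its non-degenerate Gaussian limit), but the operator-identity shortcut is not what the paper does and is not available under the standard interpretation of $P_3$.
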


Under model \eqref{normal_model}, the matrices in Corollary \ref{spectral_radius} can be explicitly computed, leading to the following result.
\begin{corollary}\label{spectral_normal}
Under the same assumptions and notation of Proposition \ref{prop_normal_model}, for every $(M, \epsilon) \in [1, \infty) \times (0,1)$, \eqref{eq:bound_i} holds with
\begin{align*}
T_l\left(\psi^*, \epsilon, M \right) &= 1 + \frac{\log(M/2)-\log(\epsilon)}{-\log \left(1-\gamma_l(\psi^*) \right)},
&l = 1, 2, 3\,,
\end{align*}
where
\begin{equation}\label{eq:gaps_normal}
\gamma_1(\psi^*) = \left(1+\frac{\tau_1^*}{m\tau_0^*}\right)^{-1}
\quad\hbox{and}\quad \gamma_2(\psi^*) = \gamma_3(\psi^*) =\gamma_1(\psi^*) ^2\,.
\end{equation}
\end{corollary}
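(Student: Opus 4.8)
The plan is to identify, for each $l\in\{1,2,3\}$, the limiting two-block Gibbs kernel $\tilde P$, compute its spectral gap $\gamma_l(\psi^*)=\mathrm{Gap}(\tilde P)$, and then read off $T_l(\psi^*,\epsilon,M)$ from Corollary \ref{mixing_gap}. For $l=1$ this is immediate from the general theory: $P_1$ is exactly the two-block sampler of Theorem \ref{theorem_one_level_nested} with $\psi=\mu$, $S=D=1$ and $T(\theta)=\theta$, so Corollary \ref{spectral_single} applies and $\gamma_1(\psi^*)$ equals the ratio in \eqref{gamma_single}. By conjugacy $\L(\d\theta_j\mid Y_j,\psi^*)=N\!\big((\tau_1^*\mu^*+m\tau_0^*\bar{Y}_j)/a,\,a^{-1}\big)$ with $a=\tau_1^*+m\tau_0^*$ and $\bar{Y}_j=m^{-1}\sum_i Y_{j,i}$, so $E[\theta_j\mid Y_j,\psi^*]$ is affine in $\bar{Y}_j$; under $Q_{\psi^*}$ we have $\mathrm{Var}(\bar{Y}_j)=\tau_1^{*-1}+(m\tau_0^*)^{-1}=a/(m\tau_0^*\tau_1^*)$, while the denominator of \eqref{gamma_single}, $\mathrm{Var}(\theta_j\mid\psi^*)$, is the prior variance $\tau_1^{*-1}$. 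Substituting,
\[
\gamma_1(\psi^*)=\frac{(m\tau_0^*/a)^2\,\mathrm{Var}(\bar{Y}_j)}{\tau_1^{*-1}}=\frac{m\tau_0^*}{m\tau_0^*+\tau_1^*}=\Big(1+\tfrac{\tau_1^*}{m\tau_0^*}\Big)^{-1}.
\]

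For $P_2$ and $P_3$ I would start from the facts established along the proof of Proposition \ref{prop_normal_model}: the same dimensionality-reduction and conditional CLT apply with $T(\theta)=(\theta,(\theta-\mu^*)^2)$ and $\psi=(\mu,\tau_1)$, so the suitably rescaled reduced posterior converges in total variation to the Gaussian $N(\bm 0,\Sigma)$ of \eqref{eq:Sigma_def}, and the limiting operator is the two-block Gibbs sampler on $N(\bm 0,\Sigma)$ with the relevant blocking (for $P_2$: block $(\tilde{\bm T},\tilde\mu)$ followed by block $\tilde\tau_1$). I would then compute $C(\psi^*)$, $V(\psi^*)$ from \eqref{def_C_V} by differentiating the posterior mean $m_j(\psi)=(\tau_1\mu+m\tau_0^*\bar{Y}_j)/(\tau_1+m\tau_0^*)$ and posterior variance $(\tau_1+m\tau_0^*)^{-1}$ of $\theta_j$ in $(\mu,\tau_1)$, and $\Fisher(\psi^*)$ from the marginal likelihood $g(\cdot\mid\psi)=N(\mu\One,\tau_0^{*-1}I+\tau_1^{-1}\H)$ via the eigendecomposition of $\H$. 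The key simplification is that under $Q_{\psi^*}$ the residual $\bar{Y}_j-\mu^*$ has a symmetric distribution, so all odd-order expectations that would contribute off-diagonal terms vanish: $C(\psi^*)$, $V(\psi^*)$, $\Fisher(\psi^*)$, and hence every $2\times 2$ sub-block of $\Sigma$, are diagonal. Consequently the limiting chain decouples into two independent two-dimensional Gaussian Gibbs subsystems, on $(\tilde T_1,\tilde\mu)$ and on $(\tilde T_2,\tilde\tau_1)$: the first coincides with the limiting chain of $P_1$ and has gap $\gamma_1(\psi^*)$, while a short computation of $\mathrm{corr}(\tilde T_2,\tilde\tau_1)$ gives that the second has gap $1-\tau_1^*(\tau_1^*+2m\tau_0^*)/(\tau_1^*+m\tau_0^*)^2=(m\tau_0^*)^2/(\tau_1^*+m\tau_0^*)^2=\gamma_1(\psi^*)^2$. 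Since $\gamma_1(\psi^*)^2\le\gamma_1(\psi^*)$, the slower subsystem is always the second, so $\gamma_2(\psi^*)=\gamma_3(\psi^*)=\gamma_1(\psi^*)^2$ regardless of the precise blocking (or of the reading of $P_3$'s $\bm\theta$-update). Plugging $\gamma_l(\psi^*)$ into Corollary \ref{mixing_gap} yields the stated $T_l(\psi^*,\epsilon,M)$.

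The bulk of the work, and the main source of possible error, is the $P_2$ computation: differentiating $m_j(\psi)$ and the posterior variance in both $\mu$ and $\tau_1$ and taking the right marginal expectations, computing the Fisher information of the $m$-dimensional Gaussian $g(\cdot\mid\psi)$ (in particular its $\tau_1$-block, via the trace formula for a scale parameter), and — since $P_2$'s limiting operator has a different blocking than the one in Theorem \ref{theorem_one_level_nested}, so Corollary \ref{spectral_radius} does not apply verbatim — redoing the two-block Gaussian spectral-gap computation for the blocks $\{(\tilde{\bm T},\tilde\mu),\tilde\tau_1\}$. The diagonal (decoupling) structure coming from the symmetry of $Q_{\psi^*}$ is what makes all of this manageable; everything else reduces to Corollary \ref{spectral_single}, Proposition \ref{prop_normal_model} and Corollary \ref{mixing_gap}.
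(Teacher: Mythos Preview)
Your approach is essentially the paper's: for $\gamma_1$ you both invoke Corollary~\ref{spectral_single} and plug in the normal posterior moments; for $\gamma_2,\gamma_3$ you both pass to the limiting Gaussian on $(\tilde{\bm T},\tilde\psi)$, observe that $\Sigma$ is block-diagonal along $(\tilde T_1,\tilde\mu)$ versus $(\tilde T_2,\tilde\tau_1)$, and reduce the gap computation to the squared correlations in the two $2\times2$ sub-blocks. Your symmetry argument for the vanishing off-diagonals is a clean way to see what the paper obtains by explicitly computing $C(\psi^*)$, $V(\psi^*)$ and $\Fisher(\psi^*)$ in Lemmas~\ref{C_Var_normal} and~\ref{fisher_normal_hier}.

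One small slip: for $P_2$ the block is $\{(\tilde{\bm T},\tilde\mu)\}$ versus $\{\tilde\tau_1\}$, so in the $(\tilde T_1,\tilde\mu)$ subsystem both coordinates lie in the \emph{same} block and the induced sub-chain has gap $1$, not $\gamma_1(\psi^*)$; it is only under $P_3$'s blocking that this subsystem coincides with the limiting chain of $P_1$. Since the overall gap is the minimum over subsystems and the $(\tilde T_2,\tilde\tau_1)$ subsystem always contributes $\gamma_1(\psi^*)^2\leq\min\{1,\gamma_1(\psi^*)\}$, your conclusion $\gamma_2(\psi^*)=\gamma_3(\psi^*)=\gamma_1(\psi^*)^2$ is unaffected.
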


 The expressions for the asymptotic gaps in \eqref{eq:gaps_normal} are insightful in many ways.
First, $\mu^*$ does not appear in any of the spectral gaps, meaning that the limiting value of the mean parameter seems not to play a role in the asymptotic behaviour of the Gibbs sampler. 
Moreover, the gaps are a function of the ratio $(m\tau_0^*)^{-1}\tau_1^*$, that is the ratio of the prior and likelihood precisions, respectively. In particular the gaps converge to $0$, i.e. the upper bound on the mixing times diverges, if and only if $(m\tau_0^*)^{-1}\tau_1^*\to \infty$, which happens when the prior is increasingly more informative than the data. As discussed after Corollary \ref{spectral_single}, such phenomenon arises since all the three formulations are an example of centered parametrization \citep{G95,PRS03}. 
On the contrary, the gaps converge to $1$, i.e. asymptotically a single iteration suffices, if and only if $(m\tau_0^*)^{-1}\tau_1^*\to 0$. 

When $\tau_1$ is fixed and $p_0(\mu)$ is Gaussian, then 
$\L\left( \d\bm{\theta}, \d\mu\mid Y_{1:J} \right)$ is a multivariate Gaussian and  $P_1$ is amenable to finite-sample analysis. In fact, the expression for $\gamma_1(\psi^*)$ appeared previously in the literature, see e.g.\ \cite{PRS03}. The result in Corollary \ref{spectral_normal} is, however, different since it is asymptotic and it applies also to general priors.

On the contrary, a finite-sample analysis of $P_2$ are $P_3$ is hard even when $p_0(\mu)$ is Gaussian (see e.g. \cite{JH21, Q22, Y17}) and $\gamma_2(\psi^*)$ and $\gamma_3(\psi^*)$ did not appear previously in the literature, to the best of our knowledge.
It is interesting that, regardless of the value of  $(m,\mu^*,\tau^*_1,\tau_0^*)$, including the random precision parameter, when moving from $P_1$ to either $P_2$ or $P_3$, always slows down the sampler (asymptotically), since $\gamma_1(\psi^*)>\gamma_i(\psi^*)$ for $i=2,3$, and that the two blocking rules of $P_2$ and $P_3$ are asymptotically equivalent in terms of mixing times, since $\gamma_2(\psi^*)=\gamma_3(\psi^*)$.

\subsection{Models with binary and categorical data}\label{binary_data}
Let now $f(y \mid \theta)$ be a probability mass function, whose point masses are denoted by $y_0, \dots, y_m$, with $m < \infty$, such that for every $\theta \in \mathbb{R}^K$ we have
\begin{equation}\label{eq:ass_discrete}
\sum_{r = 0}^mf(y_r \mid \theta) = 1, \quad f(y_r \mid \theta) > 0, \quad  r = 0, \dots, m.
\end{equation}
The assumption in \eqref{eq:ass_discrete} is mild and holds for most likelihoods usually employed with categorical data, e.g. multinomial logit and probit. We focus on hierarchical models with normal priors, i.e.\,
\begin{equation}\label{finite_model}
Y_{j} \mid\theta_j \sim f(Y_{j} \mid \theta_j)\,,\quad
\theta_1, \dots, \theta_J \mid\mu, \tau \simiid N(\mu, \tau^{-1})\,,\quad
 (\mu,  \tau) \sim p_0(\cdot)\,.
\end{equation}
For example the case $f(y \mid \theta) = \binom{m}{y}\frac{e^{y\theta}}{(1+e^\theta)^m}$, with $y = 0, \dots, m$, corresponds to the logistic hierarchical model with Gaussian random effects. 
The prior $p_0$ can be any distribution satisfying the assumptions stated in Proposition \ref{prop_logit_model} below.
 We define $P$ as the transition kernel of the Gibbs sampler that targets 
$\L\left( \d\bm{\theta}, \d\mu, \d\tau \mid Y_{1:J} \right)$ by alternating updates from 
$\L\left(\d\bm{\theta}\mid \mu, \tau, Y_{1:J} \right)$ and 
$\L\left(\d\mu, \d\tau \mid \bm{\theta}, Y_{1:J} \right)$. 
This is a particular case of the setting of Theorem \ref{theorem_one_level_nested}, with $\psi = (\mu, \tau)$ and $T(\theta_j) = (\theta_j, \theta_j^2)$. 
Notice that usually $\L\left(\d\bm{\theta}\mid \mu, \tau, Y_{1:J} \right)$ is not known in closed form (with the notable exception of the probit case, see \cite{DD19}), but nonetheless exact sampling is often feasible through adaptive rejection sampling (see e.g. \cite{GW92}) since each $\theta_j$ is one dimensional.
The marginal likelihood is given by
\begin{equation}\label{marginal_finite}
g(y \mid \psi) = \int_{\mathbb{R}}f(y \mid \theta) N\left(\theta \mid \mu, \tau^{-1} \right) \, \d \theta.
\end{equation}
  The next lemma shows that assumptions $(B4)$-$(B6)$ follow directly from \eqref{finite_model}.
\begin{lemma}\label{finite_B4B6}
Consider model \eqref{finite_model} and let $Y_j \simiid Q_{\psi^*}$, with $Q_{\psi^*}$ admitting density $g(y \mid \psi^*)$ as in \eqref{marginal_finite}, with $\psi^* = (\mu^*, \tau^*)$. Then assumptions $(B4)$-$(B6)$ are satisfied.
\end{lemma}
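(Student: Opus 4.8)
The plan rests on two structural features of model \eqref{finite_model}. First, $Y_j$ takes only the finitely many values $y_0,\dots,y_m$, so every expectation $E_{Y_j}[\cdot]$ occurring in $(B4)$–$(B6)$ and in \eqref{def_C_V} is a finite convex combination, with strictly positive weights $g(y_r\mid\psi^*)>0$, of the corresponding conditional quantities given $Y_j=y_r$. Second, conditionally on $Y_j=y_r$ and $\psi=(\mu,\tau)$ the law of $\theta_j$ has density proportional to $f(y_r\mid\theta)\,N(\theta\mid\mu,\tau^{-1})$, i.e.\ a Gaussian reweighted by the factor $f(y_r\mid\theta)\in(0,1]$, which is bounded and, on compact $\theta$-sets, bounded away from $0$. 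Hence verifying $(B4)$–$(B6)$ reduces to finitely many statements about such reweighted Gaussians, with $T(\theta)=(\theta,\theta^2)$, $S=2$ and $\psi=(\mu,\tau)$.

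For the moment and smoothness requirements I would argue as follows. Since $|\theta|^k f(y_r\mid\theta)\le|\theta|^k$ and $N(\theta\mid\mu,\tau^{-1})$ has Gaussian tails, every polynomial moment $E[|\theta_j|^k\mid Y_j=y_r,\psi]$ is finite, uniformly over $\psi$ in a compact neighbourhood of $\psi^*$ (using continuity and strict positivity of $g(y_r\mid\cdot)$). This gives finiteness of all conditional moments of $T(\theta_j)$ and, in particular, the Lyapunov-type bound underlying the conditional CLT. Differentiation under the integral sign in $\int\theta^k f(y_r\mid\theta)N(\theta\mid\mu,\tau^{-1})\,\d\theta$ is justified for $\tau>0$ by the same domination, so $g(\cdot\mid y_r)$, $M^{(1)}(\cdot\mid y_r)$ and the higher conditional-moment functions are $C^\infty$ on $\{\tau>0\}$; averaging over the $m+1$ outcomes transfers this smoothness to $\Fisher(\psi)$, $C(\psi)$, $V(\psi)$ and any other map appearing in $(B4)$–$(B6)$. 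For non-degeneracy: since $\{1,\theta,\theta^2\}$ is affinely independent (minimality) and the conditional law of $\theta_j$ given $Y_j=y_r,\psi^*$ is non-atomic with full support, $\mathrm{Cov}(T(\theta_j)\mid Y_j=y_r,\psi^*)\succ0$ for each $r$, hence $V(\psi^*)\succ0$ after taking the positively weighted average; combined with $\Fisher(\psi^*)\succ0$ this also yields non-singularity of $\Sigma$ in \eqref{eq:Sigma_def}.

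The delicate point is the regularity that upgrades Lindeberg–Feller-type convergence, as in Lemma \ref{asymptotic_distribution_T}, to convergence in total variation. The single summand $T(\theta_j)=(\theta_j,\theta_j^2)$ is supported on the parabola $\{(t,t^2)\}$, hence singular in $\R^2$, so absolute continuity must be established for partial sums. The key computation is that $(\theta_1,\theta_2)\mapsto(\theta_1+\theta_2,\,\theta_1^2+\theta_2^2)$ has Jacobian $2|\theta_1-\theta_2|$, vanishing only on a Lebesgue-null set; therefore already for $J\ge2$ the conditional law of $\sum_{j=1}^{2}T(\theta_j)$ given $(Y_{1:2},\psi^*)$ is absolutely continuous, with a density computable by the change-of-variables formula and controllable (boundedness, integrability of its score) using the explicit Gaussian-type factors. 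Grouping the $J$ summands into pairs then provides the absolutely continuous component with a well-behaved density that the total-variation CLT requires, which I expect to be the content of $(B6)$.

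The main obstacle I anticipate is precisely this last step: matching the concrete regularity available here — eventual absolute continuity of the partial sums of $(\theta_j,\theta_j^2)$ together with quantitative control of the resulting densities — to the exact hypotheses in which $(B6)$ (and the ``beyond Lindeberg–Feller'' argument of Appendix B) is phrased, and doing so with estimates uniform over $\psi$ near $\psi^*$ and over the finitely many outcomes $y_r$. By contrast, the moment, continuity and differentiability parts of $(B4)$–$(B5)$ are essentially bookkeeping once one notes that finiteness of the outcome space plus Gaussian-tail domination trivialises all the relevant integrability, and non-degeneracy follows from minimality of the exponential family and non-atomicity of $\L(\theta_j\mid Y_j,\psi^*)$.
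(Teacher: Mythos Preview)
Your treatment of $(B4)$ matches the paper's: bound the $\psi$-derivatives of the posterior moments using $g(y_r\mid\psi)\ge c>0$ on a compact ball, and obtain $V(\psi^*)\succ0$ from positive-definiteness of each conditional covariance. However, you have misread $(B5)$: it is not a moment or smoothness condition but the Fourier requirement $\sup_{\psi\in B_{\delta_5}}\int_{\R^2}|\varphi^{(k)}(t\mid Y_{1:k},\psi)|^2\,\d t<\infty$ for some fixed $k$, while $(B6)$ is the Cram\'er-type bound $\sup_{\psi\in B_{\delta_6}}\sup_{|t|>\epsilon}|\varphi^{(k')}|<1$. Both concern the posterior characteristic function of the $k$-fold partial sum of $T(\theta_j)=(\theta_j,\theta_j^2)$, and neither is ``bookkeeping''.

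Your Jacobian argument does give absolute continuity of $\sum_{j=1}^2 T(\theta_j)$, and that is enough for the Cram\'er bound at the single point $\psi^*$ (Riemann--Lebesgue plus $|\varphi|<1$ off the origin for absolutely continuous laws). But the asserted \emph{boundedness} of the resulting density is false: since the Jacobian $2|\theta_1-\theta_2|$ vanishes on the diagonal, the density of $(u,v)=(\theta_1+\theta_2,\theta_1^2+\theta_2^2)$ behaves like $(2v-u^2)^{-1/2}$ near the boundary parabola $2v=u^2$, so it is neither bounded nor in $L^2(\R^2)$. Hence $k=2$ cannot deliver $(B5)$, and ``grouping into pairs'' does not yield the quantitative density control you claim. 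The paper avoids change-of-variables entirely and works directly in Fourier space: an explicit complex-Gaussian computation gives, for $\theta\sim N(\mu,\tau^{-1})$, the bound $|E[e^{i(t_1\theta+t_2\theta^2)}]|\le(1+4t_2^2\sigma^4)^{-1/4}\exp\bigl\{-\tfrac{\sigma^2(2\mu t_2+t_1)^2}{2(1+4t_2^2\sigma^4)}\bigr\}$ with $\sigma^2=\tau^{-1}$, so the $k$-fold product lies in $L^1(\R^2)$ once $(k-2)/4>1/2$, i.e.\ $k\ge5$. Fourier inversion then produces a \emph{prior} density for $\sum_{j=1}^5 T(\theta_j)$, Bayes gives $p(x\mid Y_{1:5},\psi)\le c^{-5}\,p(x\mid\psi)$, and two applications of Plancherel yield $(B5)$ with $k=5$. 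For $(B6)$ the paper also takes $k'=5$, reusing the density just constructed, and obtains the uniformity over $\psi\in B_{\delta_6}$---which you correctly flag as the hard step but leave open---by a first-order Taylor expansion of $|\varphi^{(5)}(t\mid Y,\psi)|^2$ in $(\mu,\tau)$, with the $\psi$-derivatives bounded uniformly in $t$ via the same $g(y_r\mid\psi)\ge c$ trick.
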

 Thus, in order to apply Theorem \ref{theorem_one_level_nested}, it suffices to prove assumptions $(B2)$ and $(B3)$, i.e. that the parameters $\psi$ are identifiable with non singular Fisher Information matrix. Therefore, as formalized in the next proposition, standard identifiability conditions (which are also necessary to consistently estimate $\psi$) are sufficient to prove boundedness of the mixing times.
\begin{proposition}\label{prop_logit_model}
Consider model \eqref{finite_model} and let $Y_j \simiid Q_{\psi^*}$, with $Q_{\psi^*}$ admitting density $g(y \mid \psi^*)$ as in \eqref{marginal_finite}, where $\psi^* = (\mu^*, \tau^*)$. Consider the Gibbs sampler with operator $P$ and let $p_0$ be continuous and strictly positive in a neighborhood of $\psi^*$. Let the map $\psi \to g(\cdot \mid \psi)$ be one-to-one, with non singular and continuous $\Fisher(\psi)$. Finally, assume tests as in $(B2)$ exist. Then for every $(M, \epsilon) \in [1, \infty) \times (0,1)$ there exists $T\left(\psi^*, \epsilon, M \right) < \infty$ such that
\[
Q_{\psi^*}^{(J)}\left(t^{(J)}_{mix}(\epsilon, M)\leq T\left(\psi^*, \epsilon, M \right)\right) \to 1\qquad\hbox{ as }J \to \infty\,.
\]
\end{proposition}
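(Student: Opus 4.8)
The plan is to recognise model \eqref{finite_model} as an instance of the general hierarchical model \eqref{one_level_nested} and then to check that all the hypotheses $(B1)$--$(B6)$ of Theorem \ref{theorem_one_level_nested} hold, so that the conclusion follows by a direct application of that theorem (itself built on Lemma \ref{sufficient_lemma} and Corollary \ref{mixingCorollary}). Concretely, the local prior $p(\theta\mid\mu,\tau)=N(\theta\mid\mu,\tau^{-1})$ lies in the exponential family \eqref{exponential_family} with $S=2$, $T(\theta)=(\theta,\theta^2)$, natural parameter $\eta(\psi)=(\tau\mu,-\tau/2)$ and $A(\psi)=\tau\mu^2/2-\tfrac12\log\tau$; this representation is minimal, since $(\theta,\theta^2)$ are linearly independent and $\psi=(\mu,\tau)\mapsto(\tau\mu,-\tau/2)$ is injective on $\mathbb{R}\times(0,\infty)$. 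Hence $\psi=(\mu,\tau)$ plays the role of the global parameter with $D=2$, exactly as in the discussion preceding the proposition.

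Next I would dispatch the assumptions one by one. Assumptions $(B4)$--$(B6)$ are immediate from Lemma \ref{finite_B4B6}. Within $(B1)$, the data-generating assumption $Y_j\simiid Q_{\psi^*}$, the injectivity of $\psi\mapsto g(\cdot\mid\psi)$, and the continuity and strict positivity of $p_0$ in a neighbourhood of $\psi^*$ are all part of the hypotheses of the proposition, and likewise $(B2)$ (existence of tests) and $(B3)$ (non-singular continuous Fisher information) are assumed. The only piece of $(B1)$ left to verify is that, for every point mass $y_r$, the map $\psi\mapsto\sqrt{g(y_r\mid\psi)}$ is continuously differentiable. Since $f(y_r\mid\theta)>0$ by \eqref{eq:ass_discrete}, formula \eqref{marginal_finite} gives $g(y_r\mid\psi)=\int f(y_r\mid\theta)N(\theta\mid\mu,\tau^{-1})\,\d\theta>0$ for every $\psi$, so it suffices to show that $\psi\mapsto g(y_r\mid\psi)$ is $C^1$ and then compose with the square root. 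This follows from differentiation under the integral sign: $(\mu,\tau)\mapsto N(\theta\mid\mu,\tau^{-1})$ is smooth on $\mathbb{R}\times(0,\infty)$, $f(y_r\mid\theta)\in[0,1]$, and the partial derivatives $\partial_\mu N$ and $\partial_\tau N$ are dominated, locally uniformly in $(\mu,\tau)$, by integrable functions of $\theta$ (polynomials in $\theta$ times a Gaussian density with slightly inflated variance), so dominated convergence licenses interchanging differentiation and integration and gives continuity of the resulting derivatives.

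With $(B1)$--$(B6)$ in hand, Theorem \ref{theorem_one_level_nested} applies verbatim and yields, for every $(M,\epsilon)\in[1,\infty)\times(0,1)$, a finite $T(\psi^*,\epsilon,M)$ with $Q^{(J)}_{\psi^*}\bigl(t^{(J)}_{mix}(\epsilon,M)\le T(\psi^*,\epsilon,M)\bigr)\to1$ as $J\to\infty$, which is exactly the assertion of the proposition. I do not anticipate a genuine obstacle: the substantive analytic work — the conditional central limit theorem underpinning $(B4)$--$(B6)$, the dimensionality reduction of Lemma \ref{sufficient_lemma}, and the operator-convergence machinery of Section \ref{section:gibbs} — is already in place, so what remains is essentially a matching of hypotheses. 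The only mildly delicate point is the regularity verification in $(B1)$, namely making the dominating bounds for the derivatives of the Gaussian density locally uniform in $(\mu,\tau)$ and in particular keeping away from the boundary $\tau=0$; this is handled by restricting to a compact neighbourhood of $\psi^*$ contained in $\mathbb{R}\times(0,\infty)$.
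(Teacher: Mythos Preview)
Your proposal is correct and follows essentially the same approach as the paper: apply Theorem \ref{theorem_one_level_nested} by noting that $(B4)$--$(B6)$ come from Lemma \ref{finite_B4B6} while $(B1)$--$(B3)$ are (essentially) hypotheses of the proposition. If anything you are more careful than the paper, which simply asserts ``$(B1)$--$(B3)$ are satisfied by assumption'' without separately checking the continuous differentiability of $\psi\mapsto\sqrt{g(y_r\mid\psi)}$; your dominated-convergence argument fills that small gap.
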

\begin{remark}\label{rmk:logit_ident_m}
In most cases $m \geq 2$ is required to avoid the pair $(\mu, \tau)$ being not identifiable and the associated Fisher Information matrix being singular. For example Lemma $C.35$ in Section $C.23$ of Appendix C shows that with the logit link $\Fisher(\psi)$ is singular if and only if $m = 1$.
\end{remark}
  As already discussed in the Section \ref{section_introduction}, the results of Proposition \ref{prop_logit_model} are illustrated on simulated data in Figure \ref{fig:binom_simulations}. Since mixing times are very hard to approximate numerically in high-dimensions, we employ the Integrated Autocorrelation Times (IATs) as an empirical measure of convergence time. The IAT associated to a $\pi$-invariant Markov chain $X = \{X^{(t)}\}_{t\geq 1}$ and a test function $f\in L^2 (\pi)$ is defined as
 \begin{equation}\label{IAT}
\textsc{IAT}(f) = 1+2\sum_{t = 2}^\infty\text{Corr}\left(f(X^{(1)}), f(X^{(t)}) \right)\,.
\end{equation}
Loosely speaking, IAT($f$) is the number of MCMC samples that is equivalent to a single independent sample in terms of estimation of $\int f(x)\pi(dx)$, thus the higher IAT the slower the convergence. When dealing with hierarchical models as in \eqref{finite_model}, we compute the maximum IAT over all the parameters (both global and group specific). We estimate the IAT with the ratio of the number of iterations and the effective sample size, as described in \cite{GF15}, with the effective sample size computed with the R package \emph{mcmcse} \citep{F21}. For a review of different methods to estimate the IATs, see \cite{T10}.
In Figure \ref{fig:binom_simulations} we plot the quantiles of the IATs as a function of the number of groups for the Gibbs sampler, implemented using adaptive rejection sampling \citep{GW92} for the exact updates of local parameters with full conditionals $\L\left(\d\theta_j\mid \mu, \tau, Y_{1:J} \right)$. As expected by Proposition \ref{prop_logit_model}, the IATs do not diverge as $J$ increases for both values of $m$ under consideration. Note that variability decreases as $J$ increases and the posterior gets closer to its asymptotic limit.
\begin{corollary}\label{spectral_logit}
Consider the same setting of Proposition \ref{prop_logit_model}. For every $(M, \epsilon) \in [1, \infty) \times (0,1)$ define
\[
T\left(\psi^*, \epsilon, M \right) = 1 + \frac{\log(M/2)-\log(\epsilon)}{-\log \left(1-\gamma(\psi^*) \right)},
\]
for $\gamma(\psi^*) \in (0,1)$ as in Corollary \ref{spectral_radius}. Then
\[
Q_{\psi^*}^{(J)}\left(t^{(J)}_{mix}(\epsilon, M)\leq T\left(\psi^*, \epsilon, M \right)\right) \to 1\qquad\hbox{ as }J \to \infty\,.
\]
\end{corollary}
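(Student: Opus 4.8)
The plan is to derive Corollary \ref{spectral_logit} as an instance of Corollary \ref{spectral_radius}: once one checks that model \eqref{finite_model} meets the hypotheses of Theorem \ref{theorem_one_level_nested}, the only remaining task is to verify that the quantity $\gamma(\psi^*)$ produced by Corollary \ref{spectral_radius} satisfies $\gamma(\psi^*)\in(0,1)$, so that the asserted bound $T(\psi^*,\epsilon,M)=1+\frac{\log(M/2)-\log(\epsilon)}{-\log(1-\gamma(\psi^*))}$ is finite.

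First I would confirm that assumptions $(B1)$--$(B6)$ hold. As observed before the corollary, model \eqref{finite_model} is the special case of \eqref{one_level_nested} with $\psi=(\mu,\tau)$ and $T(\theta_j)=(\theta_j,\theta_j^2)$, so $S=D=2$. Lemma \ref{finite_B4B6} gives $(B4)$--$(B6)$ directly. For $(B1)$--$(B3)$ I would unpack the hypotheses of Proposition \ref{prop_logit_model}, which are assumed in the present statement: the data-generating assumption $Y_j\simiid Q_{\psi^*}$, continuity and strict positivity of $p_0$ near $\psi^*$, injectivity of $\psi\mapsto g(\cdot\mid\psi)$ and continuous differentiability of $\psi\mapsto\sqrt{g(\cdot\mid\psi)}$ give $(B1)$ -- the last being straightforward since $g(y\mid\psi)=\int f(y\mid\theta)N(\theta\mid\mu,\tau^{-1})\,\d\theta$ is smooth and positive in $\psi=(\mu,\tau)$ for $\tau>0$ by differentiating under the integral sign, using that $f(y\mid\cdot)$ is bounded by \eqref{eq:ass_discrete}; the assumed tests give $(B2)$; and non-singularity and continuity of $\Fisher(\psi)$ give $(B3)$. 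Hence all hypotheses of Theorem \ref{theorem_one_level_nested}, and therefore of Corollary \ref{spectral_radius}, are in force, and it remains only to establish $\gamma(\psi^*)\in(0,1)$.

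Recall $\gamma(\psi^*)=\min\{(1+\lambda_i)^{-1}\}$ over the eigenvalues $\lambda_i$ of $V^{-1}(\psi^*)C(\psi^*)\Fisher^{-1}(\psi^*)C^\top(\psi^*)$. Since $\Fisher(\psi^*)$ is symmetric and non-singular by $(B3)$, $\Fisher^{-1}(\psi^*)$ is positive definite, so $A:=C(\psi^*)\Fisher^{-1}(\psi^*)C^\top(\psi^*)$ is symmetric positive semidefinite, while $V(\psi^*)$ is symmetric positive definite by $(B6)$ (intuitively because $\theta_j$ and $\theta_j^2$ are affinely independent under the absolutely continuous conditional law $\L(\d\theta_j\mid Y_j,\psi^*)$). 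Thus $V^{-1}(\psi^*)A$ is similar to the positive semidefinite matrix $V^{-1/2}(\psi^*)AV^{-1/2}(\psi^*)$, so all $\lambda_i\ge0$ and, being finitely many real numbers, $\gamma(\psi^*)=(1+\max_i\lambda_i)^{-1}\in(0,1]$. To rule out $\gamma(\psi^*)=1$ it suffices that $\max_i\lambda_i>0$, i.e.\ that $A\neq0$, i.e.\ (since $\Fisher^{-1}(\psi^*)\succ0$) that $C(\psi^*)\neq0$. This I would obtain from the exponential-tilting identity
\[
\partial_\mu\,\E\!\left[g(\theta_j)\mid Y_j=y,\mu,\tau\right]=\tau\,\text{Cov}\!\left(g(\theta_j),\theta_j\mid Y_j=y,\mu,\tau\right),
\]
which is immediate from $\partial_\mu\log N(\theta\mid\mu,\tau^{-1})=\tau(\theta-\mu)$ and differentiation under the integral sign; taking $g(\theta)=\theta$ yields $\partial_\mu M_1^{(1)}(\psi\mid y)=\tau\,\Var(\theta_j\mid Y_j=y,\psi)>0$, because $\L(\d\theta_j\mid Y_j=y,\psi)$ is absolutely continuous, hence non-degenerate, and consequently $[C(\psi^*)]_{1,1}=\E_{Y_j}\!\left[\tau^*\,\Var(\theta_j\mid Y_j,\psi^*)\right]>0$. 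Therefore $\gamma(\psi^*)\in(0,1)$, $T(\psi^*,\epsilon,M)$ is finite, and Corollary \ref{spectral_radius} delivers the claim. I expect the only non-routine points to be the positive definiteness of $V(\psi^*)$ and the strict inequality $C(\psi^*)\neq0$; everything else is a matter of matching the present setting to hypotheses already established in Proposition \ref{prop_logit_model} and Lemma \ref{finite_B4B6} and unwinding the definition of $\gamma(\psi^*)$ from Corollary \ref{spectral_radius}.
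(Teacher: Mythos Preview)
Your approach is correct and matches the paper's one-line proof, which simply invokes Corollary~\ref{spectral_radius}; you go further by explicitly verifying $\gamma(\psi^*)\in(0,1)$, which the paper leaves implicit. One minor correction: the positive definiteness of $V(\psi^*)$ is part of assumption $(B4)$, not $(B6)$.
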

The study of the limiting spectral properties, i.e.\ of $\gamma(\psi^*)$, can be useful to predict under which scenarios the Gibbs sampler will perform well or not for large $J$.
 We illustrate this by  
 considering model \eqref{finite_model} with logit link and known $\tau$ set to $1$. In this setting, where $\mu$ is the only global parameter, the value 
 of $\gamma(\psi^*)$ 
 can be computed as in \eqref{gamma_single} through simple one-dimensional numerical integration. In Figure \ref{fig:binom_gap} we compare the resulting mixing time upper bound, $T\left(\psi^*, \epsilon, M \right) $,  
  with the numerical estimates of IATs defined in \eqref{IAT}, obtained by running a long MCMC chain with a moderately large value of $J$.
We compare such quantities for different values of the true success probability induced by $\mu^*$, i.e.\ $\int_{\mathbb{R}}f(1 \mid \theta) N\left(\theta \mid \mu^*, 1 \right) \, \d \theta$. 
Both theoretical and empirical measures of convergence highlight that the performances of the Gibbs sampler deteriorate when the problem is not balanced: such conclusion is coherent with the findings in \cite{J19}, that considers an asymptotic regime with increasing imbalancedness. 
  \begin{figure}
\centering
\includegraphics[width=.4\textwidth]{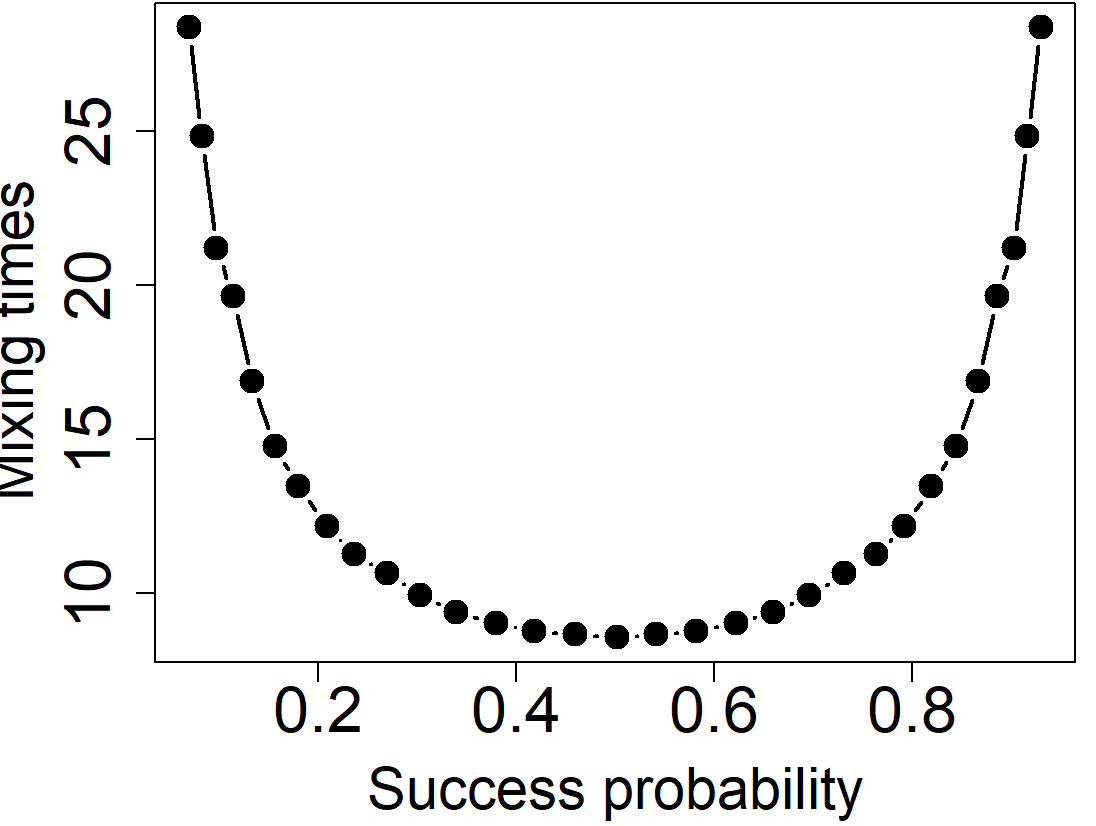} \quad
\includegraphics[width=.4\textwidth]{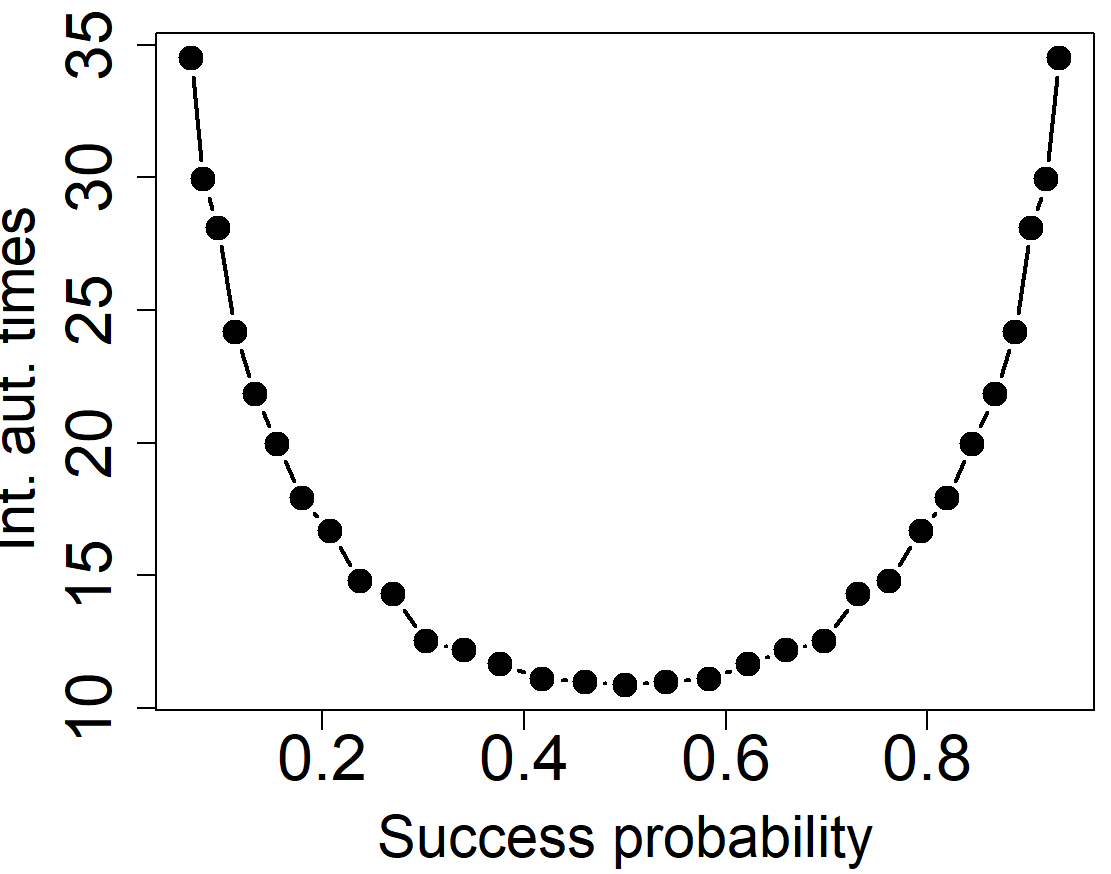}
 \caption{\small{Left: upper bounds on mixing times for model \eqref{finite_model} with $\tau$ known, where $\tau^* = 1$, $\mu^* \in (-3,3)$, $m = 1$, $M = 2$ and $\epsilon = 0.2$. A priori $\mu \sim N\left(0, 10^{3}\right)$. Right: median IATs with $J = 2000$.}} \label{fig:binom_gap}
\end{figure}

\subsection{Different graphical models structure}
In the previous subsections we have studied applications of Theorem \ref{theorem_one_level_nested} for some specification of the hierarchical model in \eqref{one_level_nested}. 
These correspond to the graphical models in the leftmost panel of Figure \ref{fig:graph_models}.
While this structure is very common in Bayesian modeling and it constitutes our main motivating application, the techniques we developed - and in particular the dimensionality reduction and posterior asymptotic approach - can be applied to different classes of models, including other widely used ones.
\begin{figure}
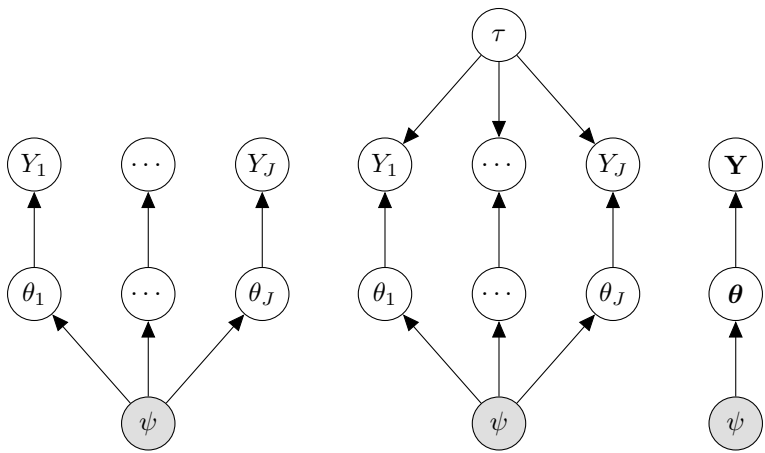

\centering
  \tikz{
 \node[obs] (x) {$\psi$};%
  \node[latent,above=of x,xshift=0cm,fill] (y_generic) {$\dots$}; %
   \node[latent,above=of y_generic,xshift=0cm,fill] (Y_generic) {$\dots$}; %
 \node[latent,above=of x,xshift=-1.5cm,fill] (y) {$\theta_1$}; %
 \node[latent,above=of y,xshift=0cm,fill] (y1) {$Y_1$}; %
 \node[latent,above=of x,xshift=1.5cm] (z) {$\theta_J$}; %
  \node[latent,above=of z,xshift=0cm, fill] (y2) {$Y_J$}; %
 \edge {x} {y,z, y_generic}   
 \edge {y} {y1}
 \edge {z} {y2}
 \edge {y_generic}{Y_generic}
 }
 \quad \quad
 \tikz{
 \node[obs] (x) {$\psi$};%
  \node[latent,above=of x,xshift=0cm,fill] (y_generic) {$\dots$}; %
   \node[latent,above=of y_generic,xshift=0cm,fill] (Y_generic) {$\dots$}; %
 \node[latent,above=of x,xshift=-1.5cm,fill] (y) {$\theta_1$}; %
 \node[latent,above=of y,xshift=0cm,fill] (y1) {$Y_1$}; %
 \node[latent,above=of x,xshift=1.5cm] (z) {$\theta_J$}; %
  \node[latent,above=of z,xshift=0cm, fill] (y2) {$Y_J$}; %
  \node[latent,above=of Y_generic,xshift=0cm, fill] (tau) {$\tau$};

 \edge {x} {y,z, y_generic}   
 \edge {y} {y1}
 \edge {z} {y2}
 \edge {y_generic}{Y_generic}
 \edge {tau} {Y_generic, y1, y2}
 }
 \quad \quad
 \tikz{
 \node[obs] (x) {$\psi$};%
 \node[latent,above=of x,xshift=0cm,fill] (y) {$\bm{\theta}$}; %
 \node[latent,above=of y,xshift=0cm,fill] (y1) {$\textbf{Y}$}; %
 \edge {x} {y}   
 \edge {y} {y1}
 }
 \caption{\small{Graphical models of different hierarchical structures. Left: one level nested model as in Theorem \ref{theorem_one_level_nested}. Center: hyperparameters specifying the likelihood. Right: dependent latent parameters.}}\label{fig:graph_models}
\end{figure}
Here we provide two examples, the first is a relatively direct extension of the model in \eqref{one_level_nested} with the addition of parameters in the likelihood, the second is a more different setting of Gaussian Process regression where the latent parameters are not independent. See respectively the center and rightmost panels in Figure \ref{fig:graph_models} for the resulting graphical models.
More generally, we expect our methodology to be potentially useful to analyse samplers for models that feature 
a fixed set of hyperparameters $\psi$, conditional to which a growing set of parameters or latent variables is tractable enough for posterior sampling.

\subsubsection{Likelihood parameters}\label{sec:lik_pars}
Consider again the hierarchical normal model
\begin{equation}\label{extended_normal_model}
\begin{aligned}
Y_{j,i} \mid \theta_j, \tau_0 \sim N\left(\theta_j,\tau_0^{-1} \right),\quad
 \theta_j \mid\mu, \tau_1 &\simiid N( \mu, \tau_1^{-1})\,,\quad
(\mu,  \tau_1, \tau_0)\sim p_0(\cdot)\,,
\end{aligned}
\end{equation}
with $i=1,\dots,m$ and $j=1,\dots,J$.
The unknown parameters are now given by the triplet $\psi = (\mu, \tau_1, \tau_0)$. 
We denote with $P$ the transition kernel of the Gibbs sampler targeting 
$\L\left( \d\bm{\theta}, \d\mu,\d\tau_1, \d\tau_0\mid Y_{1:J} \right)$ 
by alternating updates from 
$\L\left(\d\bm{\theta}, \d\mu\mid \tau_1, \tau_0, Y_{1:J} \right)$ 
and 
$\L\left(\d\tau_1, \d\tau_0\mid \bm{\theta}, \mu,Y_{1:J} \right)$.
This cannot be seen as a specific case of Theorem \ref{theorem_one_level_nested} with $\psi = (\mu, \tau_1, \tau_0)$, since $\tau_0$ is a parameter of the likelihood $f$ and therefore there is no conditional independence between $Y_j$ and $\psi$, given $\theta_j$. 
However, an approach similar to the one of the previous section can be employed. 
In particular, a result analogous to Lemma \ref{sufficient_lemma} can be derived, with $T(\theta_j) = \left(\left( \theta_j -\bar{Y_j}\right)^2, \left( \theta_j-\mu \right)^2 \right)$ playing the role of the sufficient statistics and $\bar{Y}_j = \frac{1}{m}\sum_{i = 1}^mY_{j,i}$. It is interesting to notice that $T$ in this case depends also on the data $Y_{1:J}$, exactly because the group specific parameters $\bm{\theta}$ do not contain all the information regarding $\psi$. The next proposition shows that also this specification leads to a well-behaved asymptotic regime.
\begin{proposition}\label{prop_extended_normal_model}
Consider model \eqref{extended_normal_model} with $m \geq 2$ and let $Y_j \simiid Q_{\psi^*}$, with $Q_{\psi^*}$ admitting density $g(y \mid \psi^*)$ as in \eqref{marginal_normal}, where $\psi^* = (\mu^*, \tau_1^*, \tau_0^*)$.
Consider the Gibbs sampler with operator $P$ and let the prior density $p_0$ be a continuous and strictly positive in a neighborhood of $\psi^*$.
Then for every $(M, \epsilon) \in [1, \infty) \times (0,1)$ there exists $T\left(\psi^*, \epsilon, M \right) < \infty$ such that
\begin{align}\label{eq:bound_normal}
Q_{\psi^*}^{(J)}\left(t^{(J)}_{mix}(\epsilon, M)\leq T\left(\psi^*, \epsilon, M \right)\right) &\to 1
&\hbox{ as }J \to \infty\,.
\end{align}
\end{proposition}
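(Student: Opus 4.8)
The plan is to mimic the proof of Theorem~\ref{theorem_one_level_nested} step by step, adapting each ingredient to the fact that $\tau_0$ now sits inside the likelihood rather than in the local prior. There are three stages: a dimensionality-reduction lemma in the spirit of Lemma~\ref{sufficient_lemma}; an asymptotic characterisation of the reduced target satisfying assumption $(A1)$ with a non-degenerate Gaussian limit; and an application of Corollary~\ref{mixingCorollary} once the limiting two-block Gibbs sampler is shown to have finite mixing time.

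First I would set up the dimensionality reduction. Writing $\bar Y_j=\tfrac1m\sum_i Y_{j,i}$, define $\bT(\bm\theta)=\big(\sum_j(\theta_j-\bar Y_j)^2,\ \sum_j(\theta_j-\mu)^2\big)$ as suggested in the text. In the fully conjugate Gaussian model the full conditional $\L(\d\tau_1,\d\tau_0\mid\bm\theta,\mu,Y_{1:J})$ depends on $(\bm\theta,\mu,Y_{1:J})$ only through $\bT$ together with data-only quantities (the group means and within-group sums of squares), while the joint update $\L(\d\bm\theta,\d\mu\mid\tau_1,\tau_0,Y_{1:J})$ is functionally independent of the incoming value of $\bm\theta$. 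As in the proof of Lemma~\ref{sufficient_lemma}, these two facts imply that the image process $(\bT^{(t)},\mu^{(t)},\tau_1^{(t)},\tau_0^{(t)})_{t\ge1}$ is itself a Markov chain, that its kernel coincides with the two-block Gibbs sampler targeting $\hat{\pi}_J(\d\bT,\d\mu,\d\tau_1,\d\tau_0\mid Y_{1:J})$, and that the worst-case warm-start mixing times of $P_J$ and of this reduced kernel agree. This reduces the problem to a chain of fixed dimension $D+S=5$.

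Next I would establish $(A1)$ for a suitable rescaling of the reduced target, following Lemmas~\ref{asymptotic_distribution_psi}--\ref{asymptotic_distribution_T}. For $\psi=(\mu,\tau_1,\tau_0)$ one checks $(B1)$--$(B3)$ for the marginal likelihood $g(y\mid\psi)=N(y\mid\mu\One,\tau_0^{-1}I+\tau_1^{-1}\H)$ of \eqref{marginal_normal}: continuity and differentiability of $\psi\mapsto\sqrt{g(\cdot\mid\psi)}$ are immediate; the key point is that $m\ge2$ makes $\psi\mapsto g(\cdot\mid\psi)$ one-to-one and $\Fisher(\psi)$ non-singular (for $m=1$ only $\tau_0^{-1}+\tau_1^{-1}$ is identified, which is exactly why $m\ge2$ is imposed, cf.\ Remark~\ref{rmk:logit_ident_m}); and the tests in $(B2)$ exist because compacts in the support of $p_0$ admit uniformly consistent estimators of $\psi$. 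Theorem~\ref{BvM} then gives the Gaussian limit for $\tilde\psi=\sqrt J(\psi-\psi^*)-\Delta_J$. For $\bT$, conditional on $(\psi,Y_{1:J})$ conjugacy makes $\theta_1,\dots,\theta_J$ independent Gaussians with explicit means and variances, so $\bT$ is a sum of $J$ independent non-identically distributed quadratic forms; a Lindeberg--Feller argument combined with the smoothness of the summand densities yields a total-variation conditional CLT for a suitable rescaling $\tilde{\bT}$ around $N(C(\psi^*)\tilde\psi,V(\psi^*))$, as in Lemma~\ref{asymptotic_distribution_T}. The one genuinely new feature is that the second coordinate of $\bT$ depends on the conditioned parameter $\mu$ and the statistic depends on the data through $\bar Y_j$; I would handle this by expanding $(\theta_j-\mu)^2$ around $\mu^*$ and absorbing the cross term into a linear reparametrisation linking $\sqrt J(\mu-\mu^*)$ to the summands, chosen so that the overall transformation of $(\bT,\psi)$ still acts coordinate-wise after composition with the transformation for $\psi$. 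Combining these limits exactly as Lemma~$C.18$ does gives $\lTV\L(\d\tilde{\bT},\d\tilde\psi\mid Y_{1:J})-N(\bm{0},\Sigma)\rTV\to0$ in $Q_{\psi^*}^{(\infty)}$-probability with $\Sigma$ of the form \eqref{eq:Sigma_def} and non-singular, which is $(A1)$ for the reduced chain.

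Finally, the limiting kernel $\tilde P$ is the two-block Gibbs sampler on a non-degenerate five-dimensional Gaussian, hence has positive spectral gap and finite $\tilde{t}_{mix}(\epsilon,M)$ for every $(M,\epsilon)$; Corollary~\ref{mixingCorollary} (or Corollary~\ref{mixing_gap}, if one also wants an explicit value of $T(\psi^*,\epsilon,M)$ via the eigenvalues of $V^{-1}C\Fisher^{-1}C^\top$) then yields \eqref{eq:bound_normal}. The hard part will be the total-variation conditional CLT of the previous paragraph: controlling a sum of quadratic forms in total variation when the sufficient statistic depends simultaneously on the data and on the conditioned parameter $\mu$, and checking that the limiting covariance $\Sigma$ remains non-singular — it is precisely here that $m\ge2$ is indispensable, since at $m=1$ both $\Fisher(\psi^*)$ and $V(\psi^*)$ lose rank.
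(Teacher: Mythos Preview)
Your proposal is correct and follows essentially the same route as the paper: a Lemma~\ref{sufficient_lemma}--type dimensionality reduction via the data-dependent statistics $\bT$, a Bernstein--von Mises argument for $\tilde\psi=(\tilde\mu,\tilde\tau_1,\tilde\tau_0)$ (the paper's Lemma~\ref{asymptotic_distribution_psi_extend}), an explicit total-variation conditional CLT for $\tilde{\bT}$ exploiting the closed-form Gaussian posteriors (Lemmas~\ref{weak_convergence_normal}--\ref{convergence_normal}), and then Corollary~\ref{mixingCorollary} once the limiting Gaussian covariance is shown non-singular. One small inaccuracy in your final remark: at $m=1$ it is only $\Fisher(\psi^*)$ that degenerates (its determinant is $\tfrac{m^3(m-1)\tau_0^*}{4\tau_1^*(\tau_1^*+m\tau_0^*)^3}$), while $V(\psi^*)$ remains positive definite for every $m\ge1$; the $m\ge2$ hypothesis is needed solely for identifiability of $(\tau_0,\tau_1)$ and hence for the BvM step.
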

An explicit value for $T\left(\psi^*, \epsilon, M \right)$ can be found through Corollary \ref{mixing_gap}, as shown in the next corollary.
\begin{corollary}\label{extended_spectral_normal}
Consider the same setting of Proposition \ref{prop_extended_normal_model}. 
Then, for every $(M, \epsilon) \in [1, \infty) \times (0,1)$, \eqref{eq:bound_normal} holds with
\begin{align*}
T\left(\psi^*, \epsilon, M \right) &= 1 + \frac{\log(M/2)-\log(\epsilon)}{-\log \left(1-\gamma(\psi^*) \right)}\,,
\end{align*}
where
\[
\gamma(\psi^*) = \left(1+\frac{1}{m-1}\left(1-\frac{\tau_1^*}{m\tau_0^*} \right)^2+\left(\frac{\tau_1^*}{m\tau_0^*} \right)^2\right)^{-1}.
\]
\end{corollary}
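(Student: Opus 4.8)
The plan is to specialise the spectral-gap computation behind Corollary~\ref{spectral_radius} to the extended model \eqref{extended_normal_model}. From the proof of Proposition~\ref{prop_extended_normal_model} --- the analogue of Lemma~\ref{sufficient_lemma} with $T(\theta_j)=\left(\left(\theta_j-\bar Y_j\right)^2,\left(\theta_j-\mu\right)^2\right)$, followed by the conditional asymptotic analysis of Section~\ref{section_sketch_main_result} --- the rescaled pair $(\tilde{\bm{T}},\tilde\psi)$ converges in total variation to a centred Gaussian with the block covariance of Proposition~\ref{limiting_sigma}, so Corollary~\ref{mixing_gap} applies and $\gamma(\psi^*)=\min_i(1+\lambda_i)^{-1}$, where $\lambda_1,\lambda_2$ are the eigenvalues of the $2\times 2$ matrix $V^{-1}(\psi^*)C(\psi^*)\Fisher^{-1}(\psi^*)C^\top(\psi^*)$, with $C$, $V$, $\Fisher$ defined as in \eqref{def_C_V} but now with the extra coordinate $\tau_0$ (so $\Fisher$ is $3\times 3$) and with the data-dependent $T$ above. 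The only point needing care relative to Corollary~\ref{spectral_radius} is precisely this data-dependence, but the $\mu$-dependence of $T_2$ and the $\bar Y_j$-dependence of $T_1$ enter $C$ only through the already-defined ``differentiate in $\psi$ then average over $Y_j$'' operation, so the formula $\gamma(\psi^*)=\min_i(1+\lambda_i)^{-1}$ is unchanged.

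Next I would compute the three ingredients in closed form. Since $\theta_j\mid Y_j,\psi\sim N(\hat\theta_j,v)$ with $v=(m\tau_0+\tau_1)^{-1}$ and $\hat\theta_j=v(m\tau_0\bar Y_j+\tau_1\mu)$, one has $\hat\theta_j-\bar Y_j=v\tau_1(\mu-\bar Y_j)$ and $\hat\theta_j-\mu=v\,m\tau_0(\bar Y_j-\mu)$, hence $M^{(1)}_1(\psi\mid y)=v+v^2\tau_1^2(\mu-\bar y)^2$ and $M^{(1)}_2(\psi\mid y)=v+v^2m^2\tau_0^2(\bar y-\mu)^2$. Differentiating these in $(\mu,\tau_1,\tau_0)$, evaluating at $\psi^*$ and averaging over the marginal law $\bar Y_j\sim N(\mu^*,1/\tau_1^*+1/(m\tau_0^*))$ under $Q_{\psi^*}$ gives $C(\psi^*)$. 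The matrix $V(\psi^*)$ follows from the quadratic-form identities for a Gaussian $\theta\sim N(\hat\theta,v)$, namely $\Var\left((\theta-a)^2\right)=4v(\hat\theta-a)^2+2v^2$ and $\text{Cov}\left((\theta-a)^2,(\theta-b)^2\right)=4v(\hat\theta-a)(\hat\theta-b)+2v^2$, again averaged over $\bar Y_j$. Finally, $\Fisher(\psi^*)$ is the Fisher information at $\psi^*$ of the Gaussian marginal $g(y\mid\psi)=N(y\mid\mu,\tau_0^{-1}I+\tau_1^{-1}\H)$ from \eqref{marginal_normal}; this is a Gaussian location--scale family, so $\Fisher(\psi^*)$ is block diagonal, with a scalar block for $\mu$ and a $2\times 2$ block for $(\tau_0,\tau_1)$, both having the standard closed form.

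The last step is to form $N:=V^{-1}(\psi^*)C(\psi^*)\Fisher^{-1}(\psi^*)C^\top(\psi^*)$, compute its two eigenvalues, and simplify $\min_i(1+\lambda_i)^{-1}$. Writing everything in terms of the single scalar $\rho:=\tau_1^*/(m\tau_0^*)$, all quantities become functions of $\rho$ and $m$ alone ($\mu^*$ drops out, exactly as in the known-$\tau_0$ case of Corollary~\ref{spectral_normal}), and the minimising eigenvalue should collapse to $\gamma(\psi^*)=\left(1+\frac{1}{m-1}(1-\rho)^2+\rho^2\right)^{-1}$; the hypothesis $m\ge 2$ is what keeps $\Fisher(\psi^*)$ non-singular and the $1/(m-1)$ term finite. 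I expect the main obstacle to be the linear algebra: inverting the $3\times 3$ Fisher information for $(\mu,\tau_1,\tau_0)$ and verifying that the two eigenvalues of the resulting $2\times 2$ product genuinely reduce to this one-parameter expression --- conceptually routine but algebraically heavy, and best organised by exploiting the $\mu$-versus-$(\tau_0,\tau_1)$ block structure of $\Fisher(\psi^*)$ from the start.
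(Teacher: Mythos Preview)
Your approach is essentially the paper's, and the computations you outline for $C(\psi^*)$, $V(\psi^*)$ and $\Fisher(\psi^*)$ are exactly the ones carried out there. Two points deserve attention. First, the Gibbs sampler $P$ of Proposition~\ref{prop_extended_normal_model} blocks $(\bm{\theta},\mu)$ against $(\tau_1,\tau_0)$, not $\bm{\theta}$ against $\psi=(\mu,\tau_1,\tau_0)$, so the formula $\gamma=\min_i(1+\lambda_i)^{-1}$ from Corollary~\ref{spectral_radius} does not apply \emph{a priori}; the paper resolves this by observing that the $\mu$-column of $C(\psi^*)$ vanishes and $\Fisher(\psi^*)$ is block-diagonal, so $\tilde\mu$ is asymptotically independent of everything and can be dropped, reducing to $\phi=(\tau_1,\tau_0)$ and $2\times 2$ matrices throughout --- this is exactly the block-structure reduction you anticipate for the algebra, but it is also needed conceptually to justify the eigenvalue formula. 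Second, the algebra is far lighter than you fear: the two rows of $C(\phi^*)$ are identical (both equal to $(m\tau_0^*+\tau_1^*)^{-2}(1,m)$), so $C(\phi^*)\Fisher^{-1}(\phi^*)C^\top(\phi^*)$ has rank one, hence $V^{-1}C\Fisher^{-1}C^\top$ has one zero eigenvalue and the other equals its trace, which after simplification yields $\gamma(\psi^*)^{-1}-1=(m\tau_0^*-\tau_1^*)^2/\bigl(m^2(m-1)(\tau_0^*)^2\bigr)+(\tau_1^*)^2/(m\tau_0^*)^2$ directly.
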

\begin{remark}
The assumption $m \geq 2$ cannot be relaxed: indeed, if a single observation per group is available, the pair $(\tau_1, \tau_0)$ is not identifiable and the Fisher Information matrix is singular. For an empirical illustration of the issues arising in this context, see the top left panel in Figure \ref{fig:normal_simulations} or Section $6.2$ of \cite{RS15}.
\end{remark}
 Unlike the case of Corollary \ref{spectral_normal}, in this setting the limiting gap does not depend on $m$ only through the ratio of prior and likelihood precisions, but also directly on its value. 
 Loosely speaking, a higher value of $m$ allows to better recover the relation between $\tau_0$ and $\tau_1$.

The results of Proposition \ref{prop_extended_normal_model} and Corollary \ref{extended_spectral_normal} are illustrated on simulated data in Figure \ref{fig:normal_simulations}, which depicts the Integrated Autocorrelations Times (IATs) as defined in \eqref{IAT}. When the model is not identifiable, i.e. $m = 1$ (top left panel), the IATs diverge with the number of groups, while with $m = 3$ and $m = 5$ they stabilize as $J$ increases. Differently from the binomial setting of Figure \ref{fig:normal_simulations}, the IATs grow for small values of $J$ before the asymptotic regime kicks in.
\begin{figure}
\centering
\includegraphics[width=\textwidth]{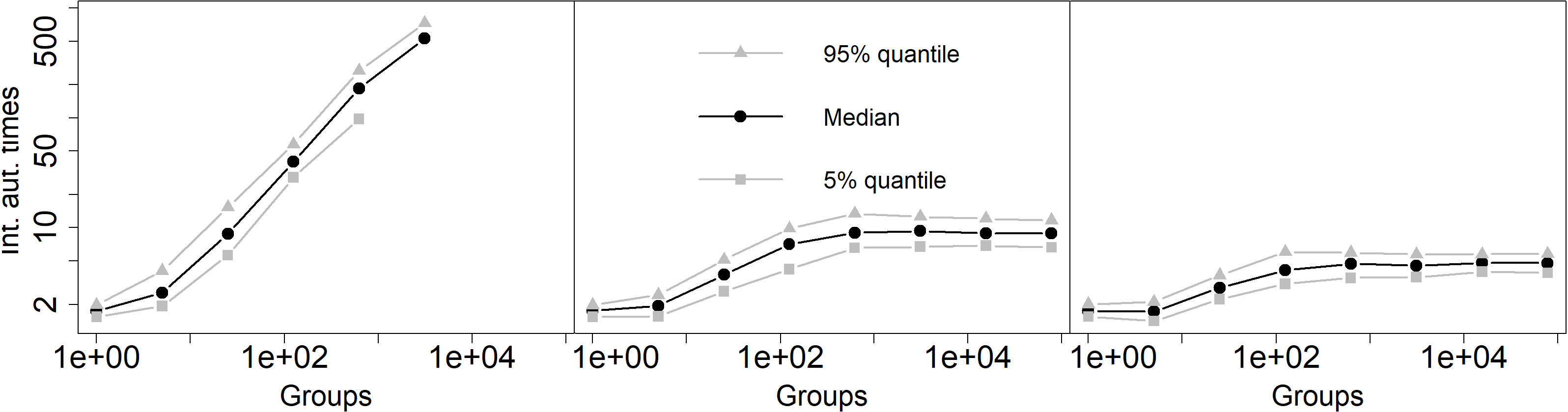} \quad
 \caption{\small{Quantiles of the integrated autocorrelations times (on log-scale) for model \eqref{extended_normal_model} with $\mu^* = 4$, $\tau^*_0 = 1$ and $\tau^*_1 = 3$. A priori $(\tau_0, \tau_1) \overset{\text{i.i.d.}}{\sim} \text{Gamma}(1,1)$ and $p_0(\mu) \propto 1$.  Top left: $m = 1$ (last points not plotted due to numerical instability). Center: $m = 3$. Top right: $m = 5$. }}\label{fig:normal_simulations}
\end{figure}

\subsubsection{Gaussian processes}
We now consider the popular setting where the groups are identified by a continuous covariate (e.g. location) and group specific parameters are modeled through a Gaussian process. It turns out that the main arguments of the paper, namely dimensionality reduction and impact of posterior asymptotic characterization, can be applied also in this context. This section, compared to the previous ones, aims to provide a proof of concept rather than a detailed analysis, 
e.g.\ we directly assume limiting statements on the posterior distributions of interest. 
Nonetheless we find it useful to show how widely our methodology could be applied and illustrate interesting directions of ongoing work.

 Assume to observe $n$ data points $Y(s_i)$ with $i = 1, \dots, n$, at a set of locations $\left(s_1, \dots, s_n\right)$, together with input variables or covariates $x(s_i)\in\R$. 
We consider Gaussian Process regression models of the form 
\begin{equation}\label{eq:gauss_proc_structure}
\begin{aligned}
Y(s_i) \mid \betag & \sim f(\cdot \mid \beta(s_i),x(s_i)), & i = 1, \dots, n\\
 \betag^{(n)}\mid \psi &\sim N(\theta\One, \tau_\beta^{-1}R^{(n)})
&\\
 \psi &\sim p_0(\cdot).&
\end{aligned}
\end{equation}
where $\betag = \left(\beta(s_1), \dots, \beta(s_n) \right)^\top$ is a Gaussian Process (GP) observed at $\left(s_1, \dots, s_n\right)$ and $f$ is a density function with respect to a suitable dominating measure. 
Here $\One_n = (1, \dots, 1)^\top$ is an $n$-dimensional vector and $R^{(n)}=\left(R_{ij} \right)_{i,j=1,\dots,n}$ is a $n \times n$ correlation matrix, with $R_{ij} = \text{Corr}\left(\beta(s_i), \beta(s_j) \right)$,
 defined through a suitable kernel function, that we assume to be fixed and known. 
Typically, strength of correlation among coefficients at different locations depends on their distance, with $R_{ij}$ defined e.g.\ through a kernel of the Mat\'ern family (see e.g.\ Section 4.2.1 in \cite{WR06}).
In this Section we focus on a single real covariate for notational convenience, but everything could be restated on a general $p$-dimensional space with little effort: direct analogues of the next lemma and corollaries similarly follow.
We first consider cases where the likelihood function has no specific hyper-parameters, such as in the common binary case where $Y(s_j) \mid \betag \sim \text{Bernoulli}(\sigma(\beta(s_j)x(s_j)))$, with $\sigma$ logistic link function and $Y(s_j)\in\{0,1\}$.

Let $P_n$ be the kernel of the Gibbs sampler which targets $\pi_n(\d\betag, \d\theta,\d\tau_\beta)=\L\left(\d\betag, \d\theta,\d\tau_\beta \mid  Y^{(n)} \right)$, by sequentially performing updates from the full conditionals of $\betag$, $\theta$ and $\tau_\beta$.
Despite the different graphical model structure, the analysis of mixing times of $P_n$ as $n\to\infty$ can be approached with the techniques we developed above, regardless of the specific likelihood used in \eqref{eq:gauss_proc_structure}.
The first step is to perform a dimensionality reduction analogous to the one in Section \ref{sec:dim_red}.
Define $\psi = (\theta, \tau_\beta)$ and $\bm{T}(\betag) = \left(T_{\theta}, T_{\tau_\beta}\right)$, where $T_{\theta} = \One^\top R^{-1} \betag$, $T_{\tau_\beta} = \betag^\top R^{-1} \betag$, which play the same role of global parameters and sufficient statistics in Lemma \ref{sufficient_lemma}. Indeed it 
holds $\L\left(\d\psi \mid \betag,  Y^{(n)}) = \L(\d\psi \mid \bm{T}(\betag),  Y^{(n)}\right)$ 
and we can provide an analogue of Lemma \ref{sufficient_lemma} for model \eqref{eq:gauss_proc_structure}.
\begin{lemma}\label{sufficient_lemma_GP}
Let $\pi_n$ and $P_n$ be defined as above for model \eqref{eq:gauss_proc_structure}.
Let $\hat{P}_n$ be the transition kernel of Gibbs sampler targeting $\hat{\pi}_n(\d\bT, \d\theta,\d\tau_\beta)=\L\left(\d\bT, \d\theta,\d\tau_\beta \mid  Y^{(n)} \right)$ which sequentially performs updates from the full conditionals of $\bT$, $\theta$ and $\tau_\beta$. Let $(\bT^{(t)}, \d\theta^{(t)},\d\tau_\beta^{(t)} )_{t \geq 1}$ be the stochastic process obtained as a time-wise transformation of $(\bm{\beta}^{(t)}, \d\theta^{(t)},\d\tau_\beta^{(t)} )_{t \geq 1}$. Then $(\bT^{(t)}, \d\theta^{(t)},\d\tau_\beta^{(t)} )_{t \geq 1}$ is a Markov chain, its transition kernel coincides with $\hat{P}_n$, and its mixing times $\hat{t}^{(n)}_{mix}$ satisfy
\begin{align*}
\sup_{\mu \in \sN \left(\pi_n, M \right)} t^{(n)}_{mix}(\epsilon, \mu)
&= \sup_{\nu \in \sN \left(\hat{\pi}_n, M \right)} \hat{t}^{(n)}_{mix}(\epsilon, \nu)
&M\geq 1\,.
\end{align*}
\end{lemma}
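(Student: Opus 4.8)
The plan is to follow the proof of Lemma \ref{sufficient_lemma} essentially verbatim, adapting the bookkeeping to the graphical structure of \eqref{eq:gauss_proc_structure}. Write $\psi = (\theta,\tau_\beta)$ and let $\Phi:(\betag,\psi)\mapsto(\bT(\betag),\psi)$ be the coordinate map onto the state space of $(\bT,\theta,\tau_\beta)$, so that $\hat{\pi}_n = \pi_n\circ\Phi^{-1}$ by definition. \textbf{Step 1 (sufficiency).} The key structural fact, already recorded before the statement, is that $\bT(\betag) = \big(\One^\top (R^{(n)})^{-1}\betag,\ \betag^\top (R^{(n)})^{-1}\betag\big)$ is sufficient for $\psi$: in the joint density of \eqref{eq:gauss_proc_structure} the likelihood factor $\prod_{i=1}^n f(Y(s_i)\mid\beta(s_i),x(s_i))$ does not involve $\psi$, while the prior density $N(\theta\One,\tau_\beta^{-1}R^{(n)})$ depends on $\betag$ only through $\bT(\betag)$; hence $\L(\d\psi\mid\betag,Y^{(n)}) = \L(\d\psi\mid\bT(\betag),Y^{(n)})$ and, more generally, the full conditionals of $\theta$ and of $\tau_\beta$ under $\pi_n$ depend on $\betag$ only through $\bT(\betag)$. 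For the same reason, conditioning on $\bT(\betag)$ renders the Gaussian prior factor constant and leaves only the $\psi$-free likelihood, so the conditional law $K(\bT;\d\betag)$ of $\betag$ given $(\bT(\betag),\psi)$ under $\pi_n$ does not depend on $\psi$, and $\pi_n = \hat{\pi}_n\otimes K$.

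\textbf{Step 2 (identify $\hat{P}_n$ as a Markov function of $P_n$).} Using Step 1, for any state $x=(\betag,\psi)$ the quantity $P_n(x,\Phi^{-1}(B))$ is the probability that $(\bT(\betag'),\theta',\tau_\beta')\in B$, where $\betag'\sim\L(\d\betag\mid\psi,Y^{(n)})$ and then $\theta',\tau_\beta'$ are drawn from their full conditionals; all three draws depend on $x$ only through $\bT(\betag)$ and $\psi$, i.e. through $\Phi(x)$. Hence $P_n(x,\Phi^{-1}(B)) = \hat{P}_n(\Phi(x),B)$ for a Markov kernel $\hat{P}_n$, and one checks that this $\hat{P}_n$ is exactly the deterministic-scan Gibbs sampler on $\hat{\pi}_n$ updating $\bT$, then $\theta$, then $\tau_\beta$: its $\bT$-step is the pushforward under $\bT$ of $\L(\d\betag\mid\psi,Y^{(n)})$, which is $\hat{\pi}_n(\d\bT\mid\psi,Y^{(n)})$, while sufficiency identifies its $\theta$- and $\tau_\beta$-steps with the corresponding full conditionals of $\hat{\pi}_n$. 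A direct computation from the intertwining relation $P_n(x,\Phi^{-1}(\cdot))=\hat{P}_n(\Phi(x),\cdot)$ then gives that $(\bT^{(t)},\theta^{(t)},\tau_\beta^{(t)})_{t\geq1}$ is a Markov chain with kernel $\hat{P}_n$ and that $(\mu P_n^t)\circ\Phi^{-1} = (\mu\circ\Phi^{-1})\hat{P}_n^t$ for every $t$ and every starting law $\mu$.

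\textbf{Step 3 (match the mixing times).} This is the bookkeeping of Lemma \ref{sufficient_lemma}. First, $\mu\mapsto\nu:=\mu\circ\Phi^{-1}$ maps $\sN(\pi_n,M)$ \emph{onto} $\sN(\hat{\pi}_n,M)$: it maps into it because pushing forward preserves the domination $\mu(A)\le M\pi_n(A)$, and it is onto because, given $\nu\in\sN(\hat{\pi}_n,M)$, the measure $\mu:=\nu\otimes K$ satisfies $\mu\circ\Phi^{-1}=\nu$ and $\mu\in\sN(\pi_n,M)$, using $\pi_n=\hat{\pi}_n\otimes K$. Second, for $t\geq1$ the law $\mu P_n^t$ again has conditional-given-$\Phi$ equal to $K$: after the $\betag$-move at the start of sweep $t$ the density of $\betag$ factors, as a function of $\betag$, into a function of $\bT(\betag)$ times the density of $K$, and the subsequent $\theta,\tau_\beta$-moves only reweight through $\bT(\betag)$ (Step 1). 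Hence $\mu P_n^t$ and $\pi_n$ share the conditional law $K$ of $\betag$ given $(\bT(\betag),\psi)$, and since gluing a common Markov kernel onto both arguments of a total variation distance leaves it unchanged, $\lTV\mu P_n^t-\pi_n\rTV = \lTV\nu\hat{P}_n^t-\hat{\pi}_n\rTV$ for all $t\geq1$. Combining this with the surjectivity above yields $\sup_{\mu\in\sN(\pi_n,M)}t^{(n)}_{mix}(\epsilon,\mu) = \sup_{\nu\in\sN(\hat{\pi}_n,M)}\hat{t}^{(n)}_{mix}(\epsilon,\nu)$, exactly as in Lemma \ref{sufficient_lemma} (including the treatment of the trivial regime where both suprema vanish).

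The step I expect to be the main obstacle is the claim in Step 3 that the conditional law of $\betag$ given $(\bT(\betag),\psi)$ is restored to $K$ after a single Gibbs sweep from an \emph{arbitrary} warm start, not merely under the stationary $\pi_n$; this is the only place where one genuinely uses both that the $\betag$-update draws from the exact full conditional and that $K$ is free of $\psi$. The rest is pushforward and domination bookkeeping inherited from Lemma \ref{sufficient_lemma}. As there, it should be stressed that $\bT$ is sufficient only for the prior of $\betag$, so that the chain on $(\bT,\theta,\tau_\beta)$ is a device for convergence analysis rather than an implementable algorithm.
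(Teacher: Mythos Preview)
Your proposal is correct and follows essentially the same route as the paper: establish sufficiency of $\bT(\betag)$ for $\psi$, identify the induced process with the Gibbs sampler $\hat{P}_n$ on $\hat{\pi}_n$, and match the warm-start mixing times via the pushforward $\Phi$. The only difference is cosmetic: where the paper (following its proof of Lemma~\ref{sufficient_lemma}) invokes the co-deinitializing result of \cite{R01} (Corollary~2 there) to get $\lTV\mu P_n^t-\pi_n\rTV=\lTV\nu\hat{P}_n^t-\hat{\pi}_n\rTV$ directly, you re-derive this by hand in Step~3 by showing the conditional of $\betag$ given $(\bT,\psi)$ is $K$ after one sweep---which is precisely the content of co-deinitialization in this setting, and your observation that $K$ is free of $\psi$ is exactly what makes the argument go through.
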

 Also, provided a rescaled version of $(\bT, \theta,\tau_\beta)$ converges to a suitable limit conditional on the data, the mixing times are bounded with respect to the number of observations.
\begin{corollary}\label{corollary_GP}
Under model \eqref{eq:gauss_proc_structure}, let $\hat{\pi}_n$ satisfy assumption $(A1)$ for a given data generating process $ Y^{(n)} \sim Q^{(n)}$, with limiting distribution $\tilde{\pi}$. 
If $(M, \epsilon) \in [1,\infty) \times (0,1)$ is such that $\tilde{t}_{mix}(\epsilon, M) < \infty$, then it holds
\begin{equation}\label{conclusion_GP}
Q^{(n)}\left(t^{(n)}_{mix}(\epsilon, M)\leq \tilde{t}_{mix}(\epsilon, M)\right) \to 1
\qquad\hbox{as }n \to \infty\,.
\end{equation}
\end{corollary}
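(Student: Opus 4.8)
The plan is to reduce the problem to the fixed-dimensional setting of Section \ref{section:gibbs} via the sufficient-statistic trick, and then quote Corollary \ref{mixingCorollary}, mirroring the proof of Theorem \ref{theorem_one_level_nested}. First I would apply Lemma \ref{sufficient_lemma_GP}: it shows that the stochastic process obtained from the chain driven by $P_n$ through the time-wise map $(\betag,\theta,\tau_\beta)\mapsto(\bT(\betag),\theta,\tau_\beta)$ is itself a Markov chain with kernel $\hat P_n$, and that
\[
\sup_{\mu\in\sN(\pi_n,M)} t^{(n)}_{mix}(\epsilon,\mu) \;=\; \sup_{\nu\in\sN(\hat\pi_n,M)} \hat t^{(n)}_{mix}(\epsilon,\nu)
\]
for all $(M,\epsilon)\in[1,\infty)\times(0,1)$. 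Writing $\hat t^{(n)}_{mix}(\epsilon,M)$ for the right-hand side, this identity says $t^{(n)}_{mix}(\epsilon,M)=\hat t^{(n)}_{mix}(\epsilon,M)$. The crucial gain is that $\hat\pi_n$ is supported on a space of fixed dimension --- the pair $\bT\in\R^2$ together with the scalars $\theta$ and $\tau_\beta$ --- so that, unlike the original sequence $(\pi_n)_{n\ge1}$, the sequence $(\hat\pi_n)_{n\ge1}$ and its three-block Gibbs kernels $(\hat P_n)_{n\ge1}$ fall within the scope of the abstract results of Section \ref{section:gibbs}.

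Second, I would apply Corollary \ref{mixingCorollary} with $\pi_n$ and $P_n$ there taken to be $\hat\pi_n$ and $\hat P_n$. By hypothesis $\hat\pi_n$ satisfies $(A1)$ with limiting distribution $\tilde\pi$, and $\tilde t_{mix}(\epsilon,M)<\infty$; here it is worth stressing that the rescaling $\phi_n$ entering $(A1)$ must act coordinate-wise with respect to the same block partition $(\bT,\theta,\tau_\beta)$ used by $\hat P_n$ --- as is automatically the case for the affine rescalings produced by Bernstein--von Mises type arguments, cf.\ \eqref{tilde_psi}--\eqref{tilde_T} --- so that $\tilde P$ in Corollary \ref{mixingCorollary} is the Gibbs sampler targeting $\tilde\pi$. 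Corollary \ref{mixingCorollary} then gives
\[
Q^{(n)}\!\left(\hat t^{(n)}_{mix}(\epsilon,M)\le \tilde t_{mix}(\epsilon,M)\right) \to 1
\qquad\hbox{as }n\to\infty\,,
\]
and combining this with the identity $t^{(n)}_{mix}(\epsilon,M)=\hat t^{(n)}_{mix}(\epsilon,M)$ from the first step yields exactly \eqref{conclusion_GP}.

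The argument is essentially bookkeeping once Lemma \ref{sufficient_lemma_GP} is in place, and the only real obstacle is the dimensionality reduction itself: one must verify that collapsing the growing vector $\betag$ into $\bT(\betag)=(\One^\top R^{-1}\betag,\ \betag^\top R^{-1}\betag)$ is lossless for the Gibbs dynamics, i.e.\ that $\L(\d\theta,\d\tau_\beta\mid\betag,Y^{(n)})=\L(\d\theta,\d\tau_\beta\mid\bT(\betag),Y^{(n)})$ and that the full-conditional updates of $\hat P_n$ are the time-wise images of those of $P_n$; this is precisely the content of Lemma \ref{sufficient_lemma_GP} and is where the Gaussian-process structure (Gaussianity of $\betag\mid\psi$, hence sufficiency of $\bT$ for $\psi$) enters. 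Since the limiting statement on $\hat\pi_n$ is taken as a hypothesis rather than derived, there is nothing further to prove; obtaining an explicit constant in place of $\tilde t_{mix}(\epsilon,M)$ would, as in Corollary \ref{mixing_gap}, additionally require $\tilde\pi$ to be Gaussian with non-degenerate $(\bT,\psi)$-covariance together with a computation of its spectral gap.
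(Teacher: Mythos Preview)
Your proposal is correct and follows the same route as the paper: apply Lemma \ref{sufficient_lemma_GP} to identify $t^{(n)}_{mix}(\epsilon,M)$ with $\hat t^{(n)}_{mix}(\epsilon,M)$, then invoke Corollary \ref{mixingCorollary} on the fixed-dimensional sequence $(\hat\pi_n)_{n\ge1}$, whose hypotheses are satisfied by assumption. The paper's proof is exactly this two-line argument; your additional remarks on the coordinate-wise nature of the rescaling and on where the Gaussianity of $\betag\mid\psi$ enters are accurate elaborations rather than departures.
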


 In some cases the likelihood contains some unknown parameters that are also included in the Bayesian model.
A common example is the likelihood precision $\tau_\epsilon$ in normal linear models with spatially varying regression coefficients (see e.g.\ \cite{G03} or Section $2$ in \cite{WR06}), where
\begin{align}\label{eq:norm_lik_GP}
Y(s_i) \mid \betag & \sim N(\beta(s_i)x(s_i),\tau_\epsilon^{-1}), & i = 1, \dots, n.
\end{align}
Let $P_n$ be the Gibbs sampler kernel targeting $\pi_n(\d\betag, \d\theta,\d\tau_\beta,\d\tau_\epsilon)=\L\left(\d\betag, \d\theta,\d\tau_\beta,\d\tau_\epsilon \mid  Y^{(n)} \right)$, by sequentially performing updates from the full conditionals of $\betag$, $\theta$, $\tau_\beta$ and $\tau_\epsilon$.
Analogously to Section \ref{sec:lik_pars}, the results of Lemma \ref{sufficient_lemma_GP} and Corollary \ref{corollary_GP} extend to this context with $\psi = (\theta, \tau_\beta, \tau_\epsilon)$ and $\bm{T}$ defined as $\bm{T} = \left(T_{\theta}, T_{\tau_\beta}, T_{\tau_\epsilon}  \right)$, where $T_{\tau_\epsilon} = \left( Y^{(n)}-D\betag \right)^\top \left( Y^{(n)}-D\betag \right)$ and $D$ is the $n\times n$ diagonal matrix with values $\left(x(s_1), \dots, x(s_n) \right)$.
This is summarized in the next corollary.
\begin{corollary}\label{corollary_GP_ext}
Under model \eqref{eq:gauss_proc_structure} with likelihood as in \eqref{eq:norm_lik_GP}, assume the conditions of Corollary \ref{corollary_GP} are satisfied with $\psi = (\theta, \tau_\beta, \tau_\epsilon)$ and $\bm{T} = \left(T_{\theta}, T_{\tau_\beta}, T_{\tau_\epsilon}  \right)$. Then \eqref{conclusion_GP} holds.
\end{corollary}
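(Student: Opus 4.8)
The plan is to reproduce, now with the extra likelihood precision $\tau_\epsilon$ carried in the state, the two-step scheme behind Corollary \ref{corollary_GP}: first a dimensionality reduction in the spirit of Lemma \ref{sufficient_lemma_GP}, showing that the worst-case $M$-warm mixing times of $P_n$ coincide with those of the Gibbs sampler $\hat{P}_n$ on the fixed-dimensional target $\hat{\pi}_n(\d\bm{T},\d\theta,\d\tau_\beta,\d\tau_\epsilon)$; and then an application of Corollary \ref{mixingCorollary} to the sequence $(\hat{\pi}_n)_n$, which by hypothesis satisfies (A1) with a limiting distribution $\tilde{\pi}$ for which $\tilde{t}_{mix}(\epsilon, M) < \infty$.

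The only genuinely new ingredient is the sufficiency identity $\L(\d\psi \mid \betag, Y^{(n)}) = \L(\d\psi \mid \bm{T}, Y^{(n)})$ with $\psi = (\theta, \tau_\beta, \tau_\epsilon)$ and $\bm{T} = (T_\theta, T_{\tau_\beta}, T_{\tau_\epsilon})$. To establish it I would write the joint posterior density as
\[
p(\psi, \betag \mid Y^{(n)}) \;\propto\; p_0(\psi)\, N\!\left(\betag \mid \theta\One, \tau_\beta^{-1} R^{(n)}\right)\, N\!\left(Y^{(n)} \mid D\betag, \tau_\epsilon^{-1} I\right)
\]
and expand the two Gaussian factors. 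The first contributes, up to terms not involving $\betag$, a quantity depending on $\betag$ only through $\betag^\top R^{-1}\betag = T_{\tau_\beta}$ and $\One^\top R^{-1}\betag = T_\theta$; the second, for fixed data $Y^{(n)}$, depends on $\betag$ only through the residual sum of squares $(Y^{(n)} - D\betag)^\top (Y^{(n)} - D\betag) = T_{\tau_\epsilon}$. Dividing by the $\betag$-marginal then shows that $p(\psi \mid \betag, Y^{(n)})$ factors through $\bm{T}$, which is the claimed identity. As already noted for the analogous model in Section \ref{sec:lik_pars}, the statistic $\bm{T}$ must include a data-dependent component ($T_{\tau_\epsilon}$) precisely because $\betag$ alone does not carry all the information about $\tau_\epsilon$.

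With this identity in hand, the proof of Lemma \ref{sufficient_lemma_GP} applies essentially unchanged: the time-wise image of the Gibbs chain under $(\betag, \psi) \mapsto (\bm{T}, \psi)$ is Markov with kernel $\hat{P}_n$ — each full-conditional update in the $(\theta, \tau_\beta, \tau_\epsilon)$-sweep depends on $\betag$ only through $\bm{T}$ by the sufficiency identity, while the law of $\bm{T}$ after a $\betag$-update is $\L(\d\bm{T} \mid \psi, Y^{(n)}) = \hat{\pi}_n(\d\bm{T} \mid \psi)$, which depends on the current state only through $\psi$ — and the same total-variation and $M$-warm-start bookkeeping used there yields the equality of worst-case mixing times. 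Finally, applying Corollary \ref{mixingCorollary} (via the argument already used to prove Corollary \ref{corollary_GP}) to $(\hat{\pi}_n)_n$ gives $Q^{(n)}(\hat{t}^{(n)}_{mix}(\epsilon, M) \leq \tilde{t}_{mix}(\epsilon, M)) \to 1$, and transporting this through the mixing-time equality gives \eqref{conclusion_GP}. The main obstacle is simply identifying the correct form of $T_{\tau_\epsilon}$ so that the sufficiency identity holds; everything downstream is a direct re-run of results already established.
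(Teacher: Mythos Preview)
Your proposal is correct and follows essentially the same approach as the paper: establish the analogue of Lemma \ref{sufficient_lemma_GP} for the extended parameter $\psi = (\theta, \tau_\beta, \tau_\epsilon)$ and statistic $\bm{T} = (T_\theta, T_{\tau_\beta}, T_{\tau_\epsilon})$, then repeat the reasoning of Corollary \ref{corollary_GP}. The paper's proof is terser (it simply asserts the analogue of Lemma \ref{sufficient_lemma_GP} without expanding the Gaussian factors), but your explicit verification of the sufficiency identity via the two quadratic forms is exactly the content being glossed over.
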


 Similarly to the hierarchical normal case, studied in Section \ref{normal_case}, if the precisions $(\tau_\beta, \tau_\epsilon)$ are fixed in specification \eqref{eq:norm_lik_GP}, then the spectral gap of $P_n$ can be explicitly studied to deduce limiting bounds on mixing times (see e.g.\ \cite{BS16}); while if the precisions are unknown, as it is mostly the case in applications, the performances of $P_n$ have only been empirically studied through simulations. 
The methodology we introduce here can be used to formally analyze the behaviour of these samplers as $n \to \infty$.

To conclude this section, it is important to note that in this context the kernel $P_n$ may or may not be directly implementable, depending on the specific model formulation.   
In the commonly used linear case, the full conditional distribution $\pi_n(\d\betag\mid \psi)$ is normal, so that sampling becomes accessible and $P_n$ is directly the algorithm used to sample from $\pi_n$. 
See e.g.\ Appendix $2$ of \cite{BS16} for details on the implementation, including expressions for the full conditionals. 
In other cases, e.g.\ for log-concave likelihoods such as the binary regression ones, adaptive rejection sampling techniques (e.g. \cite{GW92}) can be used in low dimensions. 
In the more general case the exact update from $\pi_n(\d\betag\mid \psi)$ is commonly replaced with a Metropolis update from $\pi_n(\d\betag\mid \psi)$ (using e.g.\ a gradient-based kernel such as MALA or HMC). In the latter case, the Gibbs kernel $P_n$ we analyse here is an idealized version of the practically used Metropolis-within-Gibbs kernel. Under suitable (mild) assumptions, we expect the convergence properties of this idealized scheme to provide a lower bound to the Metropolis-within-Gibbs schemes used in practice.
Also, we expect the convergence of the two kernels to be of the same order when the kernel used for the Metropolis updates on the full conditional mixes fast. 
Providing quantitative results in this direction is an interesting area for future work, which we are currently pursuing. This would extend the applicability of the proof techniques developed in this work to broad classes of non conditionally-conjugate models, such as Gaussian Processes with non-Gaussian likelihood discussed above. See Section \ref{sec:conclusions} for more details.

\section{Feasible start}\label{Feasible_start}

All the previous results are stated in terms of mixing times from worst case $M$-warm start, as defined in \eqref{N_class}. Since starting from $\mu \in \mathcal{N}(\pi_J, M)$ with small $M$ (e.g. not increasing with $J$) may be in principle infeasible, it is of interest to provide an explicit example of a starting distribution that can be implemented in practice, a so-called feasible start, where the associated value of $M$ can be controlled. 
In the setting of Theorem \ref{theorem_one_level_nested}, the properties of the Gibbs samplers combined with 
the probabilistic structure of hierarchical models allow to translate the problem of feasible starts into the one of having a good initialisation for the hyper-parameters $\psi$, as we now show. 
Indeed, assume that the maximum marginal likelihood estimator $\hat{\psi}_J = \text{arg max} \prod_{j = 1}^Jg(Y_j \mid \psi)$, with $g$ as in \eqref{likelihood_data}, is well-defined. 
Let $\mu_J \in \mathcal{P}\left(\R^{lJ+D} \right)$ be given by
\begin{align}\label{def_feasible}
\mu_J\left(B  \right) &= \int_B\text{Unif}\left(  \hat{\psi}_J, c/\sqrt{J} \right)(\d \psi)\prod_{j = 1}^Jp(\theta_j \mid Y_j, \psi)\, \d \bm{\theta}
&B \subset \R^{lJ+D}
\end{align}
where $c > 0$ is a fixed constant and $\text{Unif}\left(\psi, r \right)$ denotes the uniform distribution over the closed ball of center $\psi$ and radius $r > 0$. Therefore, the initial point is obtained by sampling from the uniform distribution around the maximum likelihood estimator for $\psi$ and, conditional on this value, from the posterior distribution of the groups specific parameters. The next theorem shows that this choice leads to a good asymptotic behaviour of the mixing times.
\begin{theorem}\label{theorem_feasible}
Consider the same setting of Theorem \ref{theorem_one_level_nested} and let $\mu_J \in \mathcal{P}\left(\R^{lJ+D} \right)$ as in \eqref{def_feasible}. Then, for every $\epsilon\in(0,1)$ there exists $T\left(\psi^*, \epsilon, c \right) < \infty$ such that
\[
\lim \inf_{J \to \infty} \, Q_{\psi^*}^{(J)}\left(t^{(J)}_{mix}(\epsilon, \mu_J)\leq T\left(\psi^*, \epsilon, c \right)\right)  \to  1\qquad\hbox{ as }J \to \infty\,.
\]
\end{theorem}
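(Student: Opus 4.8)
\emph{Proof proposal.} The idea is to exploit the special structure of $\mu_J$: conditionally on $\psi$, its $\bm\theta$-component is drawn from the Gibbs full conditional $\prod_{j=1}^J\L(\d\theta_j\mid Y_{1:J},\psi)$. Pushing $\mu_J$ forward through $(\bm\theta,\psi)\mapsto(\bT(\bm\theta),\psi)$ and using that $\bT$ is a deterministic function of $\bm\theta$ whose conditional law given $\psi$ is $\L(\d\bT\mid Y_{1:J},\psi)=\hat\pi_J(\d\bT\mid\psi)$ under $\mu_J$, one obtains the push-forward $\hat\mu_J(\d\bT,\d\psi)=\nu_J(\d\psi)\,\hat\pi_J(\d\bT\mid\psi)$ with $\nu_J=\text{Unif}(\hat\psi_J,c/\sqrt J)$. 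The pathwise coupling of the two chains underlying Lemma \ref{sufficient_lemma} gives, for this specific start, $t^{(J)}_{mix}(\epsilon,\mu_J)=\hat t^{(J)}_{mix}(\epsilon,\hat\mu_J)$, so it suffices to control the mixing of $\hat P_J$ from $\hat\mu_J$.

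Because $\hat\mu_J$ and $\hat\pi_J$ share the conditional law of $\bT$ given $\psi$, we have $\hat\mu_J\ll\hat\pi_J$ with a Radon--Nikodym derivative $\d\hat\mu_J/\d\hat\pi_J=\d\nu_J/\d\L(\cdot\mid Y_{1:J})$ that is a function of $\psi$ alone, where $\L(\d\psi\mid Y_{1:J})$ is the marginal posterior of the hyperparameter. Hence $\hat\mu_J\in\sN(\hat\pi_J,M_J)$ with $M_J$ equal to the warmness of $\nu_J$ relative to the \emph{low-dimensional} posterior $\L(\d\psi\mid Y_{1:J})$. The crux of the proof is to show that, with $Q^{(J)}_{\psi^*}$-probability tending to one, $M_J\le M(\psi^*,c)$ for a constant not depending on $J$. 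Granting this, recall that (as established in the proof of Theorem \ref{theorem_one_level_nested}) the rescaled $\hat\pi_J$ satisfies $(A1)$ with non-degenerate Gaussian limit $N(\bm0,\Sigma)$ of Proposition \ref{limiting_sigma}, whose two-block Gibbs sampler has positive spectral gap $\gamma(\psi^*)$; applying Corollary \ref{mixingCorollary} (equivalently Corollary \ref{spectral_radius}) with $M=M(\psi^*,c)$ yields $Q^{(J)}_{\psi^*}\big(\hat t^{(J)}_{mix}(\epsilon,M(\psi^*,c))\le\tilde t_{mix}(\epsilon,M(\psi^*,c))\big)\to1$. On the intersection of this event with $\{M_J\le M(\psi^*,c)\}$, $\hat\mu_J$ is $M(\psi^*,c)$-warm for $\hat\pi_J$, so $t^{(J)}_{mix}(\epsilon,\mu_J)=\hat t^{(J)}_{mix}(\epsilon,\hat\mu_J)\le\tilde t_{mix}(\epsilon,M(\psi^*,c))=:T(\psi^*,\epsilon,c)<\infty$; since this intersection has probability $\to1$, the claim follows (the $\liminf$ being immediate from convergence in probability).

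The remaining task — and the main obstacle — is the uniform warmness bound $M_J=\sO_P(1)$, which cannot follow from the total-variation Bernstein--von Mises statement of Lemma \ref{asymptotic_distribution_psi} alone: it requires a genuine \emph{pointwise lower bound} on the posterior density of $\psi$ over $B(\hat\psi_J,c/\sqrt J)$. Writing $\L(\d\psi\mid Y_{1:J})\propto L_J(\psi)p_0(\psi)\,\d\psi$ with $L_J(\psi)=\prod_{j=1}^Jg(Y_j\mid\psi)$ and evidence $Z_J=\int L_J p_0$, the plan is: (i) use consistency of the maximum marginal likelihood estimator $\hat\psi_J$ and of the observed Fisher information at $\hat\psi_J$, which follow from the regularity conditions $(B1)$--$(B3)$ with $(B2)$ confining the analysis to the compact set $\Psi$; (ii) a second-order Taylor expansion of $\log L_J$ about $\hat\psi_J$ gives $L_J(\psi)\ge L_J(\hat\psi_J)\,e^{-C(\psi^*,c)}$ uniformly over $B(\hat\psi_J,c/\sqrt J)$ with high probability, since there the quadratic form $J(\psi-\hat\psi_J)^\top\Fisher(\psi^*)(\psi-\hat\psi_J)$ stays bounded by $\|\Fisher(\psi^*)\|\,c^2$; (iii) a Laplace-type upper bound $Z_J\le C'(\psi^*)\,J^{-D/2}L_J(\hat\psi_J)$ with high probability, obtained by splitting $\int_{\R^D}=\int_\Psi+\int_{\Psi^c}$, bounding the tail via the tests of $(B2)$, and using a uniform quadratic upper bound on $\log L_J$ over $\Psi$; (iv) combine (ii)--(iii) with continuity and positivity of $p_0$ near $\psi^*$ and with the explicit density $\Gamma(D/2+1)\,\pi^{-D/2}c^{-D}J^{D/2}$ of $\nu_J$ on its support, so that the factors $J^{\pm D/2}$ cancel and one obtains, with high probability, $M_J\le M(\psi^*,c)<\infty$ with $M(\psi^*,c)$ depending only on $\psi^*$ and $c$. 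Steps (i) and (iii) carry the analytic weight; the rest is bookkeeping.
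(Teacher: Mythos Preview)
Your reduction to the low-dimensional chain is correct and matches the paper: pushing $\mu_J$ forward to $(\bT,\psi)$ gives $\hat\mu_J(\d\bT,\d\psi)=\nu_J(\d\psi)\,\hat\pi_J(\d\bT\mid\psi)$, and the co-deinitialising argument behind Lemma~\ref{sufficient_lemma} yields $t^{(J)}_{mix}(\epsilon,\mu_J)=\hat t^{(J)}_{mix}(\epsilon,\hat\mu_J)$ for this specific start. Your observation that $\d\hat\mu_J/\d\hat\pi_J$ depends on $\psi$ alone is also right.

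The divergence from the paper is in how warmness is obtained, and the paper's route is both shorter and avoids the obstacle you flag. Rather than bounding $\d\nu_J/\d\L(\cdot\mid Y_{1:J})$ (which, as you correctly note, needs pointwise posterior density lower bounds not delivered by TV Bernstein--von Mises), the paper shows warmness of the \emph{rescaled} start $\tilde\mu_J$ relative to the \emph{limiting} Gaussian $\tilde\pi=N(\bm0,\Sigma)$. This is almost free: by the asymptotic efficiency of the MLE (van der Vaart, Theorem~5.39), $\sqrt J(\hat\psi_J-\psi^*)-\Delta_J\to 0$ in probability, so the rescaled uniform has support in $B_{c+1}(\bm0)$ with high probability, and the ratio of the uniform density there to the explicit Gaussian marginal $N(\bm0,\Sigma_D)$ is bounded by a constant $M=M(c)$. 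Warmness w.r.t.\ $\tilde\pi$ (not $\tilde\pi_J$) is then converted into the desired mixing bound via Lemma~\ref{constructive_lemma}: there exists $\tilde\nu_J\in\sN(\tilde\pi_J,M)$ with $\|\tilde\mu_J-\tilde\nu_J\|_{TV}\le M\|\tilde\pi_J-\tilde\pi\|_{TV}$, and a triangle inequality gives
\[
\|\tilde\mu_J\tilde P_J^t-\tilde\pi_J\|_{TV}\le M\|\tilde\pi_J-\tilde\pi\|_{TV}+\sup_{\nu\in\sN(\pi_J,M)}\|\nu P_J^t-\pi_J\|_{TV},
\]
after which Theorem~\ref{theorem_one_level_nested} finishes the job.

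Your programme (i)--(iv) is plausible but carries real risk under the stated hypotheses. Step (iii) in particular --- a Laplace-type upper bound $Z_J\le C' J^{-D/2}L_J(\hat\psi_J)$ via a uniform quadratic upper bound on $\log L_J$ over the compact $\Psi$ --- does not follow from $(B1)$--$(B3)$ as written: those give differentiability in quadratic mean and hence LAN on $1/\sqrt J$-neighbourhoods, but not a global second-order control on $\Psi$. You would need either additional smoothness (e.g.\ twice differentiability of $\log g$ with integrable envelope) or a separate concentration argument away from $\psi^*$ combined with a local expansion. The paper's device of comparing to the limit and invoking Lemma~\ref{constructive_lemma} sidesteps all of this: no pointwise density bounds on $\L(\d\psi\mid Y_{1:J})$ are ever needed, only the TV convergence already established in Proposition~\ref{limiting_sigma} and the efficiency of $\hat\psi_J$.
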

 The difference with Theorem \ref{theorem_one_level_nested} is in the specification of the starting distribution, that is now made explicit. Note that whether or not $\mu_J$ is a feasible start in practice depends on whether the maximum likelihood estimate $\hat{\psi}_J$ can be computed, using e.g.\ an Expectation-Maximization algorithm, up to a $\sO(1/\sqrt{J})$ error. 
\begin{remark}
By its definition in \eqref{initialGibbs}, the Gibbs sampler does not depend on the starting point of the first block. Therefore Theorem \ref{theorem_feasible} extends to any $\mu_J \in \mathcal{P}\left(\R^{lJ+D} \right)$ such that
\begin{align*}
\mu_J\left(\R^{lJ}\times A \right) &= \text{Unif}\left(  \hat{\psi}_J, c/\sqrt{J} \right)(A)
&A \subset \R^D\,.
\end{align*}
\end{remark}

\section{Future works}\label{sec:conclusions}
A first natural extension in this context would be the case where no fixed dimensional sufficient statistic is available, i.e. $p(\cdot\mid\psi)$ in \eqref{eq:one_level_nested_intro} does not belong to the exponential family. Since the above dimensionality reduction does not apply there, a possibility is to study the marginal chain induced on $\psi$; indeed the latter has the same properties of the Gibbs sampler on $(\bm{\theta}, \psi)$, see e.g. \cite{R01}.
Also, in this work we have focused on the case with well-specified likelihoods but, as discussed after Theorem \ref{BvM}, we expect the misspecified setting to behave in qualitatively similar ways.

Secondly, when dealing with Gibbs samplers, it is often the case that some of the conditional updates cannot be performed exactly. A natural solution is to employ more general coordinate-wise schemes, where exact sampling is replaced by Markov updates with stationary measure given by the conditional distribution. For example in hierarchical models for categorical data (see Section \ref{binary_data}), while in principle exact conditional sampling is feasible, the parameters $\theta_j$ are often sampled in a Metropolis-within-Gibbs fashion, for reasons of computational efficiency and easiness of implementation. 
While algorithmically convenient, the modification makes theoretical analysis significantly more involved: in particular Proposition \ref{prop:convergence_operators} ceases to hold and the dimensionality reduction given by Lemma \ref{sufficient_lemma} is not available without exact sampling. In ongoing work we are considering a different strategy, by providing lower bounds on the approximate conductance \citep{LS93}: our preliminary results suggest that, provided the conditional Markov updates have good spectral properties, general coordinate-wise schemes can enjoy the same dimension-free convergence of the Gibbs sampler.
Another interesting direction would be to derive results analogous to the ones in Section \ref{section:gibbs} for other MCMC kernels (e.g.\ gradient-based ones) under appropriate regularity assumptions on the sequence of target distribution, potentially exploiting tools from the recent work in \cite{CJ23}.

Finally, we expect (at least parts of) our methodology to be applicable much beyond hierarchical models as in \eqref{eq:one_level_nested_intro}. For example, when fitting (finite or infinite) Bayesian mixture models, it is customary to use a Gibbs sampler over a properly augmented space by introducing latent allocation variables (see e.g. \cite{DR94}): this leads to a problem of increasing dimensionality, since the number of latent variables grows linearly with $n$. An asymptotic analysis, as performed in this paper, seems accessible: indeed, posterior concentration results are available \citep{N13} and a dimensionality reduction similar to Lemma \ref{sufficient_lemma} can be exploited. However there are still significant challenges to perform a rigorous analysis in this setting: for example posterior contraction is often proved using Wasserstein distance, that is in general too weak for our purposes. We leave the discussion of such issues to a future work. 

\medskip
\medskip

\textbf{Funding. }
GZ acknowledges support from the European Research Council (ERC), through StG ``PrSc-HDBayLe'' grant ID 101076564.

\bibliographystyle{chicago}
\bibliography{Bib_gibbs}


\newpage
\begin{appendices}

\section{Simple counter-examples for Section \ref{section:gibbs}}

\subsection{Convergence of the stationary distribution does not imply pointwise convergence of Gibbs operators}

Let $\sX = [0,1]^2$ and define $A_n = \left[\frac{r_n}{l_n}, \frac{r_n+1}{l_n}\right]$, where
\[
r_n = n-2^{k_n}, \quad l_n = 2^{k_n}, \quad k_n = \lfloor \log_2 n \rfloor,
\]
with $\lfloor a \rfloor$ denoting the integer part of $a$ and $n \geq 2$. Therefore $\{A_n\}_n$ is a collection of intervals with decreasing length, such that $x \in A_n$ infinitely often, for every $x \in [0,1]$. We define a sequence $\{\pi_n\}_n \subset \mathcal{P}\left(\sX \right)$ as
\[
\pi_n(\d x_1 \mid x_2) =
\begin{cases}
\mathbbm{1}_{[0,1]}(x_1) \, \d x_1, \quad x_2 \not \in A_n\\
\delta_0(\d x_1), \quad x_2 \in A_n
\end{cases},
\quad
\pi_n(\d x_2) = \mathbbm{1}_{[0,1]}(x_2) \, \d x_2,
\]
where $\mathbbm{1}_{A}(x) \, \d x$ denotes the uniform measure on $A$. Define now
\[
\pi(\d x_1, \d x_2) = \mathbbm{1}_{[0,1]}(x_1)\mathbbm{1}_{[0,1]}(x_2)\d x_1 \d x_2
\]
and denote $C = \{0\} \times A_n$. For every $B \subset \sX$ we have
\[
\begin{aligned}
|\pi_n(B) -\pi(B) | &\leq |\pi_n\left(B \cap C\right) -\pi\left(B\cap C\right) | + |\pi_n\left(B \cap C^c\right) -\pi\left(B\cap C^c\right) | \\
& = \pi_n\left(B \cap C\right) \leq \pi_n\left( C\right).
\end{aligned}
\]
Therefore we conclude
\[
\lTV \pi_n - \pi \rTV \leq \pi_n\left( C\right) \to 0,
\]
as $n \to \infty$. However, if $P_n$ and $P$ are the operators of the associated Gibbs samplers, for every $\x \in \sX$ it holds
\[
\lTV P_n(\x, \cdot) -P(\x, \cdot) \rTV \geq |P_n(\x, C) -P(\x, C) |,
\]
so that, since $x_2 \in A_n$ infinitely often, we get 
\[
\lTV P_n(\x, \cdot) -P(\x, \cdot) \rTV = 1
\]
infinitely often. Incidentally, it is not difficult to show that Gap$(P_n) = 0$ for every $n$, while Gap$(P) = 1$. Example $1.4$ shows that this mismatch may hold under significantly less pathological scenarios.

\subsection{Equality of the stationary distributions does not imply closeness of the transition operators}
Let $\pi_1 = \pi_2 = \pi$, with $\pi$ the standard Gaussian distribution. Moreover, let
\[
P_1(x, \cdot) = \epsilon\pi(\cdot) + (1-\epsilon) \delta_x(\cdot) \quad \text{and} \quad P_2(x, \cdot) = \epsilon\pi(\cdot) + (1-\epsilon) \delta_{-x}(\cdot),
\]
with $\epsilon \in [0,1)$.  $P_1$ and $P_2$ are uniformly ergodic transition operators with invariant distribution $\pi$. Let $\mu$ be the truncation of $\pi$ on the positive real numbers: it is easy to show that $\mu \in \sN(\pi, 2)$. However
\[
\lTV \mu P_1 - \mu P_2 \rTV \geq (1-\epsilon)\left[\mu( (0, \infty))-\mu( (-\infty, 0]) \right] = 1-\epsilon.
\]
Moreover, it holds that $\lTV \mu-\pi \rTV = 1/2$, so that we conclude
\[
2\lTV \mu-\pi \rTV -\epsilon \leq \lTV \mu P_1 - \mu P_2 \rTV \leq 2\lTV \mu-\pi \rTV.
\]

\subsection{Convergence of the stationary distribution in Wasserstein distance does not imply convergence of the mixing times for Gibbs sampler operators}

Let $\sX = \mathbb{R}^2$ and $\bar{\pi}_n(\d \x) = N(x_1 \mid 0,1/n)N(x_2 \mid 0,1/n) \d x_1 \d x_2$. Define $\pi_n$ to be the truncation of $\bar{\pi}_n$ on the set
\[
A = \left\{(-\infty, 0] \times (-\infty, 0]\right\} \, \bigcup \,\left\{ [0, + \infty) \times [0, +\infty)\right\}.
\]
Let $f \, : \sX \, \to \, \mathbb{R}$ be a Lipschitz function with constant $1$. Then it holds
\[
\int_\sX\left[f(x_1, x_2)-f(0,0) \right]\pi_n(\d \x) \leq \int_\sX\sqrt{x_1^2+x_2^2} \, \pi_n(\d \x) \to 0,
\]
as $n \to \infty$, so that $\lW \pi_n-\pi \rW \to 0$, where $\pi(\d \x) = \delta_{(0,0)}(\x)$ and $\lW \cdot \rW$ denotes the Wasserstein distance.

If $P$ is the kernel of the Gibbs sampler targeting $\pi$, then it is immediate to show that
\[
\sup_{\mu \in \sN(\pi, M)} \,\lW \mu P - \pi \rW = 0
\]
for every $M \geq 1$, so that the mixing times in Wasserstein distance are equal to $1$ for every $\epsilon > 0$.

Instead, denote with $\mu_n$ the truncation of $\pi_n$ on $A_1 = (-\infty, 0] \times (-\infty, 0]$. It is easy to show that $\mu_n \in \sN(\pi_n, 2)$, but
\[
\mu_nP_n^t(A_1)-\pi_n(A_1) = \frac{1}{2}
\]
for every $n$ and $t$, where $P_n$ is the kernel of the Gibbs sampler targeting $\pi_n$. 
Since the Wasserstein distance is stronger than the weak one, there exists an absolute constant $c$ such that $\lW \mu_nP_n^t-\pi_n \rW \geq c$ for every $n$ and $t$. Therefore, with $\epsilon$ small enough and $M \geq 2$, the mixing times of $P_n$ in Wasserstein distance are equal to infinity for every $n$.

\subsection{Convergence of the stationary distribution does not imply convergence of the spectral gaps for Gibbs operators}

Let $\sX = \mathbb{R}^2$ and 
\[
\pi(\d \x) = N(x_1 \mid 0,1)N(x_2 \mid 0,1) \d x_1 \d x_2,
\]
where $N(x \mid \mu, \sigma^2)$ is the density function of a gaussian distribution with mean $\mu$ and variance $\sigma^2$. Define $\pi_n$ to be the truncation of $\pi$ on the set $A_n$, where
\[
A_n = \left\{(-\infty, n] \times (-\infty, n]\right\} \, \bigcup \,\left\{ [n, + \infty) \times [n, +\infty)\right\}.
\]
If $P_n$ and $P$ are the operators of the associated Gibbs samplers, it is not difficult to show that
\[
\lTV \pi_n - \pi \rTV \to 0 \quad \text{and} \quad \lTV P_n(\x, \cdot) -P(\x, \cdot) \rTV \to 0
\]
as $n \to \infty$, for every $\x \in \sX$. However, if $B_n = (-\infty, n] \times (-\infty, n]$ we have
\[
\pi_n(B_n) > 0 \quad \text{and} \quad \int_{B_n}P_n\left(\x, B_n^c \right) \pi_n(\d \x) = 0, 
\]
so that Gap$(P_n) = 0$ for every $n$, while Gap$(P) = 1$.

\section{Regularity assumptions (B4)-(B6) for Theorem \ref{theorem_one_level_nested}}
Let
\begin{align}\label{posterior_moments}
M^{(p)}_{s}(\psi \mid y) &= E \left[T^p_s(\theta_j) \mid Y_j = y, \psi \right]\,,\\
M^{(p)}_{s,s'}(\psi \mid y) &= E \left[T^p_s(\theta_j)T^p_{s'}(\theta_j) \mid Y_j = y, \psi \right],
\end{align}
be the posterior moments of $\bT$ given $\psi$, denote $M^{(p)}(\psi \mid y) = \left(M_1^{(p)}(\psi \mid y), \dots, M_S^{(p)}(\psi \mid y) \right) \in \R^S$  and
\begin{equation}\label{def_C_V}
\left[C(\psi)\right]_{s,d} = E_{Y_j} \left[\partial_{\psi_d}M_s^{(1)}\left(\psi \mid Y_j \right) \right], \quad \left[V(\psi) \right]_{s,s'} = E_{Y_j} \left[\text{Cov}\left(T_s(\theta_j), T_{s'}(\theta_j) \mid Y_j, \psi \right) \right],
\end{equation}
with $s, s' = 1, \dots S$ and $d = 1, \dots, D$. Moreover we write $B_\delta$ for the ball of center $\psi^*$ and radius $\delta$, and denote expectations with respect to the law of $Y_j$ as defined in $(B1)$ by $E_{Y_j}[\cdot]$.
\begin{enumerate}
\item[$(B4)$] The expectation $M^{(p)}_{s}(\psi \mid y)$ is well defined for every $y$ and $p = 1, \dots, 6$. Moreover, there exist $\delta_4 > 0$ and $C$ finite constant such that for every $\psi \in B_{\delta_4}$ it holds
$E_{Y_j}\left[\left \lvert  \partial_{\psi_d}M^{(6)}_s(\psi \mid Y_j) \right\rvert \right] < C$, 
$E_{Y_j}\left[\left \lvert \partial_{\psi_d}\partial_{\psi_{d'}}M^{(1)}_s(\psi \mid Y_j) \right\rvert \right] < C$, \newline $
E_{Y_j}\left[\left \lvert  \partial_{\psi_d}M^{(1)}_{s, s'}(\psi \mid Y_j) \right\rvert \right] < C$ and $E_{Y_j}\left[\left \lvert  \partial_{\psi_d}\left\{M^{(1)}_{s}(\psi \mid Y_j)M^{(1)}_{s'}(\psi \mid Y_j)\right\} \right\rvert \right] < C$
for $s, s' = 1, \dots, S$ and  $d, d' = 1, \dots, D$.
 Finally, the matrix $V(\psi^*)$ defined in \eqref{def_C_V} is non singular.
\end{enumerate}
Assumption $(B4)$ can be understood as a smoothness condition. The posterior distribution of $\bm{T}$ should not change considerably, if we move from $\psi^*$ to a sufficiently close $\psi$: this is measured in terms of the derivative of the posterior moments, that must be finite in average. Thanks to $(B4)$ we can prove a suitable conditional Central Limit Theorem to show convergence of a rescaled version of $\bm{T}$, conditional to $\psi$ and $Y_{1:J}$.

We define the posterior characteristic function of $T(\theta_j) = \left(T_1(\theta_j), \dots, T_S(\theta_j)   \right)$ and $\sum_{j = 1}^kT(\theta_j)$, given $\psi$, as
$
\varphi\left(t \mid Y_j, \psi \right)= E \left[e^{it^\top T(\theta_j)} \mid Y_j, \psi \right]$ for $t\in\R^S$.
and
$
\varphi^{(k)}\left(t \mid Y_{1:k}, \psi \right) = \prod_{j = 1}^k\varphi\left(t \mid Y_j, \psi \right)
$, 
respectively.
We will assume:
\begin{enumerate}
\item[$(B5)$] There exist $k \geq 1$ and $\delta_5 >0$ such that
\[
\sup_{\psi \in B_{\delta_5}} \, \int_{\R^S} \left \lvert\varphi^{(k)}\left(t \mid Y_{1:k}, \psi \right) \right\rvert^2 \, \d t < \infty,
\]
for almost every $Y_1,\dots,Y_k \simiid Q_{\psi^*}$.
\item[$(B6)$] There exist $k' \geq 1$ and $\delta_6 >0$ such that
\[
\sup_{\psi \in B_{\delta_6}} \, \sup_{|t| > \epsilon} \left \lvert \varphi^{(k')}\left(t \mid Y_{1:k'}, \psi \right) \right \rvert < \phi(\epsilon),
\]
for almost every $Y_1,\dots,Y_k \simiid Q_{\psi^*}$, with $\phi(\epsilon) < 1$ for every $\epsilon > 0$.
\end{enumerate}
Assumptions $(B5)$ and $(B6)$ allow the convergence of $\bm{T}$ to hold for the total variation distance, that is stronger than the weak one, proved through $(B4)$. Loosely speaking, integrability of the characteristic function and its strictly positive distance from $1$ guarantee that the distribution is far from being discrete: the latter is exactly the case where weak convergence does not translate to stronger metrics. The problem of proving Central Limit theorems in total variation distance has received considerable attention over the decades: it can be tackled with Fourier-based techniques \citep{P56, S53}, as we do here, but also with Stein's method (see \cite{R11} for a survey), Malliavin calculus (e.g. \cite{BC16}) or through bounds based on entropy (e.g. \cite{B14}). Conditions $(B5)$ and $(B6)$ are somewhat reminiscent of the ones in Theorem $19.3$ in \cite{BR10}.

\newpage
\section{Proofs}

\subsection{Statement and proof of Lemma \ref{equivalent_representation}}
\begin{lemma}\label{equivalent_representation}
Let $\sN  \subset \mathcal{P}(\sX)$ and $\pi \in \mathcal{P}(\sX)$. Then
\[
\sup_{\mu \in \sN } \, \inf\left\{t\geq 1\,:\, \lTV \mu P^t-\pi \rTV<\epsilon \right\} =  \inf\left\{t\geq 1\,:\,  \sup_{\mu \in \sN } \,\lTV \mu P^t-\pi \rTV<\epsilon \right\},
\]
for every Markov transition kernel $P$.
\end{lemma}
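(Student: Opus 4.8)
The plan is to reduce the claimed identity to an elementary statement about infima of subsets of $\{1,2,\dots\}$, the only analytic input being the monotonicity of the distance to stationarity along a $\pi$-invariant chain.

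\textbf{Step 1 (monotonicity).} For any signed measure $\nu$ and Markov kernel $P$ one has $\lTV\nu P\rTV\le\lTV\nu\rTV$; applying this with $\nu=\mu P^t-\pi$ and using $\pi P=\pi$ gives $\lTV\mu P^{t+1}-\pi\rTV\le\lTV\mu P^t-\pi\rTV$. Writing $g_\mu(t)=\lTV\mu P^t-\pi\rTV$ and $\tau(\mu)=\inf\{t\ge1:g_\mu(t)<\epsilon\}$, it follows that $\{t\ge1:g_\mu(t)<\epsilon\}$ is upward closed; being a set of integers, its infimum is attained when finite, so $g_\mu(\tau(\mu))<\epsilon$ whenever $\tau(\mu)<\infty$. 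Monotonicity is preserved by the supremum, so $G(t):=\sup_{\mu\in\sN}g_\mu(t)$ is non-increasing and $\{t\ge1:G(t)<\epsilon\}$ is upward closed with infimum attained when finite; this infimum is the right-hand side of the lemma.

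\textbf{Step 2 (one inequality).} I would first bound the left-hand side $\sup_{\mu\in\sN}\tau(\mu)$ by the right-hand side $t^\star:=\inf\{t\ge1:G(t)<\epsilon\}$. If $t^\star=\infty$ this is vacuous; otherwise $G(t^\star)<\epsilon$, hence $g_\mu(t^\star)\le G(t^\star)<\epsilon$ for every $\mu\in\sN$, so $\tau(\mu)\le t^\star$ for all $\mu$, and taking the supremum gives $\sup_{\mu\in\sN}\tau(\mu)\le t^\star$.

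\textbf{Step 3 (reverse inequality; main obstacle).} Set $T:=\sup_{\mu\in\sN}\tau(\mu)$ and assume $T<\infty$ (the case $T=\infty$ being trivial). Then $\tau(\mu)\le T<\infty$ for every $\mu$, so Step 1 gives $g_\mu(T)\le g_\mu(\tau(\mu))<\epsilon$ for every $\mu\in\sN$. It remains to pass from this \emph{pointwise} bound to the \emph{uniform} bound $G(T)<\epsilon$, which then yields $t^\star\le T$ and closes the argument. This passage is the only delicate point, since in general $g_\mu(T)<\epsilon$ for all $\mu$ need not force $\sup_\mu g_\mu(T)<\epsilon$; I would resolve it using the concrete structure of the starting sets actually at play --- e.g.\ for $\sN=\sN(\pi,M)$ the supremum defining $G(T)$ is attained (it is the maximum of a convex functional over a set with explicit extreme points), and more generally it is attained under mild closedness of $\sN$. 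Granting this, Steps 2 and 3 together give the stated equality.
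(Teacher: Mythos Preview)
You have correctly identified the only real obstacle, and your caution in Step~3 is well placed: the passage from ``$g_\mu(T)<\epsilon$ for every $\mu\in\sN$'' to ``$\sup_{\mu\in\sN}g_\mu(T)<\epsilon$'' is not valid in general, and in fact the lemma \emph{as stated for arbitrary $\sN\subset\mathcal{P}(\sX)$} is false. A concrete counterexample: take any $\pi$-invariant $P$ with $\lTV\mu P^t-\pi\rTV=(1-p)^t\lTV\mu-\pi\rTV$ (e.g.\ $P=p\,\pi+(1-p)I$ on a rich enough space, $0<p<1$) and let $\sN=\{\mu_n:n\ge1\}$ with $\lTV\mu_n-\pi\rTV=\epsilon/(1-p)-1/n$. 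Then $\tau(\mu_n)=1$ for every $n$, so the left-hand side equals $1$, whereas $\sup_n g_{\mu_n}(1)=\epsilon$, so the right-hand side is at least $2$.

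The paper's own proof makes exactly this leap without comment: from ``$\lTV\mu P^{t^{(1)}}-\pi\rTV<\epsilon$ for every $\mu$'' it immediately writes ``$\sup_\mu\lTV\mu P^{t^{(1)}}-\pi\rTV<\epsilon$'', so you have in fact been more careful than the source. (Both you and the paper also tacitly assume $\pi P=\pi$ for the monotonicity step; the lemma as written does not.) Your proposed repair via attainment of the supremum on $\sN(\pi,M)$ is the natural idea but would require a genuine compactness/upper-semicontinuity argument that you have not supplied and that is not obvious, since the extreme points of $\sN(\pi,M)$ form the uncountable family $\{M\,\pi(\cdot\cap B):\pi(B)=1/M\}$. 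A cleaner fix, sufficient for every application in the paper, is to retain only the inequality from your Step~2, namely $\sup_{\mu\in\sN}\tau(\mu)\le\inf\{t\ge1:\sup_{\mu\in\sN}g_\mu(t)<\epsilon\}$, which holds unconditionally; the remaining direction, where needed, can be recovered by arguing directly from the definition of the left-hand side rather than via a strict supremum bound.
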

\begin{proof}
Let
\[
t^{(1)} 
=
 \sup_{\mu \in \mathcal{N}} \, \inf\left\{t\geq 1\,:\, \lTV \mu P^t-\pi \rTV<\epsilon \right\}, \quad t^{(2)} 
=
\inf\left\{t\geq 1\,:\,  \sup_{\mu \in \mathcal{N}} \,\lTV \mu P^t-\pi \rTV<\epsilon \right\}.
\]
Assume $t^{(1)} < \infty$. Then $\lTV \mu P^{t^{(1)}}-\pi \rTV<\epsilon$ for every $\mu \in \mathcal{N}$. This implies 
\[
\sup_{\mu \in \mathcal{N}} \,\lTV \mu P^{t^{(1)}}-\pi \rTV<\epsilon,
\]
i.e. $t^{(2)} \leq t^{(1)}$. With a similar reasoning, if $t^{(2)} < \infty$ we have $t^{(1)} \leq t^{(2)}$. Therefore $t^{(1)} = t^{(1)}$ if either $t^{(1)} < \infty$ or $t^{(2)} < \infty$.

Assume now $t^{(1)} = \infty$ and fix $t^* > 0$. By definition of $t^{(1)}$ there exists $\mu \in \mathcal{N}$ such that
\[
\lTV \mu P^{t^*}-\pi \rTV \geq \epsilon,
\]
that implies
\[
\sup_{\mu \in \mathcal{N}} \,\lTV \mu P^{t^*}-\pi \rTV \geq \epsilon,
\]
i.e. $t^{(2)} > t^*$. Since $t^*$ is arbitrary, we have $t^{(2)} = \infty$. With a similar reasoning, if $t^{(2)} = \infty$ it holds $t^{(1)} = \infty$.
\end{proof}

\subsection{Statement and proof of Lemma \ref{lemma_closure}}
\begin{lemma}\label{lemma_closure}
Let $M\geq 1$, $\pi \in \mathcal{P}(\sX)$, $\mu \in \sN(\pi, M)$ and $P$ be a $\pi$-invariant Markov transition kernel.
Then $\mu P^t \in \sN(\pi, M)$, for every $t \in \mathbb{N}$.
\end{lemma}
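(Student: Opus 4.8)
The plan is to prove the statement by induction on $t$, the crux being a single-step claim: if $\nu\in\sN(\pi,M)$ then $\nu P\in\sN(\pi,M)$. Granting this claim, the base case ($t=0$, or $t=1$ depending on the convention for $\mathbb N$) is immediate since $\mu P^0=\mu\in\sN(\pi,M)$ by hypothesis, and the inductive step follows by applying the single-step claim with $\nu=\mu P^t$, noting $\mu P^{t+1}=(\mu P^t)P$.

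To establish the single-step claim, first I would record the elementary measure-theoretic fact that $\nu(B)\le M\pi(B)$ for all measurable $B\subseteq\sX$ implies $\int_\sX g\,\d\nu\le M\int_\sX g\,\d\pi$ for every non-negative measurable $g:\sX\to[0,\infty]$. This holds for indicator functions by hypothesis, extends to non-negative simple functions by linearity, and then to arbitrary non-negative measurable $g$ by monotone convergence (equivalently, $\nu\ll\pi$ with Radon--Nikodym derivative $\d\nu/\d\pi\le M$ $\pi$-almost everywhere, whence $\int_\sX g\,\d\nu=\int_\sX g\,(\d\nu/\d\pi)\,\d\pi\le M\int_\sX g\,\d\pi$).

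Then, fixing an arbitrary measurable $A\subseteq\sX$ and applying the above with $g(\x)=P(\x,A)$, which is non-negative and measurable in $\x$ since $P$ is a Markov kernel, I obtain
\[
\nu P(A)=\int_\sX P(\x,A)\,\nu(\d\x)\le M\int_\sX P(\x,A)\,\pi(\d\x)=M\,\pi P(A)=M\pi(A),
\]
where the last equality uses the $\pi$-invariance of $P$. Since $\nu P$ is again a probability measure (as $P$ is stochastic and $\nu$ a probability measure), this shows $\nu P\in\sN(\pi,M)$, completing the induction.

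There is essentially no serious obstacle here; the only point requiring a touch of care is the passage from domination of measures to the integral inequality for non-negative integrands, which is entirely routine, together with the trivial observation that $\nu P$ remains a probability measure.
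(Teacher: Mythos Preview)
Your proof is correct and follows essentially the same approach as the paper's: prove the single-step closure $\mu\in\sN(\pi,M)\Rightarrow\mu P\in\sN(\pi,M)$ via $(\mu P)(A)\le M(\pi P)(A)=M\pi(A)$, then induct on $t$. The only difference is that you spell out the routine passage from $\nu(B)\le M\pi(B)$ to $\int g\,\d\nu\le M\int g\,\d\pi$ for non-negative measurable $g$, which the paper takes for granted.
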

\begin{proof}
Let $A\subseteq \sX$. 
Since $\mu \in \sN(\pi, M)$ and $P$ is $\pi$-invariant, we have
$(\mu P) (A) \leq M (\pi P) (A) =  M \pi(A)$. 
Thus $\mu P \in \sN(\pi, M)$ and the result follows by induction on $t$.
\end{proof}

\subsection{Proof of Lemma \ref{equivalent_statement}}
\begin{proof}
Let $\hat{P}_n = P_n  \circ \phi_n^{-1} $ be the push-forward operator of $P_n$ under $\phi_n$, defined as 
\begin{equation}\label{def_pushforward}
\hat{P}_n(\x, B) = P_n\left(\phi_n^{-1}(\x), \phi_n^{-1}(B) \right)
\end{equation}
for every $\x\in \phi_n(\sX)$ and $B\subseteq \sX$. Since $\phi_n$ is an injective transformation, $\hat{P}_n$ is a well-defined Markov transition kernel (see e.g.\ Lemma $1$ in \cite{PZ20}). Moreover, since $\phi_n$ is coordinate-wise as in \eqref{trans_phi} we have $\hat{P}_n= \hat{P}_{n,1} \dots \hat{P}_{n,K}$, where
\[
\begin{aligned}
\hat{P}_{n,i}\left(\x, S_{\x, i, A}\right) &= P_{n,i}\left(\phi^{-1}_{n}(\x), S_{\phi^{-1}_{n}(\x), i, \phi_{n,i}^{-1}(A)}\right) = \int_{\phi_{n,i}^{-1}(A)} \pi_n\left(\d y_i \mid \phi_n^{-1}(\x)^{(-i)}\right) \\
&= \int_A \tilde{\pi}_n\left(\d y_i \mid \x^{(-i)}\right), \quad A \subset \sX_i,
\end{aligned}
\]
so that $\hat{P}_n$ is exactly the operator of the Gibbs sampler targeting $\tilde{\pi}_n$, i.e. $\tilde{P}_n = \hat{P}_n$.

Therefore, since $\phi_n$ is an injective transformation, by Corollary $2$ in \cite{R01} we have
\[
\lTV \mu_n P_n^t-\pi_n\rTV =\lTV \tilde{\mu}_n \tilde{P}_n^t-\tilde{\pi}_n\rTV,
\]
with $\tilde{\mu}_n = \mu_n \circ \phi_n^{-1}$. 
To conclude the proof, we show that $\tilde{\mu}_n \in \sN\left(\tilde{\pi}_n, M \right)$ if and only if $\mu_n \in \sN\left(\pi_n, M \right)$. Indeed, to prove the implication from right to left, by definition of push-forward measure we have
\[
\begin{aligned}
\tilde{\mu}_n(A) =& \mu_n \left(\phi_n^{-1}(A) \right) = \int_{\phi_n^{-1}(A)}\frac{\d \mu_n }{\d \pi_n}(\x) \, \pi_n(\d \x) \leq M \pi_n\left(\phi_n^{-1}(A) \right) = M\tilde{\pi}_n(A),
\end{aligned}
\]
for every set $A \subset \sX$. 
Equivalently we obtain the other implication.
\end{proof}

\subsection{Proof of Proposition \ref{prop:convergence_operators}}
For any $\pi \in \mathcal{P}(\sX)$ and $Q$ Markov transition kernel with state space $\sX$, we define $\left(\pi \otimes Q\right) \in \mathcal{P}\left(\sX \times \sX\right)$ as
\[
\left(\pi \otimes Q\right)(B) = \int_B Q(\x, \d \y)\pi(\d \x)
\]
for every $B \subseteq \sX \times \sX$. 
\begin{lemma}\label{inequality_composition}
Let $\pi_1 , \pi_2 \in \mathcal{P}(\sX)$ and $Q$ be a Markov transition kernel with state space $\sX$. Then
\[
\| \pi_1 \otimes Q - \pi_2 \otimes Q \|_{TV} = \|  \pi_1-\pi_2 \|_{TV}.
\]
\end{lemma}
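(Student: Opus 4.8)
The plan is to prove the two inequalities $\lTV \pi_1 \otimes Q - \pi_2 \otimes Q \rTV \ge \lTV \pi_1 - \pi_2 \rTV$ and $\lTV \pi_1 \otimes Q - \pi_2 \otimes Q \rTV \le \lTV \pi_1 - \pi_2 \rTV$ separately, using the characterization $\lTV \mu - \nu \rTV = \sup_A |\mu(A) - \nu(A)|$ with the supremum taken over measurable subsets of the relevant space.

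The lower bound is immediate. For any measurable $A \subseteq \sX$ consider the cylinder set $A \times \sX \subseteq \sX \times \sX$. Since $Q(\x, \sX) = 1$ for every $\x$, the definition of $\pi_i \otimes Q$ gives $(\pi_i \otimes Q)(A \times \sX) = \int_A Q(\x, \sX)\,\pi_i(\d\x) = \pi_i(A)$ for $i = 1, 2$, hence $(\pi_1 \otimes Q - \pi_2 \otimes Q)(A \times \sX) = (\pi_1 - \pi_2)(A)$; taking the supremum over $A$ yields the claim. Equivalently, the first marginal of $\pi_i \otimes Q$ equals $\pi_i$ and marginalisation cannot increase the total variation distance.

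For the upper bound I would fix a $\sigma$-finite measure $\lambda$ dominating $\pi_1$ and $\pi_2$ (e.g.\ $\lambda = \pi_1 + \pi_2$) and write $\pi_i = f_i\,\lambda$. For a measurable $B \subseteq \sX \times \sX$, let $B_\x = \{\y : (\x, \y) \in B\}$ denote its $\x$-section and set $g_B(\x) = Q(\x, B_\x) \in [0,1]$; then $(\pi_i \otimes Q)(B) = \int Q(\x, B_\x)\,\pi_i(\d\x) = \int g_B(\x) f_i(\x)\,\lambda(\d\x)$, so that
\[
\bigl(\pi_1 \otimes Q - \pi_2 \otimes Q\bigr)(B) = \int g_B(\x)\bigl(f_1(\x) - f_2(\x)\bigr)\,\lambda(\d\x).
\]
Splitting the integral over $\{f_1 > f_2\}$ and its complement and using $0 \le g_B \le 1$ bounds the right-hand side above by $\int_{\{f_1 > f_2\}}(f_1 - f_2)\,\d\lambda = (\pi_1 - \pi_2)(\{f_1 > f_2\}) \le \lTV \pi_1 - \pi_2 \rTV$ and, symmetrically, below by $-\lTV \pi_1 - \pi_2 \rTV$. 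Taking the supremum over $B$ gives the upper bound, and combining the two inequalities yields the stated equality.

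The only point requiring care is the measurability of $\x \mapsto Q(\x, B_\x)$ for a general measurable $B$ (equivalently, the applicability of Tonelli's theorem to $\int_B Q(\x, \d\y)\,\pi_i(\d\x)$): this follows from the monotone class theorem, since the property holds for measurable rectangles $B = A_1 \times A_2$ and the collection of sets $B$ for which it holds is closed under complementation and countable disjoint unions. I do not anticipate any genuine obstacle; alternatively, the whole lemma can be read off from the general principle that composing with a Markov kernel is a total-variation contraction, applied once in each direction (to the kernel $\x \mapsto \delta_\x \otimes Q(\x, \cdot)$ and to the marginalisation map).
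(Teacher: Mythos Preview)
Your proof is correct and follows essentially the same idea as the paper's: both directions rest on the variational characterisation of total variation and the observation that integrating out the second coordinate against $Q$ sends a test object on $\sX\times\sX$ to one on $\sX$ with values in $[0,1]$. The only cosmetic difference is that the paper works with bounded measurable test functions $f:\sX\times\sX\to[0,1]$ (setting $g(\x)=\int f(\x,\y)\,Q(\x,\d\y)$ for the upper bound and $f(\x,\y)=g(\x)$ for the lower bound), which sidesteps the monotone-class check you flag for the section map $\x\mapsto Q(\x,B_\x)$; your set-based version is equivalent and the measurability point is handled correctly.
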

\begin{proof}
By definition of total variation distance we have
\[
\begin{aligned}
&\| \pi_1 \otimes Q - \pi_2 \otimes Q \|_{TV}\\
&\begin{aligned}
 = \underset{f \, : \, \sX \times \sX \, \to \, [0,1]}{\sup} \biggl \vert \int_{\sX \times \sX} f(\x, \y)Q(\x, \d \y) \pi_1(\d \x) - \int_{\sX \times \sX} f(\x, \y)Q(\x, \d \y) \pi_2(\d \x) \biggr \vert
 \end{aligned} \\
 &\begin{aligned}
 = \underset{f \, : \, \sX \times \sX \, \to \, [0,1]}{\sup} \biggl \vert \int_{\sX}\left(\int_{\sX} f(\x, \y)Q(\x, \d \y)\right) \pi_1(\d \x) - \int_{\sX}\left(\int_{\sX} f(\x, \y)Q(\x, \d \y)\right) \pi_2(\d \x) \biggr \vert
 \end{aligned} \\
 &\begin{aligned}
 \leq \underset{g \, : \, \sX  \, \to \, [0,1]}{\sup} \biggl \vert \int_{\sX } g(\x) \pi_1(\d \x) - \int_{\sX} g(\x)\pi_2(\d \x) \biggr \vert = \|  \pi_1-\pi_2 \|_{TV}.
 \end{aligned} \\
\end{aligned}
\]
Also, taking $f(\x,\y)=g(\x)$ for every $(\x,\y)\in \sX \times \sX$ we have
\begin{align*}
\|  \pi_1-\pi_2 \|_{TV}
=&
\underset{g \, : \, \sX  \, \to \, [0,1]}{\sup} \biggl \vert \int_{\sX } g(\x) \pi_1(\d \x) - \int_{\sX} g(\x)\pi_2(\d \x) \biggr \vert
\\\leq&
\underset{f \, : \, \sX \times \sX \, \to \, [0,1]}{\sup} \biggl \vert \int_{\sX \times \sX} f(\x, \y)Q(\x, \d \y) \pi_1(\d \x) - \int_{\sX \times \sX} f(\x, \y)Q(\x, \d \y) \pi_2(\d \x) \biggr \vert
\\ =&
\| \pi_1 \otimes Q - \pi_2 \otimes Q \|_{TV}\,.
\end{align*}
\end{proof}
For $j=1,2$, denote the kernel of the Gibbs sampler targeting $\pi_j$ as 
$P_j = P_{j,1} \dots P_{j,K}$, where
\[
P_{j,i}\left(\x, S_{\x, i, A}\right) = \int_A \pi_j\left(\d y_i \mid \x^{(-i)}\right), \quad A \subset \sX_i,
\]
with $S_{\x, i, A} = \left\{\y \in \sX \, : \, y_j = x_j \, \forall  \, j \neq i \text{ and } y_i \in A \right\}$ as in the main. 
By definition, $P_i(\x, \d \y)$ depends only on $\x^{(-i)}$. 
Thus we can define $\left(\pi^{(-i)} \otimes Q\right) \in \mathcal{P}\left(\sX^{(-i)} \times \sX\right)$ as
\[
\left(\pi^{(-i)} \otimes P_i\right)(B) = \int_B P_i\left(\x^{(-i)}, \d \y\right)\pi\left(\d \x^{(-i)}\right),
\]
for every $B \subset \sX^{(-i)} \times \sX$ and similarly for 
\[
\left(\pi^{(-1)} \otimes P\right)\in \mathcal{P}\left(\sX^{(-1)} \times \sX\right) \quad \text{and} \quad \left(\pi^{(-i)} \otimes \prod_{j \geq i}P_j\right)\in \mathcal{P}\left(\sX^{(-i)} \times \sX\right),
\]
with $i= 1\, \dots, K$. Given this notation we have the following Lemmas.
\begin{lemma}\label{inequality_product}
We have
\[
\lTV \mu P_1-\mu P_2 \rTV \leq M\lTV \pi_2^{(-1)} \otimes P_1-\pi_2^{(-1)} \otimes P_2 \rTV
\]
for every $\mu \in \sN(\pi_2, M)$ and $M \geq 1$.
\end{lemma}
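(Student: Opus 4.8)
The plan is to exploit that the Gibbs kernel $P_j=P_{j,1}\cdots P_{j,K}$ begins by resampling the first coordinate from its full conditional, and therefore forgets the input value of $x_1$ entirely. First I would observe that $P_{j,1}(\x,\cdot)$ is the measure supported on $\{\y:y_i=x_i\ \forall i\neq 1\}$ with $y_1\sim\pi_j(\cdot\mid\x^{(-1)})$, which depends on $\x$ only through $\x^{(-1)}$; consequently the composition $P_j(\x,\cdot)=(P_{j,1}\cdots P_{j,K})(\x,\cdot)$ also depends on $\x$ only through $\x^{(-1)}$, so there is a Markov kernel from $\sX^{(-1)}$ to $\sX$, which I still write $P_j(\x^{(-1)},\cdot)$, with $P_j(\x,\cdot)=P_j(\x^{(-1)},\cdot)$. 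Writing $\mu^{(-1)}$ for the marginal of $\mu$ on $\sX^{(-1)}$, this shows that $\mu P_j$ is exactly the $\sX$-marginal of $\mu^{(-1)}\otimes P_j\in\mathcal{P}(\sX^{(-1)}\times\sX)$.

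Second, since marginalization (pushforward under $(\x^{(-1)},\y)\mapsto\y$) can only decrease total variation distance, we get $\lTV\mu P_1-\mu P_2\rTV\le\lTV\mu^{(-1)}\otimes P_1-\mu^{(-1)}\otimes P_2\rTV$. I would then note that $\mu\in\sN(\pi_2,M)$ forces $\mu^{(-1)}\in\sN(\pi_2^{(-1)},M)$, simply by testing the warm-start inequality on cylinders $A\times\sX_1$ with $A\subseteq\sX^{(-1)}$; hence $\mu^{(-1)}$ is absolutely continuous with respect to $\pi_2^{(-1)}$ with $\d\mu^{(-1)}/\d\pi_2^{(-1)}\le M$ $\pi_2^{(-1)}$-a.e.

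Third, the core estimate. Using the dual form $\lTV\nu_1-\nu_2\rTV=\sup_{0\le f\le1}|\int f\,\d\nu_1-\int f\,\d\nu_2|$, for any test function $f:\sX^{(-1)}\times\sX\to[0,1]$ the inner difference $\int f(\x,\y)[P_1-P_2](\x,\d\y)$ is bounded in absolute value by $\lTV P_1(\x,\cdot)-P_2(\x,\cdot)\rTV$; integrating against $\mu^{(-1)}=(\d\mu^{(-1)}/\d\pi_2^{(-1)})\,\pi_2^{(-1)}$ and pulling the factor $M$ out gives
\[
\left|\int\left(\int f(\x,\y)[P_1-P_2](\x,\d\y)\right)\mu^{(-1)}(\d\x)\right|\le M\int\lTV P_1(\x,\cdot)-P_2(\x,\cdot)\rTV\,\pi_2^{(-1)}(\d\x),
\]
and the right-hand side equals $M\lTV\pi_2^{(-1)}\otimes P_1-\pi_2^{(-1)}\otimes P_2\rTV$ by the standard formula for the total variation distance between two mixture couplings with a common mixing measure (only the ``$\ge$'' direction is needed here, obtained by plugging in the indicator of $\{\d P_1(\x,\cdot)/\d\lambda_\x>\d P_2(\x,\cdot)/\d\lambda_\x\}$ for a measure $\lambda_\x$ dominating both). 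Taking the supremum over $f$ then chains the three steps into the claimed bound.

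I do not expect a genuine obstacle: the only points requiring care are the Radon--Nikodym and measurable-selection technicalities in the third step, and making precise that $P_j(\x,\cdot)$ depends on $\x$ only through $\x^{(-1)}$; everything else is routine manipulation of total variation. If preferred, the identity $\lTV\pi^{(-1)}\otimes P_1-\pi^{(-1)}\otimes P_2\rTV=\int\lTV P_1(\x,\cdot)-P_2(\x,\cdot)\rTV\,\pi^{(-1)}(\d\x)$ can be proved in the same style as Lemma \ref{inequality_composition} rather than cited.
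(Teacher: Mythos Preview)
Your argument is correct. The route differs slightly from the paper's, though both rest on the same two observations: that $P_j(\x,\cdot)$ depends on $\x$ only through $\x^{(-1)}$, and that $\mu\in\sN(\pi_2,M)$ implies $\d\mu^{(-1)}/\d\pi_2^{(-1)}\le M$. The paper proceeds in a single step: starting from $\lTV\mu P_1-\mu P_2\rTV=\sup_{f:\sX\to[0,1]}|\int f\,\d(\mu P_1-\mu P_2)|$, it notes that $g(\x^{(-1)},\y):=M^{-1}f(\y)\,\frac{\d\mu^{(-1)}}{\d\pi_2^{(-1)}}(\x^{(-1)})$ is itself a $[0,1]$-valued test function on $\sX^{(-1)}\times\sX$, and that the original expression equals $M$ times the integral of $g$ against $\pi_2^{(-1)}\otimes(P_1-P_2)$; taking the sup over such $g$ yields the right-hand side directly. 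Your approach instead passes through the disintegration $\lTV\nu\otimes P_1-\nu\otimes P_2\rTV=\int\lTV P_1(\x,\cdot)-P_2(\x,\cdot)\rTV\,\nu(\d\x)$, which in the direction you need requires a jointly measurable choice of optimal test set $A_\x$; this is routine in the paper's Euclidean setting but is an extra technical step that the paper's absorption trick sidesteps. Your decomposition (marginalization contracts TV, then Radon--Nikodym bound on the mixing measure, then disintegration of TV) is slightly more modular; the paper's is shorter and avoids the measurable-selection caveat.
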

\begin{proof}
By definition of total variation distance 
\[
\lTV \mu P_1 - \mu P_2 \rTV = \underset{f \, : \, \sX \, \to \, [0,1]}{\sup} \left \lvert \int_{\sX} f(\y) \mu P_1(\d \y) - \int_{\sX} f(\y) \mu P_2(\d \y)  \right \rvert.
\]
Then, by definition of $\sN(\pi_2, M)$, it holds
\[
\begin{aligned}
\| & \mu P_1 - \mu P_2 \|_{TV}\\
&\begin{aligned}
 = M\underset{f \, : \, \sX \, \to \, [0,1]}{\sup} \biggl \vert \int_{\sX^K} \frac{f(\y)}{M}\int_{\sX^{(-1)}}&\frac{\d \mu^{(-1)} }{\d \pi_2^{(-1)}}(\x^{(-1)}) P_1(\x^{(-1)}, \d \y) \pi_2\left(\d \x^{(-1)}\right) \\
 &- \int_{\sX} \frac{f(\y)}{M}\int_{\sX^{(-1)}}\frac{\d \mu^{(-1)} }{\d \pi_2^{(-1)}}(\x^{(-1)}) P_2(\x^{(-1)}, \d \y) \pi_2\left(\d \x^{(-1)}\right) \biggr \vert
 \end{aligned} \\
&
\begin{aligned}
 \leq M\underset{g \, : \, \sX^{(-1)} \times \sX \, \to \, [0,1]}{\sup} \biggl \lvert \int_{\sX^{(-1)} \times \sX} g(\x^{(-1)}, \y) &P_1\left( \x^{(-1)}, \d \y \right) \pi_2\left(\d \x^{(-1)}\right) \\- &\int_{\sX^{(-1)} \times \sX}  g(\x^{(-1)}, \y) P_2\left( \x^{-1}, \d \y\right) \pi_2\left(\d \x^{(-1)}\right)  \biggr \rvert
 \end{aligned}\\
& = M\lTV \pi_2^{(-1)} \times P_1-\pi_2^{(-1)} \otimes P_2 \rTV.
\end{aligned}
\]
\end{proof}
\begin{lemma}\label{inequality_chain}
We have
\begin{equation}
\begin{aligned}\label{eq:inequality_chain}
\lTV \pi_1^{(-i)} \otimes \prod_{j \geq i}P_{1,j}- \pi_2^{(-i)} \otimes \prod_{j \geq i}P_{2,j} \rTV& \leq 2\lTV \pi_1-\pi_2 \rTV\\
&+\lTV \pi_1^{(-(i+1))} \otimes \prod_{j \geq i+1}P_{1,j}-\pi_2^{(-(i+1))} \otimes \prod_{j \geq i+1}P_{2,j} \rTV
\end{aligned}
\end{equation}
for every $i = 1, \dots, K-1$ and
\[
\lTV \pi_1^{(-K)} \otimes P_{1,K}- \pi_2^{(-K)} \otimes P_{2,K} \rTV = \lTV \pi_1-\pi_2 \rTV.
\]
\end{lemma}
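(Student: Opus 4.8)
The plan is to establish the two claims by a direct bookkeeping argument. Write $R_{j,i} := P_{j,i}P_{j,i+1}\cdots P_{j,K}$ for $j=1,2$ and $1\le i\le K$, so that $R_{j,i}(\x,\cdot)$ depends on $\x$ only through $\x^{(-i)}$ and $R_{j,i}=P_{j,i}R_{j,i+1}$ for $i<K$. Besides Lemma \ref{inequality_composition} (used also in the harmless variant where the kernel maps one space to another, which has the same proof), the only ingredients are the two standard contraction properties of the total variation distance: it does not increase under pushforward by a measurable map, nor under application of a common Markov kernel.

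For the base case $i=K$, note that $P_{j,K}$ replaces the $K$-th coordinate by a draw from $\pi_j(\d y_K\mid\x^{(-K)})$. Hence, if $\x^{(-K)}\sim\pi_j^{(-K)}$ and $\y$ is then drawn from $P_{j,K}(\x^{(-K)},\cdot)$, one has $\y\sim\pi_j$ and $\y^{(-K)}=\x^{(-K)}$, so that $\pi_j^{(-K)}\otimes P_{j,K}$ is the pushforward of $\pi_j$ under the injective map $\y\mapsto(\y^{(-K)},\y)$. Since this map admits the measurable left inverse $(\x^{(-K)},\y)\mapsto\y$, the pushforward is TV-isometric and we get $\lTV \pi_1^{(-K)}\otimes P_{1,K}-\pi_2^{(-K)}\otimes P_{2,K}\rTV=\lTV\pi_1-\pi_2\rTV$.

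For the recursive step with $1\le i\le K-1$, I would chain four elementary bounds. (i) Because $R_{j,i}=P_{j,i}R_{j,i+1}$ and applying $P_{j,i}$ to a draw from $\pi_j^{(-i)}$ yields a draw from $\pi_j$, the measure $\pi_j^{(-i)}\otimes R_{j,i}$ is the pushforward of $\pi_j\otimes R_{j,i+1}$ under $(\z,\y)\mapsto(\z^{(-i)},\y)$; hence the left-hand side of \eqref{eq:inequality_chain} is at most $\lTV\pi_1\otimes R_{1,i+1}-\pi_2\otimes R_{2,i+1}\rTV$. (ii) Inserting $\pi_2\otimes R_{1,i+1}$ and using Lemma \ref{inequality_composition} with $Q=R_{1,i+1}$ bounds this by $\lTV\pi_1-\pi_2\rTV+\lTV\pi_2\otimes R_{1,i+1}-\pi_2\otimes R_{2,i+1}\rTV$. (iii) Since $R_{j,i+1}$ ignores coordinate $i+1$, the measure $\pi_2\otimes R_{j,i+1}$ is obtained from $\pi_2^{(-(i+1))}\otimes R_{j,i+1}$ by appending a draw of coordinate $i+1$ from $\pi_2(\cdot\mid\x^{(-(i+1))})$ through a Markov kernel not depending on $j$, so contraction under a common Markov kernel bounds the last term by $\lTV\pi_2^{(-(i+1))}\otimes R_{1,i+1}-\pi_2^{(-(i+1))}\otimes R_{2,i+1}\rTV$. (iv) Inserting $\pi_1^{(-(i+1))}\otimes R_{1,i+1}$, applying Lemma \ref{inequality_composition} to $Q=R_{1,i+1}$ viewed as a kernel on $\sX^{(-(i+1))}$, and using that TV does not increase under marginalization, we bound this by $\lTV\pi_1-\pi_2\rTV+\lTV\pi_1^{(-(i+1))}\otimes R_{1,i+1}-\pi_2^{(-(i+1))}\otimes R_{2,i+1}\rTV$. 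Concatenating (i)--(iv) gives exactly \eqref{eq:inequality_chain}.

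I expect the only real difficulty to be notational: keeping track of which marginal of which target feeds each tail product $R_{j,i}$, and verifying that each pushforward map and each common-kernel step is genuinely independent of $j$, so that the sole error introduced per peeled coordinate comes from the two applications of Lemma \ref{inequality_composition}, each contributing precisely $\lTV\pi_1-\pi_2\rTV$. No analytic subtlety beyond the elementary TV contraction facts is anticipated.
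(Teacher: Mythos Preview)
Your proposal is correct and follows essentially the same four-step decomposition as the paper's proof: rewrite $\pi_j^{(-i)}\otimes R_{j,i}$ in terms of $\pi_j\otimes R_{j,i+1}$, peel off one $\lTV\pi_1-\pi_2\rTV$ via Lemma~\ref{inequality_composition}, pass from $\pi_2$ to $\pi_2^{(-(i+1))}$ using that $R_{j,i+1}$ ignores coordinate $i+1$, and peel off a second $\lTV\pi_1-\pi_2\rTV$ to reach $\pi_1^{(-(i+1))}$. The only cosmetic difference is that you invoke TV contraction under pushforwards and common Markov kernels to get inequalities at steps (i) and (iii), whereas the paper writes out the integrals and actually obtains equalities there (exploiting that $y_i=z_i$ under $\pi_j\otimes R_{j,i+1}$, so the projection in step (i) is effectively a bijection on the support); since only the inequality is needed, your streamlined version is perfectly fine.
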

\begin{proof}
We start by proving \eqref{eq:inequality_chain}. Notice that, by definition of $P_{1,i}$ and $P_{2,i}$, we have
\[
\begin{aligned}
\int_{\sX^{(-i)}\times \sX}&g\left(\x^{(-i)}, \y \right)\prod_{j \geq i}P_{1,j}\left(\x^{(-i)}, \d \y\right)\pi_1^{(-i)}\left(\d \x^{(-i)} \right) \\
&= \int_{\sX\times \sX^{(-i)}}h\left(\x, \y^{(-i)} \right)\prod_{j \geq i+1}P_{1,j}\left(\x^{(-i-1)}, \d \y\right)\pi_1\left(\d \x \right)
\end{aligned}
\]
and
\[
\begin{aligned}
\int_{\sX^{(-i)}\times \sX}&g\left(\x^{(-i)}, \y \right)\prod_{j \geq i}P_{2,j}\left(\x^{(-i)}, \d \y\right)\pi_2^{(-i)}\left(\d \x^{(-i)} \right) \\
&= \int_{\sX\times \sX^{(-i)}}h\left(\x, \y^{(-i)} \right)\prod_{j \geq i+1}P_{2,j}\left(\x^{(-i-1)}, \d \y\right)\pi_2\left(\d \x \right),
\end{aligned}
\]
where $g \, : \, \sX^{(-i)}\times \sX \, \to \, \R$ is any measurable function and $h$ is the composition of $g$ and the function $c \, : \, \sX^{(-i)}\times \sX \, \to \, \sX \times \sX^{(-i)}$ that relocates the $(K-1+i)$-th element of a vector after the $(i-1)$-th element. Since there is a one-to-one relationship between functions $g$ and $h$, we have
\begin{equation}\label{first_chain}
\begin{aligned}
&\lTV \pi_1^{(-i)} \otimes \prod_{j \geq i}P_{1,j}- \pi_2^{(-i)} \otimes \prod_{j \geq i}P_{2,j} \rTV = \lTV \pi_1 \otimes \prod_{j \geq i+1}P_{1,j}- \pi_2 \otimes \prod_{j \geq i+1}P_{2,j} \rTV.
\end{aligned}
\end{equation}
Then by triangular inequality and Lemma \ref{inequality_composition} we have
\begin{equation}\label{second_chain}
\begin{aligned}
\lTV \pi_1 \otimes \prod_{j \geq i+1}P_{1,j}- \pi_2 \otimes \prod_{j \geq i+1}P_{2,j} \rTV
\leq&  \lTV \pi_1 \otimes \prod_{j \geq i+1}P_{1,j}- \pi_2 \otimes \prod_{j \geq i+1}P_{1,j} \rTV\\
&+\lTV \pi_2 \otimes \prod_{j \geq i+1}P_{1,j}- \pi_2 \otimes \prod_{j \geq i+1}P_{2,j} \rTV\\
& \leq
\lTV\pi_1-\pi_2\rTV+
\lTV \pi_2 \otimes \prod_{j \geq i+1}P_{1,j}- \pi_2 \otimes \prod_{j \geq i+1}P_{2,j} \rTV\,.
\end{aligned}
\end{equation}
Notice that $\prod_{j \geq i+1}P_{1,j}$ and $\prod_{j \geq i+1}P_{2,j}$ do not depend on $x_{i+1}$ by construction, that implies
\begin{equation*}
\begin{aligned}
& \lTV \pi_2 \otimes \prod_{j \geq i+1}P_{1,j}- \pi_2 \otimes \prod_{j \geq i+1}P_{2,j} \rTV\\
&\\
&
\begin{aligned}
\quad = \underset{h \, : \, \sX \times \sX  \, \to \, [0,1]}{\sup} \biggl \lvert \int_{\sX\times \sX}h\left(\x, \y \right)&\prod_{j \geq i+1}P_{1,j}\left(\x^{(-(i+1))}, \d \y\right)\pi_2\left(\d \x \right)\\
&-\int_{\sX\times \sX}h\left(\x, \y \right)\prod_{j \geq i+1}P_{2,j}\left(\x^{(-(i+1))}, \d \y\right)\pi_2\left(\d \x \right) \biggr \rvert,
\end{aligned}
\end{aligned}
\end{equation*}
so that we have
\begin{equation*}
\begin{aligned}
& \lTV \pi_2 \otimes \prod_{j \geq i+1}P_{1,j}- \pi_2 \otimes \prod_{j \geq i+1}P_{2,j} \rTV\\
&\\
& \begin{aligned}
\quad = \underset{h \, : \, \sX \times \sX  \, \to \, [0,1]}{\sup} \biggl \lvert& \int_{\sX^{(-(i+1))}\times \sX}\int_{\sX_{i+1}}h\left(\x, \y \right)\pi_2\left(\d x_{i+1}\mid x^{(-(i+1))}\right)\prod_{j \geq i+1}P_{1,j}\left(\x^{(-(i+1))}, \d \y\right)\pi_2\left(\d \x^{(-(i+1))} \right)\\
&-\int_{\sX^{(-(i+1))}\times \sX}\int_{\sX_{i+1}}h\left(\x, \y \right)\pi_2\left(\d x_{i+1}\mid x^{(-(i+1))}\right)\prod_{j \geq i+1}P_{2,j}\left(\x^{(-(i+1))}, \d \y\right)\pi_2\left(\d \x^{(-(i+1))} \right) \biggr \rvert
\end{aligned}\\
& \quad \leq  \lTV \pi_2^{(-(i+1))} \otimes \prod_{j \geq i+1}P_{1,j}- \pi_2^{(-(i+1))} \otimes \prod_{j \geq i+1}P_{2,j} \rTV.
\end{aligned}
\end{equation*}
Moreover, it is clear that 
\[
\lTV \pi_2^{(-(i+1))} \otimes \prod_{j \geq i+1}P_{1,j}- \pi_2^{(-(i+1))} \otimes \prod_{j \geq i+1}P_{2,j} \rTV
\leq
\lTV \pi_2 \otimes \prod_{j \geq i+1}P_{1,j}- \pi_2 \otimes \prod_{j \geq i+1}P_{2,j} \rTV,
\]
thus combining the two above inequalities we get
\begin{equation}\label{third_chain}
\lTV \pi_2 \otimes \prod_{j \geq i+1}P_{1,j}- \pi_2 \otimes \prod_{j \geq i+1}P_{2,j} \rTV
=
\lTV \pi_2^{(-(i+1))} \otimes \prod_{j \geq i+1}P_{1,j}- \pi_2^{(-(i+1))} \otimes \prod_{j \geq i+1}P_{2,j} \rTV\,.
\end{equation}
Combining \eqref{first_chain}, \eqref{second_chain} and \eqref{third_chain} with the fact that
\begin{equation*}
\begin{aligned}
\lTV \pi_2^{(-(i+1))} \otimes \prod_{j \geq i+1}P_{1,j}- \pi_2^{(-(i+1))} \otimes \prod_{j \geq i+1}P_{2,j} \rTV
\leq&
\lTV\pi_1-\pi_2\rTV\\
&+
\lTV \pi_1^{(-(i+1))} \otimes \prod_{j \geq i+1}P_{1,j}- \pi_2^{(-(i+1))} \otimes \prod_{j \geq i+1}P_{2,j} \rTV\,
\end{aligned}
\end{equation*}
we finally obtain \eqref{eq:inequality_chain}. When $i = K$ the result follows by noticing that
\[
\pi_1^{(-K)} \otimes P_{1,K} = \pi_1 \quad \text{and} \quad \pi_2^{(-K)} \otimes P_{2,K} = \pi_2
\]
by definition.
\end{proof}
\begin{proof}[Proof of Proposition \ref{prop:convergence_operators}]
Without loss of generality, let $\mu \in \sN(\pi_2, M)$. By Lemma \ref{inequality_product} and the triangle inequality we have
\[
\begin{aligned}
\lTV \mu P_1 - \mu P_2 \rTV &\leq M\lTV \pi_2^{(-1)} \otimes P_1 - \pi_2^{(-1)} \otimes P_2 \rTV\\
& \leq M\lTV\pi_1-\pi_2\rTV+M\lTV \pi_1^{(-1)} \otimes P_1 - \pi_2^{(-1)} \otimes P_2 \rTV
\end{aligned}
\]
and the result follows by applying $K$ times Lemma \ref{inequality_chain}.
\end{proof}

\subsection{Proof of Lemma \ref{constructive_lemma}}

\begin{proof}
With an abuse of notation, let $\pi_1(x)$, $\pi_2(x)$ and $\mu_1(x)$ be densities of $\pi_1$, $\pi_2$ and $\mu_1$ with respect to a common dominating measure, such as $\tau=\pi_1+\pi_2$. 
Let $\bar{\mu}$ be the measure on $\sX$ with density $\bar{\mu}(x) = \min \left\{\mu_1(x), M\pi_2(x) \right\} $ for $x\in\sX$. 
By construction $\bar{\mu}$ is a sub-probability since
\[
\bar{\mu}(\sX)= \int_\sX \bar{\mu}(x) \tau(\d x) \leq \int_\sX \mu_1(x)\tau(\d x) = 1.
\]
Therefore, we can define a probability distribution $\mu_2\in\sP(\sX)$ with density
\begin{align*}
\mu_2(x) &= \bar{\mu}(x) + \alpha \, \max \left\{M \pi_2(x)-\mu_1(x), 0 \right\},
&x\in\sX
\end{align*}
where
\[ \alpha 
= 
\frac{1-\int \bar{\mu}(x)\tau(\d x)
}{
\int_\sX \max \left\{M \pi_2(x)-\mu_1(x), 0 \right\}\tau( \d x)} \in (0,1).
\]
Notice that $\mu_2(x)\leq M\pi_2(x)$ for every $x \in \sX$ since
\[
\mu_2(x) = 
\begin{cases}
M\pi_2(x), \quad \text{if } \mu_1(x) > M\pi_2(x)\,,\\
(1-\alpha)\mu_1(x)+\alpha M\pi_2(x), \quad \text{if } \mu_1(x) \leq M\pi_2(x)\,.
\end{cases}
\]
Thus $\mu_2 \in \sN(\pi_2, M)$. By definition of total variation distance and of $\tilde{\mu}$, we have
\[
\begin{aligned}
\lTV \mu_1- \mu_2 \rTV 
=& 
\int_\sX \max \left\{\mu_1(x) - \mu_2(x), 0 \right\} \tau(\d x) 
=
\int_\sX \max \left\{\mu_1(x)- M\pi_2(x), 0 \right\} \tau(\d x) 
\\\leq&
 M \, \int_\sX \max \left\{\pi_1(x) - \pi_2(x) ,0\right\}\tau(\d x) 
 = M\lTV \pi_1- \pi_2 \rTV.
\end{aligned}
\]
\end{proof}

\subsection{Proof of Theorem \ref{thm:mixing_limit}}
\begin{proof}
  By Lemma \ref{equivalent_statement} the statement is equivalent to
\begin{equation}\label{eq:sups_tilde}
\lim_{n \to \infty} \, \sup_{\tilde{\mu}_n\in \sN \left(\tilde{\pi}_n, M \right)} \, 
\lTV \tilde{\mu}_n \tilde{P}_n^t-\tilde{\pi}_n\rTV 
=
\sup_{\tilde{\mu} \in \sN \left(\tilde{\pi}, M \right)} \, \lTV \tilde{\mu} \tilde{P}^t-\tilde{\pi} \rTV
\end{equation}
in $Q^{(n)}$-probability, where $\tilde{P}_n$ is the kernel of the Gibbs sampler targeting $\tilde{\pi}$.

Consider $\| \tilde{\mu}_n \tilde{P}_n^t-\tilde{\pi}_n\|_{TV}$ with 
$\tilde{\mu}_n \in \sN(\tilde{\pi}_n, M)$. 
By Lemma \ref{constructive_lemma}, there exists $\tilde{\mu} \in \sN (\tilde{\pi}, M)$ such that 
\begin{equation}\label{eq:mu_n_close_mu}
\lTV \tilde{\mu}_n - \tilde{\mu} \rTV 
\leq
 M\lTV \tilde{\pi}_n - \tilde{\pi} \rTV.
\end{equation}
By the triangular inequality we can decompose $\| \tilde{\mu}_n \tilde{P}_n^t-\tilde{\pi}_n\|_{TV}$ as follows
\begin{equation}\label{eq:first_step_A2}
\lTV \tilde{\mu}_n\tilde{P}^{t}_n - \tilde{\pi}_n \rTV \leq
\lTV \tilde{\mu}_n\tilde{P}^{t}_n - \tilde{\mu}\tilde{P}_n^{t} \rTV+\lTV \tilde{\mu}\tilde{P}^{t}_n - \tilde{\mu}\tilde{P}^{t} \rTV
+\lTV \tilde{\mu}\tilde{P}^{t} - \tilde{\pi} \rTV
+\lTV \tilde{\pi}_n - \tilde{\pi} \rTV
.
\end{equation}
Combining \eqref{eq:mu_n_close_mu} with 
the monotonicity of the total variation distance with respect to the application of transition kernels, we obtain
\begin{equation}\label{computations}
\lTV \tilde{\mu}_n\tilde{P}^{t}_n - \tilde{\mu}\tilde{P}_n^{t} \rTV
\leq \lTV \tilde{\mu}_n-\tilde{\mu} \rTV
\leq M\lTV \tilde{\pi}_n-\tilde{\pi} \rTV\,.
\end{equation}
For the second term in \eqref{eq:first_step_A2}, we want to prove that if $\tilde{\mu} \in \sN (\tilde{\pi}, M)$ we have
\begin{equation}\label{main_bound}
\lTV \tilde{\mu} \tilde{P}_n^t - \tilde{\mu} \tilde{P}^t  \rTV \leq 2MKt\lTV \tilde{\pi}_n- \tilde{\pi}  \rTV
\end{equation}
for every $t \geq 1$. Indeed, the case $t = 1$ holds by Proposition \ref{prop:convergence_operators}. Assume now \eqref{main_bound} holds for $t-1$, with $t \geq 2$. Then by the triangular inequality we have
\[
\begin{aligned}
\lTV \tilde{\mu} \tilde{P}_n^t - \tilde{\mu} \tilde{P}^t  \rTV &\leq \lTV \tilde{\mu} \tilde{P_n}^t - \mu \tilde{P}^{t-1}\tilde{P}_n \rTV + \lTV \mu \tilde{P}^t - \mu \tilde{P}^{t-1}\tilde{P}_n \rTV\\
& \leq \lTV \tilde{\mu} \tilde{P}_n^{t-1} - \tilde{\mu} \tilde{P}^{t-1}  \rTV + \lTV \mu \tilde{P}^{t-1}\tilde{P} - \mu \tilde{P}^{t-1}\tilde{P}_n \rTV.
\end{aligned}
\]
By induction hypothesis we have
\begin{equation}\label{first_induction}
\lTV \tilde{\mu} \tilde{P}_n^{t-1} - \tilde{\mu} \tilde{P}^{t-1}  \rTV \leq 2MK(t-1)\lTV \tilde{\pi}_n - \tilde{\pi} \rTV.
\end{equation}
Moreover, by Lemma \ref{lemma_closure} we have that $\tilde{\mu} \tilde{P}^{t-1} \in \sN(\tilde{\pi}, M)$, so that from the case $t = 1$ we obtain
\begin{equation}\label{second_induction}
\lTV \mu \tilde{P}^{t-1}\tilde{P} - \mu \tilde{P}^{t-1}\tilde{P}_n \rTV \leq 2MK\lTV \tilde{\pi}_n - \tilde{\pi} \rTV.
\end{equation}
Then \eqref{main_bound} follows by \eqref{first_induction} and \eqref{second_induction}.
Combining \eqref{eq:first_step_A2}, \eqref{computations} and \eqref{main_bound}, for every $\tilde{\mu}_n \in \sN (\tilde{\pi}_n, M)$ there exists $\tilde{\mu} \in \sN (\tilde{\pi}, M)$ such that 
\begin{equation*}
\lTV \tilde{\mu}_n\tilde{P}^{t}_n - \tilde{\pi}_n \rTV \leq
(2MKt+M+1)
\lTV \tilde{\pi}_n- \tilde{\pi}  \rTV
  +\lTV \tilde{\mu}\tilde{P}^{t} - \tilde{\pi}_n \rTV.
\end{equation*}
Thus
\[
\begin{aligned}
\underset{\tilde{\mu}_n \in \sN (\tilde{\pi}_n, M)}{\sup} \lTV \tilde{\mu}_n\tilde{P}^{t}_n - \tilde{\pi}_n \rTV 
& \leq (2MKt+M+1)\lTV \tilde{\pi}_n- \tilde{\pi}  \rTV+\underset{\tilde{\mu} \in \sN (\tilde{\pi}, M)}{\sup} \lTV \tilde{\mu}\tilde{P}^{t} - \tilde{\pi} \rTV.
\end{aligned}
\]
It follows that, for any $\epsilon > 0$, we have
\begin{equation}\label{limsup}
\begin{aligned}
Q^{(n)}&\left(
\underset{\tilde{\mu}_n \in \sN (\tilde{\pi}_n, M)}{\sup} \lTV \tilde{\mu}_n\tilde{P}^{t}_n - \tilde{\pi}_n \rTV 
-
 \underset{\tilde{\mu} \in \sN (\tilde{\pi}, M)}{\sup} \lTV \tilde{\mu}\tilde{P}^{t} - \tilde{\pi} \rTV \geq \epsilon \right)\\
& \leq Q^{(n)}\left(\lTV \tilde{\pi}_n- \tilde{\pi}  \rTV \geq (2MKt+M+1)^{-1}\epsilon \right) \to 0,
\end{aligned}
\end{equation}
as $n \to \infty$ by $(A1)$ and $(2MKt+M+1)^{-1}\epsilon>0$.

We now prove the reverse inequality of \eqref{limsup} to establish \eqref{eq:sups_tilde}.
Given $\tilde{\mu} \in \sN (\tilde{\pi}, M)$, by Lemma \ref{constructive_lemma}, there exists $\tilde{\mu}_n \in \sN (\tilde{\pi}_n, M)$ such that 
$\lTV \tilde{\mu} - \tilde{\mu}_n \rTV\leq M\lTV \tilde{\pi}_n- \tilde{\pi} \rTV$.
Then we proceed analogously to above, first decomposing $\lTV \tilde{\mu}\tilde{P}^{t} - \tilde{\pi} \rTV$ 
as
\begin{equation}\label{second1}
\begin{aligned}
\lTV \tilde{\mu}\tilde{P}^{t} - \tilde{\pi} \rTV& 
\leq
\lTV \tilde{\mu}\tilde{P}^{t} - \tilde{\mu}\tilde{P}_n^{t} \rTV 
+
\lTV \tilde{\mu}\tilde{P}_n^{t} - \tilde{\mu}_n\tilde{P}_n^{t} \rTV 
+ 
\lTV \tilde{\mu}_n\tilde{P}_n^{t} -\tilde{\pi}_n \rTV
+
\lTV \tilde{\pi}_n -\tilde{\pi} \rTV
\end{aligned}
\end{equation}
and then applying Proposition \ref{prop:convergence_operators} using an argument analogous to above to get
\[
\lTV \tilde{\mu}\tilde{P}^{t} - \tilde{\pi} \rTV \leq  \lTV \tilde{\mu}_n\tilde{P}_n^{t} - \tilde{\pi}_n \rTV + (2MKt+M+1)\lTV \tilde{\pi}_n - \tilde{\pi} \rTV\,.
\]
It follows
\[
\sup_{\mu_n\in \sN (\tilde{\pi}_n, M)} \,\lTV \tilde{\mu}_n\tilde{P}_n^{t} - \tilde{\pi}_n \rTV 
\geq 
\sup_{\tilde{\mu} \in \sN (\tilde{\pi}, M)} \,\lTV \tilde{\mu}\tilde{P}^{t} - \tilde{\pi} \rTV -(2MKt+M+1)\lTV \tilde{\pi}_n - \tilde{\pi} \rTV.
\]
Fixing $\epsilon > 0$ arbitrary constant we have
\begin{equation}\label{liminf}
\begin{aligned}
Q^{(n)}&\left(
\underset{\tilde{\mu}_n \in \sN (\tilde{\pi}_n, M)}{\sup} \lTV \tilde{\mu}_n\tilde{P}^{t}_n - \tilde{\pi}_n \rTV 
-
 \underset{\tilde{\mu} \in \sN (\tilde{\pi}, M)}{\sup} \lTV \tilde{\mu}\tilde{P}^{t} - \tilde{\pi} \rTV \leq -\epsilon \right)\\
& \leq Q^{(n)}\left(\lTV \tilde{\pi}_n- \tilde{\pi}  \rTV \geq \frac{\epsilon}{2MKt+M+1} \right) \to 0,
\end{aligned}
\end{equation}
as $n \to \infty$ by $(A1)$ and $(2MKt+M+1)^{-1}\epsilon>0$. The result follows by combining \eqref{limsup} and \eqref{liminf}.
\end{proof}

\subsection{Proof of Corollary \ref{mixingCorollary}}
\begin{proof}
Thanks to Lemma \ref{equivalent_representation} we can write
\[
t^{(n)}_{mix}(\epsilon, M) = \inf\left\{t\geq 1\,:\,  \sup_{\mu_n \in \sN (\pi_n, M)} \,\lTV \mu_n P_n^t-\pi_n \rTV<\epsilon \right\}
\]
and
\[
\tilde{t}_{mix}(\epsilon, M) = \inf\left\{t\geq 1\,:\,  \sup_{\tilde{\mu} \in \sN (\tilde{\pi}, M)} \,\lTV \tilde{\mu} \tilde{P}^t-\tilde{\pi} \rTV<\epsilon \right\}.
\]
Assume $(A1)$ and denote $t^* = \tilde{t}_{mix}(\epsilon, M) < \infty$ for brevity. By definition of $t^*$ we have $\delta = \sup_{\tilde{\mu} \in \sN (\tilde{\pi}, M)} \,\lTV \tilde{\mu} \tilde{P}^{t^*}-\tilde{\pi} \rTV < \epsilon$. Thus
\[
\begin{aligned}
Q^{(n)}&\left(t^{(n)}_{mix}(\epsilon, M)\leq t^*\right)= Q^{(n)}\left(\sup_{\mu_n \in \sN (\pi_n, M)} \,\lTV \mu_n P_n^{t^*}-\pi_n \rTV<\epsilon \right)\\
&= Q^{(n)}\left(\sup_{\mu_n \in \sN (\pi_n, M)} \,\lTV \mu_n P_n^{t^*}-\pi_n \rTV-\sup_{\tilde{\mu} \in \sN (\tilde{\pi}, M)} \,\lTV \tilde{\mu} \tilde{P}^{t^*}-\tilde{\pi} \rTV<\epsilon-\delta \right)\\
& \to 1,
\end{aligned}
\]
as $n \to \infty$ by Theorem \ref{thm:mixing_limit}.

As regards the second part of the statement, let $(A1)$ hold and fix $T > 0$. Denote $\delta = \sup_{\tilde{\mu} \in \sN (\tilde{\pi}, M)} \,\lTV \tilde{\mu} \tilde{P}^{T}-\tilde{\pi} \rTV$ and notice that by assumption $\delta \geq \epsilon > \underline{\epsilon}$. Thus
\[
\begin{aligned}
\lim \inf_{n \to \infty} Q^{(n)}\left(t^{(n)}_{mix}(\underline{\epsilon}, M)<T\right) &= \lim \inf_{n \to \infty} Q^{(n)}\left(\sup_{\mu_n \in \sN (\pi_n, M)} \,\lTV \mu_n P_n^{T}-\pi_n \rTV < \underline{\epsilon}\right)\\
&= \lim \inf_{n \to \infty} Q^{(n)}\left(\delta-\sup_{\mu_n \in \sN (\pi_n, M)} \,\lTV \mu_n P_n^{T}-\pi_n \rTV \geq \delta-\underline{\epsilon} \right)\\
&\to 0,
\end{aligned}
\]
as $n \to \infty$ by Theorem \ref{thm:mixing_limit}.
\end{proof}

\subsection{Proof of Corollary \ref{mixing_gap}}
We need a preliminary well known lemma, whose proof we include for self-containedness.
\begin{lemma}\label{positiveDefinite_Gibbs}
Let $P$ be a Gibbs sampler kernel with $K=2$ and target $\pi\in\mathcal{P}(\sX_1\times\sX_2)$. Then
\[
\lTV \mu P^t - \pi \rTV \leq \frac{M}{2}\left(1-\text{Gap}(P)\right)^{t},
\]
for every $\mu \in \mathcal{N}(\pi, M)$ and $t \geq 1$.
\end{lemma}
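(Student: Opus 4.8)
The plan is to pass from total variation to the $\chi^2$‑divergence and to exploit that, although the two‑block kernel $P=P_1P_2$ is not $\pi$‑reversible, a closely related operator is positive semidefinite. First I would write $g=\frac{\d\mu}{\d\pi}-1$; since $\mu\in\sN(\pi,M)$ one has $g\in L^2_0(\pi)$ and $\|g\|_{L^2(\pi)}^2=\chi^2(\mu\|\pi)=\int(\tfrac{\d\mu}{\d\pi})^2\,\d\pi-1\le M-1$. Because $P$ is $\pi$‑invariant, $\frac{\d(\mu P^t)}{\d\pi}=1+(P^*)^tg$, where $P^*=P_2P_1$ is the $L^2(\pi)$‑adjoint of $P=P_1P_2$ and each $P_i$ is an orthogonal projection ($P_i=P_i^*=P_i^2$, being a conditional‑expectation operator). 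By Cauchy–Schwarz, $\lTV\mu P^t-\pi\rTV=\tfrac12\|(P^*)^tg\|_{L^1(\pi)}\le\tfrac12\|(P^*)^tg\|_{L^2(\pi)}$.

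The key step is the operator identity obtained by collapsing adjacent equal projections: using $P_i^2=P_i$,
\[
\big((P^*)^t\big)^{*}(P^*)^t=(P_1P_2)^t(P_2P_1)^t=(P_1P_2P_1)^{2t-1}=:W^{2t-1},
\]
where $W=P_1P_2P_1=(P_1P_2)(P_1P_2)^{*}\succeq 0$ is self‑adjoint and fixes constants, hence preserves $L^2_0(\pi)$. Therefore $\|(P^*)^tg\|^2=\langle W^{2t-1}g,g\rangle=\|W^{(2t-1)/2}g\|^2\le\beta^{2t-1}\|g\|^2$, with $\beta:=\|W|_{L^2_0(\pi)}\|=\|P_1P_2|_{L^2_0(\pi)}\|^2\in[0,1]$. (If $\beta=1$ the claimed bound degenerates to $\lTV\mu P^t-\pi\rTV\le M/2$, which holds because $\mu P^t\in\sN(\pi,M)$ forces $\lTV\mu P^t-\pi\rTV\le\min(M-1,1)\le M/2$; so assume $\beta<1$.) This yields $\lTV\mu P^t-\pi\rTV\le\tfrac12\sqrt{M-1}\,\beta^{\,t-1/2}$.

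Next I would relate $\beta$ to $\mathrm{Gap}(P)$. After centering and using $\pi P=\pi$, the Dirichlet‑form definition gives $1-\mathrm{Gap}(P)=\sup\{\langle f,Pf\rangle_\pi/\|f\|^2:f\in L^2_0(\pi)\}$. Testing against $f=k+\beta^{-1/2}P_2k$, where $k\in L^2_0(\pi)\cap\mathrm{range}(P_1)$ is a (near‑)maximizer of $\|P_2k\|^2/\|k\|^2$ — which equals $\beta$ — a direct computation (using $\langle P_2k,k\rangle=\|P_2k\|^2$ and $P_1P_2k=\beta k$) gives $\langle f,Pf\rangle/\|f\|^2=\tfrac{\sqrt\beta+\beta}{2}$, so $1-\mathrm{Gap}(P)\ge\tfrac{\sqrt\beta+\beta}{2}$. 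Finally, with $u=\sqrt\beta\in(0,1]$, elementary algebra shows $\tfrac12\sqrt{M-1}\,u^{2t-1}\le\tfrac M2\big(\tfrac{u(1+u)}{2}\big)^t$ for every $t\ge1$, $M\ge1$: dividing by $u^t$ it reduces to $\sqrt{M-1}\big(\tfrac{2u}{1+u}\big)^{t-1}\le\tfrac{M(1+u)}{2}$, and since $\tfrac{2u}{1+u}\le1$ and $\tfrac{M(1+u)}{2}\ge\tfrac M2$ this follows from $\sqrt{M-1}\le\tfrac M2\iff(M-2)^2\ge0$. Combining with $1-\mathrm{Gap}(P)\ge\tfrac{u(1+u)}{2}$ gives $\lTV\mu P^t-\pi\rTV\le\tfrac M2(1-\mathrm{Gap}(P))^t$.

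The main obstacle is the second paragraph: spotting the telescoping identity $(P_1P_2)^t(P_2P_1)^t=(P_1P_2P_1)^{2t-1}$ and thereby the positive semidefiniteness of the relevant operator, which is exactly what substitutes for $\pi$‑reversibility of $P$ (and explains the name of the lemma). The lower bound $1-\mathrm{Gap}(P)\ge\tfrac{\sqrt\beta+\beta}{2}$ — the bridge between the abstract Dirichlet‑form gap and the concrete two‑block rate $\beta=\rho_{\max}^2$ — together with the closing inequality are comparatively routine, although the precise constant $M/2$ (rather than $\sqrt{M-1}$) is genuinely used to absorb the slack in the $t=1$ case.
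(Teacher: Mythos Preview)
Your argument is correct, but the paper proceeds by a shorter route. Instead of staying on the full space and manipulating the non‑reversible operator $P=P_1P_2$, the paper passes to the \emph{marginal} chain $\hat P$ on $\sX_2$ defined by $\hat P(x_2,\d y_2)=\int\pi(\d y_2\mid y_1)\pi(\d y_1\mid x_2)$, and uses the de‑initialisation identity $\lTV\mu P^t-\pi\rTV=\lTV\mu^{(-1)}\hat P^t-\pi^{(-1)}\rTV$ from \cite{R01}. This $\hat P$ is $\pi^{(-1)}$‑reversible and positive semidefinite (it is exactly your $W=P_1P_2P_1$ restricted to functions of $x_2$), so the standard $L^2$ contraction $\lTV\mu^{(-1)}\hat P^t-\pi^{(-1)}\rTV\le\tfrac12\|f\|_2(1-\mathrm{Gap}(\hat P))^t$ applies directly with $f=\d\mu^{(-1)}/\d\pi^{(-1)}-1$ and the crude bound $\|f\|_2\le M$. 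The bridge to $\mathrm{Gap}(P)$ is then a one‑line lift: every test function $f(x_2)$ for $\hat P$ yields $g(\x)=f(x_2)$ with identical Dirichlet ratio for $P$, giving $\mathrm{Gap}(\hat P)\ge\mathrm{Gap}(P)$, i.e.\ $1-\mathrm{Gap}(P)\ge\beta$ in your notation.

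What your route buys is self‑containment (no citation to de‑initialisation) and a slightly sharper intermediate TV bound $\tfrac12\sqrt{M-1}\,\beta^{t-1/2}$; the cost is the extra algebra at the end, because that bound does \emph{not} combine with the simple inequality $1-\mathrm{Gap}(P)\ge\beta$ to give the stated conclusion (it would require $\sqrt{M-1}\le M\sqrt\beta$, false for small $\beta$). This is why you need the tighter test function $k+\beta^{-1/2}P_2k$ and the lower bound $1-\mathrm{Gap}(P)\ge(\sqrt\beta+\beta)/2$, followed by the elementary inequality. The paper avoids all of that by losing a little in the $L^2$ bound ($M$ instead of $\sqrt{M-1}$) and gaining a full power of $\beta$ via the marginal‑chain reduction.
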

\begin{proof}
Let $\mu \in \mathcal{N}(\pi, M)$ and $t\geq 1$. By Corollary $1$ in \cite{R01} we have
\begin{equation}\label{equality_deinitializing}
\lTV \mu P^t - \pi \rTV = \lTV \mu^{(-1)}\hat{P}^t - \pi^{(-1)} \rTV,
\end{equation}
where $\hat{P}$ is the Markov transition kernel on $\sX_2$ defined as
\begin{align*}
\hat{P}(x_2, \d y_2)& = \int_{\sX_1}\pi(\d y_2 \mid y_1)\pi(\d y_1 \mid x_2)
&x_2\in\sX_2\,.
\end{align*}
Note that $\hat{P}$ is $\pi^{(-1)}$-reversible.
Also, for every $f \in L^2(\pi^{(-1)})$, i.e.\ $f \, : \, \sX_2 \, \to \, \R$ such that $ \| f  \|_2^2 =\pi^{(-1)}(f^2)$ is finite, we have
\[
\begin{aligned}
\int_{\sX_2^2}&f(x_2)f(y_2)\hat{P}(x_2 ,\d y_2) \pi(\d x_2) \\
&= \int_{\sX_2^2}f(x_2)f(y_2)\int_{\sX_1}\pi(\d y_2 \mid y_1)\pi(\d y_1 \mid x_2)\pi(\d x_2)\\
& = \int_{\sX_1} \left[\int_{\sX_2}f(y_2)\pi(\d y_2 \mid y_1) \right]\left[\int_{\sX_2}f(x_2)\pi(\d x_2 \mid y_1) \right]\pi(\d y_1)\\
& = \int_{\sX_1} \left[\int_{\sX_2}f(y_2)\pi(\d y_2 \mid y_1) \right]^2\pi(\d y_1) \geq 0,
\end{aligned}
\]
so that $\hat{P}$ is also positive semi-definite. 
Since $\hat{P}$ is reversible and positive semi-definite, we have (see e.g.\ equation (5) in \cite{AL22}) that
\begin{equation}\label{bound_gap_l2}
\left \lvert \left \lvert \hat{P}^t(f) \right \rvert \right \rvert_2 \leq \left \lvert \left \lvert f \right \rvert \right \rvert_2\left(1-\text{Gap}(\hat{P}) \right)^{t},
\end{equation}
for every $f $ such that $\pi(f) = 0$. 
Choosing $f = \frac{\d \mu^{(-1)}}{\d \pi^{(-1)}}-1$ and using the reversibility of $\hat{P}$ (see e.g.\ Section $2.1$ in \cite{K09}) we also have
\begin{equation}\label{chisquare_bound}
\lTV \mu^{(-1)}\hat{P}^t - \pi^{(-1)} \rTV \leq \frac{1}{2}\left \lvert \left \lvert \mu^{(-1)} \hat{P}^t(f)  \right \rvert \right \rvert_2,
\end{equation}
where $\mu^{(-1)} \hat{P}^t(f) = \int f(x_2) \mu^{(-1)} \hat{P}^t(\d x_2)$.
With the same choice of $f$, we have
\[
\left \lvert \left \lvert f \right \rvert \right \rvert_2^2 = \int \left(\frac{\d \mu^{(-1)}}{\d \pi^{(-1)}}(x_2)-1 \right)^2\pi^{-1}(\d x_2) \leq M^2
\]
since $\mu^{(-1)} \in \mathcal{N}(\pi^{(-1)}, M)$. Thus, combining \eqref{bound_gap_l2} with \eqref{chisquare_bound} we obtain
\[
\lTV \mu P^t - \pi \rTV \leq \frac{M}{2}\left(1-\text{Gap}(\hat{P}) \right)^{t}.
\]
Finally, for every $f \, : \, \sX_2 \, \to \, \R$ with $\left \lvert \left \lvert f \right \rvert \right \rvert_2 < \infty$ it holds
\[
\frac{\int_{\sX_2^2}\left[f(y_2)-f(x_2) \right]^2\pi(\d x_2)\hat{P}(x_2, \d y_2)}{2\text{Var}_\pi^{(-1)}(f)} = \frac{\int_{\sX^2}\left[g(\y)-g(\x) \right]^2\pi(\d \x)P(\x, \d \y)}{2\text{Var}_\pi(f)},
\]
where $g(\x) = f(x_2)$. Therefore $\text{Gap}(\hat{P}) \geq \text{Gap}(P)$ and we get
\[
\lTV \mu P^t - \pi \rTV \leq \frac{M}{2}\left(1-\text{Gap}(P)\right)^{t},
\]
as desired.
\end{proof}

\begin{proof}[Proof of Corollary \ref{mixing_gap}]
By Lemma \ref{positiveDefinite_Gibbs} we obtain
\[
 \tilde{t}_{mix}(\epsilon, M) \leq 1+\frac{\log(M/2)-\log(\epsilon)}{-\log \left(1-\text{Gap}(\tilde{P}) \right)},
\]
and the result follows by the first part of Corollary \ref{mixingCorollary}.
\end{proof}

\subsection{Proof of Proposition \ref{convergence_finite_dim}}
\begin{proof}
By Theorem \ref{BvM}, assumption $(A1)$ is satisfied with
\[
\phi_{n} (\psi) = \sqrt{n}(\psi -\psi^*) - \mathcal{I}^{-1}(\psi^*)\Delta_{n, \psi^*},
\]
and $\tilde{\pi} = N\left( \bm{0}, \mathcal{I}^{-1}(\psi^*)\right)$. Since $\tilde{\pi}$ is the distribution of a multivariate normal with non singular covariance matrix, then it is easy to show $\tilde{t}_{mix}(\epsilon, M) < \infty$ for every $(M, \epsilon) \in [1, \infty)\times(0,1)$, see e.g. Theorem $2$ in \cite{A91}. 
\end{proof}

\subsection{Statement and proof of Corollary \ref{convergence_normal_likelihood}}
We illustrate the result of Proposition \ref{convergence_finite_dim} on a simple example of model \eqref{parametric_model} with normal likelihood and unknown mean and precision, that is
\begin{equation}\label{eq:convergence_normal_likelihood}
f(y \mid \mu, \tau) = N\left(y \mid \mu, \tau^{-1}\right),
\end{equation}
where $K = 2$ and $\psi = (\mu, \tau)$. 
Notice that, even if a conjugate prior exists, it is common to place independent priors on $\mu$ and $\tau$, for which the Gibbs sampler defined in \eqref{initialGibbs} becomes a reasonable option.
\begin{corollary}\label{convergence_normal_likelihood}
Consider model \eqref{parametric_model} with likelihood as in \eqref{eq:convergence_normal_likelihood}. Let $Y_i \simiid Q_{\psi^*}$, with $Q_{\psi^*}$ admitting density $f(y \mid \psi^*)$ and $\psi^* = (\mu^*, \tau^*)\in\R\times \R_+$. Moreover let $p_0$ be absolutely continuous in a neighborhood of $\psi^*$ with a continuous positive density at $\psi^*$. Consider the Gibbs sampler defined in \eqref{initialGibbs}. Then, for every $M \geq 1$ and $\epsilon > 0$ we have
\[
Q_{\psi^*}^{(n)}\left(t^{(n)}_{mix}(\epsilon, M)\leq 1\right) \to 1,
\]
as $n \to \infty$.
\end{corollary}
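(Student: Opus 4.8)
The plan is to apply Proposition \ref{convergence_finite_dim}, so the whole task reduces to verifying the hypotheses of Theorem \ref{BvM} for the Gaussian location-scale model \eqref{eq:convergence_normal_likelihood} and then sharpening the generic bound $t^{(n)}_{mix}(\epsilon,M)\le T(\psi^*,\epsilon,M)$ to the value $1$. First I would check the regularity conditions: the map $(\mu,\tau)\mapsto N(\cdot\mid\mu,\tau^{-1})$ is one-to-one since mean and variance identify a normal density; $(\mu,\tau)\mapsto\sqrt{f(y\mid\mu,\tau)}$ is smooth on $\R\times\R_+$; and the Fisher information of the Gaussian is the diagonal matrix $\Fisher(\psi)=\mathrm{diag}(\tau,\,1/(2\tau^2))$, which is non-singular and continuous on $\R\times\R_+$. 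The prior condition is assumed. The existence of the tests $u_n$ in \eqref{tests} over a compact neighborhood $\Psi$ of $\psi^*$ follows from standard consistency arguments for the Gaussian family (e.g.\ tests based on the empirical mean and variance separate any compact-complement from $\psi^*$); I would cite the relevant remark in Chapter 10 of \cite{V00} rather than construct them by hand. Hence Theorem \ref{BvM} applies, $(A1)$ holds with $\phi_n(\psi)=\sqrt n(\psi-\psi^*)-\Fisher^{-1}(\psi^*)\Delta_{n,\psi^*}$ and $\tilde\pi=N(\bm0,\Fisher^{-1}(\psi^*))$.

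The key extra step is that, for this particular limit, $\tilde{t}_{mix}(\epsilon,M)=1$, not merely finite. The point is that $\Fisher^{-1}(\psi^*)$ is \emph{diagonal}, so $\tilde\pi$ is a product of two independent univariate Gaussians. The deterministic-scan two-coordinate Gibbs sampler targeting a distribution with independent coordinates draws each coordinate from its marginal, hence after a single sweep the chain is \emph{exactly} at stationarity from any starting point: $\tilde\mu\tilde P=\tilde\pi$ for every $\tilde\mu\in\mathcal P(\sX)$, and in particular for every $\tilde\mu\in\sN(\tilde\pi,M)$. Therefore $\lTV\tilde\mu\tilde P^1-\tilde\pi\rTV=0<\epsilon$, giving $\tilde{t}_{mix}(\epsilon,M)=1$ for all $(M,\epsilon)\in[1,\infty)\times(0,1)$. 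I would spell this out: the independence of $\mu$ and $\tau$ under $N(\bm0,\Fisher^{-1}(\psi^*))$ is exactly what makes the limiting Gibbs sampler one-step exact, whereas for general $\Fisher^{-1}(\psi^*)$ one would only get a finite, possibly larger, $\tilde t_{mix}$.

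Finally I would invoke the first part of Corollary \ref{mixingCorollary} with $\tilde t_{mix}(\epsilon,M)=1$: since $(A1)$ holds and $\tilde t_{mix}(\epsilon,M)<\infty$, we get
\[
Q_{\psi^*}^{(n)}\!\left(t^{(n)}_{mix}(\epsilon,M)\le 1\right)=Q_{\psi^*}^{(n)}\!\left(t^{(n)}_{mix}(\epsilon,M)\le \tilde t_{mix}(\epsilon,M)\right)\to 1
\]
as $n\to\infty$, which is the claim. The only mild subtlety — and the part I would be most careful about — is the verification of the testing condition \eqref{tests}; everything else is either a direct computation (Fisher information) or an immediate structural observation (diagonality of $\Fisher^{-1}$ forcing one-step mixing of the limiting Gibbs kernel). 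I do not expect any genuine obstacle here; the result is essentially a worked illustration of the machinery of Section \ref{sec:parametric_model}.
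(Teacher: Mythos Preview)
Your proposal is correct and follows essentially the same approach as the paper: verify the BvM hypotheses for the Gaussian model, observe that the limiting Fisher information is diagonal so the limiting two-block Gibbs sampler reaches stationarity in one sweep, and conclude via Corollary~\ref{mixingCorollary}. Two minor remarks: the paper does construct the tests explicitly (its Lemma~\ref{tests_normal}) rather than citing \cite{V00}, and it phrases the one-step conclusion via Corollary~\ref{mixing_gap} with $\text{Gap}(\tilde P)=1$ rather than your direct ``independent coordinates $\Rightarrow$ one sweep is exact'' argument---both routes are equivalent, and yours is arguably cleaner. (Incidentally, your Fisher information $\mathrm{diag}(\tau,\,1/(2\tau^2))$ is the correct one; the paper's displayed matrix appears to contain a typo, but only diagonality matters here.)
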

For the proof we need a preliminary Lemma, whose proof we include for self-containedness and because it will be useful to refer to later on. 
\begin{lemma}\label{tests_normal}
Consider the same setting of Corollary \ref{convergence_normal_likelihood}. Then conditions \eqref{tests} are satisfied.
\end{lemma}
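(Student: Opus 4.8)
The idea is to exhibit an explicit test built from the two sufficient statistics $\bar Y_n=\frac1n\sum_{i=1}^n Y_i$ and $V_n=\frac1n\sum_{i=1}^n(Y_i-\bar Y_n)^2$, exploiting the scale structure of the normal family. Fix $\eta\in(0,1/2)$ and $\delta>0$ and set
\[
\Psi=[\mu^*-\delta,\mu^*+\delta]\times\big[\tau^*/(1+2\eta),\,\tau^*(1+2\eta)\big],
\]
which is a compact neighbourhood of $\psi^*=(\mu^*,\tau^*)$ inside the parameter space $\R\times\R_{>0}$. Define the test $u_n\big(y_1,\dots,y_n\big)=\mathbbm{1}\big\{\tau^* V_n\notin(1-\eta,1+\eta)\ \text{ or }\ |\bar Y_n-\mu^*|>\delta/2\big\}$. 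The key elementary facts are that, under $P_\psi$ with $\psi=(\mu,\tau)$, the variables $Z_i=\sqrt\tau(Y_i-\mu)$ are i.i.d.\ $N(0,1)$ \emph{regardless of $\psi$}, so that $\sqrt\tau(\bar Y_n-\mu)\sim N(0,1/n)$ and $\tau V_n=\frac1n\sum_{i=1}^n(Z_i-\bar Z_n)^2\sim\frac1n\chi^2_{n-1}$; in particular $\Var_\psi(\bar Y_n)=1/(n\tau)$, $E_\psi[\tau V_n]=(n-1)/n$, $\Var_\psi(\tau V_n)=2(n-1)/n^2$, and the law of $\tau V_n$ does not depend on $\psi$.

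\emph{First integral in \eqref{tests} (size at $\psi^*$).} Under $P_{\psi^*}$ we have $\tau^* V_n\sim\frac1n\chi^2_{n-1}$ with mean $\to 1$ and variance $\to 0$, so Chebyshev gives $P_{\psi^*}\big(\tau^* V_n\notin(1-\eta,1+\eta)\big)\to 0$; likewise $P_{\psi^*}\big(|\bar Y_n-\mu^*|>\delta/2\big)\le 4/(n\tau^*\delta^2)\to 0$. Hence $\int u_n\prod_{i=1}^n f(\d y_i\mid\psi^*)\to 0$.

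\emph{Second integral in \eqref{tests} (uniform power off $\Psi$).} Let $\psi=(\mu,\tau)\notin\Psi$ and write $c=\tau^*/\tau$, so $\tau^* V_n=c\,(\tau V_n)$. If $\tau\notin[\tau^*/(1+2\eta),\tau^*(1+2\eta)]$ then either $c>1+2\eta$, in which case $\{\tau^* V_n<1+\eta\}\subseteq\{\tau V_n<\tfrac{1+\eta}{1+2\eta}\}$, or $c<(1+2\eta)^{-1}$, in which case $\{\tau^* V_n>1-\eta\}\subseteq\{\tau V_n>(1-\eta)(1+2\eta)\}$; since $\tfrac{1+\eta}{1+2\eta}<1<(1-\eta)(1+2\eta)$ and the law of $\tau V_n$ is $\psi$-free with mean $(n-1)/n\to1$ and variance $2(n-1)/n^2\to0$, Chebyshev bounds $\int(1-u_n)\prod f(\d y_i\mid\psi)\le P_\psi\big(\tau^* V_n\in(1-\eta,1+\eta)\big)$ by a null sequence that does not depend on $\psi$. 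If instead $\tau\in[\tau^*/(1+2\eta),\tau^*(1+2\eta)]$ but $|\mu-\mu^*|>\delta$, then $\{|\bar Y_n-\mu^*|\le\delta/2\}\subseteq\{|\bar Y_n-\mu|\ge\delta/2\}$ and $\int(1-u_n)\prod f(\d y_i\mid\psi)\le\Var_\psi(\bar Y_n)\cdot 4/\delta^2=4/(n\tau\delta^2)\le 4(1+2\eta)/(n\tau^*\delta^2)$, again a null sequence independent of $\psi$. Taking the supremum over $\psi\notin\Psi$ of these two $\psi$-free bounds shows the second integral in \eqref{tests} tends to $0$.

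\textbf{Main obstacle.} The delicate point is uniformity over the \emph{unbounded} region $\{\psi\notin\Psi\}$. A test based on $\bar Y_n$ alone cannot detect parameters with $\tau\to 0$, since then $\Var_\psi(\bar Y_n)=1/(n\tau)$ explodes; the resolution is that precisely those parameters force $\tau^* V_n=(\tau^*/\tau)\cdot\tau V_n$ far from $1$, and the pivotal identity $\tau V_n\sim\frac1n\chi^2_{n-1}$ makes the required concentration of $\tau V_n$ around $1$ uniform in $\psi$. Everything else is a routine Chebyshev/triangle-inequality computation.
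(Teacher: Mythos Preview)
Your proof is correct and follows essentially the same approach as the paper: build the test from the sample mean and sample variance, exploit the pivotal identity $\tau V_n\sim\frac{1}{n}\chi^2_{n-1}$ (equivalently, the scale structure of the normal family) to obtain $\psi$-free concentration bounds, and split the analysis of $\psi\notin\Psi$ into the cases ``$\tau$ outside its interval'' and ``$\tau$ inside but $\mu$ outside.'' The paper phrases the variance test as $|V_n-1/\tau^*|\le c_2$ rather than $\tau^*V_n\in(1-\eta,1+\eta)$ and appeals to the law of large numbers rather than Chebyshev explicitly, but these are cosmetic differences.
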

\begin{proof}[Proof of Lemma \ref{tests_normal}] 
Define
\[
\Psi = \Psi_1 \times \Psi_2 = \left[\mu^* - 1, \mu^* + 1\right] \times \left[\frac{\tau^*}{2}, 2\tau^* \right]
\]
compact neighborhood of $\psi^*$ and
\[
u_n(Y_1, \dots, Y_n) = 1- \mathbbm{1}_{g_1(Y_{1:n}) \leq c_1} \, \mathbbm{1}_{g_2(Y_{1:n}) \leq c_2},
\]
where $c_1 = 1/2$, $c_2 = (2\tau^*)^{-1}$ and
\[
g_1(Y_{1:n}) = \left\lvert \bar{Y} - \mu^*\right \rvert, \quad \text{and} \quad g_2(Y_{1:n}) = \left \lvert \frac{1}{n}\sum_{i = 1}^n\left(Y_i-\bar{Y}\right)^2-\frac{1}{\tau^*} \right \rvert,
\]
with $\bar{Y} = \frac{1}{n}\sum_{i = 1}^nY_i$. Since $Y_i \simiid N(\mu, \tau^{-1})$, then $g_1(Y_{1:n})$ and $g_2(Y_{1:n})$ are equal in distribution, respectively, to
\[
h_1(Z_{1:n}, \mu, \tau) = \left \lvert \frac{1}{\sqrt{\tau}}\bar{Z}+\mu-\mu^* \right \rvert, \quad h_2(Z_{1:n}, \mu, \tau) = \left \lvert\frac{1}{\tau}\frac{1}{n}\sum_{i = 1}^n\left(Z_i-\bar{Z}\right)^2-\frac{1}{\tau^*} \right \rvert,
\]
where $Z_i \simiid N(0, 1)$. By the Law of Large numbers we have
\[
\bar{Z} \to 0, \quad \text{and} \quad \frac{1}{n}\sum_{i = 1}^n\left(Z_i-\bar{Z}\right)^2 \to 1
\]
almost surely as $n \to \infty$. This implies
\[
\begin{aligned}
\int u_n(y_1, \dots, y_n) \, \prod_{i = 1}^nf(\d y_i \mid \psi^*) \leq &P\left(h_1(Z_{1:n}, \mu^*, \tau^*)  > c_1 \right)\\
&+P\left(h_2(Z_{1:n}, \mu^*, \tau^*)  > c_2 \right) \to 0,
\end{aligned}
\]
as $n \to \infty$. Also, we have
\[
\begin{aligned}
\sup_{\psi \not \in \Psi} \,\int \left[1-u_n(y_1, \dots, y_n)\right] \, \prod_{i = 1}^nf(\d y_i \mid \psi)  \leq& \sup_{\tau \not \in \Psi_2} \, P\left(h_2(Z_{1:n}, \mu, \tau)  \leq c_2 \right)\\
&+ \sup_{\mu \not \in \Psi_1, \, \tau \in \Psi_2} \, P\left(h_1(Z_{1:n}, \mu, \tau)  \leq c_1 \right).
\end{aligned}
\]
Now notice that by the reverse triangle inequality we have
\[
\begin{aligned}
\sup_{\tau \not \in \Psi_2}& \, P\left(h_2(Z_{1:n}, \mu, \tau)  \leq c_2 \right)  = \sup_{\tau \not \in \Psi_2} \, P\left(\left \lvert\frac{1}{\tau}\frac{1}{n}\sum_{i = 1}^n\left(Z_i-\bar{Z}\right)^2-\frac{1}{\tau^*}\right\rvert  \leq c_2 \right)\\
&\leq \sup_{\tau \not \in \Psi_2} \,  P \left(\left \lvert\frac{1}{n}\sum_{i = 1}^n\left(Z_i-\bar{Z}\right)^2-1\right\rvert \geq \left \lvert 1-\frac{\tau}{\tau^*}\right \rvert-c_2\tau \right) \to 0,
\end{aligned}
\]
by definition of $\Psi_2$, as $n \to \infty$. Finally, again by reverse triangle inequality, we have
\[
\sup_{\mu \not \in \Psi_1, \, \tau \in \Psi_2} \, P\left(h_1(Z_{1:n}, \mu, \tau)  \leq c_1 \right) \leq \sup_{\mu \not \in \Psi_1, \, \tau \in \Psi_2} \, P\left(|\bar{Z} | \geq \sqrt{\tau}\left(|\mu - \mu^*| - c_1 \right) \right) \to 0,
\]
as $n \to \infty$.
\end{proof}
\begin{proof}[Proof of Corollary \ref{convergence_normal_likelihood}]
In this case  $\psi = (\mu, \tau)$ and
\[
f(y \mid \psi) = \sqrt{\frac{\tau}{2\pi}}e^{-\frac{\tau}{2}(y-\mu)^2}.
\]
By Lemma \ref{tests_normal} conditions \eqref{tests} are satisfied.
Also, the map $\psi \to f(y \mid \psi)$ is one-to-one, the map $\psi \to \sqrt{f(y \mid \psi)}$ is continuously differentiable, and 
  the Fisher information matrix is
\[
\Fisher(\psi) =
\begin{bmatrix}
\frac{\tau}{2} & 0\\
0 & \frac{1}{2\tau}
\end{bmatrix},
\]
which is non singular and continuous as a function of $\psi$. Thus the conditions of Theorem \ref{BvM} and Proposition \ref{convergence_finite_dim} are satisfied. Finally, since we are considering a two-blocks Gibbs sampler, by Corollary \ref{mixing_gap} we have
\[
T\left(\psi^*, \epsilon, M \right) = 1+\frac{\log(M/2)-\log(\epsilon)}{-\log \left(1-\text{Gap}(\tilde{P}) \right)},
\]
where $\tilde{P}$ is the Gibbs sampler targeting a bivariate normal distribution with covariance matrix given by $\Fisher^{-1}(\psi^*)$. Since the latter is diagonal, the Gibbs sampler coincides with independent sampling, so that $\text{Gap}(\tilde{P}) = 1$.
\end{proof}

\subsection{Proof of Lemma \ref{sufficient_lemma}}
\begin{proof}
Denote by $\left(\bm{\theta}^{(t)}, \psi^{(t)} \right)_{t \geq 1}$ the Markov chain with kernel $P_J$ defined in \eqref{two_blocks_gibbs_nested}. The Markovianity of the induced sequence $\left(\bT^{(t)}, \psi^{(t)} \right)_{t \geq 1}$ follows by the one of $\left(\psi^{(t)}\right)_{t \geq 1}$, which is well known \citep{D08, R01}. We now show that $\left(\bm{T}^{(t)}, \psi^{(t)} \right)_{t \geq 1}$ admits $\hat{P}_J$ as kernel. 
The conditional distribution of $\left(\bT^{(t)}, \psi^{(t)} \right)$ given $\left(\bT^{(t-1)}, \psi^{(t-1)} \right)$ is given by
\begin{align*}
\L\left(\d\bm{T}^{(t)}, \d\psi^{(t)} \mid \bm{T}^{(t-1)}, \psi^{(t-1)} \right) 
&= 
\L\left(\d\bm{T}^{(t)} \mid \bm{T}^{(t-1)},\psi^{(t-1)} \right)
\L\left(\d\psi^{(t)} \mid \bm{T}^{(t)}, \psi^{(t-1)},\bm{T}^{(t-1)} \right)
\\&= 
\hat{\pi}_J\left(\d\bm{T}^{(t)} \mid \psi^{(t-1)} \right)
\L\left(\d\psi^{(t)} \mid \bm{T}^{(t)}, \psi^{(t-1)} \right),
\end{align*}
where the last equality 
 follows by \eqref{two_blocks_gibbs_nested} and the definition of $\hat{\pi}_J$. 
By the exponential family assumption in \eqref{exponential_family}, $\bm{T}$ is a set of sufficient statistics for $\psi$, so that
\begin{equation}\label{deinitializing}
\pi_J\left(\d\psi \mid \bm{\theta} \right) 
=\L\left(\d\psi \mid \bm{\theta},Y_{1:J}\right)
=\L\left(\d\psi \mid \bT(\bm{\theta}),  Y_{1:J}\right)
=\hat{\pi}_J\left(\d\psi \mid \bm{T}(\bm{\theta}) \right).
\end{equation}
Combining \eqref{two_blocks_gibbs_nested} and \eqref{deinitializing} we have 
\begin{equation}\label{key_equality}
\begin{aligned}
\L\left(\d \psi^{(t)} \mid \bm{T}^{(t)}, \psi^{(t-1)} \right) &= \int
\pi_J\left(\d \psi^{(t)} \mid \bm{\theta} \right)\pi_J\left(\d\bm{\theta} \mid \bm{T}^{(t)}, \psi^{(t-1)} \right)\\
& = \int
\hat{\pi}_J\left(\d \psi^{(t)} \mid \bm{T}(\bm{\theta}) \right)\pi_J\left(\d\bm{\theta} \mid \bm{T}^{(t)}, \psi^{(t-1)} \right)= \hat{\pi}_J\left(\d \psi^{(t)} \mid \bm{T}^{(t)} \right)
\end{aligned}
\end{equation}
since $\bm{T}(\bm{\theta})=\bm{T}^{(t)}$ almost surely under $\pi_J\left(\d\bm{\theta} \mid \bm{T}^{(t)}, \psi^{(t-1)} \right)$. 
Thus we can conclude
\begin{align*}
\L\left(\d\bm{T}^{(t)}, \d\psi^{(t)} \mid \bm{T}^{(t-1)}, \psi^{(t-1)} \right) 
&= \hat{\pi}_J\left(\d\bm{T}^{(t)} \mid \psi^{(t-1)} \right)\hat{\pi}_J\left(\d\psi^{(t)}\mid \bm{T}^{(t)} \right)\\
&=\hat{P}_J\left(\left(\bm{T}^{(t-1)}, \psi^{(t-1)} \right),\left(\d\bm{T}^{(t)}, \d\psi^{(t)} \right)\right) \,,
\end{align*}
as desired. From the above one can easily deduce that $\left(\bm{\theta}^{(t)}, \psi^{(t)} \right)_{t \geq 1}$ and $\left(\bm{T}^{(t)}, \psi^{(t)} \right)_{t \geq 1}$ are \emph{co-deinitializing} as in \cite{R01} and thus, by 
Corollary 2 therein, for every $\mu \in\sP \left(\R^{\ell J}\times \R^D \right)$ we have
\begin{equation}\label{equality_suff}
\lTV \mu P_J^t-\pi_J \rTV = \lTV \nu \hat{P}_J^t-\hat{\pi}_J\rTV,
\end{equation}
where $\nu\in \sP \left(\R^S\times \R^D \right)$ is the push forward of $\mu$ under $(\bm{\theta}, \psi)\mapsto (\bT(\bm{\theta}), \psi)$.  Moreover, by \eqref{N_class}  we have that $\nu \in \sN \left(\hat{\pi}_J, M \right)$ whenever $\mu \in \sN \left(\pi_J, M \right)$. 
It follows that $\sup_{\mu \in \sN \left(\pi_J, M \right)} t^{(J)}_{mix}(\epsilon, \mu)\leq \sup_{\nu \in \sN \left(\hat{\pi}_J, M \right)} \hat{t}^{(J)}_{mix}(\epsilon, \nu)$. For the reverse inequality, fix $\nu \in \sN \left(\hat{\pi}_J, M \right)$ and take $\mu(\d\bm{\theta}, \d \psi) = \int \pi_J\left(\d\bm{\theta} \mid \bm{T}, \psi\right)\nu(\d \bm{T}, \d \psi)$.  
By \eqref{N_class} we have $\mu \in \sN \left(\pi_J, M \right)$ and thus \eqref{equality_suff}. It follows $\sup_{\nu \in \sN \left(\hat{\pi}_J, M \right)} \hat{t}^{(J)}_{mix}(\epsilon, \nu) \leq \sup_{\mu \in \sN \left(\pi_J, M \right)} t^{(J)}_{mix}(\epsilon, \mu)$ as desired.
\end{proof}

\subsection{Proof of Lemma \ref{asymptotic_distribution_psi}}
\begin{proof}
The result follows immediately from Theorem \ref{BvM}, whose assumptions are given exactly by assumption $(B1)-(B3)$, with likelihood $g(y \mid \psi)$.
\end{proof}

\subsection{Proof of Lemma \ref{asymptotic_distribution_T}}
The proof is divided in two main steps: in Section \ref{subs_weak_convergence} the result is proved under the weak metric (Lemma \ref{weak_asymptotic_distribution_T}) and it is extended to the total variation distance in Section \ref{subs_tv_convergence}.

First of all we need two technical lemmas, that we prove for completeness.
\begin{lemma}\label{lemma_modulus}
Let $S$ and $p$ be two positive integers. Then there exists a constant $C = C(S,p)$ such that
\[
|\x|^{p} \leq 1+C\sum_{s = 1}^Sx_s^{2p}
\]
for every $\x \in \R^S$.
\end{lemma}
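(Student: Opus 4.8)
The plan is to establish the inequality by a simple case split on whether $|\x|\le 1$ or $|\x|>1$, finishing the second case with a standard convexity (power-mean) bound. First, when $|\x|\le 1$ we have $|\x|^p\le 1$, so the asserted inequality holds trivially with any $C\ge 0$, since the term $C\sum_{s=1}^S x_s^{2p}$ is nonnegative. Hence only the regime $|\x|>1$ requires work.

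Second, when $|\x|>1$ the base exceeds $1$ and $2p\ge p$, so $|\x|^p\le |\x|^{2p}=\bigl(\sum_{s=1}^S x_s^2\bigr)^p$, and it remains to bound $\bigl(\sum_{s=1}^S x_s^2\bigr)^p$ by a constant multiple of $\sum_{s=1}^S x_s^{2p}$. This is exactly the elementary inequality $\bigl(\sum_{s=1}^S a_s\bigr)^p\le S^{p-1}\sum_{s=1}^S a_s^p$ for nonnegative reals $a_1,\dots,a_S$, which follows from Jensen's inequality applied to the convex map $t\mapsto t^p$ on $[0,\infty)$ (equivalently, from the power-mean inequality $\bigl(\frac1S\sum_s a_s\bigr)^p\le \frac1S\sum_s a_s^p$). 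Taking $a_s=x_s^2$ yields $\bigl(\sum_{s=1}^S x_s^2\bigr)^p\le S^{p-1}\sum_{s=1}^S x_s^{2p}$.

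Finally, combining the two regimes shows the inequality holds with $C=S^{p-1}$: this constant also covers the case $|\x|\le 1$ since $S,p\ge 1$ forces $S^{p-1}\ge 1\ge 0$, and it depends only on $S$ and $p$, as required.

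I do not anticipate any real obstacle here; the statement is elementary and the only mild observation is that truncating at $|\x|=1$ lets one replace the exponent $p$ by $2p$ before invoking the convexity bound. If one wishes to avoid committing to the Euclidean norm, the same argument applies verbatim to any norm on $\R^S$ via equivalence of norms, at the cost of a norm-dependent constant.
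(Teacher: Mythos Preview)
Your proof is correct. The overall structure mirrors the paper's proof: first reduce $|\x|^p$ to a bound involving $|\x|^{2p}=(\sum_s x_s^2)^p$, then control the latter by $\sum_s x_s^{2p}$. The tools you use for each step differ slightly from the paper. For the first step the paper uses the algebraic inequality $(1-|\x|^p)^2\ge 0$ to obtain $|\x|^p\le 1+|\x|^{2p}$ globally, whereas you perform a case split at $|\x|=1$. For the second step the paper expands $(\sum_s x_s^2)^p$ via the multinomial theorem and bounds each cross term by $(\max_s|x_s|)^{2p}\le\sum_s x_s^{2p}$, arriving at $C=\sum_{\bm{k}}\binom{p}{k_1\cdots k_S}=S^p$; your Jensen/power-mean argument is cleaner and yields the slightly sharper constant $C=S^{p-1}$. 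Both routes are entirely elementary and equivalent in spirit.
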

\begin{proof}
Since $(1-|\x|^{p})^2 \geq 0$, we have $|\x|^{p} \leq 1+|\x|^{2p}$. Moreover, by the Multinomial Theorem, we get
\[
|\x|^{2p} = \left(\sum_{s = 1}^Sx_s^2 \right)^p = \sum_{\bm{k} \in \mathbb{P}}\binom{p}{k_1\,\dots\, k_S}\prod_{s = 1}^Sx_s^{2k_s},
\]
where $\mathbb{P} = \left\{\bm{k} = (k_1, \dots, k_S) \, : \, k_s \text{ positive integer, }\sum_{s = 1}^Sk_s = p \right\}$. Since
\[
\prod_{s = 1}^Sx_s^{2k_s} \leq \left(\max_{s}|x_s| \right)^{2p} \leq \sum_{s = 1}^Sx_s^{2p},
\]
the result follows by choosing $C = \sum_{\bm{k} \in \mathbb{P}}\binom{p}{k_1 \,\dots\, k_S}$.
\end{proof}
\begin{lemma}\label{convergence_delta}
Under assumption $(B3)$, the random variables $\Delta_J = \left(\Delta_{J,1}, \dots, \Delta_{J, D}  \right)$ defined in \eqref{tilde_psi} are such that for every $\beta > 0$ we have
\[
\frac{1}{J^\beta}\Delta_{J, d} \quad \to \quad 0,
\]
$Q_{\psi^*}^{(\infty)}$-almost surely as $J \to \infty$ for every $d=1,\dots,D$.
 \end{lemma}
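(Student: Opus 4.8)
The plan is to recognise $\Delta_{J,d}$ as a normalised sum of i.i.d.\ random variables and then apply a standard weighted strong law of large numbers. For $j\geq1$ and $d\in\{1,\dots,D\}$, let $Z_{j,d}$ denote the $d$-th component of $\mathcal{I}^{-1}(\psi^*)\nabla\log g(Y_j\mid\psi)\bigr\rvert_{\psi=\psi^*}$, so that by \eqref{tilde_psi} we have $\Delta_{J,d}=J^{-1/2}\sum_{j=1}^JZ_{j,d}$ and therefore $J^{-\beta}\Delta_{J,d}=b_J^{-1}\sum_{j=1}^JZ_{j,d}$ with $b_J=J^{\beta+1/2}$. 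Since $Y_1,Y_2,\dots\simiid Q_{\psi^*}$ by $(B1)$, for each fixed $d$ the summands $(Z_{j,d})_{j\geq1}$ are i.i.d.; note also that $\nabla\log g(Y_j\mid\psi^*)$ is well defined $Q_{\psi^*}$-almost surely, since $Y_j$ lies in $\{g(\cdot\mid\psi^*)>0\}$ a.s.

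The key input, and the step I expect to require the most care, is to establish that this score is centred with finite second moment, i.e.\ $E_{Y_j}[Z_{j,d}]=0$ and $\mathrm{Var}(Z_{j,d})<\infty$. Finiteness is immediate from $(B3)$: continuity and non-singularity of $\mathcal{I}(\psi)$ force $\mathcal{I}(\psi^*)$ to be a finite, fixed matrix, so $E_{Y_j}\bigl[\lvert\nabla\log g(Y_j\mid\psi^*)\rvert^2\bigr]=\mathrm{tr}\,\mathcal{I}(\psi^*)<\infty$ and hence $\mathrm{Var}(Z_{j,d})<\infty$ since $\mathcal{I}^{-1}(\psi^*)$ is fixed. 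For the mean-zero property I would invoke the fact that continuous differentiability of $\psi\mapsto\sqrt{g(x\mid\psi)}$ from $(B1)$ together with continuity of $\mathcal{I}(\psi)$ from $(B3)$ makes the family $\{g(\cdot\mid\psi)\}$ differentiable in quadratic mean at $\psi^*$ (this is Lemma 7.6 in \cite{V00}; see also the remark after Theorem \ref{BvM}), and differentiability in quadratic mean implies that the score satisfies $E_{Y_j}[\nabla\log g(Y_j\mid\psi^*)]=0$ (with covariance $\mathcal{I}(\psi^*)$), because the densities integrate to $1$. This centring is essential: were the score to have a nonzero mean $c$, one would have $\Delta_{J,d}\sim\sqrt{J}\,c$ and $J^{-\beta}\Delta_{J,d}$ would diverge for $0<\beta<1/2$.

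Given centred, square-integrable, i.i.d.\ summands, the conclusion is routine. Since $\beta>0$ we have $2\beta+1>1$, so $\sum_{j\geq1}\mathrm{Var}(Z_{j,d}/b_j)=\mathrm{Var}(Z_{1,d})\sum_{j\geq1}j^{-(2\beta+1)}<\infty$; Kolmogorov's one-series theorem then gives that $\sum_{j\geq1}Z_{j,d}/b_j$ converges $Q_{\psi^*}^{(\infty)}$-almost surely, and since $b_J\uparrow\infty$ Kronecker's lemma yields $b_J^{-1}\sum_{j=1}^JZ_{j,d}\to0$, i.e.\ $J^{-\beta}\Delta_{J,d}\to0$, almost surely. (Alternatively, the law of the iterated logarithm gives $\sum_{j=1}^JZ_{j,d}=O(\sqrt{J\log\log J})$ a.s., whence $J^{-\beta}\Delta_{J,d}\to0$ at once for every $\beta>0$.) Running this for each $d=1,\dots,D$ and over a countable dense set of exponents $\beta>0$, then using monotonicity of $\beta\mapsto J^{-\beta}$ to fill in the remaining exponents, produces a single full-measure event on which the claim holds for all $d$ and all $\beta>0$.
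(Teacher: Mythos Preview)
Your proposal is correct and follows essentially the same approach as the paper: both identify $\Delta_{J,d}$ as $J^{-1/2}$ times a sum of i.i.d.\ terms with zero mean and finite variance (guaranteed by $(B3)$), and then invoke a standard almost-sure convergence argument based on the second moment. The only difference is cosmetic: the paper uses Chebyshev's inequality to get $P(|J^{-\beta}\Delta_{J,d}|>\epsilon)\leq \mathrm{Var}(X_{1,d})/(\epsilon^2 J^{1+2\beta})$, which is summable in $J$, and then Borel--Cantelli, whereas you use Kolmogorov's one-series theorem plus Kronecker's lemma (or the LIL); your justification of the mean-zero property via differentiability in quadratic mean is more explicit than the paper's, and your final uniformisation over $\beta$ and $d$ is unnecessary for the lemma as stated but harmless.
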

\begin{proof}
Recall that
\[
\Delta_{J, d} = \frac{1}{\sqrt{J}}\sum_{j = 1}^J\left[\Fisher^{-1}(\psi^*) \nabla \log g(Y_j \mid \psi^*)\right]_d =: \frac{1}{\sqrt{J}}\sum_{j = 1}^JX_{j,d}
\]
and $\Fisher^{-1}(\psi^*) \partial_{\psi_d} \log g(Y_j \mid \psi^*)$ has zero mean and finite variance, by $(B3)$. Therefore, by Chebychev inequality
\[
P \left(\left \lvert \frac{1}{J^\beta}\Delta_{J, d} \right\rvert > \epsilon \right) \leq \frac{\Var \left(X_{1,d} \right)}{\epsilon^2J^{1+2\beta}},
\]
for every $\epsilon > 0$. This implies
\[
\sum_{J = 1}^\infty P \left(\left \lvert \frac{1}{J^\beta}\Delta_{J, d} \right\rvert > \epsilon \right) \leq \sum_{J = 1}^\infty \frac{\Var \left(X_{1,d} \right)}{\epsilon^2J^{1+2\beta}} < \infty,
\]
and the result follows by Borel-Cantelli Lemma.
\end{proof}
\subsubsection{Weak convergence}\label{subs_weak_convergence}
In order to ease the following exposition, denote
\begin{equation}\label{psi_J}
\psi^{(J)}:=\psi^*+\frac{\tilde{\psi}+\Delta_J}{\sqrt{J}}, \quad J\geq 1\,.
\end{equation}
The next lemma proves convergence of $\tilde{\bm{T}}$ using the weak metric, denoted by $\lTV \cdot \right \|_W$.
\begin{lemma}\label{weak_asymptotic_distribution_T}
Define $\tilde{\psi}$ and $\tilde{\bm{T}}$ as in \eqref{tilde_psi} and \eqref{tilde_T}, respectively. Under assumptions $(B1)-(B4)$, for every $\tilde{\psi}\in\R^D$ it holds
\begin{equation}\label{eq:weak_asymptotic_distribution_T}
\left\| \L(\d\tilde{\bm{T}} \mid Y_{1:J}, \tilde{\psi})-N\left(C(\psi^*)\tilde{\psi}, V(\psi^*) \right) \right\|_{W} \to 0,
\end{equation}
$Q_{\psi^*}^{(\infty)}$-almost surely as $J \to \infty$.
\end{lemma}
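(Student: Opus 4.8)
The plan is to reduce the statement to a Lindeberg--Feller central limit theorem for a triangular array of independent (but not identically distributed) summands. First I would write out $\tilde{\bm{T}}$ explicitly: conditional on $\tilde{\psi}$ and $Y_{1:J}$, the group-specific parameters $\theta_1,\dots,\theta_J$ are independent with $\theta_j \sim p(\cdot \mid \psi^{(J)}, Y_j) \propto f(Y_j\mid\theta_j)p(\theta_j\mid\psi^{(J)})$, where $\psi^{(J)}$ is as in \eqref{psi_J}. Hence
\[
\tilde{\bm{T}} = \frac{1}{\sqrt{J}}\sum_{j=1}^J\left[T(\theta_j) - M^{(1)}(\psi^*\mid Y_j)\right] - C(\psi^*)\Delta_J
\]
is a sum of independent terms $X_{J,j} := \tfrac{1}{\sqrt{J}}(T(\theta_j) - M^{(1)}(\psi^*\mid Y_j))$ plus the deterministic (given the data) shift $-C(\psi^*)\Delta_J$. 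I would compute the conditional mean of $X_{J,j}$, which is $\tfrac{1}{\sqrt J}(M^{(1)}(\psi^{(J)}\mid Y_j) - M^{(1)}(\psi^*\mid Y_j))$, and Taylor-expand $M^{(1)}(\cdot\mid Y_j)$ around $\psi^*$. The first-order term gives $\tfrac{1}{J}\sum_j \nabla_\psi M^{(1)}(\psi^*\mid Y_j)(\tilde\psi+\Delta_J)$; by the law of large numbers (using the integrability bounds in $(B4)$ to justify a uniform version, via e.g. a dominated-convergence / Glivenko--Cantelli-type argument on $B_{\delta_4}$) this converges $Q_{\psi^*}^{(\infty)}$-a.s. to $C(\psi^*)(\tilde\psi+\Delta_J)$, which combined with the $-C(\psi^*)\Delta_J$ shift yields the claimed mean $C(\psi^*)\tilde\psi$. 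The Taylor remainder term is $O(|\tilde\psi+\Delta_J|^2/J)$ uniformly, controlled by the second-derivative bound in $(B4)$ together with Lemma \ref{convergence_delta} (which gives $\Delta_J/J^\beta\to0$ a.s., here with $\beta$ close to $1/2$ so $\Delta_J^2/J\to0$).

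Next I would handle the conditional covariance: $\operatorname{Var}(\tilde{\bm{T}}\mid Y_{1:J},\tilde\psi) = \tfrac{1}{J}\sum_j \operatorname{Cov}(T(\theta_j)\mid Y_j,\psi^{(J)})$. Again by a uniform LLN on $B_{\delta_4}$ (continuity of $\psi\mapsto\operatorname{Cov}(T(\theta_j)\mid Y_j,\psi)$ in $L^1(Q_{\psi^*})$, which follows from the moment and derivative bounds in $(B4)$ controlling $M^{(1)}_s$, $M^{(1)}_{s,s'}$ and products thereof) plus $\psi^{(J)}\to\psi^*$, this converges a.s. to $V(\psi^*)$, which is non-singular by $(B4)$. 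The remaining ingredient is the Lindeberg condition for the array $\{X_{J,j}\}$: I would verify it via a Lyapunov-type bound, using the assumed finiteness of $M^{(p)}_s$ for $p$ up to $6$ (so in particular fourth moments of $T(\theta_j)$ are integrable in $Q_{\psi^*}$), which gives $\tfrac{1}{J^2}\sum_j E[|X_{J,j}'|^4]\to0$ where $X_{J,j}'$ is the centered summand; Lemma \ref{lemma_modulus} is the convenient device to bound powers of $|\cdot|$ by sums of coordinate powers here. Invoking the Lindeberg--Feller CLT then gives the desired weak convergence to $N(C(\psi^*)\tilde\psi, V(\psi^*))$.

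The main obstacle, I expect, is not the CLT itself but the fact that the summands depend on $J$ through $\psi^{(J)}$, so I cannot directly apply a CLT for a fixed sequence of independent variables: I need the convergence of means and covariances to be robust to the $O(1/\sqrt J)$ perturbation of $\psi$ and, crucially, I need it to hold \emph{almost surely} in $Q_{\psi^*}^{(\infty)}$ rather than merely in probability. This is where assumption $(B4)$ does the real work: the $E_{Y_j}$-integrability of the relevant derivatives lets me pass from pointwise convergence (LLN along the data stream) to uniform-over-$\psi\in B_{\delta_4}$ convergence via a standard bracketing / stochastic-equicontinuity argument, and the a.s. statements for $\Delta_J$ come from Lemma \ref{convergence_delta}. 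One technical care point is that $\psi^{(J)}$ is random (it contains $\Delta_J$), so I should first condition on a full-measure event on which $\Delta_J/\sqrt J\to0$ and all the relevant LLNs hold, and then run the array CLT on that event with $\tilde\psi$ fixed. Upgrading \eqref{eq:weak_asymptotic_distribution_T} from the weak metric to total variation is deferred to Section \ref{subs_tv_convergence} and uses $(B5)$--$(B6)$ via Fourier-inversion arguments, so it is outside the scope of this lemma.
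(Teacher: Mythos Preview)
Your proposal is correct and matches the paper's approach: a Lyapunov-type CLT for the triangular array, Taylor expansion of $M^{(1)}(\cdot\mid Y_j)$ around $\psi^*$ for the mean (remainder controlled by the second-derivative bound in $(B4)$ together with Lemma~\ref{convergence_delta}), and analogous control for the covariance and the Lyapunov condition. One small alignment point: the paper verifies Lyapunov at \emph{third}-moment order, invoking Lemma~\ref{lemma_modulus} to bound $|x|^3\le 1+C\sum_s x_s^6$ so as to land exactly on the $\partial_\psi M^{(6)}$ bound that $(B4)$ actually supplies, and it handles the $\psi^{(J)}$-dependence of the covariance by another explicit Taylor expansion with integral remainder rather than the uniform-LLN/equicontinuity argument you sketch---your fourth-moment route works but would still have to pass through $M^{(6)}$ since $(B4)$ gives no direct control on $\partial_\psi M^{(4)}$.
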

\begin{proof}
For ease of notation, denote
\[
\mu = C(\psi^*)\tilde{\psi} \quad \text{and} \quad \Xi := V(\psi^*).
\]
By definition of $M^{(p)}_s$, we have
\[
E\left[T^p_s(\theta_j) \mid Y_j, \psi^{(J)} \right]=M_s^{(p)}\left(\psi^{(J)} \mid Y_j \right).
\]
Conditional on $\tilde{\psi}$, the group specific statistics $T_s(\theta_j)$ are independent across $j = 1, \dots, J$. Thus, by Lyapunov version of Central Limit Theorem, in order to obtain \eqref{eq:weak_asymptotic_distribution_T} it suffices to show
\begin{align}
\frac{1}{\sqrt{J}}\sum_{j = 1}^J\left[M^{(1)}\left(\psi^{(J)} \mid Y_j \right)-M^{(1)}\left(\psi^* \mid Y_j \right) \right]-C(\psi^*)\Delta_J \quad &\to \quad \mu
\label{first_to_prove}\\
\frac{1}{J}\sum_{j = 1}^J\text{Cov}\left(T_s(\theta_j), T_{s'}(\theta_j) \mid Y_j, \psi^{(J)} \right) \quad &\to \quad \Xi_{s,s'}\label{second_to_prove}\\
\frac{1}{J^{3/2}}\sum_{j = 1}^JE_{Y_j}\left[\left \lvert T(\theta_j)-M^{(1)}\left(\psi^* \mid Y_j \right) \right \rvert^3 \mid Y_j, \psi^{(J)} \right]\quad &\to \quad 0,\label{third_to_prove}
\end{align}
$Q_{\psi^*}^{(\infty)}$-almost surely as $J \to \infty$, with $s, s' = 1, \dots, S$.
We prove the three above results sequentially below, which concludes the proof of \eqref{eq:weak_asymptotic_distribution_T}.
\end{proof}
 \begin{proof}[Proof of \eqref{first_to_prove}]
For any $s = 1, \dots, S$, by \eqref{psi_J} and the multivariate Taylor formula it holds
\[
M_s^{(1)}\left(\psi^{(J)} \mid Y_j \right)-M_s^{(1)}\left(\psi^* \mid Y_j \right) = \sum_{d = 1}^D\frac{\tilde{\psi}_d+\Delta_{J,d}}{\sqrt{J}} \partial_{\psi_d}M_s^{(1)}\left(\psi^* \mid Y_j \right) + R_2(Y_j),
\]
where
\[
R_2(Y_j) = \sum_{d, d' = 1}^D\frac{(\tilde{\psi}_d+\Delta_{J,d})(\tilde{\psi}_{d'}+\Delta_{J,d'})}{J}\int_0^1(1-t)\partial_{\psi_d}\partial_{\psi_{d'}}M_s^{(1)}\left(\psi^*+t\frac{\tilde{\psi}+\Delta_J}{\sqrt{J}} \mid Y_j \right) \, \d t.
\]
Therefore
\begin{equation}\label{main_first_to_prove}
\begin{aligned}
\frac{1}{\sqrt{J}}\sum_{j = 1}^J&\left[M^{(1)}_s\left(\psi^{(J)} \mid Y_j \right)-M^{(1)}_s\left(\psi^* \mid Y_j \right) \right] =\\
&= \sum_{d = 1}^D(\tilde{\psi}_d+\Delta_{J,d})\frac{1}{J}\sum_{j = 1}^J \partial_{\psi_d}M_s^{(1)}\left(\psi^* \mid Y_j \right)+\frac{1}{\sqrt{J}}\sum_{j = 1}^JR_2(Y_j)\,,
\end{aligned}
\end{equation}
where
\begin{align}
&\frac{1}{\sqrt{J}}\sum_{j = 1}^JR_2(Y_j) 
=\nonumber\\& 
\sum_{d, d' = 1}^D\frac{(\tilde{\psi}_d+\Delta_{J,d})(\tilde{\psi}_{d'}+\Delta_{J,d'})}{J^{1/4}}\frac{1}{J^{5/4}}\sum_{j = 1}^J\int_0^1(1-t)\partial_{\psi_d}\partial_{\psi_{d'}}M_s^{(1)}\left(\psi^*+t\frac{\tilde{\psi}+\Delta_J}{\sqrt{J}} \mid Y_j \right) \, \d t.\label{remainder_first_to_prove}
\end{align}
As regards \eqref{remainder_first_to_prove}, for every $d,d' = 1, \dots, D$ by Lemma \ref{convergence_delta} it holds
\begin{equation}\label{first_to_prove_to_zero}
\frac{(\tilde{\psi}_d+\Delta_{J,d})(\tilde{\psi}_{d'}+\Delta_{J,d'})}{J^{1/4}} 
=
\frac{\tilde{\psi}_d\tilde{\psi}_{d'}}{J^{1/4}}
+
\tilde{\psi}_d
\frac{\Delta_{J,d'}}{J^{1/4}} 
+
\tilde{\psi}_{d'}
\frac{\Delta_{J,d}}{J^{1/4}} 
+
\frac{\Delta_{J,d}}{J^{1/8}} 
\frac{\Delta_{J,d'}}{J^{1/8}} 
\quad \to \quad 0,
\end{equation}
 $Q_{\psi^*}^{(\infty)}$-almost surely as $J \to \infty$. Moreover, with the change of variables $x = t/J^{1/4}$ we have
\[
\begin{aligned}
&\biggl\lvert \frac{1}{J^{5/4}}\sum_{j = 1}^J\int_0^1(1-t)\partial_{\psi_d}\partial_{\psi_{d'}}M_s^{(1)}\left(\psi^*+t\frac{\tilde{\psi}+\Delta_J}{\sqrt{J}} \mid Y_j \right) \, \d t \biggr\rvert \\
&\leq \int_0^{J^{1/4}}\frac{1}{J}\sum_{j = 1}^J\left \lvert \partial_{\psi_d}\partial_{\psi_{d'}}M_s^{(1)}\left(\psi^*+x\frac{\tilde{\psi}+\Delta_J}{J^{1/4}} \mid Y_j \right) \right \rvert \, \d x\\
&\leq \int_{-J^{1/4}}^{J^{1/4}}\frac{1}{J}\sum_{j = 1}^J\left \lvert \partial_{\psi_d}\partial_{\psi_{d'}}M_s^{(1)}\left(\psi^*+x \mid Y_j \right) \right \rvert \, \d x,
\end{aligned}
\]
where the last inequality follows from $\left \lvert \frac{\tilde{\psi}+\Delta_J}{J^{1/4}} \right \rvert \leq 1$ for $J$ high enough, thanks to Lemma \ref{convergence_delta}. Moreover, $\frac{1}{J^{1/4}} < \delta_4$ for $J$ high enough, so that 
\[
\begin{aligned}
&\biggl\lvert \frac{1}{J^{5/4}}\sum_{j = 1}^J\int_0^1(1-t)\partial_{\psi_d}\partial_{\psi_{d'}}M_s^{(1)}\left(\psi^*+t\frac{\tilde{\psi}+\Delta_J}{\sqrt{J}} \mid Y_j \right) \, \d t \biggr\rvert \\
&\leq \int_{\delta_4}^{\delta_4}\frac{1}{J}\sum_{j = 1}^J\left \lvert \partial_{\psi_d}\partial_{\psi_{d'}}M_s^{(1)}\left(\psi^*+x \mid Y_j \right) \right \rvert \, \d x\\
&= \frac{1}{J}\sum_{j = 1}^J\int_{\delta_4}^{\delta_4}\left \lvert \partial_{\psi_d}\partial_{\psi_{d'}}M_s^{(1)}\left(\psi^*+x \mid Y_j \right) \right \rvert \, \d x.
\end{aligned}
\]
By the Law of Large Numbers and $(B4)$ it holds
\begin{align}
\frac{1}{J}\sum_{j = 1}^J&\int_{\delta_4}^{\delta_4}\left \lvert \partial_{\psi_d}\partial_{\psi_{d'}}M_s^{(1)}\left(\psi^*+x \mid Y_j \right) \right \rvert \, \d x \nonumber\\
&\to  \int_{-\delta_4}^{\delta_4}E\left[\left \lvert \partial_{\psi_d}\partial_{\psi_{d'}}M_s^{(1)}\left(\psi^*+x \mid Y_j \right) \right \rvert \right] \, \d x < 2C\delta_4.\label{first_to_prove_bounded}
\end{align}
By combining \eqref{first_to_prove_to_zero} and \eqref{first_to_prove_bounded}, we can conclude
\[
\left \lvert \frac{1}{\sqrt{J}}\sum_{j = 1}^JR_2(Y_j) \right\rvert  \quad \to \quad 0,
\]
$Q_{\psi^*}^{(\infty)}$-almost surely as $J \to \infty$. As regards \eqref{main_first_to_prove}, by the Law of Large Numbers we have
\[
\frac{1}{J}\sum_{j = 1}^J \partial_{\psi_d}M_s^{(1)}\left(\psi^* \mid Y_j \right) \quad \to \quad E \left[\partial_{\psi_d}M_s^{(1)}\left(\psi^* \mid Y_j \right) \right] = C_{s,d}(\psi^*),
\]
that is finite thanks to $(B4)$. Therefore, we can conclude that for any $s = 1, \dots, S$ we have
\[
M_s^{(1)}\left(\psi^{(J)} \mid Y_j \right)-M_s^{(1)}\left(\psi^* \mid Y_j \right) -\sum_{d = 1}^DC_{s,d}(\psi^*)\Delta_{J,d} \quad \to \quad \sum_{d = 1}^DC_{s,d}(\psi^*)\tilde{\psi_d},
\]
$Q_{\psi^*}^{(\infty)}$-almost surely as $J \to \infty$ and thus \eqref{first_to_prove} holds. 
\end{proof}

\begin{proof}[Proof of \eqref{second_to_prove}]
For every $s, s' = 1, \dots, S$ by multivariate Taylor formula it holds
\[
\text{Cov}\left(T_s(\theta_j), T_{s'}(\theta_j) \mid Y_j, \psi^{(J)} \right) = \text{Cov}\left(T_s(\theta_j), T_{s'}(\theta_j) \mid Y_j, \psi^* \right) + R_{1,cov}(Y_j),
\]
where
\[
R_{1,cov}(Y_j) = \sum_{d = 1}^D\frac{\tilde{\psi}_d+\Delta_{J,d}}{\sqrt{J}}\int_0^1(1-t)\partial_{\psi_d}\text{Cov}\left(T_s(\theta_j), T_{s'}(\theta_j) \mid Y_j, \psi^*+t\frac{\tilde{\psi}+\Delta_J}{\sqrt{J}} \right)\, \d t.
\]
Notice that
\[
\frac{1}{J}\sum_{j = 1}^JR_{1,cov}(Y_j) = \sum_{d = 1}^D\frac{\tilde{\psi}_d+\Delta_{J,d}}{J^{1/4}}\int_0^1(1-t)\frac{1}{J^{5/4}}\sum_{j = 1}^J\partial_{\psi_d}\text{Cov}\left(T_s(\theta_j), T_{s'}(\theta_j) \mid Y_j, \psi^*+t\frac{\tilde{\psi}+\Delta_J}{\sqrt{J}} \right) \, \d t.
\]
With the same arguments of before we have $\frac{\tilde{\psi}_d+\Delta_{J,d}}{J^{1/4}} \to 0$ and
\[
\begin{aligned}
\biggl\lvert\int_0^1(1-t)&\frac{1}{J^{5/4}}\sum_{j = 1}^J\partial_{\psi_d}\text{Cov}\left(T_s(\theta_j), T_{s'}(\theta_j) \mid Y_j, \psi^*+t\frac{\tilde{\psi}+\Delta_J}{\sqrt{J}} \right) \, \d t \biggr\rvert\\
& \leq \frac{1}{J}\sum_{j = 1}^J \int_{-\delta_4}^{\delta_4} \left \lvert \partial_{\psi_d}\text{Cov}\left(T_s(\theta_j), T_{s'}(\theta_j) \mid Y_j, \psi^*+x \right) \right \rvert \, \d x \\
& \to \quad \int_{-\delta_4}^{\delta_4}E \left[\left \lvert \partial_{\psi_d}\text{Cov}\left(T_s(\theta_j), T_{s'}(\theta_j) \mid Y_j, \psi^*+x \right) \right \rvert \right] \, \d x
\end{aligned}
\]
$Q_{\psi^*}^{(\infty)}$-almost surely as $J \to \infty$. Notice that by $(B4)$ we have
\[
\begin{aligned}
E& \left[\left \lvert \partial_{\psi_d}\text{Cov}\left(T_s(\theta_j), T_{s'}(\theta_j) \mid Y_j, \psi^*+x \right) \right \rvert \right] \\
&\leq E \left[\left \lvert \partial_{\psi_d}M^{(1)}_{s,s'}\left(\psi^*+x \mid Y_j \right) \right \rvert \right]+E \left[\left \lvert \partial_{\psi_d}\left\{M^{(1)}_{s}\left(\psi^*+x \mid Y_j \right)M^{(1)}_{s'}\left(\psi^*+x \mid Y_j \right)\right\} \right \rvert \right] \\
&\leq 2C,
\end{aligned}
\] 
for every $x \in (-\delta_4, \delta_4)$ . Therefore, we can conclude
\[
\left \lvert\frac{1}{J}\sum_{j = 1}^JR_{1,cov}(Y_j) \right \rvert \to 0,
\]
$Q_{\psi^*}^{(\infty)}$-almost surely as $J \to \infty$. Thus, by the Law of Large Numbers we have
\[
\frac{1}{J}\sum_{j = 1}^J \text{Cov}\left(T_s(\theta_j), T_{s'}(\theta_j) \mid Y_j, \psi^* \right) \quad \to \quad E\left[ \text{Cov}\left(T_s(\theta_j), T_{s'}(\theta_j) \mid Y_j, \psi^* \right)\right],
\]
$Q_{\psi^*}^{(\infty)}$-almost surely as $J \to \infty$. 
\end{proof}
\begin{proof}[Proof of \eqref{third_to_prove}]
By Lemma \ref{lemma_modulus} we have
\[
\begin{aligned}
\frac{1}{J^{3/2}}\sum_{j = 1}^J&E_{Y_j}\left[\left \lvert T(\theta_j)-M^{(1)}\left(\psi^* \mid Y_j \right) \right \rvert^3 \mid Y_j, \psi^{(J)} \right] \\
&\leq \frac{1}{\sqrt{J}}+C\frac{1}{J^{3/2}}\sum_{s = 1}^S\sum_{j = 1}^JM^{(6)}\left(\psi^{(J)} \mid Y_j \right)+ C\frac{1}{J^{3/2}}\sum_{s = 1}^S\sum_{j = 1}^J\left[M^{(1)}\left(\psi^* \mid Y_j \right) \right]^6.
\end{aligned}
\]
By Jensen inequality $\left[M^{(1)}\left(\psi^* \mid Y_j \right) \right]^6 \leq M^{(6)}\left(\psi^* \mid Y_j \right)$ and by the Law of Large Numbers
\[
\frac{1}{J}\sum_{s = 1}^S\sum_{j = 1}^J M^{(6)}\left(\psi^* \mid Y_j \right) \quad \to \quad \sum_{s = 1}^SE\left[ T^6_s(\theta_j) \mid \psi^*\right] < \infty
\]
$Q_{\psi^*}^{(\infty)}$-almost surely as $J \to \infty$. Thus to prove \eqref{third_to_prove} it suffices to show
\[
\frac{1}{J^{3/2}}\sum_{s = 1}^S\sum_{j = 1}^JM^{(6)}\left(\psi^{(J)} \mid Y_j \right) \quad \to \quad 0
\]
$Q_{\psi^*}^{(\infty)}$-almost surely as $J \to \infty$. For every $s = 1, \dots, S$ by multivariate Taylor formula it holds
\[
M_s^{(6)}\left(\psi^{(J)} \mid Y_j \right) = M_s^{(6)}\left(\psi^* \mid Y_j \right) + R_{1,6}(Y_j),
\]
where
\[
R_{1,6}(Y_j) = \sum_{d = 1}^D\frac{\tilde{\psi}_d+\Delta_{J,d}}{\sqrt{J}}\int_0^1(1-t)\partial_{\psi_d}M_s^{(6)}\left(\psi^*+t\frac{\tilde{\psi}+\Delta_J}{\sqrt{J}} \mid Y_j \right) \, \d t.
\]
Notice that
\[
\frac{1}{J}\sum_{j = 1}^JR_{1,6}(Y_j) = \sum_{d = 1}^D\frac{\tilde{\psi}_d+\Delta_{J,d}}{J^{1/4}}\int_0^1(1-t)\frac{1}{J^{5/4}}\sum_{j = 1}^J\partial_{\psi_d}M_s^{(6)}\left(\psi^*+t\frac{\tilde{\psi}+\Delta_J}{\sqrt{J}} \mid Y_j \right) \, \d t,
\]
and with the same arguments of before we have $\frac{\tilde{\psi}_d+\Delta_{J,d}}{J^{1/4}} \to 0$ $Q_{\psi^*}^{(\infty)}$-almost surely as $J \to \infty$ and
\[
\begin{aligned}
\biggl\lvert\int_0^1(1-t)&\frac{1}{J^{5/4}}\sum_{j = 1}^J\partial_{\psi_d}M_s^{(6)}\left(\psi^*+t\frac{\tilde{\psi}+\Delta_J}{\sqrt{J}} \mid Y_j \right) \, \d t \biggr\rvert\\
& \leq \frac{1}{J}\sum_{j = 1}^J \int_{-\delta_4}^{\delta_4} \left \lvert \partial_{\psi_d}M_s^{(6)}(\psi^*+x \mid Y_j) \right \rvert \, \d x \\
& \to \int_{-\delta_4}^{\delta_4}E \left[\left \lvert \partial_{\psi_d}M_s^{(6)}(\psi^*+x \mid Y_j) \right \rvert \right] \, \d x < 2\delta_4C,
\end{aligned}
\]
by $(B4)$. Therefore, we can conclude
\[
\left \lvert\frac{1}{J}\sum_{j = 1}^JR_{1,6}(Y_j) \right \rvert \to 0,
\]
$Q_{\psi^*}^{(\infty)}$-almost surely as $J \to \infty$. Moreover, by the Law of Large Numbers we have
\[
\frac{1}{J}\sum_{j = 1}^JM_s^{(6)}\left(\psi^* \mid Y_j \right) \quad \to \quad E\left[M_s^{(6)}\left(\psi^* \mid Y_j \right) \right] = E\left[T_s^6(\theta_j) \mid \psi^*\right],
\]
by $(B1)$ and the definition of conditional expectation. Therefore
\[
\frac{1}{J^{3/2}}\sum_{j = 1}^JM_s^{(6)}\left(\psi^*+\frac{\tilde{\psi}_d+\Delta_{J,d}}{\sqrt{J}} \mid Y_j \right) \to 0,
\]
from which \eqref{third_to_prove} follows.
\end{proof}
\subsubsection{Total variation convergence}\label{subs_tv_convergence}
We extend the weak convergence to total variation using characteristic functions, in particular exploiting the conditions in Lemma \ref{lemma:conditions_TV_conv}.
Here we first state some other technical lemmas that will be required later on. 
\begin{lemma}\label{Taylor}
Let $X$ be a $\R^S$-valued random vector with zero mean and characteristic function $\varphi_X(u)$. Then for every $u\in\R^S$
\begin{align*}
\varphi_X(u)=& 1-\frac{1}{2}E\left[(u^\top X)^2\right]+\frac{\theta}{6}E\left[|u^\top X|^3\right],
\end{align*}
for some $\theta=\theta(u)\in\mathbb{C}$ such that $|\theta| \leq 1$.
\end{lemma}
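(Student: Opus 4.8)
The plan is to reduce the multivariate identity to the classical second‑order Taylor expansion of a scalar characteristic function. First I would fix $u\in\mathbb{R}^S$ and introduce the real‑valued random variable $Z:=u^\top X$; since $X$ has zero mean, linearity of expectation gives $E[Z]=0$, and by definition $\varphi_X(u)=E[e^{iu^\top X}]=E[e^{iZ}]$. Hence the whole statement follows once one shows $E[e^{iZ}]=1-\tfrac12 E[Z^2]+\tfrac{\theta}{6}E[|Z|^3]$ for some $\theta\in\mathbb{C}$ with $|\theta|\le 1$. (I would note in passing that if $E[|Z|^3]=\infty$ there is nothing to prove, and that in the intended application of Lemma \ref{Taylor} the relevant moments are finite.)

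Next I would invoke the elementary pointwise bound
\[
\left| e^{it} - \Big(1 + it - \tfrac{t^2}{2}\Big)\right| \le \frac{|t|^3}{6}, \qquad t \in \mathbb{R},
\]
obtained by writing the integral form of the Taylor remainder for $g(t)=e^{it}$, namely $g(t)=1+it-\tfrac{t^2}{2}+\int_0^t \tfrac{(t-s)^2}{2}g'''(s)\,ds$, and using $|g'''(s)|=|{-}ie^{is}|=1$. Evaluating at $t=Z$, taking expectations, using $E[Z]=0$, and bounding the modulus of an expectation by the expectation of the modulus, I would obtain
\[
\left| E[e^{iZ}] - \Big(1 - \tfrac{E[Z^2]}{2}\Big)\right| \le \frac{E[|Z|^3]}{6}.
\]

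Finally, I would define $\theta$ so as to upgrade this inequality to the desired equality: when $E[|Z|^3]>0$ set $\theta:=\tfrac{6}{E[|Z|^3]}\big(E[e^{iZ}]-1+\tfrac{E[Z^2]}{2}\big)$, which has modulus at most $1$ by the last display and reproduces the claimed formula after rearranging; when $E[|Z|^3]=0$ the variable $Z$ is almost surely zero, both sides reduce to $1$, and any $\theta$ works. I do not expect a genuine obstacle here: the only nontrivial input is the standard scalar Taylor estimate for $e^{it}$, and the remaining work is the reduction $u^\top X\mapsto Z$ together with the routine bookkeeping of the degenerate case $E[|Z|^3]=0$ (and the trivial case of infinite third moment).
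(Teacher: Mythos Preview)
Your proposal is correct and follows essentially the same approach as the paper: reduce to the scalar variable $Z=u^\top X$, use the second-order Taylor bound $|e^{it}-1-it+t^2/2|\le |t|^3/6$, take expectations, and absorb the remainder into a complex $\theta$ with $|\theta|\le 1$. The only cosmetic difference is that the paper writes the remainder in Lagrange form $\tfrac{x^3}{6}e^{iz}$ before taking expectations, while you use the integral form and then define $\theta$ explicitly (and you handle the degenerate cases $E[|Z|^3]\in\{0,\infty\}$ more carefully).
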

\begin{proof}
Taylor formula for the complex exponential reads
\[
e^{ix} = 1+ix-\frac{x^2}{2}+\frac{x^3}{6}e^{iz},
\]
where $z\in\mathbb{C}$ is such that $0 \leq |z| \leq |x|$. By $x = u^\top X$, we have
\[
\varphi_X(u) = 1+iE \left[u^\top X \right]-\frac{1}{2}E \left[\left(u^\top X\right)^2 \right]+\frac{\theta}{6}E \left[\left \lvert u^\top X\right \rvert^3 \right],
\]
with $\theta = e^{iz}$, recalling that $|e^{iz}| \leq 1$ for any $z$.
The result follows from $E \left[u^\top X \right]=0$.
\end{proof}
\begin{lemma}\label{difference}
Let $X \in \R^S$ and $Y \in \R^S$ be independent random vectors with the same distribution. Then
\[
\varphi_{X-Y}(u) = \left \lvert \varphi_X(u) \right \rvert^2.
\]
\end{lemma}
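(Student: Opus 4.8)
The plan is to combine the factorization of characteristic functions under independence with the conjugate symmetry of characteristic functions. First I would write, by the definition of the characteristic function of $X-Y$ and linearity of the inner product, $\varphi_{X-Y}(u) = E\left[e^{iu^\top(X-Y)}\right] = E\left[e^{iu^\top X}\,e^{-iu^\top Y}\right]$ for every $u \in \R^S$. Since $X$ and $Y$ are independent, the bounded complex-valued random variables $e^{iu^\top X}$ and $e^{-iu^\top Y}$ are independent as well, so the expectation factorizes and I obtain $\varphi_{X-Y}(u) = E\left[e^{iu^\top X}\right]\,E\left[e^{-iu^\top Y}\right] = \varphi_X(u)\,\varphi_Y(-u)$.

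Next I would use the two remaining hypotheses. Because $X$ and $Y$ have the same law, $\varphi_Y = \varphi_X$, hence $\varphi_Y(-u) = \varphi_X(-u)$; and because $e^{-iu^\top X} = \overline{e^{iu^\top X}}$ pointwise, taking expectations gives $\varphi_X(-u) = \overline{\varphi_X(u)}$. Substituting these into the previous display yields $\varphi_{X-Y}(u) = \varphi_X(u)\,\overline{\varphi_X(u)} = \lvert \varphi_X(u)\rvert^2$, which is exactly the claimed identity.

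I do not anticipate any real obstacle here; the statement is elementary. The only point deserving a line of justification is the factorization of the expectation of a product of \emph{complex}-valued independent random variables, which I would dispatch by decomposing $e^{iu^\top X}$ and $e^{-iu^\top Y}$ into real and imaginary parts (each bounded, hence integrable) and invoking the standard multiplicativity of expectations over independent integrable random variables, together with bilinearity.
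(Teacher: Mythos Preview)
Your proof is correct and follows essentially the same approach as the paper: both factor $\varphi_{X-Y}(u)$ via independence and then identify the second factor as $\overline{\varphi_X(u)}$, the paper doing so through the $\cos/\sin$ decomposition and you through the conjugate-symmetry identity $\varphi_X(-u)=\overline{\varphi_X(u)}$. If anything, your version is slightly more streamlined, since it lands directly on $\varphi_X(u)\overline{\varphi_X(u)}=|\varphi_X(u)|^2$ without the paper's detour through $|\varphi_{X-Y}(u)|$ and the symmetry of $X-Y$.
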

\begin{proof}
By independence we can write
\[
\varphi_{X-Y}(u) = E \left[e^{iu^\top X} \right]E \left[e^{-iu^\top X} \right],
\]
where
\[
E \left[e^{iu^\top X} \right] = E \left[\cos u^\top X \right]+iE \left[\sin u^\top X \right] = a+ib,
\]
for suitable $a$ and $b$. Since $\cos x$ is even and $\sin x$ is odd, we can write
\[
\left \lvert \varphi_{X-Y}(u) \right \rvert = \left \lvert (a+ib)(a-ib)\right \rvert = a^2+b^2 = \left \lvert \varphi_X(u) \right \rvert^2
\]
Since $X-Y$ has a symmetric density by construction $\left \lvert \varphi_{X-Y}(u) \right \rvert = \varphi_{X-Y}(u)$ and the result follows.
\end{proof}

\begin{corollary}\label{corollary_modulus_ cf}
Let $X $ be a $\R^S$-valued random vector with characteristic function $\varphi_X(u)$. Then
\[
\left \lvert \varphi_X(u) \right \rvert^2 \leq e^{-u^\top \Var(X) u+\frac{2|u|^3}{3}\left[1+C\sum_{s = 1}^SE\left[X_i^6\right]\right]},
\]
for $u \in \R^S$, where $C$ is a finite constant independent of $u$. 
\end{corollary}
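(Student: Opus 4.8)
The plan is to bound $|\varphi_X(u)|^2$ by combining the symmetrization identity of Lemma~\ref{difference} with the second-order Taylor expansion of Lemma~\ref{Taylor}. Two preliminary reductions come first. If some $E[X_s^6]=\infty$ the right-hand side is $+\infty$ and there is nothing to prove, so assume all sixth moments are finite. Next I would reduce to the centered case: writing $m=E[X]$ we have $\varphi_X(u)=e^{iu^\top m}\varphi_{X-m}(u)$, so $|\varphi_X(u)|$ is unchanged by recentering, and so is $\Var(X)$; moreover $E[(X_s-m_s)^6]\le 2^5\big(E[X_s^6]+m_s^6\big)\le 2^6 E[X_s^6]$ by convexity of $t\mapsto t^6$, so this extra factor can be absorbed into $C$. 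Hence it suffices to prove the bound assuming $E[X]=0$.

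With $X$ centered, introduce an independent copy $Y$ of $X$ and set $Z=X-Y$. By Lemma~\ref{difference}, $|\varphi_X(u)|^2=\varphi_Z(u)$, which is a real number in $[0,1]$; this is precisely why symmetrization is needed, since expanding $\varphi_X$ directly would leave an uncontrolled term of order $\big(u^\top\Var(X)u\big)^2$. Applying Lemma~\ref{Taylor} to the centered vector $Z$ gives $\varphi_Z(u)=1-\tfrac12 E[(u^\top Z)^2]+\tfrac{\theta}{6}E[|u^\top Z|^3]$ with $|\theta|\le 1$, and taking real parts (legitimate since $\varphi_Z(u)\in\R$) yields $\varphi_Z(u)\le 1-\tfrac12 E[(u^\top Z)^2]+\tfrac16 E[|u^\top Z|^3]$. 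Here $E[(u^\top Z)^2]=u^\top\Var(Z)u=2\,u^\top\Var(X)u$ by independence and equidistribution, so the first two terms combine to $1-u^\top\Var(X)u$.

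The remaining work is to estimate $E[|u^\top Z|^3]$. Writing $U=u^\top X$ and $V=u^\top Y$, which are independent, centered, identically distributed real variables, I would use $|U-V|^3=(U-V)^2|U-V|\le (U-V)^2(|U|+|V|)$ together with the symmetry of $(U-V)^2$ in $U,V$ to get $E|U-V|^3\le 2\,E[(U-V)^2|U|]$; expanding $(U-V)^2$ and using $E[V]=0$ leaves $E[(U-V)^2|U|]=E|U|^3+\Var(U)\,E|U|$, and by the power-mean inequality $\Var(U)\,E|U|\le (E|U|^3)^{2/3}(E|U|^3)^{1/3}=E|U|^3$. Hence $E[|u^\top Z|^3]=E|U-V|^3\le 4\,E|U|^3\le 4|u|^3E|X|^3$, and Lemma~\ref{lemma_modulus} with $p=3$ gives $E|X|^3\le 1+C_1\sum_{s=1}^S E[X_s^6]$ for a constant $C_1=C_1(S)$. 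Collecting everything,
\[
|\varphi_X(u)|^2=\varphi_Z(u)\le 1-u^\top\Var(X)u+\tfrac23|u|^3\Big(1+C_1\sum_{s=1}^S E[X_s^6]\Big),
\]
and the claimed exponential bound follows from $1+t\le e^t$, valid for all $t\in\R$; restoring the recentering factor from the first paragraph gives the final constant $C$.

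The only genuinely delicate step is the cubic-moment estimate $E|U-V|^3\le 4\,E|U|^3$: the naive bound $(|U|+|V|)^3\le 4(|U|^3+|V|^3)$ only yields the constant $8$, hence a factor $\tfrac43$ in place of the required $\tfrac23$, so one has to exploit $E[V]=0$ through the decomposition above to obtain the sharp coefficient. Everything else---checking that real parts may be taken, that $\Var(Z)=2\Var(X)$, and that recentering costs only an absolute constant---is routine.
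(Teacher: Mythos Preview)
Your proof is correct and follows essentially the same route as the paper's: symmetrize via Lemma~\ref{difference} to get $|\varphi_X(u)|^2=\varphi_Z(u)$ with $Z=X-Y$, apply the Taylor expansion of Lemma~\ref{Taylor} to the centered vector $Z$, identify $E[(u^\top Z)^2]=2u^\top\Var(X)u$, bound the cubic remainder by $\tfrac{2}{3}|u|^3 E|X|^3$, invoke Lemma~\ref{lemma_modulus}, and finish with $1+t\le e^t$. The only noteworthy differences are that you include an explicit reduction to the centered case (which the paper tacitly assumes, consistent with its later applications) and that you supply a self-contained proof of the key moment inequality $E|u^\top Z|^3\le 4\,E|u^\top X|^3$, whereas the paper simply cites Lemma~8.8 of Bhattacharya--Rao; your argument for this step, exploiting $E[V]=0$ to beat the naive constant $8$ down to $4$, is clean and correct.
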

\begin{proof}
Let $Y$ be an independent copy of $X$. By Lemma \ref{difference}, it holds
\[
\left \lvert \varphi_X(u) \right \rvert^2 = \varphi_{X-Y}(u),
\]
where $\varphi_{X-Y}(u)$ is a real function, since it is the characteristic function of a random variable with symmetric density. Therefore, by Lemma \ref{Taylor} it holds
\[
\varphi_{X-Y}(u) = 1-\frac{1}{2}E\left[(u^\top Z)^2\right]+\frac{\theta}{6}E\left[|u^\top Z|^3\right],
\]
where $Z = X-Y$ and $\theta=\theta(u)\in\R$.
Recalling that $e^x \geq 1+x$ for every $x$, we have
\[
\varphi_{X-Y}(u) \leq e^{-\frac{1}{2}E\left[(u^\top Z)^2\right]+\frac{\theta}{6}E\left[|u^\top Z|^3\right]}.
\] 
By Lemma $8.8$ in \cite{BR10} it holds
\[
E\left[(u^\top Z)^2\right] = 2E\left[(u^\top X)^2\right] = 2u^\top \Var(X) u
\]
and
\[
E\left[(u^\top Z)^3\right] \leq 4E\left[(u^\top X)^3\right] \leq 4|u|^3E \left[|X|^3 \right].
\]
Moreover by Lemma \ref{lemma_modulus} we have
\[
E \left[|X|^3 \right] \leq 1 + C\sum_{s = 1}^SE\left[X_i^6\right].
\]
Therefore
\[
\varphi_{X-Y}(u) \leq e^{-u^\top \Var(X) u+\frac{2|u|^3\theta}{3}\left[1+C\sum_{s = 1}^SE\left[X_i^6\right]\right]}
\] 
and the result follows from $|\theta| \leq 1$.
\end{proof}
 The following lemma is a minor variation of commonly used techniques to prove total variation Central Limit Theorems.
\begin{lemma}\label{lemma:conditions_TV_conv}
Let $(X_J)_{J\geq 1}$ and $X$ be $\R^S$-valued random variables with characteristic functions $(\varphi_J)_{J\geq 1}$ and $\varphi$, respectively. 
Denote by $L^1(\R^S)$ the space of complex-valued integrable functions with domain $\R^S$. 
If
\begin{enumerate}
\item[(a)] $X_J$ converges weakly to $X$ as $J\to\infty$
\item[(b)] $\varphi$ belongs to $L^1(\R^S)$, i.e.\ $\int_{\R^S} \left \lvert \varphi\left(t\right)\right \rvert \, \d t <\infty$
\item[(c)] $\lim_{A\to\infty}\limsup_{J\to\infty}\int_{|t|\geq A} \left \lvert \varphi_J\left(t\right)\right \rvert \, \d t= 0$.
\end{enumerate}
then $X_J$ converges to $X$ in total variation as $J\to\infty$.
\end{lemma}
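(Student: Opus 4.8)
The plan is to pass from characteristic functions to densities via Fourier inversion and then to conclude by Scheff\'e's lemma. First, hypothesis (b) together with the Fourier inversion theorem shows that $X$ admits a bounded continuous density $p(x)=(2\pi)^{-S}\int_{\R^S}e^{-it^\top x}\varphi(t)\,\d t$. Next, hypothesis (c) provides $A_0<\infty$ and $J_0\in\mathbb{N}$ such that $\int_{|t|\geq A_0}\left\lvert\varphi_J(t)\right\rvert\,\d t\leq 1$ for every $J\geq J_0$; since $\left\lvert\varphi_J\right\rvert\leq 1$ everywhere and the ball $\{|t|<A_0\}$ has finite Lebesgue measure, it follows that $\varphi_J\in L^1(\R^S)$ for all $J\geq J_0$, so each such $X_J$ has a bounded continuous density $p_J(x)=(2\pi)^{-S}\int_{\R^S}e^{-it^\top x}\varphi_J(t)\,\d t$.

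The main step is to show that $p_J(x)\to p(x)$ for every $x\in\R^S$ (uniformly in $x$, in fact). For any $A>0$ one has $(2\pi)^S\left\lvert p_J(x)-p(x)\right\rvert\leq \int_{|t|<A}\left\lvert\varphi_J(t)-\varphi(t)\right\rvert\,\d t+\int_{|t|\geq A}\left\lvert\varphi_J(t)\right\rvert\,\d t+\int_{|t|\geq A}\left\lvert\varphi(t)\right\rvert\,\d t$. For fixed $A$ the first integral tends to $0$ as $J\to\infty$: by (a) we have $\varphi_J(t)\to\varphi(t)$ pointwise, because $x\mapsto e^{it^\top x}$ is bounded and continuous, and $\left\lvert\varphi_J-\varphi\right\rvert\leq 2$ is integrable on the bounded set $\{|t|<A\}$, so dominated convergence applies. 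Taking $\limsup_{J\to\infty}$ on both sides and then letting $A\to\infty$, the second integral vanishes by (c) and the third by (b); since the resulting bound does not depend on $x$, this gives $\sup_{x\in\R^S}\left\lvert p_J(x)-p(x)\right\rvert\to 0$, in particular pointwise convergence of the densities.

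Finally, $p_J$ and $p$ are probability densities with $p_J\to p$ pointwise, so Scheff\'e's lemma yields $\int_{\R^S}\left\lvert p_J(x)-p(x)\right\rvert\,\d x\to 0$, which is precisely the statement that $X_J$ converges to $X$ in total variation. I expect the only delicate points to be the bookkeeping that turns the double limit in (c) into $L^1$-integrability of $\varphi_J$ for large $J$, and keeping the order of limits correct in the splitting argument (first $J\to\infty$ with $A$ fixed, then $A\to\infty$); everything else is a routine application of Fourier inversion and Scheff\'e's lemma, and, as noted in the text, this is a minor variation of standard Fourier-analytic proofs of total variation central limit theorems.
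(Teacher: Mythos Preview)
Your proof is correct and follows essentially the same route as the paper: Fourier inversion to obtain densities, a splitting of the integral at radius $A$ with dominated convergence on the bounded part and conditions (b)--(c) on the tails, followed by Scheff\'e's lemma. The only cosmetic difference is that the paper first establishes $\|\varphi_J-\varphi\|_{L^1}\to 0$ and deduces $\varphi_J\in L^1$ eventually from that, whereas you extract $\varphi_J\in L^1$ directly from (c) and then bound $|p_J(x)-p(x)|$; both orderings lead to the same pointwise convergence of densities.
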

\begin{proof}
First we prove that $\lim_{J\to\infty} \|\varphi_J-\varphi\|_{L^1}
=0$. 
By the triangle inequality, for every $A>0$ we have
\begin{align}\label{eq:split_above_A}
\|\varphi_J-\varphi\|_{L^1}
\leq
\int_{|t|< A}|\varphi_J(t) -\varphi(t) |\, \d t
+
\int_{|t|\geq A}|\varphi_J(t)|\, \d t+\int_{|t|\geq A}|\varphi(t) |\, \d t\,.
\end{align}
Since weak convergence implies pointwise convergence of characteristic functions, assumption (a) implies that  $\varphi_J(t)\to\varphi(t)$ as $J\to\infty$ for every $t\in\R^S$. Thus by the Dominated Convergence Theorem and $|\varphi_J(t) -\varphi(t) |\leq |\varphi_J(t)|+|\varphi(t) |=2$ , we have $\int_{|t|< A}|\varphi_J(t) -\varphi(t) |\, \d t\to 0$ as $J\to\infty$ for every $A>0$.
It follows by \eqref{eq:split_above_A} that 
\begin{align}\label{eq:limsup_above_A}
0\leq \limsup_{J\to\infty}
\|\varphi_J-\varphi\|_{L^1}
\leq
\int_{|t|\geq A}|\varphi(t) |\, \d t
+
\limsup_{J\to\infty}\int_{|t|\geq A}|\varphi_J(t)|\, \d t
\,,
\end{align}
for every $A>0$.
By assumption (b) $\lim_{A\to\infty}\int_{|t|\geq A}|\varphi(t) |\, \d t= 0$. Combining with assumption (c), taking the limit $A\to\infty$ we obtain $\limsup_{J\to\infty} \|\varphi_J-\varphi\|_{L^1}\leq 0$ and thus $\lim_{J\to\infty} \|\varphi_J-\varphi\|_{L^1}=0$.

Then, note that $\varphi\in L^1(\R^S)$ and $\|\varphi_J-\varphi\|_{L^1}\to 0$ as $J\to\infty$ imply $\varphi_J\in L^1(\R^S)$ eventually as $J\to\infty$, since by the triangle inequality
$$
\|\varphi_J\|_{L^1}
\leq 
\|\varphi_J-\varphi\|_{L^1}
+
\|\varphi\|_{L^1}
<\infty
$$
for $J$ large enough.
Thus, by the Inversion formula, for $J$ large enough $X_J$ and $X$ admit density functions w.r.t.\ the Lebesgue measure, which can be written as
$f_{X_J}(\bm{t}) = \frac{1}{(2\pi)^{S}}\int_{\R^S}e^{-i\bm{t}^\top t}\varphi_J(t) \, \d t$ and $f_{X}(\bm{t}) = \frac{1}{(2\pi)^{S}}\int_{\R^S}e^{-i\bm{t}^\top t}\varphi(t) \, \d t$.
Thus
\begin{align*}
|f_{X_J}(\bm{t})-f_{X}(\bm{t})|&=
\left|\frac{1}{(2\pi)^{S}}\int_{\R^S}e^{-i\bm{t}^\top t}\varphi_J(t) \, \d t-
\frac{1}{(2\pi)^{S}}\int_{\R^S}e^{-i\bm{t}^\top t}\varphi(t) \, \d t
\right|\\
&\leq
\int_{\R^S}\left|e^{-i\bm{t}^\top t}(\varphi_J(t) -\varphi(t) )\right|\, \d t
\leq
\|\varphi_J-\varphi\|_{L^1}\to 0\,
\end{align*}
as $J\to\infty$ for every $\bm{t}\in\R^S$.  
By Scheff\'e Theorem, total variation convergence is implied by pointwise convergence of the densities. 
\end{proof}

\begin{proof}[Proof of Lemma \ref{asymptotic_distribution_T}]
Fix $\tilde{\psi}\in\R^D$ and denote $\mu = C(\psi^*)\tilde{\psi}$
and
$\Xi = V(\psi^*)$. 
We will prove conditions (a), (b) and (c) of Lemma \ref{lemma:conditions_TV_conv} to show that $\L(\d\tilde{\bm{T}}\mid Y_{1:J},\tilde{\psi})\stackrel{TV}\to N\left(\mu, \Xi \right)$ for $Q_{\psi^*}^{(\infty)}$-almost every $Y$ as $J \to \infty$.

Condition (a) is shown in Proposition \ref{weak_asymptotic_distribution_T}. Regarding condition (b), the characteristic function of the limiting distribution $N\left(\mu, \Xi \right)$ is $ \varphi(t)=e^{i\mu^\top t-\frac{1}{2}t^\top \Xi t}$, which is integrable since $\Xi$ is positive definite by  (B4).

We now turn to condition (c). Let
\begin{align*}
\tilde{\varphi}(t \mid Y_{1:J}, \psi)
&=
\E\left[e^{it^\top\tilde{\bT}}\mid Y_{1:J},\psi\right]
&t \in \R^S
\end{align*}
 be the characteristic function of $\L\left(\d\tilde{\bm{T}}\mid Y_{1:J}, \psi\right)$. 
Using the definition of $\tilde{\bT}$ in \eqref{tilde_T}, and the fact that $T_s(\theta_j)$ are conditionally independent given $\tilde{\psi}$, we can write $\tilde{\varphi}$ as
\[
\tilde{\varphi}(t \mid Y_{1:J}, \tilde{\psi}) =
e^{-it^\top \alpha_J}
\prod_{j = 1}^J\varphi\left(\frac{t}{\sqrt{J}} \mid Y_j, \psi^{(J)}\right),
\]
where $\alpha_J=C(\psi^*)\Delta_J
+\frac{1}{\sqrt{J}}\sum_{j = 1}^JM^{(1)}(\psi^* \mid Y_j)$, $
\varphi\left(t \mid Y_j, \psi \right)= E \left[e^{it^\top T(\theta_j)} \mid Y_j, \psi \right]$ as in the definition of (B5) and $\psi^{(J)}$ as in \eqref{psi_J}. Since $\alpha_J\in\R^S$ we have $|e^{-it^\top \alpha_J}|=1$ and thus
\begin{equation}\label{eq:varphi_tilda_and_not}
\left \lvert \tilde{\varphi}(t \mid Y_{1:J}, \psi) \right \rvert = \left \lvert \prod_{j = 1}^J\varphi\left(\frac{t}{\sqrt{J}} \mid Y_j, \psi\right) \right \rvert.
\end{equation}
For every 
 $\epsilon > 0$, by \eqref{eq:varphi_tilda_and_not} and the subadditivity of $\limsup$ we have
\[
\begin{aligned}
&\lim_{A\to\infty}\limsup_{J\to\infty}\int_{|t| > A}\left \lvert \tilde{\varphi}(t \mid Y_{1:J}, \tilde{\psi})\right \rvert \, \d t 
\leq
\\&
\lim_{A\to\infty}\limsup_{J\to\infty}\int_{A < |t| < \epsilon\sqrt{J}}\left \lvert \prod_{j = 1}^J\varphi\left(\frac{t}{\sqrt{J}} \mid Y_j, \psi^{(J)}\right)\right \rvert \, \d t
+
\limsup_{J\to\infty}\int_{|t| > \epsilon\sqrt{J}}\left \lvert \prod_{j = 1}^J\varphi\left(\frac{t}{\sqrt{J}} \mid Y_j, \psi^{(J)}\right)\right \rvert \, \d t.
\end{aligned}
\]
Lemma \ref{second_part_cf} shows that the second $\limsup$ in the last line is equal to $0$ for every $\epsilon>0$, while Lemma \ref{first_part_cf} shows that the $\lim_{A\to\infty}\limsup_{J\to\infty}$ term goes to $0$ when $\epsilon$ is chosen as in \eqref{eq:choice_lambda}. Thus condition (c) follows by taking $\epsilon$ as in \eqref{eq:choice_lambda} in the above inequality.
\end{proof}

\begin{lemma}\label{second_part_cf}
Under the same setting and notation as in the proof of Lemma \ref{asymptotic_distribution_T}, for every $\epsilon > 0$ we have
\[
\limsup_{J \to \infty} 
\int_{|t| > \epsilon\sqrt{J}}\left \lvert \prod_{j = 1}^J\varphi\left(\frac{t}{\sqrt{J}} \mid Y_j, \psi^{(J)}\right)\right \rvert \, \d t = 0
\]
$Q_{\psi^*}^{(\infty)}$-almost surely.
\end{lemma}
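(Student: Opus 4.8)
The plan is to rescale the integral so that the rescaled parameter $\psi^{(J)}$ sits close to $\psi^*$, and then to control the rescaled integrand by splitting the product of $J$ characteristic functions into one fixed block handled by assumption $(B5)$ and a growing number of blocks handled by $(B6)$. First I would change variables $u = t/\sqrt{J}$, under which $\d t = J^{S/2}\,\d u$ and the region $\{|t|>\epsilon\sqrt{J}\}$ becomes $\{|u|>\epsilon\}$, so that
\[
\int_{|t|>\epsilon\sqrt{J}}\left|\prod_{j=1}^J\varphi\left(\tfrac{t}{\sqrt J}\mid Y_j,\psi^{(J)}\right)\right|\,\d t
= J^{S/2}\int_{|u|>\epsilon}\prod_{j=1}^J\left|\varphi(u\mid Y_j,\psi^{(J)})\right|\,\d u .
\]
Since $\psi^{(J)} = \psi^* + (\tilde\psi + \Delta_J)/\sqrt J$, applying Lemma \ref{convergence_delta} with $\beta = 1/2$ gives $\psi^{(J)}\to\psi^*$ for $Q_{\psi^*}^{(\infty)}$-almost every $Y$, hence $\psi^{(J)}\in B_{\delta_5}\cap B_{\delta_6}$ for all $J$ large; I restrict to such $J$ from now on.

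Next I would split $\prod_{j=1}^J|\varphi(u\mid Y_j,\psi^{(J)})|$ into the first $2k$ factors (with $k$ as in $(B5)$) and the remaining $J-2k$ factors. Grouping the indices $j = 2k+1,\dots,J$ into disjoint blocks of $k'$ consecutive integers (with $k'$ as in $(B6)$) and applying $(B6)$ to each of the $m_J := \lfloor (J-2k)/k' \rfloor$ complete blocks — legitimate since $\psi^{(J)}\in B_{\delta_6}$, each block consists of iid draws so the bound holds a.s., and there are only countably many blocks to intersect over — yields the uniform estimate $\prod_{j=2k+1}^J|\varphi(u\mid Y_j,\psi^{(J)})| \le \phi(\epsilon)^{m_J}$ for $|u|>\epsilon$, the fewer than $k'$ leftover factors being bounded by $1$. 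For the first $2k$ factors, writing $a = \prod_{j=1}^k|\varphi(u\mid Y_j,\psi^{(J)})|$ and $b = \prod_{j=k+1}^{2k}|\varphi(u\mid Y_j,\psi^{(J)})|$, the inequality $ab\le\tfrac12(a^2+b^2)$ together with $(B5)$ gives
\[
\int_{|u|>\epsilon}\prod_{j=1}^{2k}\left|\varphi(u\mid Y_j,\psi^{(J)})\right|\,\d u
\le \tfrac12\sup_{\psi\in B_{\delta_5}}\int_{\R^S}\left|\varphi^{(k)}(u\mid Y_{1:k},\psi)\right|^2\d u
+ \tfrac12\sup_{\psi\in B_{\delta_5}}\int_{\R^S}\left|\varphi^{(k)}(u\mid Y_{k+1:2k},\psi)\right|^2\d u =: K(Y),
\]
where $K(Y)$ is finite for $Q_{\psi^*}^{(\infty)}$-almost every $Y$ and does not depend on $J$. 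Since the decay bound is uniform over $|u|>\epsilon$, combining the two estimates shows that the rescaled integral is at most $J^{S/2}\phi(\epsilon)^{m_J}K(Y)$.

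Finally, since $0 \le \phi(\epsilon) < 1$ and $m_J \ge (J-2k)/k' - 1 \to \infty$, the geometric factor $\phi(\epsilon)^{m_J}$ dominates the polynomial factor $J^{S/2}$, so $J^{S/2}\phi(\epsilon)^{m_J}K(Y)\to 0$ as $J\to\infty$; as the integrand is nonnegative this makes $\limsup_{J\to\infty}$ of the original integral equal to $0$ for $Q_{\psi^*}^{(\infty)}$-almost every $Y$, which is the claim.

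The main obstacle is the bookkeeping required to make the two uses of the assumptions compatible: one must peel off a fixed, $J$-independent block of factors whose $\R^S$-integral is controlled by $(B5)$ through the uniform-in-$\psi$ bound over $B_{\delta_5}$ (so that the resulting constant $K(Y)$ does not grow with $J$), while ensuring that the decay produced by $(B6)$ is both uniform over the unbounded tail $\{|u|>\epsilon\}$ and fast enough to absorb the Jacobian factor $J^{S/2}$ coming from the rescaling. Verifying that all the relevant almost-sure statements can be intersected — in particular $(B6)$ applied to the infinitely many blocks that appear as $J\to\infty$, and $(B5)$ applied to the two fixed blocks $Y_{1:k}$ and $Y_{k+1:2k}$ — is routine but should be done with care.
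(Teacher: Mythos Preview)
Your proof is correct and follows essentially the same approach as the paper: the same change of variables $u=t/\sqrt{J}$, the same splitting of the product into a fixed block of $2k$ factors controlled via $(B5)$ and a growing number of $k'$-blocks controlled via $(B6)$, and the same conclusion that the geometric factor $\phi(\epsilon)^{m_J}$ kills the polynomial $J^{S/2}$. The only cosmetic difference is that the paper bounds the integral of the first $2k$ factors by Cauchy--Schwarz ($\int|ab|\le(\int a^2)^{1/2}(\int b^2)^{1/2}$) whereas you use $ab\le\tfrac12(a^2+b^2)$; both yield an a.s.\ finite, $J$-independent constant.
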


\begin{proof}
Consider the change of variables $x = t/\sqrt{J}$. Then
\begin{align*}
&\int_{|t| > \epsilon\sqrt{J}}\left \lvert \prod_{j = 1}^J\varphi\left(\frac{t}{\sqrt{J}} \mid Y_j, \psi^{(J)}\right)\right \rvert \, \d t= J^{S/2}\int_{|x| > \epsilon}\left \lvert \prod_{j = 1}^J\varphi\left(x \mid Y_j, \psi^{(J)}\right)\right \rvert \, \d x.
\end{align*}
Let $k$ and $B_{\delta_5}$ be as in $(B5)$ and $k'$ and $B_{\delta_6}$ be as in $(B6)$. 
Take $J$ high enough so that $J\geq 2k$ as well as $\psi^{(J)}\in B:=B_{\delta_5}\cap B_{\delta_6}$, so that
\begin{align*}
&\int_{|x| > \epsilon}\left \lvert \prod_{j = 1}^J\varphi\left(x \mid Y_j, \psi^{(J)}\right)\right \rvert \, \d x
\leq
\sup_{\psi\in B}
\int_{|x| > \epsilon}
\left \lvert \prod_{j = 1}^{2k}\varphi\left(x \mid Y_j, \psi\right)\right \rvert 
\left \lvert \prod_{j = 2k+1}^J\varphi\left(x \mid Y_j, \psi\right)\right \rvert 
\, \d x
\,.
\end{align*}
For every $a\in\R_+$ denote its integer part as $\lfloor a\rfloor$. 
By (B6), for every $\psi\in B$ we have
\begin{align*}
\left \lvert \prod_{j = 2k+1}^J\varphi\left(x \mid Y_j, \psi\right)\right \rvert
\leq
\prod_{s = 1}^{\lfloor \frac{J-2k}{k'}\rfloor}
A_s
\leq \phi(\epsilon)^{\lfloor \frac{J-2k}{k'}\rfloor},
\qquad \hbox{with }
A_s =\left \lvert \prod_{j = 2k+1+(s-1)k'}^{2k+1+sk'}\varphi\left(x \mid Y_j, \psi\right)\right \rvert 
\end{align*}
almost surely, where we exploited the fact that each $A_s$ is distributed as $\varphi^{(k')}\left(t \mid Y_{1:k'}, \psi \right)$ in (B6). 
Therefore
\[
\begin{aligned}
&\int_{|x| > \epsilon}\left \lvert \prod_{j = 1}^J\varphi\left(x \mid Y_j, \psi^{(J)}\right)\right \rvert \, \d x
 \leq 
\phi(\epsilon)^{\lfloor \frac{J-2k}{k'}\rfloor}
\sup_{\psi\in B}
\int_{|x| > \epsilon}\left \lvert \prod_{j = 1}^{2k}\varphi\left(x \mid Y_j, \psi\right)\right \rvert \, \d x.
\end{aligned}
\]
almost surely.
By H\"older Inequality and $(B5)$, we have
\[
\begin{aligned}
c =
\sup_{\psi\in B}
\int_{|x| > \epsilon}&\left \lvert \prod_{j = 1}^{2k}\varphi\left(x \mid Y_j, \psi\right)\right \rvert \, \d x 
\leq
\sup_{\psi\in B}
\int_{\R^S}\left \lvert \prod_{j = 1}^{2k}\varphi\left(x \mid Y_j, \psi\right)\right \rvert \, \d x \leq\\
& \left\{\sqrt{\sup_{\psi \in B} \, \int_{\R^S}\left \lvert \prod_{j = 1}^{k}\varphi\left(x \mid Y_j, \psi\right)\right \rvert^2 \, \d x}\right\}
\left\{\sqrt{\sup_{\psi \in B} \, \int_{\R^S}\left \lvert \prod_{j = k+1}^{2k}\varphi\left(x \mid Y_j, \psi\right)\right \rvert^2 \, \d x}\right\} < \infty,
\end{aligned}
\]
almost surely. Therefore it holds
\[
\begin{aligned}
\int_{|t| > \epsilon\sqrt{J}}&\left \lvert \prod_{j = 1}^J\varphi\left(\frac{t}{\sqrt{J}} \mid Y_j, \psi^{(J)}\right)\right \rvert \, \d t 
\leq
J^{S/2}\phi(\epsilon)^{\lfloor \frac{J-2k}{k'}\rfloor}
c,
\end{aligned}
\]
that goes to $0$ as $J \to \infty$, since $\phi(\epsilon) < 1$ by $(B6)$.
\end{proof}

\begin{lemma}\label{first_part_cf}
Under the same setting and notation as in the proof of
Lemma \ref{asymptotic_distribution_T}, 
let $\lambda>0$ be such that the matrix $V(\psi^*) - \lambda I$ is positive definite. Such $\lambda$ can be found, since $V(\psi^*)$ is positive definite by (B4).
Then, given 
\begin{equation}\label{eq:choice_lambda}
\epsilon = \frac{\lambda}{1+C\sum_{s = 1}^SE \left[T_s(\theta_1)^6 \mid \psi^* \right]}
\end{equation}
 we have
\[
\lim_{A\to\infty}
\limsup_{J \to \infty} \, \int_{A < |t| < \epsilon\sqrt{J}}\left \lvert \prod_{j = 1}^J\varphi\left(\frac{t}{\sqrt{J}} \mid Y_j, \psi^{(J)}\right)\right \rvert \, \d t=0
\]
$Q_{\psi^*}^{(\infty)}$-almost surely.
\end{lemma}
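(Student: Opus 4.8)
The plan is to dominate the integrand on the whole shell $\{A<|t|<\epsilon\sqrt{J}\}$ by a fixed Gaussian tail, uniformly in $J$ large, and then let $A\to\infty$. As a preliminary reduction, note that $\psi^{(J)}=\psi^*+(\tilde{\psi}+\Delta_J)/\sqrt{J}\to\psi^*$ $Q_{\psi^*}^{(\infty)}$-almost surely by Lemma~\ref{convergence_delta}, so it suffices to treat $J$ large enough that $\psi^{(J)}\in B_{\delta_4}$; this puts us in the regime where $(B4)$ applies and all sixth moments below are finite. I would then apply Corollary~\ref{corollary_modulus_ cf} to the conditional law of $T(\theta_j)$ given $(Y_j,\psi^{(J)})$, whose covariance matrix is $\text{Cov}(T(\theta_j)\mid Y_j,\psi^{(J)})$ and whose sixth moments are $M^{(6)}_s(\psi^{(J)}\mid Y_j)$, and multiply the $J$ resulting bounds. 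Using $\sum_{j=1}^J(t/\sqrt{J})^\top\text{Cov}(T(\theta_j)\mid\cdot)(t/\sqrt{J})=t^\top\Sigma_J t$ with $\Sigma_J=\frac1J\sum_{j=1}^J\text{Cov}(T(\theta_j)\mid Y_j,\psi^{(J)})$, and $\sum_{j=1}^J|t/\sqrt{J}|^3=|t|^3/\sqrt{J}$, this yields
\[
\prod_{j=1}^J\left|\varphi\left(\frac{t}{\sqrt{J}} \mid Y_j,\psi^{(J)}\right)\right|\le\exp\left(-\frac12\,t^\top\Sigma_J\,t+\frac{|t|^3}{3\sqrt{J}}\left(1+\frac{C}{J}\sum_{j=1}^J\sum_{s=1}^S M^{(6)}_s(\psi^{(J)}\mid Y_j)\right)\right).
\]

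Next I would feed in the almost-sure convergences already obtained inside the proof of Lemma~\ref{weak_asymptotic_distribution_T}: equation~\eqref{second_to_prove} gives $\Sigma_J\to V(\psi^*)$ entrywise, and the intermediate computation in the proof of~\eqref{third_to_prove} gives $\frac1J\sum_{j=1}^J M^{(6)}_s(\psi^{(J)}\mid Y_j)\to E[T_s^6(\theta_1)\mid\psi^*]$, both $Q_{\psi^*}^{(\infty)}$-almost surely. Fix $\lambda>0$ with $V(\psi^*)-\lambda I$ positive definite (so $t^\top V(\psi^*)t\ge\lambda|t|^2$) and $\epsilon$ as in~\eqref{eq:choice_lambda}. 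Then for any small $\delta>0$ one can take $J$ large enough that $t^\top\Sigma_J t\ge(\lambda-\delta)|t|^2$ and $1+\frac{C}{J}\sum_{j,s}M^{(6)}_s(\psi^{(J)}\mid Y_j)\le\lambda/\epsilon+\delta$. On $\{|t|<\epsilon\sqrt{J}\}$ we bound $|t|^3/\sqrt{J}<\epsilon|t|^2$, so the exponent in the display above is at most $\left(-\tfrac12(\lambda-\delta)+\tfrac{\epsilon}{3}(\lambda/\epsilon+\delta)\right)|t|^2=\left(-\tfrac{\lambda}{6}+O(\delta)\right)|t|^2$, which is $\le-\tfrac{\lambda}{12}|t|^2$ once $\delta$ is chosen small and $J$ is large.

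Putting the pieces together, for all sufficiently large $J$,
\[
\int_{A<|t|<\epsilon\sqrt{J}}\left|\prod_{j=1}^J\varphi\left(\frac{t}{\sqrt{J}} \mid Y_j,\psi^{(J)}\right)\right|\,\d t\le\int_{|t|>A}e^{-\frac{\lambda}{12}|t|^2}\,\d t,
\]
so $\limsup_{J\to\infty}$ of the left-hand side is at most $\int_{|t|>A}e^{-\lambda|t|^2/12}\,\d t$, which tends to $0$ as $A\to\infty$. The step I expect to be the main obstacle is making the bound on the exponent uniform over the ball $\{|t|<\epsilon\sqrt{J}\}$, whose radius grows with $J$: the key observation is that, once one uses $|t|/\sqrt{J}<\epsilon$, the estimate for the exponent is homogeneous of degree two in $t$, so it is enough that the finitely many scalar averages $\Sigma_J$ and $\frac1J\sum_{j}M^{(6)}_s(\psi^{(J)}\mid Y_j)$ be within $\delta$ of their limits, and this is exactly what the borrowed almost-sure convergences provide once $\delta$ has been fixed. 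A secondary point of care is that those convergences, together with the use of $(B4)$, all require $\psi^{(J)}$ to lie eventually in $B_{\delta_4}$, which is why the preliminary reduction is needed.
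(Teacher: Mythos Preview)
Your proof is correct and follows essentially the same route as the paper: apply Corollary~\ref{corollary_modulus_ cf} to each factor, multiply to get the bound with the averaged covariance $\Sigma_J$ and averaged sixth moments in the exponent, then use the almost-sure convergences \eqref{second_to_prove} and the sixth-moment limit from the proof of \eqref{third_to_prove} together with $|t|/\sqrt{J}<\epsilon$ to dominate the integrand by a fixed Gaussian tail. The only cosmetic difference is that you reduce to a scalar bound $e^{-\lambda|t|^2/12}$ uniform in large $J$, whereas the paper keeps the matrix $\Xi^{(J)}=\Sigma_J-\lambda I$ in the exponent and passes to the limit $V(\psi^*)-\lambda I$ via dominated convergence; both finish the same way by sending $A\to\infty$.
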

\begin{proof}
By Corollary \ref{corollary_modulus_ cf}, we have
\[
\left \lvert \varphi(u \mid Y_j, \psi) \right \rvert^2 \leq e^{-u^\top \Var\left(T(\theta_j) \mid Y_j, \psi \right) u +\frac{2|u|^3}{3}\left[1+C\sum_{s = 1}^SE \left[T_s(\theta_j)^6 \mid Y_j, \psi \right]\right]},
\]
for every $u \in \R^S$ and $\psi\in\R^D$. Therefore
\begin{align}\label{eq:prod_varphis_bound}
\left \lvert \prod_{j = 1}^J\varphi\left(\frac{t}{\sqrt{J}} \mid Y_j, \psi\right) \right \rvert^2 \leq e^{-t^\top \frac{1}{J}\sum_{j = 1}^J\Var\left(T(\theta_j) \mid Y_j, \psi \right) t +\frac{2|t|^3}{3\sqrt{J}}\left[1+C\frac{1}{J}\sum_{j = 1}^J\sum_{s = 1}^SE \left[T_s(\theta_j)^6 \mid Y_j, \psi \right]\right]}.
\end{align}
Notice that in the proof of \eqref{third_to_prove} we have shown through (B4) that
\begin{equation}\label{eq:lim_6_mom}
\frac{1}{J}\sum_{j = 1}^JE \left[T_s(\theta_j)^6 \mid Y_j, \psi^{(J)}\right] \to E \left[T_s(\theta_1)^6 \mid \psi^* \right]
\end{equation}
$Q_{\psi^*}^{(\infty)}$-almost surely as $J \to \infty$, for every $s = 1, \dots, S$. Thus,
combining \eqref{eq:choice_lambda} 
and \eqref{eq:lim_6_mom}, for every $|t| \leq \epsilon \sqrt{J}$ we have
\begin{equation}\label{eq:bound_moemnt_3_}
\left \lvert e^{\frac{2|t|^3}{3\sqrt{J}}\left[1+C\frac{1}{J}\sum_{j = 1}^J\sum_{s = 1}^SE \left[T_s(\theta_j)^6 \mid Y_j, \psi \right]\right]} \right \rvert^2
\leq
e^{\lambda t^\top t},
\end{equation}
almost surely for $J$ high enough. Finally by \eqref{eq:prod_varphis_bound} and \eqref{eq:bound_moemnt_3_}
\begin{equation}\label{final_bound}
\begin{aligned}
\int_{A < |t| < \epsilon\sqrt{J}}&\left \lvert \prod_{j = 1}^J\varphi\left(\frac{t}{\sqrt{J}} \mid Y_j, \psi^{(J)}\right)\right \rvert \, \d t
\leq
\int_{|t| >A}e^{-t^\top\Xi^{(J)}
t} \, \d t\,,
\end{aligned}
\end{equation}
with 
\[
\Xi^{(J)} =
\frac{1}{J}\sum_{j = 1}^J\Var\left(T(\theta_j) \mid Y_j, \psi^{(J)}\right)-\lambda I\,.
\]
Since $\Xi^{(J)}\to V(\psi^*)-\lambda I$ by \eqref{second_to_prove}, and $V(\psi^*)-\lambda I$ is positive definite by definition of $\lambda$,  by Dominated Convergence Theorem
\begin{equation}\label{final_bound}
\begin{aligned}
\limsup_J\int_{A < |t| < \epsilon\sqrt{J}}&\left \lvert \prod_{j = 1}^J\varphi\left(\frac{t}{\sqrt{J}} \mid Y_j, \psi^{(J)}\right)\right \rvert \, \d t
\leq
\int_{|t| >A}e^{-t^\top(V(\psi^*)-\lambda I)
t} \, \d t\,,
\end{aligned}
\end{equation}
Since the right hand side of \eqref{final_bound} is integrable the conclusion follows by taking $A\to\infty$.
\end{proof}

\subsection{Proof of Theorem \ref{theorem_one_level_nested}}
We first need a technical lemma.
\begin{lemma}\label{lemma_marginal_conditional}
Let $\left\{ Y^{(n)} \right\}_n$ be a sequence of random elements with state space $\sY^{(n)}$, such that $Y^{(n)} \sim Q^{(n)}$ with $Q^{(n)} \in \mathcal{P}\left(\sY^{(n)} \right)$. Let $\{\pi_n\}_n$ be a sequence of Markov kernels from $\sY^{(n)}$ to $\sX = \sX_1 \times  \sX_2$ and let $\pi \in \mathcal{P}(\sX)$. If
\[
\lTV \pi_{n,1}(\cdot) -\pi_1(\cdot) \rTV \to 0 \quad \text{and} \quad \lTV \pi_n(\cdot \mid x) -\pi(\cdot \mid x) \rTV \to 0, \text{ for $\pi_1$-almost every } x \in \sX_1,
\]
as $n \to \infty$ in $Q^{(n)}$-probability, where $\pi_{n,1}$ and $\pi_1$ are the marginal distributions on $\sX_1$ of $\pi_n$ and $\pi$ respectively, then
\[
\lTV \pi_n(\cdot) -\pi(\cdot) \rTV \to 0,
\]
as $n \to \infty$ in $Q^{(n)}$-probability
\end{lemma}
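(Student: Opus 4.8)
The plan is to interpolate between $\pi_n$ and $\pi$ using the measure $\nu_n(\d x_1,\d x_2)=\pi_1(\d x_1)\,\pi_n(\d x_2\mid x_1)$, which shares its $\sX_1$-marginal with $\pi$ and its conditional with $\pi_n$. By the triangle inequality, $\lTV \pi_n-\pi\rTV\le\lTV \pi_n-\nu_n\rTV+\lTV \nu_n-\pi\rTV$. Writing both sides in terms of densities with respect to a common dominating product measure, a direct computation (cf.\ Lemma \ref{inequality_composition}) gives the two exact identities $\lTV \pi_n-\nu_n\rTV=\lTV \pi_{n,1}-\pi_1\rTV$, since $\pi_n$ and $\nu_n$ share the conditional $\pi_n(\cdot\mid x_1)$, which integrates to one, and $\lTV \nu_n-\pi\rTV=\int_{\sX_1}\lTV \pi_n(\cdot\mid x_1)-\pi(\cdot\mid x_1)\rTV\,\pi_1(\d x_1)$, since $\nu_n$ and $\pi$ share the marginal $\pi_1$. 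Thus the statement reduces to showing that $\lTV \pi_{n,1}-\pi_1\rTV$ and $\int_{\sX_1}\lTV \pi_n(\cdot\mid x_1)-\pi(\cdot\mid x_1)\rTV\,\pi_1(\d x_1)$ both tend to $0$ in $Q^{(n)}$-probability.

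The first quantity goes to $0$ in $Q^{(n)}$-probability by the first hypothesis. For the second, set $g_n(x_1)=\lTV \pi_n(\cdot\mid x_1)-\pi(\cdot\mid x_1)\rTV\in[0,1]$, regarded for each fixed $x_1$ as a function of $Y^{(n)}$, and fix $\epsilon>0$. Let $A_n(x_1)=Q^{(n)}\big(g_n(x_1)>\epsilon\big)$; by the second hypothesis $A_n(x_1)\to 0$ for $\pi_1$-almost every $x_1$, so, since $A_n\le 1$ and $\pi_1$ is a probability measure, the bounded convergence theorem gives $\int_{\sX_1}A_n(x_1)\,\pi_1(\d x_1)\to 0$. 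Because $0\le g_n\le 1$ we have $E_{Q^{(n)}}[g_n(x_1)]\le \epsilon+A_n(x_1)$, and hence, by Tonelli (joint measurability of $(x_1,Y^{(n)})\mapsto\pi_n(\cdot\mid x_1)$ holds since $\pi_n$ is a Markov kernel), $E_{Q^{(n)}}\big[\int_{\sX_1}g_n\,\d\pi_1\big]=\int_{\sX_1}E_{Q^{(n)}}[g_n(x_1)]\,\pi_1(\d x_1)\le \epsilon+\int_{\sX_1}A_n\,\d\pi_1$, so $\limsup_n E_{Q^{(n)}}\big[\int_{\sX_1}g_n\,\d\pi_1\big]\le\epsilon$. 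As $\epsilon>0$ is arbitrary, $E_{Q^{(n)}}\big[\int_{\sX_1}g_n\,\d\pi_1\big]\to 0$, and Markov's inequality yields $\int_{\sX_1}g_n\,\d\pi_1\to 0$ in $Q^{(n)}$-probability. Finally, the union bound $Q^{(n)}\big(\lTV \pi_n-\pi\rTV>\delta\big)\le Q^{(n)}\big(\lTV \pi_{n,1}-\pi_1\rTV>\delta/2\big)+Q^{(n)}\big(\int_{\sX_1}g_n\,\d\pi_1>\delta/2\big)$ concludes the proof.

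The only genuinely delicate point is exchanging the convergence in $Q^{(n)}$-probability with the integral against $\pi_1$ in the conditional term; the argument above sidesteps this by passing to expectations, where the boundedness of the total variation distance by $1$ makes bounded/dominated convergence applicable uniformly in $x_1$. The density computations underlying the two interpolation identities, the measurability remark, and the final union bound are all routine.
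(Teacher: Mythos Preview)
Your proof is correct and follows essentially the same interpolation as the paper's: both split $\lTV \pi_n-\pi\rTV$ via the intermediate measure $\pi_1(\d x_1)\,\pi_n(\d x_2\mid x_1)$, bounding the two pieces by $\lTV \pi_{n,1}-\pi_1\rTV$ and $\int_{\sX_1}\lTV \pi_n(\cdot\mid x_1)-\pi(\cdot\mid x_1)\rTV\,\pi_1(\d x_1)$ respectively. Your treatment of the exchange between convergence in $Q^{(n)}$-probability and the $\pi_1$-integral (passing to expectations, bounded convergence in $x_1$, then Markov's inequality) is in fact more explicit than the paper's terse appeal to the Dominated Convergence Theorem at that step.
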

\begin{proof}
Let $f \, : \, \sX \, \to \, [0,1]$ be a measurable function. By the triangular inequality we have
\[
\begin{aligned}
\biggl \lvert  \int_{\sX} f(x_1, x_2) & \pi_n(\d x_1, \d x_2) - \int_{\sX} f(x_1, x_2) \pi(\d x_1, \d x_2) \biggr \rvert \leq \\
& \left \lvert  \int_\sX f(x_1, x_2) \pi_n(\d x_2 \mid x_1) \pi_{n,1}(\d x_1) - \int_\sX f(x_1, x_2) \pi_n(\d x_2 \mid x_1) \pi_1(\d x_1) \right \rvert +\\
& \left \lvert  \int_\sX f(x_1, x_2) \pi_n(\d x_2 \mid x_1) \pi_1(\d x_1) - \int_\sX f(x_1, x_2) \pi(\d x_2 \mid x_1) \pi_1(\d x_1) \right \rvert.
\end{aligned}
\]
Notice that
\[
\begin{aligned}
 \underset{f}{\sup}& \, \left \lvert  \int_\sX f(x_1, x_2) \pi_n(\d x_2 \mid x_1) \pi_{n,1}(\d x_1) - \int_\sX f(x_1, x_2) \pi_n(\d x_2 \mid x_1) \pi_1(\d x_1) \right \rvert \\
 &\leq \lTV \pi_{n,1}(\cdot) -\pi_1(\cdot) \rTV \to 0,
 \end{aligned}
\]
as $n \to \infty$ in $Q^{(n)}$-probability, by assumption. Moreover we have
\[
\begin{aligned}
 \underset{f}{\sup} & \, \left \lvert  \int_\sX f(x_1, x_2) \pi_n(\d x_2 \mid x_1) \pi_1(\d x_1) - \int_\sX f(x_1, x_2) \pi(\d x_2 \mid x_1) \pi_1(\d x_1) \right \rvert \leq \\
 &  \int_{\sX_1} \underset{f}{\sup} \,\left \lvert \int_{\sX_2}f(x_1, x_2) \pi_n(\d x_2 \mid x_1) - \int_{\sX_2} f(x_1, x_2) \pi (\d x_2 \mid x_1) \right \rvert \, \pi_1(\d x_1).
\end{aligned}
\]
The integrand on the right hand side goes to $0$ as $n \to \infty$ in $Q^{(n)}$-probability, by assumption. Therefore, by Dominated Convergence Theorem, we have
\[
 \underset{f}{\sup} \, \left \lvert  \int_\sX f(x_1, x_2) \pi_n(\d x_2 \mid x_1) \pi_1(\d x_1) - \int_\sX f(x_1, x_2) \pi(\d x_2 \mid x_1) \pi_1(\d x_1) \right \rvert \to 0,
\]
 as $n \to \infty$ in $Q^{(n)}$-probability, as desired.
\end{proof}
\begin{proof}[Proof of Theorem \ref{theorem_one_level_nested}]
Lemma \ref{asymptotic_distribution_psi} shows that $\tilde{\psi}$ converges to a Normal distribution with zero mean and non-singular covariance matrix $\mathcal{I}^{-1}(\psi^*)$. Similarly, Lemma \ref{asymptotic_distribution_T} shows that, conditional to every $\tilde{\psi}$, $\tilde{\bm{T}}$ converges to a Normal distribution with mean and variance (denoted by $E_\infty[\cdot]$ and $\Var_\infty(\cdot) \,$) given by
\[
E_\infty[\tilde{\bm{T}} \mid \tilde{\psi}] = C(\psi^*)\tilde{\psi}, \quad \Var_\infty\left(\tilde{\bm{T}} \mid \tilde{\psi}\right) = V(\psi^*).
\]
Therefore, by Lemma \ref{lemma_marginal_conditional}, we conclude that $\left(\tilde{\bm{T}}, \tilde{\psi}\right)$ converges in total variation to a $(S+D)$-dimensional Gaussian distribution $\tilde{\pi}$ with zero mean and covariance matrix $\Sigma$ given by 
\[
\Sigma = 
\begin{bmatrix}
\Sigma_{\tilde{\bm{T}}} & \Sigma_{\tilde{\psi}\tilde{\bm{T}}}^\top\\
 \Sigma_{\tilde{\psi}\tilde{\bm{T}}} & \Sigma_{\tilde{\psi}}
\end{bmatrix},
\]
where $\Sigma_{\tilde{\psi}} = \mathcal{I}^{-1}(\psi^*) \in \R^{D \times D}$ and $\Sigma_{\tilde{\bm{T}}}\in \R^{S \times S}$ are the limiting variances of $\tilde{\psi}$ and $\tilde{\bm{T}}$, while $\Sigma_{\tilde{\psi}\tilde{\bm{T}}} \in \R^{D \times S}$ is the limiting covariance. Thus, thanks to standard properties of the multivariate Gaussian distribution, the determinant of $\Sigma$ can be computed as
\[
\begin{aligned}
\det (\Sigma) &= \det(\Sigma_{\tilde{\psi}})\det \left(\Sigma_{\tilde{\bm{T}}}- \Sigma_{\tilde{\psi}\tilde{\bm{T}}}^\top\Sigma_{\tilde{\psi}}^{-1} \Sigma_{\tilde{\psi}\tilde{\bm{T}}}\right) = \det(\Sigma_{\tilde{\psi}})\det \left(\Var_\infty\left(\tilde{\bm{T}} \mid \tilde{\psi}\right) \right) \\
&= \det \left(\mathcal{I}^{-1}(\psi^*) \right)\det \left( V(\psi^*) \right),
\end{aligned}
\]
which implies that $\Sigma$ is non singular. Indeed, $\det \left(\mathcal{I}^{-1}(\psi^*)\right) > 0$ by $(B3)$, while $\det \left(V(\psi^*) \right) > 0$ by (B4). Therefore, by Theorem $1$ in \cite{R97}, the Gibbs sampler on the limit Gaussian target has a strictly positive spectral gap. Moreover, since the Gibbs sampler in \eqref{two_blocks_gibbs_nested} has two blocks, by Lemma \ref{positiveDefinite_Gibbs} we have $\tilde{t}_{mix}(\epsilon, M) < \infty$ for every $M$ and $\epsilon$: thus the result follows by Corollary \ref{mixingCorollary}.
\end{proof}
\subsection{Proof of Proposition \ref{limiting_sigma}}

\begin{proof}
Using the notation $E_\infty[\cdot]$,$\Var_\infty(\cdot)$ and $\text{Cov}_\infty(\cdot, \cdot)$ for the limiting mean, variance and covariance, by Propositions \ref{asymptotic_distribution_psi} and \ref{asymptotic_distribution_T} we have
\[
E_\infty[\tilde{\psi}] = \bm{0}_D, \quad \Var_\infty(\tilde{\psi}) =\Fisher^{-1}(\psi^*)
\]
and
\[
E_\infty[\tilde{\bm{T}} \mid \tilde{\psi}] = C(\psi^*)\tilde{\psi}, \quad \Var_\infty\left(\tilde{\bm{T}} \mid \tilde{\psi}\right) = V(\psi^*).
\]
By standard properties of the multivariate Gaussian distribution we have
\[
E_\infty[\tilde{\bm{T}}] = \bm{0}_S, \quad \text{Cov}_\infty\left(\bm{T}, \tilde{\psi} \right) = C(\psi^*)\Var_\infty(\tilde{\psi}) = C(\psi^*)\Fisher^{-1}(\psi^*)
\]
and
\[
\begin{aligned}
\Var_\infty(\bm{T}) &= \Var_\infty\left(\tilde{\bm{T}} \mid \tilde{\psi}\right)+\text{Cov}_\infty\left(\bm{T}, \tilde{\psi} \right)\Var_\infty^{-1}(\tilde{\psi})\text{Cov}^\top_\infty\left(\bm{T}, \tilde{\psi} \right)\\
& = V(\psi^*)+C(\psi^*)\mathcal{I}^{-1}(\psi^*)C^\top(\psi^*),
\end{aligned}
\]
as desired.
\end{proof}

\subsection{Proof of Corollary \ref{spectral_radius}}
We need three preliminary lemmas. The first one is a special version of well-known results (e.g. \cite{R97}).
\begin{lemma}\label{autoregressive}
The Gibbs sampler targeting the distribution in Proposition \ref{limiting_sigma} can be written as
\[
\begin{bmatrix}
\tilde{\bm{T}}^{(t)}\\
\tilde{\psi}^{(t)}
\end{bmatrix}
= B
\begin{bmatrix}
\tilde{\bm{T}}^{(t-1)}\\
\tilde{\psi}^{(t-1)}
\end{bmatrix}
+
\begin{bmatrix}
U_1\\
U_2
\end{bmatrix},
\]
where
\[
B = 
\begin{bmatrix}
\textbf{O}_{S\times S} &\quad & C(\psi^*) \\
\\
\textbf{O}_{D \times S} &\quad \quad & \mathcal{I}^{-1}(\psi^*)C^\top(\psi^*)\left\{V(\psi^*)+C(\psi^*)\mathcal{I}^{-1}(\psi^*)C^\top(\psi^*)\right\}^{-1}C(\psi^*)
\end{bmatrix}
\]
and
\[
\begin{bmatrix}
U_1\\
U_2
\end{bmatrix} \sim N \left(\textbf{0}_{S+D}, \Sigma-B\Sigma B^\top \right)
\]
\end{lemma}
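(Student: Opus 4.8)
The plan is to apply the standard Gaussian conditioning formulas to the two updates that make up the two-block Gibbs sampler on $N(\bm{0},\Sigma)$, with $\Sigma$ as in \eqref{eq:Sigma_def}, and then read off the resulting affine recursion. Recall that one iteration of this sampler first draws $\tilde{\bm{T}}^{(t)}$ from the conditional $\tilde{\pi}(\d\tilde{\bm{T}}\mid\tilde{\psi}^{(t-1)})$ and then $\tilde{\psi}^{(t)}$ from $\tilde{\pi}(\d\tilde{\psi}\mid\tilde{\bm{T}}^{(t)})$. Writing $C=C(\psi^*)$, $V=V(\psi^*)$ and $\Fisher=\Fisher(\psi^*)$ for brevity, the block structure of $\Sigma$ gives $\Sigma_{\tilde{\bm{T}}\tilde{\psi}}\Sigma_{\tilde{\psi}}^{-1}= C\Fisher^{-1}\Fisher = C$, with Schur complement $(V+C\Fisher^{-1}C^\top)-C\Fisher^{-1}C^\top=V$, so the first update is $\tilde{\bm{T}}^{(t)}=C\tilde{\psi}^{(t-1)}+W_1$ with $W_1\sim N(\bm{0},V)$ drawn independently of $\tilde{\psi}^{(t-1)}$ (consistent with Lemma \ref{asymptotic_distribution_T}); similarly $\Sigma_{\tilde{\psi}\tilde{\bm{T}}}\Sigma_{\tilde{\bm{T}}}^{-1}=\Fisher^{-1}C^\top(V+C\Fisher^{-1}C^\top)^{-1}=:A$, so the second update is $\tilde{\psi}^{(t)}=A\,\tilde{\bm{T}}^{(t)}+W_2$ with $W_2$ a centered Gaussian drawn independently of $\tilde{\bm{T}}^{(t)}$, hence of the entire past.

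Substituting the first equation into the second yields $\tilde{\psi}^{(t)} = AC\,\tilde{\psi}^{(t-1)} + AW_1 + W_2$. Stacking the two equations gives exactly the claimed recursion: neither $\tilde{\bm{T}}^{(t)}$ nor $\tilde{\psi}^{(t)}$ depends on $\tilde{\bm{T}}^{(t-1)}$, so the leftmost block columns of $B$ vanish, the upper-right block of $B$ is $C$ and the lower-right block is $AC=\Fisher^{-1}C^\top(V+C\Fisher^{-1}C^\top)^{-1}C$, matching the displayed $B$; moreover $U_1=W_1$ and $U_2=AW_1+W_2$. Since $W_1$ and $W_2$ are independent centered Gaussians that are both independent of $(\tilde{\bm{T}}^{(t-1)},\tilde{\psi}^{(t-1)})$, the pair $(U_1,U_2)$ is a centered Gaussian vector independent of the previous state.

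It then remains only to identify $\Var(U_1,U_2)$. Rather than compute it from $V$, $A$ and the Schur complement of the second conditional, the cleanest route is to use that $N(\bm{0},\Sigma)$ is invariant for the Gibbs kernel: feeding $(\tilde{\bm{T}}^{(t-1)},\tilde{\psi}^{(t-1)})\sim N(\bm{0},\Sigma)$, independent of $(U_1,U_2)$, into the recursion produces $(\tilde{\bm{T}}^{(t)},\tilde{\psi}^{(t)})\sim N(\bm{0},\Sigma)$, and taking variances on both sides gives $\Sigma = B\Sigma B^\top + \Var(U_1,U_2)$, whence $\Var(U_1,U_2)=\Sigma-B\Sigma B^\top$ (positive semidefiniteness is then automatic). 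There is no genuine obstacle here beyond bookkeeping: the only points requiring care are respecting the update order ($\tilde{\bm{T}}$ before $\tilde{\psi}$) when composing the two steps, and observing that because each Gibbs update uses fresh randomness the stacked innovation $(U_1,U_2)$ is jointly Gaussian and independent of the current state, which is precisely what lets stationarity pin down its covariance for free.
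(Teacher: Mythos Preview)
Your proposal is correct and follows essentially the same approach as the paper: both compute the conditional means of the two Gibbs updates via standard Gaussian conditioning to read off the matrix $B$, and then identify the noise distribution. Your version is more explicit than the paper's in spelling out the Schur complement computations and in invoking stationarity to pin down $\Var(U_1,U_2)=\Sigma-B\Sigma B^\top$, whereas the paper's proof simply states the conditional expectations and leaves the noise covariance implicit, but the underlying argument is the same.
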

\begin{proof}
By Proposition \ref{asymptotic_distribution_T} we have
\[
E \left[\tilde{\bm{T}}^{(t)} \mid \tilde{\bm{T}}^{(t-1)},\tilde{\psi}^{(t-1)} \right] = C(\psi^*)\tilde{\psi}^{(t-1)}.
\]
Moreover, by Proposition \ref{limiting_sigma} and standard properties of the multivariate Gaussian distribution, we have
\[
\begin{aligned}
E \bigl[\tilde{\psi}^{t} &\mid \tilde{\bm{T}}^{(t-1)},\tilde{\psi}^{(t-1)} \bigr] \\
&= E\left[\mathcal{I}^{-1}(\psi^*)C^\top(\psi^*)\left\{V(\psi^*)+C(\psi^*)\mathcal{I}^{-1}(\psi^*)C^\top(\psi^*)\right\}^{-1}\tilde{\bm{T}}^{(t)} \mid \tilde{\bm{T}}^{(t-1)},\tilde{\psi}^{(t-1)} \right]\\
&= \mathcal{I}^{-1}(\psi^*)C^\top(\psi^*)\left\{V(\psi^*)+C(\psi^*)\mathcal{I}^{-1}(\psi^*)C^\top(\psi^*)\right\}^{-1}C(\psi^*)\tilde{\psi}^{(t-1)},
\end{aligned}
\]
as desired.
\end{proof}
\begin{lemma}\label{auxiliary_spectral_radius}
Let
\[
M = 
\begin{bmatrix}
\textbf{O}_{S\times S} & A \\
\\
\textbf{O}_{D \times S} & W
\end{bmatrix},
\]
with $A \in \mathbb{R}^{S \times D}$ and $W \in \mathbb{R}^{D \times D}$. Then $M$ and $W$ have the same non null eigenvalues.
\end{lemma}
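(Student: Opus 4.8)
The claim is a standard fact from linear algebra: a block upper-triangular matrix $M = \begin{bmatrix} \mathbf{O}_{S\times S} & A \\ \mathbf{O}_{D\times S} & W\end{bmatrix}$ has characteristic polynomial that factors through the diagonal blocks. The plan is to compute $\det(M - \lambda I_{S+D})$ explicitly using the block structure. Partitioning $M - \lambda I_{S+D}$ conformably, one gets $\begin{bmatrix} -\lambda I_S & A \\ \mathbf{O}_{D\times S} & W - \lambda I_D\end{bmatrix}$, and since this is block upper-triangular, its determinant equals $\det(-\lambda I_S)\cdot\det(W - \lambda I_D) = (-\lambda)^S \det(W - \lambda I_D)$.

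From this factorization the conclusion is immediate. The nonzero eigenvalues of $M$ are exactly the nonzero roots of $(-\lambda)^S\det(W-\lambda I_D)$; the factor $(-\lambda)^S$ contributes only the eigenvalue $0$, so every nonzero eigenvalue of $M$ must be a root of $\det(W - \lambda I_D)$, i.e.\ an eigenvalue of $W$. Conversely, any nonzero eigenvalue of $W$ is a root of $\det(W-\lambda I_D)$ and hence of the characteristic polynomial of $M$, so it is an eigenvalue of $M$. Therefore $M$ and $W$ share the same nonzero eigenvalues (noting that one should be slightly careful about whether ``same eigenvalues'' is meant with or without multiplicity — the characteristic-polynomial computation actually gives agreement of algebraic multiplicities for the nonzero eigenvalues as well, since the $(-\lambda)^S$ factor is coprime to the nonzero part).

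I do not anticipate any real obstacle here: the only thing to be careful about is invoking the block-triangular determinant formula correctly and, if one wants the statement with multiplicities, phrasing the comparison of characteristic polynomials cleanly. In the intended application (Corollary~\ref{spectral_radius}), this lemma will be used to reduce the spectral analysis of the $(S+D)\times(S+D)$ iteration matrix $B$ from Lemma~\ref{autoregressive} to that of its nonzero part $W$, which in turn is similar to a matrix built from $V^{-1}(\psi^*)C(\psi^*)\mathcal{I}^{-1}(\psi^*)C^\top(\psi^*)$, an $S\times S$ object; so the lemma is purely a bookkeeping device and a one-line determinant computation suffices.

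\begin{proof}
Write $M - \lambda I_{S+D}$ in block form conformably with the partition into the first $S$ and last $D$ coordinates:
\[
M - \lambda I_{S+D} =
\begin{bmatrix}
-\lambda I_S & A \\
\textbf{O}_{D\times S} & W - \lambda I_D
\end{bmatrix}.
\]
Since this matrix is block upper-triangular, its determinant is the product of the determinants of the diagonal blocks, hence
\[
\det\left(M - \lambda I_{S+D}\right) = \det(-\lambda I_S)\,\det(W - \lambda I_D) = (-\lambda)^S\,\det(W - \lambda I_D).
\]
Thus the characteristic polynomial of $M$ equals $(-\lambda)^S$ times the characteristic polynomial of $W$. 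For $\lambda \neq 0$ the factor $(-\lambda)^S$ is nonzero, so $\lambda$ is a root of the characteristic polynomial of $M$ if and only if it is a root of the characteristic polynomial of $W$; equivalently, the nonzero eigenvalues of $M$ coincide with the nonzero eigenvalues of $W$ (with the same algebraic multiplicities, since $(-\lambda)^S$ is coprime to the part of the polynomial supported away from $0$).
\end{proof}
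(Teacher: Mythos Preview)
Your proof is correct. The approach differs from the paper's: the paper argues directly with eigenvectors, showing that if $\mu\neq 0$ is an eigenvalue of $M$ with eigenvector $x=[x_S^\top,x_D^\top]^\top$ then $Wx_D=\mu x_D$ with $x_D\neq 0$, and conversely that if $Wx_D=\lambda x_D$ with $\lambda\neq 0$ then $[\,(Ax_D/\lambda)^\top,\,x_D^\top\,]^\top$ is an eigenvector of $M$ for $\lambda$. Your characteristic-polynomial argument via the block upper-triangular determinant is shorter and additionally yields equality of algebraic multiplicities for the nonzero eigenvalues, which the eigenvector argument does not immediately give; the paper's construction, on the other hand, produces explicit eigenvectors. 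For the purposes of Corollary~\ref{spectral_radius} either route suffices, since only the set of nonzero eigenvalues is needed there.
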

\begin{proof}
Let $\mu \neq 0$ be an eigenvalue of $M$, with eigenvector $x = [x_S^\top, x_D^\top]^\top$. We have
\[
Mx = \mu x \quad \Leftrightarrow \quad
\begin{bmatrix}
Ax_D\\
Wx_D
\end{bmatrix}
=
\begin{bmatrix}
\mu x_S\\
\mu x_D
\end{bmatrix},
\]
so that $\mu$ is an eigenvalue of $W$ with eigenvector $x_D$. Indeed, $x_D$ is different from the null vector, since $\mu \neq 0$.

Let $\lambda \neq 0$ be an eigenvalue of $W$ with eigenvector $x_D$. Then
\[
M
\begin{bmatrix}
\frac{Ax_D}{\lambda}\\
x_D
\end{bmatrix}
 = 
\begin{bmatrix}
Ax_D\\
Wx_D
\end{bmatrix}
=
\lambda
\begin{bmatrix}
\frac{Ax_D}{\lambda}\\
x_D
\end{bmatrix},
\]
so that $\lambda$ is an eigenvalue of $M$, with eigenvector
\[
\begin{bmatrix}
\frac{Ax_D}{\lambda}\\
x_D
\end{bmatrix},
\]
as desired.
\end{proof}
\begin{lemma}\label{eig_product}
Let $A \in \mathbb{R}^{D \times S}$ and $B \in \mathbb{R}^{S \times D}$. Then the matrices $AB$ and $BA$ have the same non-null eigenvalues. 
\end{lemma}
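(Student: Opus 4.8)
The statement is the classical fact that $AB$ and $BA$ have the same nonzero spectrum, and I would prove it by a short eigenvector-chasing argument, parallel in spirit to the proof of Lemma \ref{auxiliary_spectral_radius}. The two inclusions are symmetric, so it suffices to show that every non-null eigenvalue of $AB$ is a non-null eigenvalue of $BA$.

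First I would fix $\lambda \neq 0$ an eigenvalue of $AB \in \mathbb{R}^{D\times D}$, with eigenvector $v \in \mathbb{R}^D \setminus \{\bm{0}\}$, so that $ABv = \lambda v$. The one point requiring a moment's care — and the only step that actually uses $\lambda \neq 0$ — is the observation that $Bv \neq \bm{0}$: indeed, if $Bv = \bm{0}$ then $\lambda v = A(Bv) = \bm{0}$, forcing $v = \bm{0}$ since $\lambda \neq 0$, a contradiction. Having established $Bv \neq \bm{0}$, I multiply the identity $ABv = \lambda v$ on the left by $B$ to get $BA(Bv) = B(\lambda v) = \lambda (Bv)$, which exhibits $\lambda$ as an eigenvalue of $BA \in \mathbb{R}^{S\times S}$ with eigenvector $Bv$. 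Swapping the roles of $A$ and $B$ gives the reverse inclusion, and the lemma follows.

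I do not expect any genuine obstacle: the argument is elementary and the only subtlety is verifying the nonvanishing of $Bv$ before applying $B$, which is precisely what prevents the (possibly different) zero eigenvalues of $AB$ and $BA$ from interfering. If equality of algebraic multiplicities of the nonzero eigenvalues were also needed, one could alternatively invoke the determinantal identity $\lambda^{S}\det(\lambda I_D - AB) = \lambda^{D}\det(\lambda I_S - BA)$ and compare the nonzero roots, but the eigenvector argument above is all that is required for the application in Corollary \ref{spectral_radius}.
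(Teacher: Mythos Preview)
Your argument is correct and is essentially identical to the paper's: multiply $ABv = \lambda v$ on the left by $B$ and use $Bv \neq \bm{0}$ (which the paper asserts without explanation, while you justify it explicitly). The paper gives only one direction, leaving the symmetric one implicit, exactly as you do.
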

\begin{proof}
Let $\lambda \neq 0$ be an eigenvalue of $AB$, with eigenvector $v \in \mathbb{R}^D$. Then
\[
\lambda Bv = B(AB)v = (BA)Bv.
\]
Since $Bv \neq \textbf{0}$ we conclude that $\lambda$ is an eigenvalue of $BA$ with eigenvector $Bv$.
\end{proof}
\begin{proof}[Proof of Corollary \ref{spectral_radius}]
With $B$ as in Lemma \ref{autoregressive}, by Theorem $1$ in \cite{R97} the spectral gap of the Gibbs sampler with operator $\tilde{P}$ is given by 
\[
\text{Gap}(\tilde{P}) = \min \left\{1-|\lambda_i| \, : \, \lambda_i \text{ eigenvalue of } B \right\}
\]
Thus, by Lemma \ref{auxiliary_spectral_radius}, with $M := B$ and 
\[
W = \mathcal{I}^{-1}(\psi^*)C^\top(\psi^*)\left\{V(\psi^*)+C(\psi^*)\mathcal{I}^{-1}(\psi^*)C^\top(\psi^*)\right\}^{-1}C(\psi^*),
\]
we have
\[
\text{Gap}(\tilde{P}) = \min \left\{1-|\lambda_i| \, : \, \lambda_i \text{ eigenvalue of } \mathcal{I}^{-1}(\psi^*)C^\top(\psi^*)\left\{V(\psi^*)+C(\psi^*)\mathcal{I}^{-1}(\psi^*)C^\top(\psi^*)\right\}^{-1}C(\psi^*) \right\}.
\]
By Lemma \ref{eig_product} with
\[
A = \mathcal{I}^{-1}(\psi^*)C^\top(\psi^*), \quad B = \left\{V(\psi^*)+C(\psi^*)\mathcal{I}^{-1}(\psi^*)C^\top(\psi^*)\right\}^{-1}C(\psi^*)
\]
we deduce
\[
\text{Gap}(\tilde{P}) = \min \left\{1-|\lambda_i| \, : \, \lambda_i \text{ eigenvalue of } \left\{V(\psi^*)+C(\psi^*)\mathcal{I}^{-1}(\psi^*)C^\top(\psi^*)\right\}^{-1}C(\psi^*)\mathcal{I}^{-1}(\psi^*)C^\top(\psi^*) \right\}.
\]
Notice that
\[
\begin{aligned}
&\left\{V(\psi^*)+C(\psi^*)\mathcal{I}^{-1}(\psi^*)C^\top(\psi^*)\right\}^{-1}C(\psi^*)\mathcal{I}^{-1}(\psi^*)C^\top(\psi^*)\\
& = I-\left\{V(\psi^*)+C(\psi^*)\mathcal{I}^{-1}(\psi^*)C^\top(\psi^*)\right\}^{-1}V(\psi^*).
\end{aligned}
\]
Since $\lambda$ is an eigenvalue of $A$ if and only if $1-\lambda$ is an eigenvalue of $I - A$, it follows that
\[
\text{Gap}(\tilde{P}) = \min \left\{1-|1-\lambda_i| \, ; \, \lambda_i \text{ eigenvalue of } \left\{V(\psi^*)+C(\psi^*)\mathcal{I}^{-1}(\psi^*)C^\top(\psi^*)\right\}^{-1}V(\psi^*) \right\}.
\]
Moreover the eigenvalues of the inverse are the inverse of the eigenvalues, so that the rate of convergence is equal to
\[
\text{Gap}(\tilde{P}) = \min \left\{1-\left \lvert 1-\frac{1}{\lambda_i}\right \rvert \, ; \, \lambda_i \text{ eigenvalue of } V^{-1}(\psi^*)\left\{V(\psi^*)+C(\psi^*)\mathcal{I}^{-1}(\psi^*)C^\top(\psi^*)\right\} \right\}.
\]
Since 
\[
\begin{aligned}
V^{-1}(\psi^*)&\left\{V(\psi^*)+C(\psi^*)\mathcal{I}^{-1}(\psi^*)C^\top(\psi^*)\right\}= I+V^{-1}(\psi^*)C(\psi^*)\mathcal{I}^{-1}(\psi^*)C^\top(\psi^*),
\end{aligned}
\]
we have
\[
\text{Gap}(\tilde{P}) = \min \left\{1-\left \lvert 1-\frac{1}{1+\lambda_i}\right \rvert \, ; \, \lambda_i \text{ eigenvalue of } V^{-1}(\psi^*)C(\psi^*)\mathcal{I}^{-1}(\psi^*)C^\top(\psi^*) \right\}.
\]
Moreover both $V^{-1}(\psi^*)$ and $C(\psi^*)\mathcal{I}^{-1}(\psi^*)C^\top(\psi^*)$ are positive semi-definite, so that also their product is positive semi-definite and has positive eigenvalues. Therefore we conclude
\[
\text{Gap}(\tilde{P}) = \min \left\{\frac{1}{1+\lambda_i} \, ; \, \lambda_i \text{ eigenvalue of } V^{-1}(\psi^*)C(\psi^*)\mathcal{I}^{-1}(\psi^*)C^\top(\psi^*) \right\}
\]
and the result follows by Corollary \ref{mixing_gap}.
\end{proof}

\subsection{Proof of Corollary \ref{spectral_single}}
We need a preliminary lemma, that we prove for self-containedness.
\begin{lemma}\label{moments_exp}
Let $p(\theta \mid \psi)$ be as in \eqref{exponential_family}. Then it holds
\[
E[T(\theta) \mid \psi] = \frac{\partial_\psi A(\psi)}{\partial_\psi\eta(\psi)}, \quad \text{Var}(T(\theta) \mid \psi) = \left\{\partial^2_\psi A(\psi)-\frac{\partial^2_\psi \eta(\psi)\partial_\psi A(\psi)}{\partial_\psi \eta(\psi)} \right\}\left[\partial_\psi \eta(\psi) \right]^{-2}.
\]
\end{lemma}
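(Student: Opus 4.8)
The plan is to derive both identities from the normalization condition of the density in \eqref{exponential_family} (in the case $S=1$, so that $\eta$, $T$, $A$ are scalar-valued), by differentiating under the integral sign. This is the classical ``moments from the log-partition function'' computation, with the only twist that the density is parametrized through $\eta(\psi)$ rather than directly through the natural parameter $\eta$, so the chain rule introduces the factors $\partial_\psi\eta$ and $\partial_\psi^2\eta$.

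First I would record that, since $p(\cdot\mid\psi)$ is a probability density, $\int_{\R^\ell} h(\theta)\,\exp\{\eta(\psi)T(\theta)-A(\psi)\}\,\d\theta=1$, equivalently $\int_{\R^\ell} h(\theta)e^{\eta(\psi)T(\theta)}\,\d\theta=e^{A(\psi)}$. Differentiating both sides in $\psi$ (the interchange of $\partial_\psi$ and the integral is the standard smoothness property of exponential families in the interior of the natural parameter space) gives $\partial_\psi\eta(\psi)\int_{\R^\ell} h(\theta)T(\theta)e^{\eta(\psi)T(\theta)}\,\d\theta=\partial_\psi A(\psi)\,e^{A(\psi)}$, i.e.\ $\partial_\psi\eta(\psi)\,E[T(\theta)\mid\psi]=\partial_\psi A(\psi)$. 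Dividing by $\partial_\psi\eta(\psi)$, which is nonzero in the regimes of interest (e.g.\ strictly monotone $\eta$, consistent with the minimality assumption on the family), yields the first formula. For the second, I would differentiate the resulting first-order identity $\int_{\R^\ell}\{\partial_\psi\eta(\psi)T(\theta)-\partial_\psi A(\psi)\}\,p(\theta\mid\psi)\,\d\theta=0$ once more, using $\partial_\psi p(\theta\mid\psi)=\{\partial_\psi\eta(\psi)T(\theta)-\partial_\psi A(\psi)\}\,p(\theta\mid\psi)$. This produces
\[
\int_{\R^\ell}\Big(\{\partial_\psi\eta(\psi)T(\theta)-\partial_\psi A(\psi)\}^2+\{\partial^2_\psi\eta(\psi)T(\theta)-\partial^2_\psi A(\psi)\}\Big)p(\theta\mid\psi)\,\d\theta=0 .
\]
By the first-order identity, $\partial_\psi\eta(\psi)T(\theta)-\partial_\psi A(\psi)=\partial_\psi\eta(\psi)\{T(\theta)-E[T(\theta)\mid\psi]\}$, so the first summand integrates to $(\partial_\psi\eta(\psi))^2\,\text{Var}(T(\theta)\mid\psi)$, while the second integrates to $\partial^2_\psi\eta(\psi)E[T(\theta)\mid\psi]-\partial^2_\psi A(\psi)=\partial^2_\psi\eta(\psi)\partial_\psi A(\psi)/\partial_\psi\eta(\psi)-\partial^2_\psi A(\psi)$. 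Solving for $\text{Var}(T(\theta)\mid\psi)$ gives exactly the second claimed formula.

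The only genuine obstacle is the justification of differentiating twice under the integral sign, together with the implicit requirement $\partial_\psi\eta(\psi)\neq 0$ that makes the divisions meaningful; both are routine for exponential families under the regularity already in force, and the remaining manipulations are elementary algebra, so I would keep that part brief.
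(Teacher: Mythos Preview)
Your proof is correct and follows essentially the same approach as the paper: both differentiate the normalization condition $\int p(\theta\mid\psi)\,\d\theta=1$ once and twice to extract the first two moments of $T(\theta)$. Your organization of the second-derivative step (differentiating the first-order identity and recognizing $\partial_\psi\eta\,T-\partial_\psi A=\partial_\psi\eta\,(T-E[T])$ directly) is arguably a touch cleaner than the paper's version, which expands $\partial_\psi^2\int p=0$ fully and then cancels, but the underlying computation is the same.
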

\begin{proof}
Differentiating the following equality
\begin{equation}\label{basic_equality}
 1 = \int p(\theta \mid \psi) \, \d \theta,
\end{equation}
by the regularity properties of the exponential family we get
\[
0 = \int \partial_\psi p(\theta \mid \psi) \, \d \theta = \partial_\psi \eta(\psi)E[T(\theta) \mid \psi]+\partial_\psi A(\psi),
\]
and the formula for the expected value follows. As regards the variance, differentiating \eqref{basic_equality} twice, we obtain
\[
0 = \partial^2_\psi \eta(\psi)E[T(\theta) \mid \psi]-\partial^2_\psi A(\psi)+\left[\partial_\psi \eta(\psi) \right]^{2}E[T^2(\theta) \mid \psi]-2\left[\partial_\psi \eta(\psi) \right]^{2}E^2[T(\theta) \mid \psi]+\left[\partial_\psi A(\psi) \right]^{2}.
\]
Noticing that
\[
\left[\partial_\psi \eta(\psi) \right]^{2}E^2[T(\theta) \mid \psi] = \left[\partial_\psi A(\psi) \right]^{2}
\]
and rearranging, we get
\[
\partial^2_\psi A(\psi)-\partial^2_\psi \eta(\psi)E[T(\theta) \mid \psi] = \left[\partial_\psi \eta(\psi) \right]^{2}\text{Var}(T(\theta) \mid \psi),
\]
from which the result follows.
\end{proof}
\begin{proof}[Proof of Corollary \ref{spectral_single}]
By Corollary \ref{spectral_radius}, we have
\[
\gamma(\psi^*) = \frac{1}{1+\lambda} \quad\hbox{with } \lambda = \frac{C^2(\psi^*)}{V(\psi^*)\Fisher(\psi^*)},
\]
where
\[
\begin{aligned}
&C(\psi) = E_{Y_j}\left[\partial_\psi E[T(\theta_j) \mid Y_j, \psi] \right], \\
& V(\psi) = E_{Y_j}\left[\text{Var}(T(\theta_j) \mid Y_j, \psi) \right],\\
&\Fisher(\psi) = -E_{Y_j}\left[ \partial^2_{\psi}\log g(Y_j \mid \psi)\right],
\end{aligned}
\]
with $g(y \mid \psi)$ as in \eqref{likelihood_data}. As regards $C(\psi)$, notice that
\[
\begin{aligned}
\partial_\psi E[T(\theta) \mid Y, \psi] 
 =&
 \frac{\int T(\theta) f(Y \mid \theta) \partial_\psi p(\theta \mid \psi) \, \d \theta}{g(Y \mid \psi)}
 -\\&
 \frac{\left[\int T(\theta) f(Y \mid \theta) p(\theta \mid \psi) \, \d \theta\right]\left[\int f(Y \mid \theta) \partial_\psi p(\theta \mid \psi) \, \d \theta\right]}{g^2(Y \mid \psi)} 
\\=&
 \partial_\psi \eta(\psi) E \left[ T^2(\theta) \mid Y, \psi\right]-\partial_\psi \eta(\psi) E^2 \left[ T(\theta) \mid Y, \psi\right] 
\\=&
 \partial_\psi \eta(\psi) \text{Var}\left(T(\theta) \mid Y, \psi \right).
\end{aligned}
\]
Therefore
\begin{equation}\label{C_computation}
C^2(\psi^*) = \left[\partial_\psi \eta(\psi^*)\right]^2 E ^2_{Y_j}\left[\text{Var}\left(T(\theta_j) \mid Y_j, \psi^* \right)\right].
\end{equation}
As regards $\Fisher(\psi)$, notice that
\[
\begin{aligned}
\partial_{\psi}\log g(Y_j \mid \psi) &= \frac{\int f(Y \mid \theta) \partial_\psi p(\theta \mid \psi) \, \d \theta}{g(Y \mid \psi)} = \partial_\psi \eta(\psi)\frac{\int T(\theta) f(Y \mid \theta)p(\theta \mid \psi) \, \d \theta}{g(Y \mid \psi)}-\partial_\psi A(\psi)
\end{aligned}
\]
and
\[
\begin{aligned}
\partial^2_{\psi}\log g(Y_j \mid \psi) &= \partial^2_\psi \eta (\psi) E \left[T(\theta) \mid Y, \psi \right] -\partial^2_\psi A(\psi)+\partial_\psi \eta(\psi)\frac{\int T(\theta) f(Y \mid \theta)\partial_\psi p(\theta \mid \psi) \, \d \theta}{g(Y \mid \psi)}\\
& -\partial_\psi \eta(\psi)\frac{\left[\int T(\theta) f(Y \mid \theta) p(\theta \mid \psi) \, \d \theta\right]\left[\int f(Y \mid \theta) \partial_\psi p(\theta \mid \psi) \, \d \theta\right]}{g^2(Y \mid \psi)}\\
& = \partial^2_\psi \eta (\psi) E \left[T(\theta) \mid Y, \psi \right] -\partial^2_\psi A(\psi)+\left[\partial_\psi \eta(\psi)\right]^2\text{Var} \left(T(\theta) \mid Y, \psi \right).
\end{aligned}
\]
Noticing that, by Lemma \ref{moments_exp}, we have
\[
\begin{aligned}
\partial^2_\psi \eta (\psi) E \left[T(\theta) \mid Y, \psi \right] -\partial^2_\psi A(\psi) &= \left\{\partial^2_\psi A(\psi)-\frac{\partial^2_\psi \eta(\psi)\partial_\psi A(\psi)}{\partial_\psi \eta(\psi)} \right\} \\
&= \left[\partial_\psi \eta(\psi)\right]^2\text{Var} \left(T(\theta) \mid \psi \right),
\end{aligned}
\]
we get
\begin{equation}\label{fisher_computation}
\begin{aligned}
\Fisher(\psi^*) &= \left[\partial_\psi \eta(\psi^*)\right]^2\text{Var} \left(T(\theta_j) \mid \psi^* \right)- \left[\partial_\psi \eta(\psi^*)\right]^2E_{Y_j}\left[\text{Var} \left(T(\theta_j) \mid Y_j,\psi^* \right)\right] \\
&= \left[\partial_\psi \eta(\psi^*)\right]^2\text{Var}_{Y_j}\left(E \left[T(\theta_j) \mid Y_j,\psi^* \right]\right),
\end{aligned}
\end{equation}
by the Law of Total Variance. Combining \eqref{C_computation} and \eqref{fisher_computation}, it holds
\[
\lambda = \frac{E ^2_{Y_j}\left[\text{Var}\left(T(\theta_j) \mid Y_j, \psi^* \right)\right]}{V(\psi^*)\text{Var}_{Y_j}\left(E \left[T(\theta_j) \mid Y_j,\psi^* \right]\right)} = \frac{E_{Y_j}\left[\text{Var}\left(T(\theta_j) \mid Y_j, \psi^* \right)\right]}{\text{Var}_{Y_j}\left(E \left[T(\theta_j) \mid Y_j,\psi^* \right]\right)}.
\]
The expression for $\gamma(\psi^*)$ follows by rearranging and applying the Law of Total Variance.
\end{proof}

\subsection{Proof of Proposition \ref{prop_normal_model}}
First of all notice that, by Bayes' Theorem, we have
\begin{equation}\label{posterior_theta}
\theta_j \mid Y_j, \mu, \tau_1 \overset{\text{ind.}}{\sim} N\left(m_j, (m\tau_0+\tau_1)^{-1}\right),
\end{equation}
where 
\[
m_j = \frac{m\tau_0}{m\tau_0+\tau_1}\bar{Y}_j+\frac{\tau_1}{m\tau_0+\tau_1}\mu.
\]
Recall that by (B1) we have
\[
Y_j \simiid g(\cdot \mid \psi^*) = N \left( \mu^*, (\tau_0^*)^{-1}I + (\tau_1^*)^{-1}\H \right),
\]
so that
\begin{equation}\label{distribution_mean}
\bar{Y}_j = \frac{1}{m}\sum_{i = 1}^mY_{j,i} \simiid N\left(\mu^*,  \frac{1}{\tau_1^*}+\frac{1}{m\tau_0^*}\right).
\end{equation}
Moreover we need some preliminary lemmas.
\begin{lemma}\label{normal_moments}
Let $X \sim N(\nu, \sigma^2)$. Then
\[
E[X^p] = \sum_{i = 0}^p\binom{p}{i}\nu^i\sigma^{p-i}E[Z^{p-i}],
\]
where $Z \sim N(0, 1)$ and 
\[
E[Z^{s}] = 
\begin{cases} 
0 \quad \text{if $s$ is odd}\\
2^{-s/2}\frac{s!}{(s/2)!} \quad \text{if $s$ is even}
\end{cases}
\]
\end{lemma}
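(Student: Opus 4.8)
The plan is to reduce the statement to the standard representation $X = \nu + \sigma Z$ with $Z\sim N(0,1)$, expand by the binomial theorem, and then separately establish the closed form for the moments of the standard Gaussian.

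First I would record that if $X\sim N(\nu,\sigma^2)$ then $X \overset{d}{=} \nu+\sigma Z$ with $Z\sim N(0,1)$. Hence, by the binomial theorem and linearity of expectation,
\[
E[X^p] = E\left[(\nu+\sigma Z)^p\right] = \sum_{i=0}^p \binom{p}{i}\nu^i\sigma^{p-i}E[Z^{p-i}],
\]
which is the first displayed identity; the interchange of sum and expectation is legitimate since the sum is finite and every Gaussian moment $E[Z^{p-i}]$ is finite. It then remains only to compute $E[Z^s]$ for $Z\sim N(0,1)$. For odd $s$ this vanishes by symmetry of the standard Gaussian density about the origin: the integrand $z^s e^{-z^2/2}$ is odd and integrable, so its integral over $\R$ is $0$. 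For even $s$ I would use the recursion $E[Z^s] = (s-1)E[Z^{s-2}]$, obtained by integration by parts (writing $z^s e^{-z^2/2} = z^{s-1}\cdot z e^{-z^2/2}$ and integrating the factor $z e^{-z^2/2}$), together with $E[Z^0]=1$. Iterating gives $E[Z^s] = (s-1)(s-3)\cdots 3\cdot 1$, and multiplying numerator and denominator by $2^{s/2}(s/2)!$ rewrites this as $\frac{s!}{2^{s/2}(s/2)!}$, matching the claimed formula.

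There is no genuine obstacle here; the only mildly delicate point is justifying the boundary terms in the integration by parts, which vanish because $z^{s-1}e^{-z^2/2}\to 0$ as $|z|\to\infty$. Alternatively, one can bypass the recursion altogether by reading the moments off the moment generating function, using $E[e^{tZ}] = e^{t^2/2} = \sum_{k\geq 0} \frac{t^{2k}}{2^k k!}$ and matching Taylor coefficients of $\sum_{s\geq 0} \frac{t^s}{s!}E[Z^s]$, which immediately yields $E[Z^{2k}] = \frac{(2k)!}{2^k k!}$ and $E[Z^{2k+1}]=0$. Either route completes the proof.
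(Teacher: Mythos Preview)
Your proof is correct and takes essentially the same approach as the paper: write $X=\nu+\sigma Z$ and apply the binomial theorem. The paper's proof is in fact just that one line, whereas you additionally supply a derivation of the standard Gaussian moments $E[Z^s]$, which the paper simply states.
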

\begin{proof}
The result follows by noticing $X = \nu+\sigma Z$ and applying Netwon's Binomial Theorem.
\end{proof}
\begin{lemma}\label{determinant}
Let $A$ be $m \times m$ matrix such that $A = aI+b\H$, with $a \neq b$ and $a \neq (1-m)b$. Then det$(A) = [a+mb]a^{m-1}$ and $A^{-1} = \frac{1}{a}\I-\frac{b}{a(a+mb)}\H$.
\end{lemma}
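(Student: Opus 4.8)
The plan is to exploit the rank-one structure of $\H$. First I would write $\H = \One\One^\top$ with $\One = (1,\dots,1)^\top \in \R^m$, so that $\H$ is symmetric of rank one, with $\H\One = m\One$ and $\H v = \O$ for every $v$ orthogonal to $\One$. Consequently $\H$ has eigenvalue $m$ with multiplicity one (eigenvector $\One$) and eigenvalue $0$ with multiplicity $m-1$ (the orthogonal complement of $\One$). Since $A = aI + b\H$ shares the same eigenvectors, it is diagonalizable with eigenvalue $a + mb$ on $\mathrm{span}(\One)$ and eigenvalue $a$ on its orthogonal complement, of multiplicity $m-1$.

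The determinant claim then follows immediately by multiplying the eigenvalues: $\det(A) = (a+mb)\,a^{m-1}$. For the inverse, note that the eigenvalues $a$ and $a+mb$ are nonzero (this is exactly what is guaranteed under the stated hypotheses in our setting, and is implicit in the inverse formula dividing by $a(a+mb)$), so $A$ is invertible. To identify $A^{-1}$ explicitly I would use the ansatz $A^{-1} = \alpha I + \beta \H$ together with the identity $\H^2 = m\H$ (immediate from $\One^\top\One = m$): expanding,
\[
(aI + b\H)(\alpha I + \beta\H) = a\alpha\, I + (a\beta + b\alpha + mb\beta)\,\H\,,
\]
and matching with $I$ forces $a\alpha = 1$ and $a\beta + b\alpha + mb\beta = 0$, i.e.\ $\alpha = 1/a$ and $\beta = -b/(a(a+mb))$. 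Substituting these values back and checking the product recovers exactly $A^{-1} = \frac{1}{a}I - \frac{b}{a(a+mb)}\H$.

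This lemma is elementary, so there is no serious obstacle; the only minor points requiring care are the multiplicities in the spectrum of $\H$ (so that the exponent in $\det(A)$ is correctly $m-1$) and the bookkeeping with $\H^2 = m\H$ in the inverse computation. As an alternative I could invoke the matrix determinant lemma, $\det(aI + b\One\One^\top) = a^{m-1}(a+mb)$, and the Sherman–Morrison formula applied to $aI + b\One\One^\top$, which yield the same two formulas without passing through the spectral decomposition; I would likely keep the eigenvalue argument since it delivers both statements in one stroke.
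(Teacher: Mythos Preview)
Your proof is correct. For the inverse you take exactly the same ansatz $A^{-1}=\alpha I+\beta\H$ and solve via $\H^2=m\H$, which is what the paper does. For the determinant you take a genuinely different route: the paper computes it by row operations on the explicit matrix (adding all rows to the first to extract the factor $c+(m-1)d$, then subtracting to triangularize), whereas you diagonalize $\H=\One\One^\top$ and read off $\det(A)$ as the product of eigenvalues $(a+mb)\cdot a^{m-1}$. Your spectral argument is shorter and also makes the invertibility condition transparent (namely $a\neq 0$ and $a+mb\neq 0$), while the paper's row-reduction is more hands-on but requires tracking the elementary operations. Both are elementary; your version has the advantage of yielding the determinant and the structure of $A^{-1}$ from the same observation.
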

\begin{proof}
We start by the determinant
\[
\begin{aligned}
\det
\begin{pmatrix}
c & d & \cdots & d \\
d & c & \cdots & d \\
\vdots  & \vdots  & \ddots & \vdots  \\
d & d & \cdots & c 
\end{pmatrix} &= [c+(m-1)d] \det
\begin{pmatrix}
1 & 1 & \cdots & 1 \\
d & c & \cdots & d \\
\vdots  & \vdots  & \ddots & \vdots  \\
d & d & \cdots & c 
\end{pmatrix}\\
& \medskip \\
&= [c+(m-1)d]\begin{pmatrix}
1 & 1 & \cdots & 1 \\
0 & c-d & \cdots & 0 \\
\vdots  & \vdots  & \ddots & \vdots  \\
0 & 0 & \cdots & c-d 
\end{pmatrix} =  [c+(m-1)d](c-d)^{m-1},
\end{aligned}
\]
where the first equality comes by adding to the first row all the others, while the second comes by subtracting the first row (scaled by $d$) from all the others. In our case $c = a+b$ and $d = b$, that is $\det(A) = [a+mb]a^{m-1}$, as desired. With our assumptions we get that the determinant is different from zero.

As regards the inverse we prove $A^{-1} = x I + y\H$ for suitable $x$ and $y$. Indeed
\[
\left(aI + b\H \right)\left(xI+y\H \right) = axI+ay\H+bx\H+by\H^2 = axI+(ay+bx+mby)\H.
\]
Setting the above equal to $I$, we obtain $x=1/a$ and 
\[
ay+bx+mby = 0
\quad \Rightarrow \quad 
y(a+mb) = -\frac{b}{a}
\quad \Rightarrow \quad 
y = -\frac{b}{a(a+mb)}
\]
as desired.
\end{proof}
\begin{lemma}\label{fisher_normal_hier}
Consider the marginal likelihood as in \eqref{marginal_normal}, with $\psi^* = (\mu^*, \tau_1^*, \tau_0^*)$. Then we have
\begin{equation}\label{eq:fisher_normal_hier}
\mathcal{I}(\psi^*) =
\begin{pmatrix}
\frac{m\tau_0^*\tau_1^*}{\tau_1^*+m\tau_0^*} & 0 & 0 \\
0 
&  \frac{m^2(\tau_0^*)^2}{2(\tau_1^*)^2(\tau_1^*+m\tau_0^*)^2} 
& \frac{m}{2(\tau_1^*+m\tau_0^*)^2}  
\\
0 
& \frac{m}{2(\tau_1^*+m\tau_0^*)^2} 
& \frac{m-1}{2(\tau_0^*)^2}+\frac{(\tau_1^*)^2}{2(\tau_0^*)^2(\tau_1^*+m\tau_0^*)^2} 
\end{pmatrix}
\end{equation}
\end{lemma}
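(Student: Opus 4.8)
The plan is to use the fact, recorded in \eqref{marginal_normal}, that the marginal model $\{g(\cdot \mid \psi)\}$ is a Gaussian family: under $\psi = (\mu,\tau_1,\tau_0)$ the vector $Y\in\R^m$ has mean $m(\psi)=\mu\One$ and covariance $\Sigma(\tau_0,\tau_1)=\tau_0^{-1}I+\tau_1^{-1}\H$. For such a family the Fisher information has the standard closed form
\[
[\Fisher(\psi)]_{ab} = (\partial_a m)^\top\Sigma^{-1}(\partial_b m) + \tfrac12\,\text{tr}\!\left(\Sigma^{-1}(\partial_a\Sigma)\,\Sigma^{-1}(\partial_b\Sigma)\right),
\]
with $a,b$ ranging over $\{\mu,\tau_1,\tau_0\}$. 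Since $m(\psi)$ depends only on $\mu$ and $\Sigma$ only on $(\tau_1,\tau_0)$, both terms vanish for the pairs $(\mu,\tau_1)$ and $(\mu,\tau_0)$, so the computation reduces to the scalar $(\mu,\mu)$ entry plus a $2\times2$ trace block.

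The algebraic engine is Lemma \ref{determinant}, which (with $a=\tau_0^{-1}$, $b=\tau_1^{-1}$) gives $\Sigma^{-1}=\tau_0 I-\frac{\tau_0^2}{\tau_1+m\tau_0}\H$. Using $\H\One=m\One$ and $\H^2=m\H$ one obtains the clean identities $\Sigma^{-1}\One=\frac{\tau_0\tau_1}{\tau_1+m\tau_0}\One$ and $\Sigma^{-1}\H=\frac{\tau_0\tau_1}{\tau_1+m\tau_0}\H$; the first yields $[\Fisher]_{\mu\mu}=\One^\top\Sigma^{-1}\One=\frac{m\tau_0\tau_1}{\tau_1+m\tau_0}$. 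For the $2\times2$ block I would plug in $\partial_{\tau_0}\Sigma=-\tau_0^{-2}I$ and $\partial_{\tau_1}\Sigma=-\tau_1^{-2}\H$, reducing the three required traces to $\text{tr}(\Sigma^{-2})$, $\text{tr}(\Sigma^{-1}\H\Sigma^{-1}\H)$ and $\text{tr}(\Sigma^{-2}\H)$ up to the explicit prefactors $\tfrac12\tau_0^{-4}$, $\tfrac12\tau_1^{-4}$, $\tfrac12\tau_0^{-2}\tau_1^{-2}$. Each of these matrices is a linear combination of $I$ and $\H$, and repeated use of $\H^2=m\H$, $\text{tr}(I)=\text{tr}(\H)=m$ collapses them to rational functions of $(m,\tau_0,\tau_1)$; in particular $\Sigma^{-1}\H\Sigma^{-1}\H=\big(\tfrac{\tau_0\tau_1}{\tau_1+m\tau_0}\big)^2 m\H$ and $\Sigma^{-2}\H=\tfrac{\tau_0^2\tau_1^2}{(\tau_1+m\tau_0)^2}\H$.

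Substituting back gives $[\Fisher]_{\tau_1\tau_0}=\frac{m}{2(\tau_1+m\tau_0)^2}$, the $(\tau_1,\tau_1)$ entry already in the stated form, and $[\Fisher]_{\tau_0\tau_0}=\frac{m\big(\tau_1^2+2(m-1)\tau_0\tau_1+m(m-1)\tau_0^2\big)}{2\tau_0^2(\tau_1+m\tau_0)^2}$, which a one-line partial-fraction rearrangement (noting $(m-1)(\tau_1+m\tau_0)^2+\tau_1^2 = m\big(\tau_1^2+2(m-1)\tau_0\tau_1+m(m-1)\tau_0^2\big)$) rewrites as $\frac{m-1}{2\tau_0^2}+\frac{\tau_1^2}{2\tau_0^2(\tau_1+m\tau_0)^2}$, matching \eqref{eq:fisher_normal_hier} upon evaluation at $\psi^*$. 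I do not anticipate a genuine obstacle: the only care needed is tracking the scalar prefactors coming from $\partial_{\tau_i}\Sigma$ and performing the cosmetic rearrangement of the $(\tau_0,\tau_0)$ entry; the differentiation-under-the-integral and the validity of the Gaussian Fisher formula are guaranteed by the regularity already assumed in $(B1)$ and $(B3)$ (and verified for this model elsewhere in the paper).
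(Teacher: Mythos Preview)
Your argument is correct and takes a genuinely different route from the paper's. The paper proceeds by brute force: it writes out the log-likelihood $l(\mu,\tau_0,\tau_1)$ explicitly (using Lemma~\ref{determinant} for $\det\Sigma$ and $\Sigma^{-1}$), computes all six second partial derivatives $\partial^2 l/\partial\psi_a\partial\psi_b$, and then takes expectations term by term using the moments of $Y_1$ under $g(\cdot\mid\psi^*)$. You instead invoke the standard closed-form Fisher information for a Gaussian family, $[\Fisher]_{ab}=(\partial_a m)^\top\Sigma^{-1}(\partial_b m)+\tfrac12\text{tr}(\Sigma^{-1}\partial_a\Sigma\,\Sigma^{-1}\partial_b\Sigma)$, and reduce everything to trace computations in the two-dimensional algebra spanned by $I$ and $\H$ via $\H^2=m\H$. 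Your route is cleaner: it bypasses the second differentiation and the separate expectation step, and the block-diagonal structure (the vanishing $(\mu,\tau_i)$ entries) falls out immediately from the mean/covariance separation rather than from $\E[Y_{1,i}-\mu]=0$. The paper's route is more elementary in that it does not presuppose the Gaussian Fisher formula, but it is considerably longer. Both rely on Lemma~\ref{determinant} in the same way, and your final algebraic identity $(m-1)(\tau_1+m\tau_0)^2+\tau_1^2=m\big(\tau_1^2+2(m-1)\tau_0\tau_1+m(m-1)\tau_0^2\big)$ for the $(\tau_0,\tau_0)$ entry is exactly the simplification the paper obtains implicitly after taking expectations.
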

\begin{proof}
The log--likelihood $l(\psi) = \log g(y \mid \psi)$ is given by
\[
l(\mu, \tau_0, \tau_1) = -\frac{1}{2}\log 2\pi - \frac{1}{2} \log \left(\det(\Sigma) \right)-\frac{1}{2}(Y_1-\mu I)^t\Sigma^{-1}(Y_1-\mu I),
\]
with $\Sigma = \tau_0^{-1}I+\tau_1^{-1}\H$. By Lemma \ref{determinant} with $a = \tau_0^{-1}$ and $b = \tau_1^{-1}$ we have
\[
\det(\Sigma) = [\tau_0^{-1}+m\tau_1^{-1}](\tau_0^{-1})^{m-1}, \quad \Sigma^{-1} = \tau_0I-\frac{\tau_0^2}{\tau_1+m\tau_0}\H.
\]
Thus, the log--likelihood becomes
\[
\begin{aligned}
l(\mu, \tau_0, \tau_1) =& -\frac{1}{2}\log 2\pi +\frac{m-1}{2}\log \tau_0-\frac{1}{2}\log(\tau_0^{-1}+m\tau_1^{-1})-\frac{\tau_0}{2}\sum_{i = 1}^m(Y_{1,i}-\mu)^2\\
&+\frac{\tau_0^2}{2(\tau_1+m\tau_0)}(Y_1-\mu I)^t\H(Y_1-\mu I).
\end{aligned}
\]
Rewriting the last expression we get
\[
\begin{aligned}
l(\mu, \tau_0, \tau_1) =& -\frac{1}{2}\log 2\pi +\frac{m-1}{2}\log \tau_0-\frac{1}{2}\log(\tau_0^{-1}+m\tau_1^{-1})-\frac{\tau_0}{2}\sum_{i = 1}^m(Y_{1,i}-\mu)^2\\
&+\frac{\tau_0^2}{2(\tau_1+m\tau_0)}\left(\sum_{i = 1}^m(Y_{1,i}-\mu) \right)^2.
\end{aligned}
\]
The required derivatives are given by
\[
\begin{aligned}
&\frac{\partial^2 l}{\partial \mu^2} = 
-\frac{m\tau_0\tau_1}{\tau_1+m\tau_0},
\quad
\frac{\partial^2 l}{\partial \tau_1^2} = -\frac{m\tau_0(2\tau_1+m\tau_0)}{2\tau_1^2(\tau_1+m\tau_0)^2}+\frac{\tau_0^2}{(\tau_1+m\tau_0)^3}\left(\sum_{i = 1}^m(Y_{1,i}-\mu) \right)^2,\\
&\frac{\partial^2 l}{\partial \tau_0^2} = - \frac{m-1}{2 \tau_0^2}-\frac{\tau_1(\tau_1+2m\tau_0)}{2\tau_0^2(\tau_1+m\tau_0)^2}+\frac{(\tau_1+m\tau_0)^2-2m\tau_0\tau_1-m^2\tau_0^2}{(\tau_1+m\tau_0)^3}\left(\sum_{i = 1}^m(Y_{1,i}-\mu) \right)^2, \\
& \frac{\partial^2 l}{\partial \mu \partial \tau_0} = \sum_{i = 1}^m(Y_{1,i}-\mu)-\frac{2m\tau_0\tau_1+m^2\tau_0^2}{(\tau_1+m\tau_0)^2}\sum_{i = 1}^m(Y_{1,i}-\mu),\\
& \frac{\partial^2 l}{\partial \mu \partial \tau_1} = \frac{\tau_0^2}{(\tau_1+m\tau_0)^2}\sum_{i = 1}^m(Y_{1,i}-\mu),\quad
\frac{\partial^2 l}{\partial \tau_0 \partial \tau_1} = \frac{m}{2(\tau_1+m\tau_0)^2}-\frac{\tau_0\tau_1}{(\tau_1+m\tau_0)^3}\left(\sum_{i = 1}^m(Y_{1,i}-\mu) \right)^2.
\end{aligned}
\]
The entries of the Fisher Information matrix reported in \eqref{eq:fisher_normal_hier} can then be computed from the above expressions by taking expectations with respect to $Y_1$ and exploiting that
\[
\begin{aligned}
&\E[Y_{1,i}-\mu] = 0, \quad 
\E\left[(Y_{1,i}-\mu)^2\right] = Var(Y_{1,i}-\mu) = 
 \frac{\tau_0+\tau_1}{\tau_0\tau_1},\\
&\begin{aligned}
\E \left[\left(\sum_{i = 1}^m(Y_{1,i}-\mu) \right)^2 \right] &= Var \left(\sum_{i = 1}^m(Y_{1,i}-\mu) \right) = [1, \dots, 1]Var(Y_1)[1, \dots, 1]^t
\\
& = [1, \dots, 1]\left(\tau_0^{-1}I+\tau_1^{-1}\H \right)[1, \dots, 1]^t 
\\
&= m\left(\frac{m\tau_0+\tau_1}{\tau_0\tau_1} \right).
\end{aligned}
\end{aligned}
\]
Thus we can compute the entries of the Fisher Information matrix as
\[
\begin{aligned}
&\begin{aligned}
\E\left[\frac{\partial^2 l}{\partial \tau_0^2}\right] &= - \frac{m-1}{2 \tau_0^2}-\frac{\tau_1(\tau_1+2m\tau_0)}{2\tau_0^2(\tau_1+m\tau_0)^2}+\frac{m(\tau_1+m\tau_0)^2-2m^2\tau_0\tau_1-m^3\tau_0^2}{\tau_0\tau_1(\tau_1+m\tau_0)^2}\\
&= -\frac{m-1}{2\tau_0^2}-\frac{\tau_1^2}{2\tau_0^2(\tau_1+m\tau_0)^2},
\end{aligned}\\ 
&\E \left[\frac{\partial^2 l}{\partial \tau_1^2}\right] = -\frac{m\tau_0(2\tau_1+m\tau_0)}{2\tau_1^2(\tau_1+m\tau_0)^2}+\frac{m\tau_0}{\tau_1(\tau_1+m\tau_0)^2} = -\frac{m^2\tau_0^2}{2\tau_1^2(\tau_1+m\tau_0)^2},\\
&\E \left[ \frac{\partial^2 l}{\partial \mu \partial \tau_0}\right] = 0, \quad \E \left[ \frac{\partial^2 l}{\partial \mu \partial \tau_1}\right] = 0,\\
&\E \left[\frac{\partial^2 l}{\partial \tau_0 \partial \tau_1}\right] = \frac{m}{2(\tau_1+m\tau_0)^2}-\frac{m}{(\tau_1+m\tau_0)^2} = -\frac{m}{2(\tau_1+m\tau_0)^2},
\end{aligned}
\]
as desired.
\end{proof}
\begin{lemma}\label{auxiliary_characteristic}
Let $X \sim N(\nu, \sigma^2)$. Then
\[
\left \lvert E \left[e^{i(aX^2+bX)} \right] \right \rvert \leq  \frac{e^{-\frac{\sigma^2}{2}\frac{(2\nu a + b)^2}{1+4a^2\sigma^4}}}{\left(1+4a^2\sigma^4\right)^{1/4}},
\]
for every $(a, b) \in \R_2$.
\end{lemma}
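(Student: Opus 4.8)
The plan is a direct Gaussian computation, which in fact shows the bound holds with equality. First I would write
\[
E\left[e^{i(aX^2+bX)}\right] = \frac{1}{\sqrt{2\pi}\,\sigma}\int_\R \exp\left\{-\frac{(x-\nu)^2}{2\sigma^2} + i(ax^2+bx)\right\}\d x,
\]
and collect the exponent as $-\frac{\alpha}{2}x^2 + \beta x - \frac{\nu^2}{2\sigma^2}$ with $\alpha = \sigma^{-2} - 2ia$ and $\beta = \nu\sigma^{-2} + ib$. Since $\mathrm{Re}(\alpha) = \sigma^{-2} > 0$, the Gaussian integral formula $\int_\R e^{-\frac{\alpha}{2}x^2 + \beta x}\d x = \sqrt{2\pi/\alpha}\, e^{\beta^2/(2\alpha)}$, with the principal branch of the square root, remains valid for complex $\alpha$ (justified by a standard contour-shift argument exploiting $\mathrm{Re}(\alpha)>0$), giving
\[
E\left[e^{i(aX^2+bX)}\right] = \frac{1}{\sigma\sqrt{\alpha}}\exp\left\{\frac{\beta^2}{2\alpha} - \frac{\nu^2}{2\sigma^2}\right\}.
\]

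Next I would take moduli. For the prefactor, $|\alpha|^2 = \sigma^{-4} + 4a^2$, so $|\sigma\sqrt{\alpha}|^{-1} = \sigma^{-1}|\alpha|^{-1/2} = (1+4a^2\sigma^4)^{-1/4}$, which is exactly the prefactor appearing in the claim. For the exponential, $|e^z| = e^{\mathrm{Re}(z)}$, so it remains to compute $\mathrm{Re}\left(\frac{\beta^2}{2\alpha} - \frac{\nu^2}{2\sigma^2}\right)$. Writing $\alpha = p - iq$, $\beta = r + is$ with $p = \sigma^{-2}$, $q = 2a$, $r = \nu\sigma^{-2}$, $s = b$, a short algebraic manipulation gives $\mathrm{Re}(\beta^2/\alpha) = \frac{(r^2-s^2)p - 2rsq}{p^2+q^2}$, and hence, using $\nu^2/(2\sigma^2) = r^2/(2p)$, that $\mathrm{Re}\left(\frac{\beta^2}{2\alpha} - \frac{\nu^2}{2\sigma^2}\right) = -\frac{(sp+rq)^2}{2p(p^2+q^2)}$. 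Substituting the values of $p,q,r,s$ yields $sp + rq = (2\nu a + b)/\sigma^2$ and $p(p^2+q^2) = (1+4a^2\sigma^4)/\sigma^6$, so this exponent equals $-\frac{\sigma^2(2\nu a+b)^2}{2(1+4a^2\sigma^4)}$.

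Combining the two pieces gives $\left|E[e^{i(aX^2+bX)}]\right| = (1+4a^2\sigma^4)^{-1/4}\exp\left\{-\frac{\sigma^2}{2}\frac{(2\nu a+b)^2}{1+4a^2\sigma^4}\right\}$, which is the claimed bound, in fact with equality. The only step requiring care is the validity of the Gaussian integral identity for complex $\alpha$; the rest is elementary algebra, so I do not anticipate any genuine obstacle.
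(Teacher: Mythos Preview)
Your proof is correct and follows essentially the same approach as the paper: a direct Gaussian computation via the complex Gaussian integral (completing the square), followed by taking moduli and computing the real part of the resulting exponent. The paper parametrizes things slightly differently (working with $1-2ia\sigma^2$ and $\nu+ib\sigma^2$ rather than your $\alpha,\beta,p,q,r,s$), but the computation and final identity are the same, and as you observe, the inequality is in fact an equality.
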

\begin{proof}
By definition of expectation we have
\[
\begin{aligned}
E \left[e^{i(aX^2+bX)} \right] &= \int_{\R} e^{i(az^2+bz)}\frac{1}{\sqrt{2\pi \sigma^2}}e^{-\frac{(z-\nu)^2}{2\sigma^2}} \, \d z = \frac{e^{-\frac{\nu^2}{2\sigma^2}}}{\sqrt{2\pi\sigma^2}}\int_\R e^{-\frac{1}{2}\left[z^2\left(\frac{1}{\sigma^2}-2ia \right)-2z\left(\frac{\nu}{\sigma^2}+ib \right) \right]}\, \d z
\end{aligned}
\]
Notice that
\[
\begin{aligned}
z^2\left(\frac{1}{\sigma^2}-2ia \right)-2z\left(\frac{\nu}{\sigma^2}+ib \right) &= \left(\frac{1-2ia\sigma^2}{\sigma^2}\right)\left[z^2-2z \frac{\nu+ib\sigma^2}{1-2ia\sigma^2} +\left(\frac{\nu+ib\sigma^2}{1-2ia\sigma^2} \right)-\left(\frac{\nu+ib\sigma^2}{1-2ia\sigma^2}\right)^2 \right]\\
&= \left(\frac{1-2ia\sigma^2}{\sigma^2}\right)\left(z-\frac{\nu+i\sigma^2b}{1-2ia\sigma^2} \right)^2-\frac{(\nu+ib\sigma^2)^2}{\sigma^2(1-2ia\sigma^2)},
\end{aligned}
\]
so that
\[
\frac{1}{\sqrt{2\pi \sigma^2}}\int_\R e^{-\frac{1}{2}\left[z^2\left(\frac{1}{\sigma^2}-2ia \right)-2z\left(\frac{\nu}{\sigma^2}+ib \right) \right]}\, \d z = \frac{e^{\frac{(\nu+ib\sigma^2)^2}{2\sigma^2(1-2ia\sigma^2)}}}{\sqrt{1-2ia\sigma^2}}.
\]
Finally, we get
\begin{equation}\label{auxiliary_car}
E \left[e^{i(aX^2+bX)} \right] = e^{-\frac{\nu^2}{2\sigma^2}}\frac{e^{\frac{(\nu+ib\sigma^2)^2}{2\sigma^2(1-2ia\sigma^2)}}}{\sqrt{1-2ia\sigma^2}}.
\end{equation}
With simple computations we obtain
\[
\begin{aligned}
\frac{(\nu+ib\sigma^2)^2}{2\sigma^2(1-2ia\sigma^2)} &= \frac{(\nu^2+2i\nu b\sigma^2-b^2\sigma^4)(1+2ia\sigma^2)}{2\sigma^2(1+4a^2\sigma^4)}\\
&= \frac{\nu^2+2i\nu b\sigma^2-b^2\sigma^4+2i\nu^2a\sigma^2-4\nu ab\sigma^2-2i\sigma^6ab^2}{2\sigma^2(1+4a^2\sigma^4)}\\
&= \frac{\nu^2+2i(\nu b\sigma^2+\nu^2a\sigma^2-\sigma^6ab^2)-4\nu ab\sigma^4-\sigma^4b^2}{2\sigma^2(1+4a^2\sigma^4)}.
\end{aligned}
\]
Thus, by \eqref{auxiliary_car} we can write
\[
E \left[e^{i(aX^2+bX)} \right] = e^{-\frac{\nu^2}{2\sigma^2}}\frac{e^{\frac{(\nu+ib\sigma^2)^2}{2\sigma^2(1-2ia\sigma^2)}}}{\sqrt{1-2ia\sigma^2}},
\]
that implies
\[
\left \lvert E \left[e^{i(aX^2+bX)} \right] \right \rvert \leq \frac{e^{-\frac{4\nu^2 a^2\sigma^4+4\nu ab\sigma^4+b^2\sigma^4}{2\sigma^2(1+4a^2\sigma^4)}}}{| \sqrt{1-2ia\sigma^2}|} = \frac{e^{-\frac{\sigma^2}{2}\frac{(2\nu a + b)^2}{1+4a^2\sigma^4}}}{\left(1+4a^2\sigma^4\right)^{1/4}},
\]
as desired.
\end{proof}
Define
\begin{equation}\label{eq:psi_T_normal}
\psi = (\mu, \tau_1) \quad \text{and} \quad \bm{T} = \bm{T}(\bm{\theta}) = \left( \sum_{j = 1}^J\theta_j, \sum_{j = 1}^J(\theta_j-\mu^*)^2\right).
\end{equation}
Next three lemmas show that assumptions $(B1)-(B6)$ are satisfied for $(\bm{T}, \psi)$ as defined above.
\begin{lemma}\label{lemma_normal_B1B3}
Consider the setting of Proposition \ref{prop_normal_model}. Then assumptions $(B1)-(B3)$ are satisfied for $(\bm{T}, \psi)$ as in \eqref{eq:psi_T_normal}.
\end{lemma}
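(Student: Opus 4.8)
The plan is to verify $(B1)$, $(B2)$ and $(B3)$ in turn, for $\psi=(\mu,\tau_1)$ and the marginal likelihood $g(y\mid\psi)=N(y\mid\mu,(\tau_0^*)^{-1}I+\tau_1^{-1}\H)$ from \eqref{marginal_normal} with $\tau_0=\tau_0^*$ fixed; the reasoning closely mirrors that of Section~\ref{sec:parametric_model}, in particular Lemma~\ref{tests_normal}. For $(B1)$, the i.i.d.\ sampling $Y_j\simiid Q_{\psi^*}$ with density $g(\cdot\mid\psi^*)$ and the continuity and strict positivity of $p_0$ near $\psi^*$ are part of the hypotheses of Proposition~\ref{prop_normal_model}. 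Injectivity of $\psi\mapsto g(\cdot\mid\psi)$ follows because the mean vector $\mu\One$ of $g(\cdot\mid\mu,\tau_1)$ determines $\mu$, while the covariance matrix $(\tau_0^*)^{-1}I+\tau_1^{-1}\H$ determines $\tau_1^{-1}$ (from the off-diagonal entries if $m\ge2$, or from the single variance entry minus the known $(\tau_0^*)^{-1}$ if $m=1$). For continuous differentiability of $\psi\mapsto\sqrt{g(y\mid\psi)}$, I would invoke Lemma~\ref{determinant} with $a=(\tau_0^*)^{-1}$, $b=\tau_1^{-1}$, which expresses $\det((\tau_0^*)^{-1}I+\tau_1^{-1}\H)$ and its inverse as smooth functions of $\tau_1$ on $\{\tau_1>0\}$; hence $(\mu,\tau_1)\mapsto g(y\mid\mu,\tau_1)$ is $C^\infty$ and strictly positive there, so the same holds for its square root on a neighbourhood of $\psi^*$ (where $\tau_1^*>0$).

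For $(B2)$, following the construction of Lemma~\ref{tests_normal}, I would set $\bar{Y}_j=m^{-1}\sum_{i=1}^m Y_{j,i}$, $\bar{\bar Y}_J=J^{-1}\sum_{j=1}^J\bar{Y}_j$ and $\hat V_J=J^{-1}\sum_{j=1}^J(\bar{Y}_j-\bar{\bar Y}_J)^2$, fix the compact neighbourhood $\Psi=[\mu^*-1,\mu^*+1]\times[\tau_1^*/2,2\tau_1^*]$ and, with $v^*=(\tau_1^*)^{-1}+(m\tau_0^*)^{-1}$, take
\[
u_J=1-\mathbbm{1}\{|\bar{\bar Y}_J-\mu^*|\le 1/2\}\,\mathbbm{1}\{|\hat V_J-v^*|\le (4\tau_1^*)^{-1}\}.
\]
By \eqref{distribution_mean}, under parameters $(\mu,\tau_1)$ one can write $\bar{Y}_j=\mu+\sigma(\psi)Z_j$ with $Z_j\simiid N(0,1)$ and $\sigma^2(\psi)=\tau_1^{-1}+(m\tau_0^*)^{-1}$, so $\bar{\bar Y}_J=\mu+\sigma(\psi)\bar Z$ and $\hat V_J=\sigma^2(\psi)\widehat W$ where $\widehat W=J^{-1}\sum_j(Z_j-\bar Z)^2$ satisfies $\widehat W\to1$ and $\bar Z\to0$ a.s.\ irrespective of $\psi$. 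The first-kind error is bounded by $P_{\psi^*}(|\bar{\bar Y}_J-\mu^*|>1/2)+P_{\psi^*}(|\hat V_J-v^*|>(4\tau_1^*)^{-1})\to0$ by the law of large numbers. For the second-kind error with $\psi\notin\Psi$: if $\tau_1\notin[\tau_1^*/2,2\tau_1^*]$ then $\sigma^2(\psi)$ differs from $v^*$ by at least $(2\tau_1^*)^{-1}$, and since $\sigma^2(\psi)\ge(m\tau_0^*)^{-1}$ always, the event $|\sigma^2(\psi)\widehat W-v^*|\le(4\tau_1^*)^{-1}$ forces $\widehat W$ to lie at a fixed distance from $1$ (the ratios $(v^*\pm(4\tau_1^*)^{-1})/\sigma^2(\psi)$ being uniformly bounded away from $1$ over such $\tau_1$), an event of vanishing probability uniformly in $\psi$; if instead $\tau_1\in[\tau_1^*/2,2\tau_1^*]$ but $|\mu-\mu^*|>1$, then $\sigma(\psi)$ is bounded above by a constant, so $|\mu-\mu^*+\sigma(\psi)\bar Z|\le1/2$ forces $|\bar Z|$ above a fixed positive constant, again of vanishing probability uniformly. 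Taking the supremum over $\psi\notin\Psi$ yields $(B2)$.

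For $(B3)$, I would use Lemma~\ref{fisher_normal_hier} together with the fact that the cross derivative $\E[\partial_\mu\partial_{\tau_1}\log g(Y_1\mid\psi)]$ is proportional to $\E[\sum_i(Y_{1,i}-\mu)]=0$ at every $(\mu,\tau_1)$, so that, with $\tau_0=\tau_0^*$ known, the Fisher information is the diagonal matrix
\[
\Fisher(\psi)=\mathrm{diag}\!\left(\frac{m\tau_0^*\tau_1}{\tau_1+m\tau_0^*},\;\frac{m^2(\tau_0^*)^2}{2\tau_1^2(\tau_1+m\tau_0^*)^2}\right),
\]
whose entries are strictly positive and continuous on $\{\tau_1>0\}$; hence $\Fisher(\psi)$ is non-singular and continuous in a neighbourhood of $\psi^*$, which gives $(B3)$.

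I expect the only non-mechanical step to be the uniform control of the second-kind error in $(B2)$ when $\tau_1\to0$ or $\tau_1\to\infty$: there $\sigma^2(\psi)$ degenerates, and one has to exploit both the lower bound $\sigma^2(\psi)\ge(m\tau_0^*)^{-1}$ and the $\psi$-free a.s.\ convergence $\widehat W\to1$ to turn the acceptance event into the statement that $\widehat W$ lies outside a fixed neighbourhood of $1$. Everything else is routine and parallels the finite-dimensional argument of Lemma~\ref{tests_normal}.
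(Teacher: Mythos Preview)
Your proposal is correct and follows essentially the same route as the paper's own proof: the paper verifies $(B1)$ by inspection, handles $(B2)$ by saying ``suitable tests can be defined analogously to Lemma~\ref{tests_normal}'', and reads off $(B3)$ from Lemma~\ref{fisher_normal_hier}. You simply fill in the details the paper leaves implicit, in particular by adapting the tests of Lemma~\ref{tests_normal} to the group means $\bar Y_j$ via \eqref{distribution_mean} and checking the uniform second-kind error bound carefully; your Fisher information computation and its continuity also match the paper's use of Lemma~\ref{fisher_normal_hier}.
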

\begin{proof}
It is easy to show that assumption $(B1)$ is satisfied, with $g(\cdot)$ as in \eqref{marginal_normal}. As regards $(B2)$, suitable tests can be defined analogously to Lemma \ref{tests_normal}.

Finally, by Lemma \ref{fisher_normal_hier}, the Fisher Information is given by
\[
\frac{m\tau_0^*\tau_1^*}{m\tau^*_0+\tau_1^*}
\]
for $l=1$ and by
\[
\begin{bmatrix}
\frac{m\tau_0^*\tau_1^*}{m\tau^*_0+\tau_1^*} & 0\\
0 & \frac{m^2(\tau_0^*)^2}{2(\tau_1^*)^2(\tau_1^*+m\tau_0^*)^2}
\end{bmatrix},
\]
for $l = 2,3$. Therefore $(B3)$ is satisfied for any $\psi^*$.
\end{proof}
\begin{lemma}\label{lemma_normal_B4}
Consider the setting of Proposition \ref{prop_normal_model}. Then assumption $(B4)$ is satisfied for $(\bm{T}, \psi)$ as in \eqref{eq:psi_T_normal}.
\end{lemma}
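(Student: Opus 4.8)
The key structural fact is \eqref{posterior_theta}: conditionally on $(Y_j,\psi)$ the group parameter $\theta_j$ is Gaussian, $\theta_j\mid Y_j,\psi\sim N\big(m_j,\sigma_\theta^2(\psi)\big)$, with variance $\sigma_\theta^2(\psi)=(m\tau_0^*+\tau_1)^{-1}$ and mean $m_j=\frac{m\tau_0^*}{m\tau_0^*+\tau_1}\bar Y_j+\frac{\tau_1}{m\tau_0^*+\tau_1}\mu$, where for $P_1$ one reads $\tau_1=\tau_1^*$ fixed and $\psi=\mu$, while for $P_2,P_3$ one has $\psi=(\mu,\tau_1)$. Since $\tau_0^*>0$ and $\tau_1^*>0$, I would pick $\delta_4>0$ small enough that $m\tau_0^*+\tau_1\ge c_0>0$ for all $\psi\in B_{\delta_4}$; on this ball the maps $\psi\mapsto\sigma_\theta^2(\psi)$ and $\psi\mapsto m_j(\psi,\bar Y_j)$ are $C^\infty$, with all $\psi$-derivatives bounded uniformly over $\psi\in B_{\delta_4}$ and at most affine in $\bar Y_j$ (the derivatives of $\sigma_\theta^2$ do not involve $\bar Y_j$ at all, those of $m_j$ are affine in it).

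Next I would invoke Lemma \ref{normal_moments} to write each posterior moment $M^{(p)}_s(\psi\mid Y_j)$ (well defined for all $p\le 6$ since the conditional law is Gaussian), each cross moment $M^{(1)}_{s,s'}(\psi\mid Y_j)$, and each product $M^{(1)}_s(\psi\mid Y_j)M^{(1)}_{s'}(\psi\mid Y_j)$ as a polynomial in $m_j$ and $\sigma_\theta(\psi)$ with fixed numerical coefficients (involving only the constant $\mu^*$). Differentiating once or twice in $\psi$ via the chain rule, each of the four quantities appearing in $(B4)$ --- $\partial_{\psi_d}M^{(6)}_s$, $\partial_{\psi_d}\partial_{\psi_{d'}}M^{(1)}_s$, $\partial_{\psi_d}M^{(1)}_{s,s'}$, and $\partial_{\psi_d}\{M^{(1)}_sM^{(1)}_{s'}\}$ --- becomes a polynomial in $\bar Y_j$ whose coefficients are continuous functions of $\psi$, hence bounded on the compact set $B_{\delta_4}$.

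By \eqref{distribution_mean}, $\bar Y_j\sim N\big(\mu^*,(\tau_1^*)^{-1}+(m\tau_0^*)^{-1}\big)$ has finite absolute moments of all orders, so $E_{Y_j}$ of the absolute value of any of the above polynomials is finite, and bounded uniformly over $\psi\in B_{\delta_4}$ by the corresponding $\sup_{\psi\in B_{\delta_4}}$ of the coefficients times the moments of $\bar Y_j$; taking $C$ to be the maximum of these four bounds gives the smoothness part of $(B4)$.

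Finally, for the non-singularity of $V(\psi^*)$, set $\sigma^2:=\sigma_\theta^2(\psi^*)=(m\tau_0^*+\tau_1^*)^{-1}>0$ and write $\theta_j=m_j+\sigma Z$ with $Z\sim N(0,1)$. Using $E[Z^3]=0$ and $\Var(Z^2)=2$, a direct computation gives $\text{Cov}(T(\theta_j)\mid Y_j,\psi^*)=\sigma^2>0$ for $P_1$ ($S=1$), and for $P_2,P_3$ ($S=2$) the matrix
\[
\text{Cov}(T(\theta_j)\mid Y_j,\psi^*)=
\begin{pmatrix}
\sigma^2 & 2(m_j-\mu^*)\sigma^2\\
2(m_j-\mu^*)\sigma^2 & 4(m_j-\mu^*)^2\sigma^2+2\sigma^4
\end{pmatrix},
\]
with determinant $2\sigma^6>0$, hence positive definite for every value of $\bar Y_j$. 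Since each entry is a polynomial in $\bar Y_j$ and therefore $Q_{\psi^*}$-integrable, $V(\psi^*)=E_{Y_j}\big[\text{Cov}(T(\theta_j)\mid Y_j,\psi^*)\big]$ is itself positive definite (for any $v\ne 0$, $v^\top V(\psi^*)v=E_{Y_j}[v^\top\text{Cov}(T(\theta_j)\mid Y_j,\psi^*)v]>0$), which completes the verification of $(B4)$. I expect the only mildly delicate point to be the uniform-in-$\psi$ control of the derivatives on $B_{\delta_4}$, and this reduces entirely to the first paragraph's observation that $\theta_j\mid Y_j,\psi$ is Gaussian with parameters that are smooth in $\psi$ with uniformly bounded, data-affine derivatives; everything else is routine Gaussian-moment bookkeeping.
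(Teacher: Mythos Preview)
Your proof is correct and follows essentially the same route as the paper: exploit the conditional Gaussianity in \eqref{posterior_theta}, use Lemma \ref{normal_moments} to express all relevant posterior moments as polynomials in $(m_j,\sigma_\theta)$, differentiate and bound uniformly on a ball $B_{\delta_4}$ using the smoothness in $\psi$ and the finite moments of $\bar Y_j$ from \eqref{distribution_mean}. Your treatment is in fact slightly more complete than the paper's own proof of this lemma, since you explicitly verify the non-singularity of $V(\psi^*)$ via the determinant $2\sigma^6>0$ of the conditional covariance; the paper instead relies on the explicit diagonal form of $V(\psi^*)$ computed later in Lemma \ref{C_Var_normal}.
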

\begin{proof}
Since $T(\theta_j) = (\theta_j, (\theta_j-\mu^*)^2)$ it holds
\[
M_s^{(p)}(\mu, \tau_1 \mid Y_j) = E\left[\theta_j^{sp} \mid \mu, \tau_1 \right], \quad M_{1,2}^{(1)}(\mu, \tau_1 \mid Y_j) = E\left[\theta_j(\theta_j^*-\mu^*)^2 \mid \mu, \tau_1 \right].
\]
By Lemma \ref{normal_moments} and \eqref{posterior_theta}, we obtain
\[
E\left[\theta_j^{k} \mid \mu, \tau_1 \right] = \sum_{i = 0}^k\binom{k}{i}\left(\frac{m\tau_0}{m\tau_0+\tau_1}\bar{Y}_j+\frac{\tau_1}{m\tau_0+\tau_1}\mu \right)^i\left(\frac{1}{m\tau_0+\tau_1} \right)^{(k-i)/2}E[Z^{k-i}].
\]
It is a finite sum of infinitely times differentiable terms (with respect to $\mu$ and $\tau_1$). Moreover, for every $k \geq 1$, thanks to Lemma \ref{normal_moments} and \eqref{distribution_mean}, $E_{Y_j} \left[|\bar{Y}_j|^k \mid \mu, \tau_1 \right]$ is uniformly bounded over $(\mu, \tau_1)$ belonging to a bounded set.

Therefore, choosing $\delta_4 < \tau_1^*$, it is easy to find $C<\infty$ that satisfies assumption $(B4)$.
\end{proof}

\begin{lemma}\label{lemma_normal_B5B6}
Consider the setting of Proposition \ref{prop_normal_model}. Then assumptions $(B5)$ and $(B6)$ are satisfied for $(\bm{T}, \psi)$ as in \eqref{eq:psi_T_normal}.
\end{lemma}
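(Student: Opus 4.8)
The plan is to leverage the conditional conjugacy of model \eqref{normal_model}. By \eqref{posterior_theta}, conditionally on $(Y_j,\psi)$ with $\psi=(\mu,\tau_1)$ the parameter $\theta_j$ is Gaussian, $\theta_j\mid Y_j,\psi\sim N(m_j,\sigma^2)$ with $\sigma^2=\sigma^2(\psi)=(m\tau_0^*+\tau_1)^{-1}$. Since here $T(\theta_j)=(\theta_j,(\theta_j-\mu^*)^2)$ and $S=2$, for $t=(t_1,t_2)$ the random variable $t^\top T(\theta_j)$ is a quadratic polynomial in the Gaussian variable $\theta_j$, namely $t_2\theta_j^2+(t_1-2t_2\mu^*)\theta_j+t_2(\mu^*)^2$. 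Hence, up to a factor of modulus one, $\varphi(t\mid Y_j,\psi)$ is exactly of the form covered by Lemma \ref{auxiliary_characteristic}; applying that lemma with $a=t_2$, $b=t_1-2t_2\mu^*$ and $\nu=m_j$ gives the pointwise bound
\[
|\varphi(t\mid Y_j,\psi)|\le (1+4t_2^2\sigma^4)^{-1/4}\exp\!\left(-\tfrac{\sigma^2}{2}\,\tfrac{(t_1+2t_2u_j)^2}{1+4t_2^2\sigma^4}\right),\qquad u_j:=m_j-\mu^*.
\]
Multiplying over $j=1,\dots,k$ and writing $Q_k(t)=\sum_{j=1}^k(t_1+2t_2u_j)^2$ yields $|\varphi^{(k)}(t\mid Y_{1:k},\psi)|\le(1+4t_2^2\sigma^4)^{-k/4}\exp(-\tfrac{\sigma^2}{2}Q_k(t)/(1+4t_2^2\sigma^4))$, which is the workhorse for both conditions. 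Throughout I fix a compact neighborhood of $\psi^*$ on which $\tau_1$ stays bounded away from $0$ and $\infty$, so that $\sigma^2(\psi)$ ranges over some $[\sigma_-^2,\sigma_+^2]\subset(0,\infty)$ and each $u_j(\cdot)$ is continuous there; this neighborhood will serve as the ball for $\delta_5$ and $\delta_6$.

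For $(B5)$ I would complete the square, $Q_k(t)=k(t_1+2t_2\bar u)^2+4t_2^2\sum_j(u_j-\bar u)^2$ with $\bar u=k^{-1}\sum_j u_j$, and integrate in $t_1$ first: this is an elementary Gaussian integral producing a factor proportional to $\sqrt{1+4t_2^2\sigma^4}$ times an exponential factor bounded by $1$, so that $\int_{\R^2}|\varphi^{(k)}(t\mid Y_{1:k},\psi)|^2\,\d t$ is at most a constant times $\int_{\R}(1+4t_2^2\sigma^4)^{(1-k)/2}\,\d t_2$. On the chosen neighborhood this is dominated, uniformly in $\psi$ and for every data realization, by $C\int_{\R}(1+4\sigma_-^4 t_2^2)^{(1-k)/2}\,\d t_2$, which is finite precisely when $k>2$. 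So $(B5)$ holds, e.g.\ with $k=3$ and $\delta_5$ the radius of the chosen neighborhood; note this step does not even need the $u_j$'s to be distinct.

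The main obstacle is $(B6)$, which requires a bound strictly below $1$, uniformly over $\{|t|>\epsilon\}$ and over $\psi$ in a ball, and this is where non-degeneracy of $Q_{k'}$ enters. I would take $k'=2$ and split $\{|t|>\epsilon\}$. On $\{|t_2|>\epsilon/\sqrt2\}$ the prefactor alone suffices: $|\varphi^{(2)}(t\mid Y_{1:2},\psi)|\le(1+4t_2^2\sigma^4)^{-1/2}\le(1+2\epsilon^2\sigma_-^4)^{-1/2}<1$. On $\{|t_2|\le\epsilon/\sqrt2\}\cap\{|t|>\epsilon\}$ the denominator satisfies $1+4t_2^2\sigma^4\le 1+2\epsilon^2\sigma_+^4$, and one uses the exponential together with $Q_2(t)\ge\lambda_{\min}(\psi)\,|t|^2\ge\lambda_{\min}(\psi)\,\epsilon^2$, where $\lambda_{\min}(\psi)$ is the smallest eigenvalue of the $2\times 2$ matrix of $Q_2$. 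The crux is to show that $\lambda_{\min}(\psi)$ is bounded below by a positive constant uniformly over $\psi$ in the ball, for $Q_{\psi^*}^{(\infty)}$-almost every data: this follows because the determinant of that matrix equals $4(u_1-u_2)^2$ with $u_1-u_2=\tfrac{m\tau_0^*}{m\tau_0^*+\tau_1}(\bar Y_1-\bar Y_2)$ --- whose prefactor is bounded below on the ball, and with $\bar Y_1\ne\bar Y_2$ almost surely because the marginal of $\bar Y_j$ is absolutely continuous (see \eqref{distribution_mean}) --- while its trace stays bounded above on the ball since $u_1,u_2$ are continuous in $\psi$. Taking $\phi(\epsilon)$ to be the maximum of the two resulting bounds gives $\phi(\epsilon)<1$ for every $\epsilon>0$, establishing $(B6)$. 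Finally, the specification $l=1$ (where $\tau_1$ is fixed, $S=1$ and $T(\theta_j)=\theta_j$) is immediate: there $|\varphi^{(k)}(t\mid Y_{1:k},\psi)|=e^{-k\sigma^2 t^2/2}$ with $\sigma^2$ constant, independent of the data and of $\psi$, so $(B5)$ and $(B6)$ hold with $k=k'=1$.
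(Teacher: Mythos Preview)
For $(B5)$ your argument and the paper's coincide: both apply Lemma \ref{auxiliary_characteristic}, take $k=3$, and integrate; you complete the square in $Q_k$ before integrating in $t_1$ while the paper uses H\"older's inequality, but the outcome is the same data-free bound, uniform over $\psi$ in the ball. For $(B6)$ the routes genuinely differ. The paper takes $k'=1$ and argues via the dichotomy ``$t_2\ne 0$ gives the prefactor bound, $t_2=0$ gives the Gaussian bound in $t_1$'', claiming a data-independent $\phi(\epsilon)$. You instead take $k'=2$ and exploit that the exponent $Q_2(t)$ is an almost surely positive-definite quadratic form (from $\bar Y_1\ne\bar Y_2$), bounding it below by $\lambda_{\min}(\psi)\,|t|^2$ uniformly on the ball. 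Your approach is the more careful one: the paper's final step, passing from $|\varphi^{(1)}(t)|\le\max\{(1+4t_2^2\sigma^4)^{-1/4},\,e^{-\sigma^2 t_1^2/2}\}$ to a supremum over $\{|t|>\epsilon\}$ strictly below $1$, does not follow as written (send $t_2\to 0$ with $|t_1|>\epsilon$; the first term of the max tends to $1$), and in fact with $k'=1$ no data-free $\phi(\epsilon)<1$ exists because $m_1$ is unbounded. Your $k'=2$ argument sidesteps this by making the quadratic form non-degenerate, at the price of a $\phi(\epsilon)$ that depends on $(Y_1,Y_2)$; this is consistent with how $(B6)$ is verified in the categorical case (Lemma \ref{B6_finite}), where the bound is also data-dependent.
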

\begin{proof}
Assume $\mu^* = 0$, the general case follows by similar calculations. Recall that the posterior distribution of $\theta_j$ is given by $N(m_j, \sigma^2)$, with $m_j$ as  in \eqref{posterior_theta} and
\[
\sigma^2 = \frac{1}{m\tau_0+\tau_1}.
\]
By Lemma \ref{auxiliary_characteristic} we have
\begin{equation}\label{ineq_charact}
\left \lvert E \left[e^{i(t_1\theta_j+t_2\theta_j^2)} \mid Y_j, \mu, \tau_1 \right] \right \rvert^2 \leq \frac{e^{-\sigma^2\frac{(2m_j t_2+t_1)^2}{1+4t_2^2\sigma^4}}}{\left(1+4t_2^2\sigma^4 \right)^{1/2}}.
\end{equation}
Moreover, notice that
\[
\begin{aligned}
\int_{\R} e^{-c\sigma^2\frac{(2m_j t_2+t_1)^2}{1+4t_2^2\sigma^4}} \, \d t_1 = \sqrt{\frac{\pi}{c\sigma^2}}\sqrt{1+4t_2^2\sigma^4},
\end{aligned}
\]
for any $c > 0$. Since $\theta_j$ are independent, given $\mu$ and $\tau_1$, by H\"older inequality we write
\[
\begin{aligned}
\int_{\R_2} &\left \lvert E \left[e^{i(t_1\sum_{j = 1}^3\theta_j+t_2\sum_{j = 1}^3\theta_j^2)} \mid Y, \mu, \tau_1 \right] \right \rvert^2 \, \d t_1 \d t_2 = \int_{\R_2} \prod_{j = 1}^3\left \lvert E \left[e^{i(t_1\theta_j+t_2\theta_j^2)} \mid Y_j, \mu, \tau_1 \right] \right \rvert^2 \, \d t_1 \d t_2 \\
&\leq \int_{\R_2} \prod_{j = 1}^3\frac{e^{-\sigma^2\frac{(2m_j t_2+t_1)^2}{1+4t_2^2\sigma^4}}}{\left(1+4t_2^2\sigma^4 \right)^{1/2}} \, \d t_1 \d t_2 = \int_{\R} \frac{1}{\left(1+4t_2^2\sigma^4 \right)^{3/2}} \left(\int_{\R}\prod_{j = 1}^3e^{-\sigma^2\frac{(2\nu_j t_2+t_1)^2}{1+4t_2^2\sigma^4}} \, \d t_1 \right) \, \d t_2\\
& \leq  \int_{\R} \frac{1}{\left(1+4t_2^2\sigma^4 \right)^{3/2}} \prod_{j = 1}^3 \left(\int_{\R}e^{-3\sigma^2\frac{(2\nu_j t_2+t_1)^2}{1+4t_2^2\sigma^4}} \, \d t_1 \right)^{1/3} \, \d t_2\\
& = \sqrt{\frac{\pi}{3\sigma^2}} \int_\R\frac{1}{1+4t_2^2\sigma^4} \, \d t_2.
\end{aligned}
\]
Therefore
\[
\int_{\R^2} \left \lvert\varphi^{(3)}\left(t \mid Y, \psi \right) \right\rvert^2 \, \d t \leq \sqrt{\frac{\pi}{3\sigma^2}} \int_\R\frac{1}{1+4t_2^2\sigma^4} \, \d t_2 < \infty,
\]
where the right hand side does not depend on the data and it is a continuous function of $\mu$ and $\tau_1$. This implies $(B5)$ is satisfied with $k = 3$.

As regards $(B6)$, by Lemma \ref{auxiliary_characteristic} if $t_2 \neq 0$ we have
\[
|\varphi^{(1)}(t \mid Y_j, \mu, \tau_1)| \leq \frac{1}{\left(1+4t_2^2\sigma^4 \right)^{1/4}},
\]
while if $t_2 = 0$ then
\[
|\varphi^{(1)}(t \mid Y_j, \mu, \tau_1)| \leq e^{-\frac{\sigma^2}{2}t_1^2}.
\]
Therefore
\[
|\varphi^{(1)}(t \mid Y_j, \mu, \tau_1)| \leq \max \left\{ \frac{1}{\left(1+4t_2^2\sigma^4 \right)^{1/4}},e^{-\frac{\sigma^2}{2}t_1^2} \right\},
\]
so that
\[
\underset{|t| > \epsilon}{\sup} \, |\varphi^{(1)}(t \mid Y_j, \mu, \tau_1)| \leq \max \left\{ \frac{1}{\left(1+\epsilon^2\sigma^4 \right)^{1/4}},e^{-\frac{\sigma^2}{8}\epsilon^2} \right\},
\]
since at least one between $t_1$ and $t_2$ must be larger than $\epsilon/2$. Notice that the right hand side does not depend on $Y_j$ and is strictly smaller than $1$ for every triplet $(\mu, \tau_1, \tau_0)$. Since $\sigma^2$ is a continuous function of $\mu$ and $\tau_1$, assumption $(B6)$ is satisfied by choosing $\delta_6 < \tau_1^*$ and $k'=1$.

\end{proof}
\begin{proof}[Proof of Proposition \ref{prop_normal_model}]
The result for $P_1$ follows directly by Theorem \ref{theorem_one_level_nested}, whose assumptions are satisfied by Lemmas \ref{lemma_normal_B1B3}, \ref{lemma_normal_B4} and \ref{lemma_normal_B5B6}. As regards $P_2$ and $P_3$, they are not particular cases of Theorem \ref{theorem_one_level_nested}, since the two operators are different by the one in \eqref{two_blocks_gibbs_nested}. However, the result follows by very similar arguments, that we briefly summarize. Since by construction
\[
\L\left(\d\psi \mid \bm{\theta}, Y_{1:J} \right) = \L\left(\d\psi \mid \bm{T}(\bm{\theta}), Y_{1:J} \right)
\]
a direct analogue of Lemma \ref{sufficient_lemma} holds. Moreover, following the proof of Theorem \ref{theorem_one_level_nested}, Lemmas \ref{asymptotic_distribution_psi}, \ref{asymptotic_distribution_T} and \ref{lemma_marginal_conditional} hold for $\bm{T}$ in \eqref{eq:psi_T_normal}. Finally, Corollary \ref{spectral_normal} proves that the limiting spectral gaps associated to $P_2$ and $P_3$ are strictly positive: by Lemma \ref{positiveDefinite_Gibbs} this implies $\tilde{t}_{mix}(\epsilon, M) < \infty$ for $P_2$, being a two-block Gibbs sampler. The same holds for $P_3$, since in the limit it can be reduced to a two-block Gibbs sampler, as it will be clear by the proof of Corollary \ref{spectral_normal}.
\end{proof}

\subsection{Proof of Corollary \ref{spectral_normal}}
We split the proof in two different cases.
\subsubsection{Proof of Corollary \ref{spectral_normal} for $\gamma_1(\psi^*)$}
\begin{proof}
By Corollary \ref{spectral_single}, the spectral gap is equal to
\[
\gamma_1(\psi) = \frac{\text{Var}_{Y_j}\left(E \left[\theta_j \mid \psi,Y_j \right] \right)}{\text{Var}\left(\theta_j \mid \psi \right)}.
\]
By \eqref{posterior_theta} and \eqref{distribution_mean} we have
\[
\begin{aligned}
&\text{Var}_{Y_j}\left(E \left[\theta_j \mid \psi,Y_j \right]\right) = \left(\frac{m\tau_0}{m\tau_0+\tau_1}\right)^2,\quad
\text{Var}\left(\bar{Y}_j\right) = \frac{m\tau_0}{\tau_1(m\tau_0+\tau_1)},
\end{aligned}
\]
and $\text{Var}\left(\theta_j \mid \psi \right) = \tau_1^{-1}$, 
that leads to
\[
\gamma_1(\psi^*) = \frac{m\tau_0^*}{m\tau_0^*+\tau_1^*},
\]
as desired.
\end{proof}

\subsubsection{Proof of Corollary \ref{spectral_normal} for $\gamma_2(\psi^*)$ and $\gamma_3(\psi^*)$}
We need a technical Lemma.
\begin{lemma}\label{C_Var_normal}
Consider the setting of Proposition \ref{prop_normal_model}. Then
\[
C(\psi^*) = 
\begin{bmatrix}
\frac{\tau_1^*}{m\tau_0^*+\tau_1^*} & 0\\
0 & -\frac{\tau_1^*+2m\tau_0^*}{\tau_1^*(m\tau_0^*+\tau_1^*)^2}
\end{bmatrix}, \quad V(\psi^*) =
\begin{bmatrix}
\frac{1}{m\tau_0^*+\tau_1^*} & 0\\
0 & \frac{2\tau_1^*+4m\tau_0^*}{\tau_1^*(m\tau_0^*+\tau_1^*)^2}
\end{bmatrix}\,,
\]
with $C(\psi^*)$ and $V(\psi^*)$ as in \eqref{def_C_V}.
\end{lemma}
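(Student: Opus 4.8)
The plan is to reduce everything to elementary Gaussian computations, exploiting the explicit form of the posterior of $\theta_j$ given in \eqref{posterior_theta} and the data-generating law of the group mean $\bar{Y}_j$ given in \eqref{distribution_mean}. With $\tau_0$ fixed at $\tau_0^*$, set $\sigma^2 = (m\tau_0^* + \tau_1)^{-1}$ and $m_j = \frac{m\tau_0^*}{m\tau_0^* + \tau_1}\bar{Y}_j + \frac{\tau_1}{m\tau_0^* + \tau_1}\mu$, so that $\theta_j \mid Y_j, \psi \sim N(m_j, \sigma^2)$. Since $T(\theta_j) = (\theta_j, (\theta_j - \mu^*)^2)$ as in \eqref{eq:psi_T_normal}, the first-order posterior moments are $M_1^{(1)}(\psi \mid Y_j) = m_j$ and $M_2^{(1)}(\psi \mid Y_j) = (m_j - \mu^*)^2 + \sigma^2$, both explicit smooth functions of $(\mu, \tau_1)$ and of $\bar{Y}_j$.

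First I would compute $C(\psi^*)$. Differentiating gives $\partial_\mu m_j = \tau_1/(m\tau_0^* + \tau_1)$, $\partial_{\tau_1} m_j = m\tau_0^*(\mu - \bar{Y}_j)/(m\tau_0^* + \tau_1)^2$, and $\partial_{\tau_1}\sigma^2 = -(m\tau_0^* + \tau_1)^{-2}$, hence $\partial_\mu M_2^{(1)} = 2(m_j - \mu^*)\partial_\mu m_j$ and $\partial_{\tau_1} M_2^{(1)} = 2(m_j - \mu^*)\partial_{\tau_1} m_j + \partial_{\tau_1}\sigma^2$. Evaluating at $\psi = \psi^*$, the key observation is that $m_j - \mu^* = \frac{m\tau_0^*}{m\tau_0^* + \tau_1^*}(\bar{Y}_j - \mu^*)$ is a centred multiple of $\bar{Y}_j - \mu^*$; consequently $\E_{Y_j}[m_j - \mu^*] = 0$, which makes the off-diagonal entries $[C(\psi^*)]_{1,2}$ and $[C(\psi^*)]_{2,1}$ vanish. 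The diagonal entries then follow: $[C(\psi^*)]_{1,1} = \tau_1^*/(m\tau_0^* + \tau_1^*)$ is deterministic, and for $[C(\psi^*)]_{2,2}$ one substitutes the above identity to get a term proportional to $(\bar{Y}_j - \mu^*)^2$ whose expectation is $\Var(\bar{Y}_j) = (m\tau_0^* + \tau_1^*)/(m\tau_0^*\tau_1^*)$, plus the constant $-(m\tau_0^* + \tau_1^*)^{-2}$; collecting terms yields $-\frac{\tau_1^* + 2m\tau_0^*}{\tau_1^*(m\tau_0^* + \tau_1^*)^2}$.

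For $V(\psi^*)$ I would use the standard Gaussian identities for $\theta \sim N(m, \sigma^2)$, namely $\Var(\theta) = \sigma^2$, $\text{Cov}(\theta, (\theta - \mu^*)^2) = 2\sigma^2(m - \mu^*)$, and $\Var((\theta - \mu^*)^2) = 4\sigma^2(m - \mu^*)^2 + 2\sigma^4$. Applying these conditionally on $(Y_j, \psi^*)$ with $m = m_j$ and $\sigma^2 = (m\tau_0^* + \tau_1^*)^{-1}$, and then taking $\E_{Y_j}$, the cross term again vanishes by $\E_{Y_j}[m_j - \mu^*] = 0$, while $[V(\psi^*)]_{1,1} = \sigma^2 = (m\tau_0^* + \tau_1^*)^{-1}$ and $[V(\psi^*)]_{2,2} = 4\sigma^2 \E_{Y_j}[(m_j - \mu^*)^2] + 2\sigma^4$; using $\E_{Y_j}[(m_j - \mu^*)^2] = \frac{m\tau_0^*}{\tau_1^*(m\tau_0^* + \tau_1^*)}$ gives $\frac{2\tau_1^* + 4m\tau_0^*}{\tau_1^*(m\tau_0^* + \tau_1^*)^2}$, matching the statement.

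None of this is conceptually hard; the only point requiring care is the differentiation of $M_2^{(1)}$ with respect to $\tau_1$, since $\tau_1$ enters both the posterior mean $m_j$ and the posterior variance $\sigma^2$, so the $\partial_{\tau_1}\sigma^2$ contribution must not be dropped. Everything else is bookkeeping with the first two moments of a single Gaussian and of the Gaussian $\bar{Y}_j$.
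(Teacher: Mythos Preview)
Your proposal is correct and follows essentially the same approach as the paper's proof: both rely on the explicit Gaussian posterior \eqref{posterior_theta}, differentiate the posterior moments with respect to $(\mu,\tau_1)$, exploit $E_{Y_j}[m_j-\mu^*]=0$ to kill the off-diagonal entries, and use the standard Gaussian identities $\mathrm{Cov}(X,X^2)=2\mu\sigma^2$ and $\Var(X^2)=2\sigma^4+4\mu^2\sigma^2$ for $V(\psi^*)$. Your write-up is slightly more streamlined in parametrizing via $\sigma^2$ and $m_j-\mu^*$, but the logic and the computations are the same.
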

\begin{proof}
Recall that, in the context of Proposition \ref{prop_normal_model}, we define $T_1(\theta_j) = \theta_j$ and $T_2(\theta_j)=(\theta_j-\mu^*)^2$.
By \eqref{posterior_theta} we have
\[
\begin{aligned}
&E[T_1(\theta_j) \mid Y_j, \psi] 
=  \frac{m\tau_0}{m\tau_0+\tau_1}\bar{Y}_j+\frac{\tau_1}{m\tau_0+\tau_1}\mu,\\
& E[T_2(\theta_j) \mid Y_j, \psi] 
=
  \frac{1}{m\tau_0+\tau_1}+ \left(\frac{m\tau_0}{m\tau_0+\tau_1}\bar{Y}_j+\frac{\tau_1}{m\tau_0+\tau_1}\mu-\mu^* \right)^2.
\end{aligned}
\]
Therefore we can compute $C(\psi^*)$ as
\[
\begin{aligned}
& E_{Y_j} \left[\partial_\mu M_1(\psi^* \mid Y_j) \right] = \frac{\tau_1^*}{m\tau_0^*+\tau_1^*},\\
& E_{Y_j} \left[\partial_\mu M_2(\psi^* \mid Y_j) \right] 
=
 E_{Y_j} \left[\frac{2\tau_1^*}{m\tau_0^*+\tau_1^*} \left(\frac{m\tau_0^*}{m\tau_0^*+\tau_1^*}\bar{Y}_j-\frac{m\tau_0^*}{m\tau_0^*+\tau_1^*}\mu^* \right)\right] = 0,\\
&E_{Y_j} \left[\partial_{\tau_1} M_1(\psi^* \mid Y_j) \right] 
=
 E_{Y_j} \left[-\frac{m\tau_0^*}{(m\tau_0^*+\tau_1^*)^2}\bar{Y}_j+\frac{m\tau_0^*}{(m\tau_0^*+\tau_1^*)^2}\mu^* \right] = 0,\\
&\begin{aligned}
E_{Y_j} \left[\partial_{\tau_1} M_2(\psi^* \mid Y_j) \right] =& -\frac{1}{(m\tau_0^*+\tau_1^*)^2}
+\\&
E_{Y_j} 
\left[2\left(-\frac{m\tau_0^*}{(m\tau_0^*+\tau_1^*)^2}\bar{Y}_j+\frac{m\tau_0^*}{(m\tau_0^*+\tau_1^*)^2}\mu^* \right)\left(\frac{m\tau_0^*}{m\tau_0^*+\tau_1^*}\bar{Y}_j-\frac{m\tau_0^*}{m\tau_0^*+\tau_1^*}\mu^* \right) \right] 
\\=& 
-\frac{1}{(m\tau_0^*+\tau_1^*)^2}-2\frac{(m\tau_0^*)^2}{(m\tau_0^*+\tau_1^*)^3}E_{Y_j} 
\left[ (\bar{Y}_j-\mu^*)^2\right]
\\=& 
-\frac{1}{(m\tau_0^*+\tau_1^*)^2}-2\frac{m\tau_0^*}{\tau_1^*(m\tau_0^*+\tau_1^*)^2},
\end{aligned}
\end{aligned}
\]
by \eqref{distribution_mean}.

We now consider $V\left(\psi^* \right)$. 
Given $X \sim N(\mu, \sigma^2)$, we have
\[
\text{Cov}(X, X^2) = 2\mu\sigma^2, \quad \text{Var}(X^2) = 2\sigma^4+4\mu^2\sigma^2\,,
\]
which can be easily derived by computing the first four moments of $X$ using Lemma \ref{normal_moments}, which are
$E[X] = \mu$, 
$E[X^2] = \mu^2+\sigma^2$, $E[X^3] =3\mu\sigma^2+\mu^3$ and 
$E[X^4]= 3\sigma^4+6\mu^2\sigma^2+\mu^4$.
By \eqref{posterior_theta} we have
\[
\begin{aligned}
&\text{Var}(\theta_j \mid Y_j, \psi^*) = \frac{1}{m\tau_0^*+\tau_1^*},\\
& \text{Cov}(\theta_j, (\theta_j-\mu^*)^2 \mid Y_j, \psi^*) = \text{Cov}(\theta_j-\mu^*, (\theta_j-\mu^*)^2 \mid Y_j, \psi^*) = 2\frac{m_j-\mu^*}{m\tau_0^*+\tau_1^*},\\
&\begin{aligned}
 \text{Var}( (\theta_j-\mu^*)^2 \mid Y_j, \psi^*) &= \frac{2}{(m\tau_0^*+\tau_1^*)^2}+\frac{4}{m\tau_0^*+\tau_1^*}\left(m_j-\mu^* \right)^2\\
& = \frac{2}{(m\tau_0^*+\tau_1^*)^2}+\frac{4}{m\tau_0^*+\tau_1^*}\left(\frac{m\tau_0^*}{m\tau_0^*+\tau_1^*}(\bar{Y_j}-\mu^*)+\mu^* \right)^2.
\end{aligned}
\end{aligned}
\]
Therefore, we conclude
\[
\begin{aligned}
&E_{Y_j} \left[ \text{Cov}(\theta_j, \theta^2_j \mid Y_j, \psi^*) \right] = 0
\end{aligned}
\]
and
\[
\begin{aligned}
E_{Y_j}\left[\text{Var}( \theta^2_j \mid Y_j, \psi^*)\right] &= \frac{2}{(m\tau_0^*+\tau_1^*)^2}+\frac{4}{m\tau_0^*+\tau_1^*}E_{Y_j}\left[\left(\frac{m\tau_0^*}{m\tau_0^*+\tau_1^*}\right)^2(\bar{Y_j}-\mu^*)^2  \right]\\
& = \frac{2}{(m\tau_0^*+\tau_1^*)^2}+\frac{4m^2(\tau_0^*)^2}{(m\tau_0^*+\tau_1^*)^3}E_{Y_j}\left[(\bar{Y_j}-\mu^*)^2\right]\\
&= \frac{2}{(m\tau_0^*+\tau_1^*)^2}+\frac{4m\tau_0^*}{\tau^*_1(m\tau_0^*+\tau_1^*)^2},
\end{aligned}
\]
as desired.
\end{proof}
\begin{lemma}\label{limiting_sigma_normal}
Consider the same assumptions of Proposition \ref{prop_normal_model}. Then
\[
\lTV \L(\d\tilde{\bm{T}}, \d\tilde{\psi} \mid Y_{1:J})-N\left(\bm{0}, \Sigma \right) \rTV \to 0,
\]
as $J \to \infty$, in $Q_{\psi^*}^{(\infty)}$-probability, where $(\tilde{\bm{T}}, \tilde{\psi})$ are derived by \eqref{eq:psi_T_normal} with transformations \eqref{tilde_psi} and \eqref{tilde_T} and where
\begin{equation}\label{eq:Sigma_def_normal}
\Sigma =
\begin{bmatrix}
2\frac{\tau_1^*+2m\tau_0^*}{m^2(\tau_0^*)^2\tau_1^*} & 0 & -2\frac{\tau_1^*(\tau_1^*+2m\tau_0^*)}{m^2(\tau_0^*)^2} & 0\\
0 & \frac{1}{m\tau_0^*} & 0 & \frac{1}{m\tau_0^*}\\
-2\frac{\tau_1^*(\tau_1^*+2m\tau_0^*)}{m^2(\tau_0^*)^2} & 0 & 2\frac{(\tau_1^*)^2(\tau_1^*+m\tau_0^*)^2}{m^2(\tau_0^*)^2} & 0\\
0 & \frac{1}{m\tau_0^*} & 0 & \frac{m\tau_0^*+\tau_1^*}{m\tau_0^*\tau_1^*}
\end{bmatrix}
\end{equation}
\end{lemma}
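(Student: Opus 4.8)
The plan is to obtain the statement from the general result of Proposition~\ref{limiting_sigma}, specialised to the pair $(\bm{T},\psi)$ of \eqref{eq:psi_T_normal} with $\psi=(\mu,\tau_1)$ and $\bm{T}(\bm{\theta})=\bigl(\sum_{j}\theta_j,\sum_{j}(\theta_j-\mu^*)^2\bigr)$, and then to carry out the (diagonal) matrix algebra. First I would note, exactly as in the proof of Proposition~\ref{prop_normal_model}, that although $P_2$ and $P_3$ are not literally instances of the kernel \eqref{two_blocks_gibbs_nested}, the posterior-asymptotic statements used to derive \eqref{eq:Sigma_def} do not involve the sampler at all, only the law $\L(\d\bm{T},\d\psi\mid Y_{1:J})$. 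Assumptions $(B1)$--$(B6)$ hold for this choice of $(\bm{T},\psi)$ by Lemmas~\ref{lemma_normal_B1B3}, \ref{lemma_normal_B4} and \ref{lemma_normal_B5B6}, so Lemma~\ref{asymptotic_distribution_psi} gives TV convergence of $\L(\d\tilde{\psi}\mid Y_{1:J})$ to $N(\bm{0},\Fisher^{-1}(\psi^*))$, Lemma~\ref{asymptotic_distribution_T} gives, conditionally on each $\tilde{\psi}$, TV convergence of $\L(\d\tilde{\bm{T}}\mid Y_{1:J},\tilde{\psi})$ to $N(C(\psi^*)\tilde{\psi},V(\psi^*))$, and combining these via Lemma~\ref{lemma_marginal_conditional} (as in the proof of Theorem~\ref{theorem_one_level_nested}) yields TV convergence of $\L(\d\tilde{\bm{T}},\d\tilde{\psi}\mid Y_{1:J})$ to $N(\bm{0},\Sigma)$ with $\Sigma$ of the block form displayed in \eqref{eq:Sigma_def}.

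It then remains to evaluate that block form explicitly. I would substitute for $\Fisher^{-1}(\psi^*)$ the inverse of the $(\mu,\tau_1)$-block of the Fisher information matrix computed in Lemma~\ref{fisher_normal_hier} (which is diagonal, with entries $\frac{m\tau_0^*\tau_1^*}{m\tau_0^*+\tau_1^*}$ and $\frac{m^2(\tau_0^*)^2}{2(\tau_1^*)^2(\tau_1^*+m\tau_0^*)^2}$), and for $C(\psi^*)$ and $V(\psi^*)$ the diagonal matrices of Lemma~\ref{C_Var_normal}. Since all three matrices are $2\times2$ and diagonal, the blocks $V(\psi^*)+C(\psi^*)\Fisher^{-1}(\psi^*)C^\top(\psi^*)$, $C(\psi^*)\Fisher^{-1}(\psi^*)$ and $\Fisher^{-1}(\psi^*)$ are diagonal and computed entrywise: the $\theta$-components of $C\Fisher^{-1}$ and of $V+C\Fisher^{-1}C^\top$ both collapse to $\frac{1}{m\tau_0^*}$ after reducing over the common denominator $m\tau_0^*+\tau_1^*$, while the $(\theta-\mu^*)^2$-components simplify using the identity $m^2(\tau_0^*)^2+\tau_1^*(\tau_1^*+2m\tau_0^*)=(m\tau_0^*+\tau_1^*)^2$, producing the $-2\frac{\tau_1^*(\tau_1^*+2m\tau_0^*)}{m^2(\tau_0^*)^2}$ cross-entry and the $2\frac{\tau_1^*+2m\tau_0^*}{m^2(\tau_0^*)^2\tau_1^*}$ diagonal entry. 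Collecting the four diagonal blocks and relabelling the coordinates as in the display \eqref{eq:Sigma_def_normal} gives exactly the claimed $\Sigma$.

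The substantive point is really the first paragraph: verifying that the limiting covariance is genuinely of the form \eqref{eq:Sigma_def} even though $P_2$ and $P_3$ differ from the deterministic-scan kernel of Theorem~\ref{theorem_one_level_nested}; this only requires observing that none of Lemmas~\ref{asymptotic_distribution_psi}, \ref{asymptotic_distribution_T} or \ref{lemma_marginal_conditional} refers to the transition kernel, so their conclusions transfer verbatim to the present $(\bm{T},\psi)$. The remaining work is routine diagonal $2\times2$ matrix algebra, and the only mild hazard is keeping track of the permutation of the four coordinates implicit in how \eqref{eq:Sigma_def_normal} is written.
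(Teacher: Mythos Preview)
Your proposal is correct and follows essentially the same route as the paper: the paper's proof simply states that the result follows by the argument of Proposition~\ref{limiting_sigma}, with the entries of $\Sigma$ computed from Lemmas~\ref{fisher_normal_hier} and~\ref{C_Var_normal}. Your write-up is a faithful (and more explicit) version of exactly that, including the correct observation that the coordinates in \eqref{eq:Sigma_def_normal} are permuted relative to the block ordering of \eqref{eq:Sigma_def}.
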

\begin{proof}
The result follows by an argument similar to the proof of Proposition \ref{limiting_sigma}, where
\[
\Sigma =
\begin{bmatrix}
V(\psi^*)+C(\psi^*)\mathcal{I}^{-1}(\psi^*)C^\top(\psi^*) & \quad C(\psi^*)\mathcal{I}^{-1}(\psi^*)\\
&\\
\mathcal{I}^{-1}(\psi^*)C^\top(\psi^*) & \mathcal{I}^{-1}(\psi^*)
\end{bmatrix}
\]
The entries of $\Sigma$ can be computed through Lemmas \ref{fisher_normal_hier} and \ref{C_Var_normal}.
\end{proof}

\begin{proof}[Proof of Corollary \ref{spectral_normal} for $\gamma_2(\psi^*)$ and $\gamma_3(\psi^*)$]
Recall that $P_2$ is the transition kernel of the Gibbs sampler that alternates updates from $\L\left(\d\mu,\d\bm{\theta}\mid \tau_1, Y_{1:J} \right)$
and  $\L\left(\d\tau_1\mid \bm{\theta}, \mu,Y_{1:J} \right)$. Through the same reasoning of Lemma \ref{sufficient_lemma}, the mixing times of $P_2$ are the same of the Gibbs sampler targeting $\L\left(\d\mu, \d\tau_1, \d\bm{T} \mid Y_{1:J}\right)$ by alternating updates from $\L\left(\d\mu, \d\bm{T} \mid \tau_1, Y_{1:J}\right)$ and $\L\left(\d\tau_1 \mid \mu, \bm{T}, Y_{1:J}\right)$. Indeed
\[
\L\left( \d\tau_1 \mid \mu, \bm{\theta}, Y_{1:J}\right) = \L\left(\d\tau_1 \mid \mu, \bm{T}(\bm{\theta}), Y_{1:J}\right).
\]
Therefore, by Corollary \ref{mixing_gap} $\gamma_2(\psi^*)$ is the spectral gap of the Gibbs sampler alternating updates from $\tilde{\L}\left(\d\tilde{\mu},\d\tilde{\bm{T}}_1, \d\tilde{\bm{T}}_2 \mid \tilde{\tau}_1\right)$ and $\tilde{\L}\left(\d\tilde{\tau}_1 \mid \tilde{\mu}, \tilde{\bm{T}}_1,\tilde{\bm{T}}_2\right)$, where $\tilde{\L}(\cdot)$ is the law identified in Lemma \ref{limiting_sigma_normal}. By inspection of the matrix \eqref{eq:Sigma_def_normal}, $\left(\tilde{\mu}, \tilde{\bm{T}}_1\right)$ is independent from $\tilde{\tau}_1$ and $\tilde{\bm{T}}_2$ according to $\tilde{\L}$, so that $\left(\tilde{\mu}, \tilde{\bm{T}}_1\right)$ is sampled independently from everything else at each iteration. Therefore by the same arguments of the proof of Corollary \ref{spectral_radius} we have
\[
\gamma_2(\psi^*) = 1-\frac{\Sigma_{24}^2}{\Sigma_{22}\Sigma_{44}} = \left(\frac{m\tau_0^*}{m\tau_0^*+\tau_1^*}\right)^2.
\]
Instead, recall that $P_3$ is the transition kernel of the Gibbs sampler that alternates updates from $\L\left(\d\bm{\theta}\mid \tau_1, Y_{1:J} \right)$,
$\L\left(\d\mu\mid \bm{\theta}, \tau_1,Y_{1:J} \right)$ and
$\L\left(\d\tau_1\mid \bm{\theta}, \mu,Y_{1:J} \right)$. Reasoning as before, by Corollary \ref{mixing_gap} $\gamma_3(\psi^*)$ is the spectral gap of the Gibbs sampler alternating updates from $\tilde{\L}\left(\d\tilde{\bm{T}}\mid \tilde{\mu}, \tilde{\tau}_1 \right)$, $\tilde{\L}\left(\d\tilde{\mu}\mid \tilde{\tau}_1, \tilde{\bm{T}} \right)$ and $\tilde{\L}\left(\d\tilde{\tau}_1 \mid \tilde{\mu}, \tilde{\bm{T}}\right)$, where $\tilde{\L}(\cdot)$ is the law identified in Lemma \ref{limiting_sigma_normal}. By inspection of the matrix \eqref{eq:Sigma_def_normal}, the pair $(\tilde{\mu}, \tilde{\bm{T}}_1)$ is independent from $(\tilde{\tau}_1, \tilde{\bm{T}}_2)$, according to $\tilde{\L}$. By standard properties of the Gibbs samplers (e.g. Lemma $2$ in \cite{PZ20}), the spectral gap is given by the minimum of the spectral gaps of the Gibbs samplers associated to the two pairs, i.e.
\[
\gamma_3(\psi^*) = \min \left\{1-\frac{\Sigma_{24}^2}{\Sigma_{22}\Sigma_{44}}, 1-\frac{\Sigma_{13}^2}{\Sigma_{11}\Sigma_{33}} \right\} = \left(\frac{m\tau_0^*}{m\tau_0^*+\tau_1^*}\right)^2.
\]
Notice that the result of Lemma \ref{positiveDefinite_Gibbs} holds even if $P_3$ has three blocks: indeed, by inspection of the matrix \eqref{eq:Sigma_def_normal}, $\tilde{\mu}$ and $\tilde{\tau}_1$ are independent according to $\tilde{\L}$, so that the updates $\tilde{\L}\left(\d\tilde{\mu}\mid \tilde{\tau}_1, \tilde{\bm{T}} \right)$ and $\tilde{\L}\left(\d\tilde{\tau}_1 \mid \tilde{\mu}, \tilde{\bm{T}}\right)$ can be equivalently seen as a single one.
\end{proof}

\subsection{Proof of Lemma \ref{finite_B4B6}}
Since it will be useful in the following, we denote
\[
c(\mu,\tau)= \min_{r\in\{0,\dots,m\}} 
g(y_r \mid \mu, \tau)\,,
\]
with $g(y_r\mid \mu, \tau)$ defined in \eqref{marginal_finite}.
Notice that by construction, see e.g.\ \eqref{eq:ass_discrete}, we have $0 < c(\mu,\tau) \leq 1$.
Also, $g(y_r\mid \mu, \tau)$ is continuous w.r.t.\ $(\mu, \tau)$ since it is defined in \eqref{marginal_finite} as the integral of a bounded function, $\theta\mapsto f(y \mid \theta)$, with respect to the normal kernel which is continuous w.r.t.\ $(\mu, \tau)$.
It follows that also $c(\mu,\tau)$ is continuous, since it is the minimum of a finite number of continuous functions. 
Define 
\begin{equation}\label{eq:def}
c:=\inf_{(\mu,\tau)\in B}c(\mu,\tau)>0
\end{equation}
where $B$ is the largest of the three balls --  namely $B_{\delta_4}$, $B_{\delta_5}$ and $B_{\delta_6}$ -- centered at $\psi^*=(\mu^*,\tau^*)$ defined in (B4), (B5) and (B6), respectively.
The positivity of $c$ follows from the continuity of $c(\mu,\tau)$ and the compactness of $B$.

Recall that $T(\theta_j) = \left(\theta_j, \theta_j^2 \right)$. Thus we need three lemmas.
\begin{lemma}\label{B4_finite}
Consider the setting of Lemma \ref{finite_B4B6}. Then assumption $(B4)$ is satisfied.
\end{lemma}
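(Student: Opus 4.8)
The plan is to reduce every quantity appearing in $(B4)$ to posterior moments of the one-dimensional local parameter $\theta_j$ given $Y_j$ and $\psi=(\mu,\tau)$, and to exploit crucially that under \eqref{finite_model} the variable $Y_j$ takes only the finitely many values $y_0,\dots,y_m$, so that $E_{Y_j}[\,\cdot\,]=\sum_{r=0}^{m} g(y_r\mid\psi^*)\,(\cdot)$ is a finite convex combination with strictly positive weights $g(y_r\mid\psi^*)>0$ (positivity from \eqref{eq:ass_discrete}). Since here $T(\theta_j)=(\theta_j,\theta_j^2)$, we have $T_s(\theta)=\theta^s$ for $s\in\{1,2\}$, hence $M^{(p)}_s(\psi\mid y_r)=E[\theta_j^{sp}\mid Y_j=y_r,\psi]$ and $M^{(1)}_{s,s'}(\psi\mid y_r)=E[\theta_j^{s+s'}\mid Y_j=y_r,\psi]$, so all expressions in $(B4)$ are built from the integrals
\[
I_k(\psi\mid y_r)=\int_{\R}\theta^{k}\,f(y_r\mid\theta)\,N(\theta\mid\mu,\tau^{-1})\,\d\theta,\qquad k\in\{0,1,2,\dots\},
\]
through the ratios $M(\psi\mid y_r)=I_k(\psi\mid y_r)/I_0(\psi\mid y_r)$, with $I_0(\psi\mid y_r)=g(y_r\mid\psi)$.

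First I would pick $\delta_4>0$ small enough that $\tau\ge\tau^*/2>0$ throughout $B_{\delta_4}$. Because $(\mu,\tau)\mapsto g(y_r\mid\mu,\tau)$ is continuous and strictly positive (by \eqref{eq:ass_discrete}), it is bounded below by some constant $c_0>0$ on the closure of $B_{\delta_4}$, uniformly in $r\in\{0,\dots,m\}$; equivalently $I_0(\psi\mid y_r)\ge c_0$ there. Since $0<f(y_r\mid\theta)\le 1$, the bound $|I_k(\psi\mid y_r)|\le\int_{\R}|\theta|^{k}N(\theta\mid\mu,\tau^{-1})\,\d\theta$ shows $I_k$ is finite for every $k$ (hence every $M^{(p)}_s$ is well defined) and, being a Gaussian absolute moment, is continuous in $(\mu,\tau)$, hence bounded on $B_{\delta_4}$. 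Differentiation under the integral sign is legitimate since $\partial_\mu N(\theta\mid\mu,\tau^{-1})=\tau(\theta-\mu)N(\theta\mid\mu,\tau^{-1})$ and $\partial_\tau N(\theta\mid\mu,\tau^{-1})=(\tfrac1{2\tau}-\tfrac{(\theta-\mu)^2}{2})N(\theta\mid\mu,\tau^{-1})$, so each partial derivative of the integrand is dominated, uniformly for $(\mu,\tau)$ in a neighbourhood, by an integrable function of the form (polynomial in $\theta$)$\,\times\,$(Gaussian); consequently every partial derivative $\partial^{\alpha}I_k$ is again a finite linear combination of $I_{k'}$'s with coefficients polynomial in $\mu$ and in $1/\tau$, all bounded on $B_{\delta_4}$. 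Applying the quotient rule to $M=I_k/I_0$ and using $I_0\ge c_0$, the quantities $\partial_{\psi_d}M^{(6)}_s$, $\partial_{\psi_d}\partial_{\psi_{d'}}M^{(1)}_s$, $\partial_{\psi_d}M^{(1)}_{s,s'}$ and $\partial_{\psi_d}\{M^{(1)}_sM^{(1)}_{s'}\}$ are bounded by a finite constant uniformly in $(\mu,\tau)\in B_{\delta_4}$ and in $r$; averaging over $r$ with the weights $g(y_r\mid\psi^*)$ (which sum to one) yields the four displayed bounds of $(B4)$ with a common finite $C$.

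It then remains to check that $V(\psi^*)$ is non-singular. Since $V(\psi^*)=\sum_{r=0}^{m}g(y_r\mid\psi^*)\,\text{Cov}\big((\theta_j,\theta_j^2)\mid Y_j=y_r,\psi^*\big)$ is a convex combination with strictly positive weights of $2\times2$ conditional covariance matrices, it suffices that each of these is positive definite. This holds because the posterior of $\theta_j$ given $Y_j=y_r$ and $\psi^*$ has a Lebesgue density on $\R$ proportional to $\theta\mapsto f(y_r\mid\theta)N(\theta\mid\mu^*,(\tau^*)^{-1})$, which is strictly positive everywhere; hence the posterior is supported on an infinite set, $1,\theta,\theta^2$ are linearly independent in its $L^2$ space, and no nontrivial affine combination of $\theta$ and $\theta^2$ is a.s.\ constant, i.e.\ $\text{Cov}((\theta_j,\theta_j^2)\mid Y_j=y_r,\psi^*)$ is positive definite. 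Thus $V(\psi^*)$ is positive definite, and $(B4)$ follows. The only technical content is the differentiation-under-the-integral bookkeeping together with the uniform-in-the-data bounds; this is not a genuine obstacle precisely because $Y_j$ is discrete, so $E_{Y_j}[\cdot]$ is a finite sum, and the Gaussian prior provides enough tail decay to make all relevant moments finite and locally bounded in $(\mu,\tau)$.
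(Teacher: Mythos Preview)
Your proof is correct and follows essentially the same approach as the paper: both arguments use the uniform positive lower bound on $g(y_r\mid\psi)$ over the compact ball, the trivial bound $0<f(y_r\mid\theta)\le 1$ to dominate all integrals by Gaussian absolute moments, differentiation under the integral sign, and the quotient rule, together with the finiteness of the data space to reduce $E_{Y_j}[\cdot]$ to a finite sum. Your organization via the integrals $I_k$ is a bit more systematic than the paper's treatment (which writes out one representative derivative in detail and asserts the rest are similar), and your justification of the non-singularity of $V(\psi^*)$ via linear independence of $1,\theta,\theta^2$ in $L^2$ of the posterior is equivalent to the paper's argument that each conditional covariance matrix has strictly positive variances and $|\text{Corr}|<1$.
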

\begin{proof}
First of all, consider $V(\psi^*)$, as defined in \eqref{def_C_V}. For every $y = 0, \dots, m$, we have that the posterior distribution of $\theta_j$ admits a density with respect to the Lebesgue measure of the form
\[
p(\theta_j \mid y, \mu, \tau) \propto f(y_r \mid \theta_j)N(\theta_j \mid \mu, \tau), 
\]
which implies that
\[
\text{Var}(\theta_j \mid y, \psi^*) >0, \quad \text{Var}(\theta_j^2 \mid y, \psi^*) >0, \quad |\text{Corr}(\theta_j, \theta_j^2 \mid y, \psi^*)| < 1.
\]
Consequently $V(\psi^*)$ is a sum of positive definite matrices and is therefore non singular.

Secondly, let $s, p = 1, 2$. Then by Bayes' Theorem it follows
\[
M_s^{(p)}(y_r \mid \mu, \tau) = \frac{\int_{\R} \theta^{sp}f(y_r \mid \theta)N(\theta \mid \mu, \tau^{-1}) \, \d \theta}{\int_{\R} f(y_r \mid \theta)N(\theta \mid \mu, \tau^{-1}) \, \d \theta}, \quad r = 0, \dots, m.
\]
Therefore
\[
\begin{aligned}
|\partial_\mu M_1^{(p)}(y_r \mid \mu, \tau)| \leq& \left \lvert \frac{\int_{\R} \theta^pf(y_r \mid \theta)\partial_\mu N(\theta \mid \mu, \tau^{-1}) \, \d \theta}{\int_{\R} f(y_r \mid \theta)N(\theta \mid \mu, \tau^{-1}) \, \d \theta}\right \rvert +\\
&\left \lvert \frac{\left(\int_{\R} \theta^pf(y_r \mid \theta) N(\theta \mid \mu, \tau^{-1}) \, \d \theta \right)\left( \int_{\R} f(y_r \mid \theta) \partial_\mu N(\theta \mid \mu, \tau^{-1}) \, \d \theta\right)}{\left( \int_{\R} f(y_r \mid \theta)N(\theta \mid \mu, \tau^{-1}) \, \d \theta \right)^2}\right \rvert.
\end{aligned}
\]
By definition of $c$ we have
\[
\begin{aligned}
|\partial_\mu M_1^{(p)}(y_r \mid \mu, \tau)| \leq& \frac{1}{c}\int_{\R} |\theta|^p\left \lvert \partial_\mu N(\theta \mid \mu, \tau^{-1}) \right \rvert \, \d \theta +\\
& \frac{1}{c^2}\left(\int_{\R} |\theta|^p N(\theta \mid \mu, \tau^{-1}) \, \d \theta \right)\left( \int_{\R} |\theta|^p \left \lvert \partial_\mu N(\theta \mid \mu, \tau^{-1}) \right\rvert \, \d \theta\right) \\
=& \frac{\tau}{c}\int_{\R} |(\theta-\mu)\theta^p| N(\theta \mid \mu, \tau^{-1})\, \d \theta +\\
&\frac{\tau}{c^2}\left(\int_{\R} |(\theta-\mu)\theta|^p N(\theta \mid \mu, \tau^{-1}) \, \d \theta \right)\left( \int_{\R} |\theta|^pf \left \lvert  N(\theta \mid \mu, \tau^{-1}) \right\rvert \, \d \theta\right).
\end{aligned}
\]
The right hand side does not depend on the data, so that
\[
E_{Y_j}\left[|\partial_\mu M_1^{(p)}(y_r \mid \mu, \tau)|\right] \leq m\frac{\tau}{c}E[|(\theta_j-\mu)\theta_j^p| \mid \mu, \tau]+m\frac{\tau}{c^2}E[|(\theta_j-\mu)\theta_j^p| \mid \mu, \tau]E[|\theta_j|^p \mid \mu, \tau].
\]
By the specification of model \eqref{finite_model}, the prior absolute moments are all finite and continuous function of $\mu$ and $\tau$: therefore the right hand side is uniformly bounded for every bounded neighborhood of $(\mu^*, \tau^*)$. Using a similar argument for all the other quantities involved, it is easy to see that assumption $(B4)$ holds for every $\delta_4 < \tau^*$.
\end{proof}
\begin{lemma}\label{B5_finite}
Consider the setting of Lemma \ref{finite_B4B6}. Then assumption $(B5)$ is satisfied with $k = 5$.
\end{lemma}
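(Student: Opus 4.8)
The plan is to reduce $(B5)$ to a one-dimensional Fourier-analytic estimate and then exploit the boundedness of the posterior density already established in the proof of Lemma~\ref{finite_B4B6}. First, since $Y_j$ is supported on the finite set $\{y_0,\dots,y_m\}$ and $Q_{\psi^*}$ assigns positive mass to each point (because $g(y_r\mid\psi^*)=\int f(y_r\mid\theta)N(\theta\mid\mu^*,(\tau^*)^{-1})\,\d\theta>0$), the phrase ``for almost every $Y_1,\dots,Y_k$'' here means ``for every configuration $(y_{r_1},\dots,y_{r_k})\in\{y_0,\dots,y_m\}^k$''; there are finitely many, so it suffices to bound $\int_{\R^2}\prod_{j=1}^k\lvert\varphi(t\mid y_{r_j},\psi)\rvert^2\,\d t$ for each, uniformly in $\psi\in B_{\delta_5}$, and by H\"older's inequality (with all exponents equal to $k$) this reduces further to showing that, for each fixed $r$, $\sup_{\psi\in B_{\delta_5}}\int_{\R^2}\lvert\varphi(t\mid y_r,\psi)\rvert^{2k}\,\d t<\infty$. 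Recall from the proof of Lemma~\ref{finite_B4B6} that, since $f(y_r\mid\theta)\le1$, the posterior density $p(\theta\mid y_r,\mu,\tau)=f(y_r\mid\theta)N(\theta\mid\mu,\tau^{-1})/g(y_r\mid\mu,\tau)$ is dominated by $N(\theta\mid\mu,\tau^{-1})/c$, with $c>0$ as in \eqref{eq:def}; hence, for $\psi$ in the compact ball $B$ it is bounded by a constant $M=M(B)$ and has all absolute moments bounded uniformly over $\psi\in B$.

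For the key step, Lemma~\ref{difference} applied to the random vector $T(\theta_j)=(\theta_j,\theta_j^2)$ shows that $\lvert\varphi(t\mid y_r,\psi)\rvert^2$ is the characteristic function of $W=\bigl(\theta_j-\theta_j',\ \theta_j^2-(\theta_j')^2\bigr)$, with $\theta_j,\theta_j'$ i.i.d.\ from $p(\cdot\mid y_r,\mu,\tau)$, and hence $\lvert\varphi(t\mid y_r,\psi)\rvert^{2k}$ is the characteristic function of the sum $W_k$ of $k$ independent copies of $W$. The change of variables $(z,z')\mapsto(z-z',z^2-(z')^2)$ (a diffeomorphism off $\{z=z'\}$, with Jacobian $2\lvert u\rvert$) gives the explicit density of $W$,
\[
f_{W}(u,v)=\frac{1}{2\lvert u\rvert}\,p\!\left(\tfrac{u^2+v}{2u}\,\Big|\,y_r,\mu,\tau\right)p\!\left(\tfrac{v-u^2}{2u}\,\Big|\,y_r,\mu,\tau\right),
\]
and substituting $z=(u^2+v)/(2u)$ in the inner integral yields, for every $p_0\in[1,2)$,
\[
\lVert f_{W}\rVert_{L^{p_0}(\R^2)}^{p_0}=\int_{\R}(2\lvert u\rvert)^{1-p_0}\!\left(\int_{\R}p(z\mid y_r,\mu,\tau)^{p_0}\,p(z-u\mid y_r,\mu,\tau)^{p_0}\,\d z\right)\d u ,
\]
which is finite, uniformly over $\psi\in B$: the inner integral is at most $M^{2p_0-1}$ and decays rapidly in $\lvert u\rvert$ by the Gaussian domination (so the tails are harmless), while $(2\lvert u\rvert)^{1-p_0}$ is integrable at the origin exactly because $p_0<2$. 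Taking $p_0=4/3$ and applying Young's inequality, $\lVert f_{W}\ast f_{W}\rVert_{2}\le\lVert f_{W}\rVert_{4/3}^2<\infty$ uniformly over $\psi\in B$, so $W_2$ has a density in $L^2(\R^2)$; by Plancherel $\int_{\R^2}\lvert\varphi(t\mid y_r,\psi)\rvert^{8}\,\d t=(2\pi)^2\lVert f_{W}\ast f_{W}\rVert_2^2<\infty$ uniformly over $\psi\in B$, and since $\lvert\varphi\rvert\le1$ the same holds with exponent $2k$ for every $k\ge4$. Choosing $\delta_5<\tau^*$ so that $B_{\delta_5}\subseteq B$, and $k=5$, establishes $(B5)$.

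The main obstacle is precisely that a single factor $\lvert\varphi(t\mid y_r,\psi)\rvert$ does not decay along the $t_2$-axis: the law of $(\theta_j,\theta_j^2)$ lives on the parabola and is singular with respect to Lebesgue measure, so one cannot argue factor by factor and must instead quantify how many convolutions are needed to produce an $L^2$ (equivalently $L^1$) density. Because only boundedness of $f$ is available -- no smoothness -- standard stationary-phase or ``chirp'' estimates for the oscillatory integral $\int e^{it_2\theta^2}(\cdots)\,\d\theta$ are unavailable, which is why the argument is driven by the explicit $\lvert u\rvert^{-1}$ singularity of $f_W$ rather than by oscillation, and why a few extra convolution powers (here $k=5$, with $k=4$ already sufficient) are needed. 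A secondary, routine, point is keeping every constant uniform over the compact neighbourhood $B$, which is where the strict positivity of $c$ in \eqref{eq:def} and the uniform moment bounds on $p(\cdot\mid y_r,\mu,\tau)$ enter.
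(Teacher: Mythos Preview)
Your argument is correct and reaches the same conclusion, but by a genuinely different route from the paper. The paper works first under the \emph{prior}: it applies the explicit Gaussian characteristic-function computation of Lemma~\ref{auxiliary_characteristic} to show that $\int_{\R^2}|\varphi_X(t\mid\mu,\tau)|\,\d t<\infty$ for $X=\bigl(\sum_{j=1}^5\theta_j,\sum_{j=1}^5\theta_j^2\bigr)$ with $\theta_j\overset{\text{iid}}{\sim} N(\mu,\tau^{-1})$, so that by the inversion formula $X$ has a density $p(x\mid\mu,\tau)$; the Bayes bound $p(x\mid Y_{1:5},\mu,\tau)\le c^{-5}p(x\mid\mu,\tau)$ then transfers this to the posterior, and two applications of Plancherel (posterior, then prior) give the result. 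You instead stay at the posterior level throughout, replacing the explicit oscillatory-integral lemma by a direct density computation for the symmetrised vector $W=(\theta-\theta',\theta^2-(\theta')^2)$ together with Young's convolution inequality and Plancherel. The paper's route is shorter and leans on the closed-form Gaussian c.f.; yours is more hands-on but needs only the pointwise bound $p(\theta\mid y_r,\psi)\le N(\theta\mid\mu,\tau^{-1})/c$ and would therefore transfer more readily to other prior families for which such a domination holds but no explicit c.f.\ formula is available. As you observe, your argument already succeeds at $k=4$; the paper's $L^1$ step does require $k=5$ as stated, though a minor variant of it (bounding $\int|\varphi_X|^2$ rather than $\int|\varphi_X|$) would lower this too.
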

\begin{proof}
Consider the random vector $X = (X_1, X_2) = (\sum_{j = 1}^5\theta_j, \sum_{j = 1}^5\theta_j^2)$. First of all we prove that $X$ admits a density function with respect to the Lebesgue measure on $\R^2$, conditional to $(\mu, \tau)$. By Lemma \ref{auxiliary_characteristic} and conditional independence of $\theta_j$ we have
\[
\left \lvert E \left[e^{i(t_1X_1+t_2X_2)} \mid  \mu, \tau_1 \right] \right \rvert \leq \frac{e^{-5\frac{\sigma^2}{2}\frac{(2\mu t_2+t_1)^2}{1+4t_2^2\sigma^4}}}{\left(1+4t_2^2\sigma^4 \right)^{5/4}},
\]
where we denote $\sigma^2 = \tau^{-1}$, so that we can write
\begin{equation}\label{cf_categorical}
\begin{aligned}
\int_{\R^2}|\varphi_X(t \mid \mu, \tau)| \, \d t &= \int_{\R^2} \left \lvert E \left[e^{i(t_1X_1+t_2\sum_{j = 1}^3X_2)} \mid Y, \mu, \tau_1 \right] \right \rvert \, \d t_1 \d t_2  \\
&\leq \int_{\R} \frac{1}{\left(1+4t_2^2\sigma^4 \right)^{5/4}} \left(\int_{\R}e^{-5\frac{\sigma^2}{2}\frac{(2\mu t_2+t_1)^2}{1+4t_2^2\sigma^4}} \, \d t_1 \right) \, \d t_2\\
& = \sqrt{\frac{2\pi}{5\sigma^2}} \int_\R\frac{1}{\left(1+4t_2^2\sigma^4\right)^{3/4}} \, \d t_2 < \infty.
\end{aligned}
\end{equation}
Therefore, by the Inversion Formula we have that $X$ admits a density $p(x \mid \mu, \tau)$ with respect to the Lebesgue measure on $\R^2$. Thus, by Bayes' Theorem we can write
\[
p(x \mid Y_{1:5}, \mu, \tau) = \frac{f(Y_{1:5}\mid x, \mu, \tau)p(x \mid \mu, \tau)}{\int_{\R^2}f(Y_{1:5}\mid x, \mu, \tau)p(x \mid \mu, \tau) \, \d x},
\]
where $f(Y_{1:5}\mid x, \mu, \tau) = \int \prod_{j = 1}^5f(Y_j \mid \theta_j)\L(\d \theta_{1:5} \mid x, \mu, \tau)$. It is easy to see that $f(Y_{1:5}\mid x, \mu, \tau) \leq 1$ and
\[
\int_{\R^2}f(Y_{1:5}\mid x, \mu, \tau)p(x \mid \mu, \tau) \, \d x = \prod_{j = 1}^5g(Y_j \mid \mu, \tau) \geq c^5,
\]
for every $(\mu, \tau) \in B_{\delta_5}$, with $\delta_5$ to be fixed. We can therefore conclude that
\[
p(x \mid Y_{1:5}, \mu, \tau) \leq \frac{p(x \mid  \mu, \tau)}{c^5}.
\]
We can now apply the Plancherel identity to get
\[
\int_{\R^2} \left \lvert \varphi^{(5)}(t \mid Y, \mu, \tau) \right \rvert^2 \, \d t = \int_{\R^2} p^2(x_1, x_2 \mid Y, \mu, \tau) \, \d x \leq \frac{1}{c^{10}}\int_{\R^2} p^2(x_1, x_2 \mid \mu, \tau) \, \d x.
\]
Applying again the Plancherel identity we obtain
\[
\int_{\R^2} \left \lvert \varphi^{(5)}(t \mid Y, \mu, \tau) \right \rvert^2 \, \d t \leq \frac{1}{c^{10}}\int_{\R^2} \left \lvert \varphi_X(t \mid \mu, \tau) \right \rvert^2 \, \d t \leq \frac{1}{c^{10}}\int_{\R^2} \left \lvert \varphi_X(t \mid \mu, \tau) \right \rvert \, \d t < \infty,
\]
by \eqref{cf_categorical} for every $\tau > 0$. Therefore assumption $(B5)$ follows with $\delta_5 < \tau^*$.
\end{proof}
\begin{lemma}\label{B6_finite}
Consider the setting of Lemma \ref{finite_B4B6}. Then assumption $(B6)$ is satisfied with $k' = 5$.
\end{lemma}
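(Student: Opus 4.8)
The plan is to prove (B6) with $k'=5$ by combining three observations: (i) since $Y_1,\dots,Y_5$ take values in the finite set $\{y_0,\dots,y_m\}^5$, it suffices to produce, for each fixed configuration $(r_1,\dots,r_5)$, a bound $\sup_{\psi\in B_{\delta_6}}\sup_{|t|>\epsilon}\prod_{j=1}^5|\varphi(t\mid y_{r_j},\psi)|<1$ and then take the maximum over the finitely many configurations; (ii) for each fixed $y_r$ the conditional law of $\theta_j$ given $Y_j=y_r$ and $\psi=(\mu,\tau)$ has Lebesgue density proportional to $f(y_r\mid\theta)N(\theta\mid\mu,\tau^{-1})$, which, using $f(y_r\mid\cdot)\le 1$ and $g(y_r\mid\psi)\ge c$ with $c>0$ as in the preamble of this section, is dominated by $N(\theta\mid\mu,\tau^{-1})/c$ on $B_{\delta_6}\subseteq B$; (iii) $\varphi(t\mid y_r,\psi)$ is the characteristic function of the push-forward of this posterior under $\theta\mapsto(\theta,\theta^2)$. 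Throughout I would take $B_{\delta_6}$ with $\delta_6<\tau^*$, so that $\tau$ is bounded away from $0$ and $\infty$ there.

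First I would split $\{|t|\ge\epsilon\}$ into the compact annulus $\{\epsilon\le|t|\le R\}$ and the tail $\{|t|>R\}$. On the annulus, $(t,\psi)\mapsto\varphi(t\mid y_r,\psi)$ is jointly continuous by dominated convergence (the Gaussian domination in (ii) supplies the integrable majorant), and $|\varphi(t\mid y_r,\psi)|<1$ for every $t\ne 0$: otherwise $t_1\theta+t_2\theta^2$ would lie in a fixed coset of $2\pi\mathbb{Z}$ for posterior-almost every $\theta$, impossible since for $t\ne 0$ this is a non-constant polynomial in $\theta$ and the posterior has a density. By compactness of $\{\epsilon\le|t|\le R\}\times B_{\delta_6}$, the supremum there equals some $\rho_R<1$, so the product of the five factors is $\le\rho_R^5<1$ on the annulus.

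For the tail I would use that $|\varphi(t\mid y_r,\psi)|^2$ is the characteristic function of the law of $(\theta_j-\theta_j',\theta_j^2-(\theta_j')^2)$ with $\theta_j,\theta_j'$ i.i.d.\ from the posterior; the change of variables $(u,w)=(\theta_j-\theta_j',\theta_j+\theta_j')\mapsto(u,uw)$ shows this law has a Lebesgue density on $\R^2$, which by (ii) is uniformly dominated by a fixed integrable function built from Gaussians with parameters in the compact ball $\bar B$. Equivalently, completing the square $t_1\theta+t_2\theta^2=t_2(\theta+t_1/(2t_2))^2-t_1^2/(4t_2)$ reduces a single factor to a one-dimensional characteristic function, and treating separately the regimes where $|t_2|$ is small/large and where $a=t_1/(2t_2)$ is bounded/large, in each regime the relevant family of one-dimensional densities is the continuous image (in $L^1(\R)$) of a compact parameter set, hence norm-compact, so Riemann--Lebesgue decay of its characteristic functions is uniform. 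Either way one concludes $\sup\{|\varphi(t\mid y_r,\psi)|:\ |t|>R,\ \psi\in B_{\delta_6},\ r\}\to 0$ as $R\to\infty$. Fixing $R$ so this tail supremum is below $1/2$, one obtains (B6) with $k'=5$ and $\phi(\epsilon):=\max\{\rho_R^5,\,2^{-5}\}<1$.

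The main obstacle is precisely this uniform tail estimate: the convergence $|\varphi(t\mid y_r,\psi)|\to 0$ as $|t|\to\infty$ is immediate for each fixed $\psi$ (the symmetrized push-forward has a density), but making it uniform over $\psi$ in a ball, and hence over the unbounded set $\{|t|>R\}$ jointly with $\psi$, is what forces the completing-the-square reduction to one dimension and the $L^1$-compactness argument rather than a bare invocation of Riemann--Lebesgue; naive translation-type counterexamples show the uniformity can fail without exploiting that the Gaussian majorants have parameters in a compact set. A secondary but routine point is checking the joint continuity and the dominating bounds on the annulus, all of which rest on the positivity lower bound $c>0$ recorded in the preamble.
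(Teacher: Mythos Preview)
Your approach is correct and takes a genuinely different route from the paper. The paper does not split into an annulus and a tail over the whole ball; instead it first establishes $\sup_{|t|\ge\epsilon}|\varphi^{(5)}(t\mid Y,\psi^*)|<1$ at the single point $\psi^*$ (using the density existence from the proof of (B5) together with Riemann--Lebesgue), and then transfers this to a small ball by a first-order Taylor expansion of $|\varphi^{(5)}(t\mid Y,\psi)|^2$ in $\psi$: the key calculation is that $\bigl|\partial_\mu|\varphi^{(5)}|^2\bigr|$ and $\bigl|\partial_\tau|\varphi^{(5)}|^2\bigr|$ are bounded, \emph{uniformly in $t$}, by expressions involving only Gaussian absolute moments and the lower bound $c$, so that $|\varphi^{(5)}(t\mid Y,\psi)|^2\le|\varphi^{(5)}(t\mid Y,\psi^*)|^2+2\delta_6\,C(\delta_6)$ and one then takes $\delta_6$ small. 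This entirely sidesteps the uniform-in-$\psi$ tail problem you flag as the main obstacle. Your route handles the full ball directly; the annulus step is clean, and the tail step is completable most crisply by noting that $(\psi,r)\mapsto p(\cdot\mid y_r,\psi)$ is $L^1$-continuous (Scheff\'e) on a compact domain, so the pushforward laws of $(\theta-\theta',\theta^2-(\theta')^2)$ form a TV-compact family with densities, whence their Fourier transforms lie in a compact subset of $C_0(\R^2)$ and vanish uniformly at infinity. Two incidental payoffs of your version: it delivers a single $\delta_6$ valid for every $\epsilon>0$, matching the literal quantifier order in (B6) (the paper's $\delta_6$ is chosen after $\epsilon$), and your symmetrisation argument in fact already yields (B6) with $k'=1$, not merely $k'=5$.
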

\begin{proof}
As shown in the proof of Lemma \ref{B5_finite}, the vector $(\sum_{j = 1}^5\theta_j, \sum_{j = 1}^5\theta_j^2)$ admits a density with respect to the Lebesgue measure on $\R^2$, conditional to $Y$ and $(\mu^*, \tau^*)$. Therefore, by Lemma $4$ in Chapter $15$ of \cite{F70}, $|\varphi^{(5)}(t \mid Y, \mu^*, \tau^*)| < 1$ for every $t= (t_1, t_2)$. Moreover, by Riemann-Lebesgue Lemma we have
\[
|\varphi^{(5)}(t \mid Y, \mu^*, \tau^*)| \quad \to \quad 0,
\]
as $|t| \to \infty$. We conclude
\[
\underset{|t| \geq \epsilon}{\sup} \, \left \lvert \varphi^{(5)}(t \mid Y, \mu^*, \tau^*) \right \rvert <1.
\]
Let $\delta_6 > 0$ to be chosen later and $(\mu, \tau) \in B_{\delta_6}$. Then by Taylor formula we get
\begin{equation}\label{taylor_characteristic}
|\varphi^{(5)}(t \mid Y, \mu, \tau)|^2 = |\varphi^{(5)}(t \mid Y, \mu^*, \tau^*)|^2+(\mu^*-\mu)\partial_{\mu}|\varphi^{(5)}(t \mid Y, \bar{\mu}, \bar{\tau})|^2 + (\tau^*-\tau)\partial_{\tau}|\varphi^{(5)}(t \mid Y, \bar{\mu}, \bar{\tau})|^2,
\end{equation}
where $(\bar{\mu}, \bar{\tau}) \in B_{\delta_6}$. Notice that
\[
\begin{aligned}
|\varphi^{(5)}(t \mid Y, \mu, \tau)|^2  =&  \left( \int_{\R^3}\cos \left(t_1\sum_{j = 1}^5\theta_j +t_2\sum_{j = 1}^5\theta_j^2\right) \left \{\prod_{j = 1}^5\frac{f(Y_j \mid \theta_j)N(\theta_j \mid \mu, \tau^{-1})}{\int_\R f(Y_j \mid \psi_j)N(\psi_j \mid \mu, \tau^{-1}) \d \psi_j}\right\} \, \d \theta_{1:5} \right)^2\\
&+  \left( \int_{\R^5}\sin\left(t_1\sum_{j = 1}^5\theta_j +t_2\sum_{j = 1}^5\theta_j^2\right)   \left \{\prod_{j = 1}^5\frac{f(Y_j \mid \theta_j)N(\theta_j \mid \mu, \tau^{-1})}{\int_\R f(Y_j \mid \psi_j)N(\psi_j \mid \mu, \tau^{-1}) \d \psi_j}\right\} \, \d \theta_{1:5}\right)^2,
\end{aligned}
\]
which implies
\[
\begin{aligned}
\left \lvert  \partial_{\mu}|\varphi^{(5)}(t \mid Y, \mu, \tau)|^2 \right \rvert \leq&  2\left \lvert \int_{\R^5}\cos \left(t_1\sum_{j = 1}^5\theta_j +t_2\sum_{j = 1}^5\theta_j^2\right) \partial_\mu \left \{\prod_{j = 1}^5\frac{f(Y_j \mid \theta_j)N(\theta_j \mid \mu, \tau^{-1})}{\int_\R f(Y_j \mid \psi_j)N(\psi_j \mid \mu, \tau^{-1}) \d \psi_j}\right\} \, \d \theta_{1:5}\right \rvert\\
&+  2\left \lvert \int_{\R^5}\sin\left(t_1\sum_{j = 1}^5\theta_j +t_2\sum_{j = 1}^5\theta_j^2\right)   \partial_\mu \left \{\prod_{j = 1}^5\frac{f(Y_j \mid \theta_j)N(\theta_j \mid \mu, \tau^{-1})}{\int_\R f(Y_j \mid \psi_j)N(\psi_j \mid \mu, \tau^{-1}) \d \psi_j}\right\} \, \d \theta_{1:5}\right \rvert
\end{aligned}
\]
and therefore
\begin{equation}\label{bound_derivative}
\begin{aligned}
\left \lvert  \partial_{\mu}|\varphi^{(5)}(t \mid Y, \mu, \tau)|^2 \right \rvert &\leq 4 \int_{\R^5}\left \lvert   \partial_\mu \left \{\prod_{j = 1}^5\frac{f(Y_j \mid \theta_j)N(\theta_j \mid \mu, \tau^{-1})}{\int_\R f(Y_j \mid \psi_j)N(\psi_j \mid \mu, \tau^{-1}) \d \psi_j}\right\}\right \rvert \, \d \theta_{1:5}\\
&= 4\sum_{j = 1}^5\int_\R\left \lvert \partial_\mu \left \{\frac{f(Y_j \mid \theta_j)N(\theta_j \mid \mu, \tau^{-1})}{\int_\R f(Y_j \mid \psi_j)N(\psi_j \mid \mu, \tau^{-1}) \d \psi_j}\right\} \right \rvert \, \d \theta_j.
\end{aligned}
\end{equation}
Moreover, for every $r = 0, \dots, m$, we have
\[
\begin{aligned}
&\left \lvert \partial_\mu \left \{\frac{f(y_r \mid \theta)N(\theta \mid \mu, \tau^{-1})}{\int_\R f(y_r \mid \psi)N(\psi \mid \mu, \tau^{-1}) \d \psi}\right\} \right \rvert \leq \left \lvert  \left \{\frac{f(y_r \mid \theta)\partial_\mu N(\theta \mid \mu, \tau^{-1})}{\int_\R f(y_r \mid \psi)N(\psi \mid \mu, \tau^{-1}) \d \psi}\right\} \right \rvert\\
& + \left \lvert  \left \{\frac{f(y_r \mid \theta)\partial_\mu N(\theta \mid \mu, \tau^{-1})\left(\int_\R f(y_r \mid \psi)\partial_\mu N(\psi \mid \mu, \tau^{-1}) \d \psi\right)}{\left(\int_\R f(y_r \mid \psi)N(\psi \mid \mu, \tau^{-1}) \d \psi\right)^2}\right\} \right \rvert\\
& \leq \frac{|\partial_\mu N(\theta \mid \mu, \tau^{-1})|}{c}+\frac{1}{c^2}|\partial_\mu N(\theta \mid \mu, \tau^{-1})|\left(\int_\R |\partial_\mu N(\psi \mid \mu, \tau^{-1})| \d \psi\right)\\
& = 2\tau \frac{|\theta-\mu| N(\theta \mid \mu, \tau)}{c} +\frac{4\tau^2}{c^2}|\theta-\mu| N(\theta \mid \mu, \tau^{-1})\left(\int_\R|\psi-\mu|  N(\psi \mid \mu, \tau^{-1}) \d \psi\right).
\end{aligned}
\]
Therefore, by \eqref{bound_derivative} there exists $C(\delta_6) < \infty$ which does not depend on $\mu$ and $\tau$ such that
\[
\begin{aligned}
\left \lvert  \partial_{\mu}|\varphi^{(5)}(t \mid Y, \mu, \tau)|^2 \right \rvert \leq& 40\tau \frac{\int_\R |\theta-\mu| N(\theta \mid \mu, \tau^{-1}) \, \d \theta}{c}+80\tau^2\left( \frac{\int_\R |\theta-\mu| N(\theta \mid \mu, \tau^{-1}) \, \d \theta}{c}\right)^2\\
&\leq C(\delta_6),
\end{aligned}
\]
for every $(\mu, \tau) \in B_{\delta_6}$ Notice that $C(\delta_6) $ becomes smaller as $\delta_6$ decreases. Similarly holds for $\partial_{\tau}|\varphi^{(3)}(t \mid Y, \mu, \tau)|^2$, so that by \eqref{taylor_characteristic} we have
\[
\begin{aligned}
|\varphi^{(5)}(t \mid Y, \mu, \tau)|^2 &\leq |\varphi^{(5)}(t \mid Y, \mu^*, \tau^*)|^2+|\mu^*-\mu|C(\delta_6) + |\tau^*-\tau|C(\delta_6)\\
&\leq |\varphi^{(5)}(t \mid Y \mu^*, \tau^*)|^2+2\delta_6 C(\delta_6).
\end{aligned}
\]
Since $\underset{|t| \geq \epsilon}{\sup} \, |\varphi^{(5)}(t \mid Y, \mu^*, \tau^*)|^2 < 1$, by choosing $\delta_6$ small enough we have
\[
\underset{(\mu, \tau) \in B_{\delta_6}}{\sup} \, \, \underset{|t| \geq \epsilon}{\sup} \, |\varphi^{(5)}(t \mid Y, \mu, \tau)|^2 \leq  \underset{|t| \geq \epsilon}{\sup} \, |\varphi^{(5)}(t \mid Y, \mu^*, \tau^*)|^2+2\delta_6 C(\delta_6) < 1,
\]
and $(B6)$ is satisfied.
\end{proof}
\begin{proof}[Proof of Lemma \ref{finite_B4B6}]
Assumption (B4) is satisfied by Lemma \ref{B4_finite}, assumption (B5) by Lemma \ref{B5_finite} and assumption (B6) by Lemma \ref{B6_finite}.
\end{proof}
\subsection{Proof of Proposition \ref{prop_logit_model}}
\begin{proof}
Requirements $(B1)-(B3)$ of Theorem \ref{theorem_one_level_nested} are satisfied by assumption, while $(B4)-(B6)$ hold by Lemma \ref{finite_B4B6}.
\end{proof}
\subsection{Proof of Corollary \ref{spectral_logit}}
\begin{proof}
The result is a direct consequence of Corollary \ref{spectral_radius}.
\end{proof}
\subsection{Statement and proof of Lemma \ref{fisher_finite}}
Let 
\begin{equation}\label{logit_link}
f(y \mid \theta) = \binom{m}{y}\frac{e^{y\theta}}{(1+e^\theta)^m}, 
\end{equation}
where $y = 0, \dots, m$. It means that for each group, conditional to $\theta$, $m$ independent Bernoulli trials are performed, with probability of success given by $e^{\theta}/(1+e^{\theta})$. The following Section is devoted to the proof of the following lemma.
\begin{lemma}\label{fisher_finite}
Consider the setting of Proposition \ref{prop_logit_model} with likelihood \eqref{logit_link}. The Fisher Information Matrix $I(\mu,\tau)$ is non-singular if and only if $m \geq 2$, for every $(\mu, \tau)$.
\end{lemma}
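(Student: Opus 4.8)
The plan is to reduce non-singularity of $\Fisher(\mu,\tau)$ to a statement about the sign changes of a quadratic polynomial, after which the contrast between $m=1$ and $m\ge 2$ becomes transparent. Write $\sigma(\theta)=e^\theta/(1+e^\theta)$ and $h_y(\theta)=\binom{m}{y}\sigma(\theta)^y(1-\sigma(\theta))^{m-y}$, and let $Z\sim N(0,\tau^{-1})$, so that $g(y\mid\mu,\tau)=\E[h_y(\mu+Z)]$. Since $h_y>0$ we have $g(y\mid\mu,\tau)>0$ for every $y\in\{0,\dots,m\}$, hence $\Fisher(\mu,\tau)=\sum_{y=0}^m g(y\mid\mu,\tau)\,\nabla\log g(y\mid\mu,\tau)\nabla\log g(y\mid\mu,\tau)^\top$ is singular if and only if some $(a,b)\ne(0,0)$ satisfies $a\,\partial_\mu\log g(y\mid\mu,\tau)+b\,\partial_\tau\log g(y\mid\mu,\tau)=0$ for all $y$, equivalently (multiplying by $g(y\mid\mu,\tau)>0$) $a\,\partial_\mu g(y\mid\mu,\tau)+b\,\partial_\tau g(y\mid\mu,\tau)=0$ for all $y$. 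First I would differentiate under the integral (legitimate since $h_y$ is bounded and the Gaussian density and its $(\mu,\tau)$-derivatives have all moments), using $\partial_\mu N(\theta\mid\mu,\tau^{-1})=\tau(\theta-\mu)N$ and $\partial_\tau N(\theta\mid\mu,\tau^{-1})=\bigl(\tfrac1{2\tau}-\tfrac{(\theta-\mu)^2}{2}\bigr)N$, to rewrite the condition as
\[
\E\!\left[\phi(Z)\,h_y(\mu+Z)\right]=0\quad\text{for }y=0,\dots,m,\qquad \phi(z):=a\tau z+b\Bigl(\tfrac1{2\tau}-\tfrac{z^2}{2}\Bigr),
\]
where $\phi$ is a polynomial of degree at most two and $\phi\equiv 0$ only if $a=b=0$.

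Next I would pass to $p:=\sigma(\mu+Z)\in(0,1)$. The Bernstein polynomials $\{p^y(1-p)^{m-y}\}_{y=0}^m$ form a basis of the space of polynomials in $p$ of degree $\le m$, so the system above is equivalent to $\E[\phi(Z)\,p^k]=0$ for $k=0,\dots,m$, i.e.\ $\E[\phi(Z)\,P(p)]=0$ for every polynomial $P$ with $\deg P\le m$. Writing $Z=\sigma^{-1}(p)-\mu$, setting $\Phi(p):=\phi\bigl(\sigma^{-1}(p)-\mu\bigr)$, and letting $\rho$ be the (strictly positive) density on $(0,1)$ of $p$ — the pushforward of $N(0,\tau^{-1})$ under the increasing diffeomorphism $z\mapsto\sigma(\mu+z)$ — this becomes
\[
\int_0^1\Phi(p)\,P(p)\,\rho(p)\,dp=0\qquad\text{for every polynomial }P\text{ with }\deg P\le m.
\]

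The crux is a sign-counting argument. Assume $(a,b)\ne(0,0)$, so $\phi\not\equiv 0$ and hence $\Phi\not\equiv 0$ (because $\sigma^{-1}(p)-\mu$ ranges over all of $\R$). Since $\sigma^{-1}\colon(0,1)\to\R$ is an increasing bijection, $\Phi$ has on $(0,1)$ exactly as many sign changes as $\phi$ has on $\R$, namely at most two, a nonzero polynomial of degree $\le 2$ having at most two real sign changes. On the other hand, if $\Phi$ changed sign at $c_1<\dots<c_j$ in $(0,1)$ with $j\le m$, then $\Phi(p)\prod_{i=1}^j(p-c_i)$ would have constant sign and not vanish identically, so its integral against $\rho>0$ would be nonzero, contradicting the last display with $P(p)=\prod_{i=1}^j(p-c_i)$, a polynomial of degree $j\le m$; hence $\Phi$ must have at least $m+1$ sign changes. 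For $m\ge 2$ this forces $m+1\ge 3>2$, a contradiction, so no nonzero $(a,b)$ exists and $\Fisher(\mu,\tau)$ is non-singular for every $(\mu,\tau)$. For $m=1$ the sample space is $\{0,1\}$, and $g(0\mid\mu,\tau)+g(1\mid\mu,\tau)\equiv 1$ gives $\partial_\mu g(0\mid\mu,\tau)=-\partial_\mu g(1\mid\mu,\tau)$ and likewise for $\partial_\tau$, so the vectors $(\partial_\mu g(0\mid\mu,\tau),\partial_\mu g(1\mid\mu,\tau))$ and $(\partial_\tau g(0\mid\mu,\tau),\partial_\tau g(1\mid\mu,\tau))$ both lie in the line $\{v\in\R^2:v_0+v_1=0\}$ and are linearly dependent; some nonzero $(a,b)$ then annihilates both scores, so $\Fisher(\mu,\tau)$ is singular.

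The genuinely non-routine step is the sign count: everything hinges on matching the number of orthogonality constraints, $m+1$, against the bounded ``complexity'' of the score direction, encoded by the polynomial $\phi$ of degree $\le 2$, which changes sign at most twice irrespective of $(a,b)$. The surrounding reductions — positivity of $g(y\mid\mu,\tau)$, differentiation under the integral, and the Bernstein-to-monomial change of basis — are routine; care is needed mainly in stating and proving the sign-change lemma (a continuous, non-null function on $(0,1)$ that is orthogonal in $L^2(\rho)$ to all polynomials of degree $\le m$, $\rho>0$, has at least $m+1$ interior sign changes) and in justifying that $\phi$ never has more than two real sign changes.
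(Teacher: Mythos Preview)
Your proof is correct and takes a genuinely different route from the paper's.

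The paper proceeds more directly and with less machinery. After the same reduction to the existence of a nonzero $(a,b)$ (their $\alpha$, with the extra observation that $b\ne 0$ since otherwise $\partial_\mu\log g$ would vanish identically, contradicting strict monotonicity of the posterior mean), it uses an integration-by-parts identity to rewrite $\partial_\mu g(y\mid\mu,\tau)$ and $\partial_\tau g(y\mid\mu,\tau)$, then evaluates the linear relation only at the two extreme points $y=0$ and $y=m$. Dividing the resulting equalities yields
\[
\frac{\int(\theta-\mu)\,\frac{e^{m\theta}}{(1+e^\theta)^{m+1}}N\,\d\theta}{\int\frac{e^{m\theta}}{(1+e^\theta)^{m+1}}N\,\d\theta}
=\frac{\int(\theta-\mu)\,\frac{e^{\theta}}{(1+e^\theta)^{m+1}}N\,\d\theta}{\int\frac{e^{\theta}}{(1+e^\theta)^{m+1}}N\,\d\theta},
\]
which is an equality of posterior means (in an auxiliary $(m{+}1)$-trial model) for two distinct observations; this contradicts a short monotonicity lemma (their Lemma~\ref{mean_binary}) showing $y\mapsto E[\theta\mid y,\mu,\tau]$ is strictly increasing, and requires precisely $m\ge 2$ so that $1<m$.

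Your argument instead uses all $m{+}1$ moment conditions, passes via Bernstein polynomials to orthogonality of $\Phi(p)=\phi(\sigma^{-1}(p)-\mu)$ against all polynomials of degree $\le m$ in $L^2(\rho)$, and invokes the classical sign-change lemma to force $\Phi$ to have $\ge m{+}1$ sign changes, while $\phi$ being quadratic caps this at $2$. This is more conceptual: it pinpoints the threshold $m=2$ as arising from the degree of the polynomial $\phi$ encoding the $(\mu,\tau)$-score direction, and it would extend verbatim to any strictly monotone link (only the bijection $z\mapsto\sigma(\mu+z)$ is used) and, more generally, to settings with more hyperparameters where the analogue of $\phi$ has higher degree. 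The paper's route, by contrast, is shorter, avoids the sign-change lemma and the Bernstein change of basis, and leans only on a monotone-likelihood-ratio argument specific to the logit/Gaussian combination.
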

First of all we need few preliminary results.
\begin{lemma}\label{derivative_binary}
Consider the setting of Proposition \ref{prop_logit_model} with likelihood \eqref{logit_link} and fix $(\mu, \tau)$. Let $h(y \mid \mu, \tau) = \log g(y \mid \mu, \tau)$, with $g(\cdot)$ as in \eqref{marginal_finite}. Then it holds
\[
E_Y\left[\frac{\partial}{\partial \mu}h(Y \mid \mu, \tau) \right] = E_Y\left[\frac{\partial}{\partial \tau}h(Y \mid \mu, \tau) \right] = 0
\]
and
\[
E_Y \left[\left(\frac{\partial}{\partial \mu}h(Y \mid \mu, \tau) \right)^2\right] < \infty, \quad E_Y \left[\left(\frac{\partial}{\partial \tau_1}h(Y \mid \mu, \tau) \right)^2\right] < \infty.
\]
Moreover, for every $y = 0, \dots, m$ we have
\[
\frac{\partial}{\partial \mu} g(y \mid, \mu, \tau) =  \binom{m}{y}\int \frac{e^{y\theta}\left[y+ye^\theta-me^\theta \right]}{(1+e^\theta)^{m+1}}\sqrt{\frac{\tau}{2\pi}}e^{-\frac{\tau}{2}(\theta-\mu)^2} \, \d \theta
\]
and
\[
\frac{\partial}{\partial \tau} g(y \mid, \mu, \tau) =  -\binom{m}{y}\frac{1}{2\tau}\int (\theta-\mu)\frac{e^{y\theta}\left[y+ye^\theta-me^\theta \right]}{(1+e^\theta)^{m+1}}\sqrt{\frac{\tau}{2\pi}}e^{-\frac{\tau}{2}(\theta-\mu)^2} \, \d \theta.
\]
\end{lemma}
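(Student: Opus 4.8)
The plan is to prove the two explicit integral formulas first (the ``Moreover'' part), since both the mean-zero identities and the finiteness of the second moments follow from them together with the fact that $Y$ is supported on the finite set $\{y_0,\dots,y_m\}$. Throughout, $E_Y$ denotes expectation under $Y\sim g(\cdot\mid\mu,\tau)$ at the fixed parameter, as is needed for the Fisher information computation in Lemma \ref{fisher_finite}.

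First I would justify differentiating under the integral sign in $g(y\mid\mu,\tau)=\binom{m}{y}\int \frac{e^{y\theta}}{(1+e^\theta)^m}\,N(\theta\mid\mu,\tau^{-1})\,\d\theta$. Only the Gaussian factor $N(\theta\mid\mu,\tau^{-1})=\sqrt{\tau/(2\pi)}\,e^{-\frac{\tau}{2}(\theta-\mu)^2}$ depends on $(\mu,\tau)$, and on a compact neighborhood of $(\mu,\tau)$ its $\mu$- and $\tau$-derivatives are dominated by a fixed integrable function (a Gaussian times a polynomial in $\theta$); since $f(y\mid\theta)\le\binom{m}{y}$ is bounded, the Dominated Convergence Theorem lets me pull the derivatives inside. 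I would then use two elementary kernel identities, verified by direct differentiation: $\partial_\mu N=-\partial_\theta N$ and $\partial_\tau N=\frac{1}{2\tau}\partial_\theta[(\theta-\mu)N]$, the latter from $\partial_\theta[(\theta-\mu)N]=N[1-\tau(\theta-\mu)^2]$ and $\partial_\tau N=N[\frac{1}{2\tau}-\frac{1}{2}(\theta-\mu)^2]$. Substituting and integrating by parts moves the $\theta$-derivative onto $f$; the boundary terms vanish because $N$ decays superpolynomially while $f$ and $\partial_\theta f$ stay bounded. A direct computation gives $\partial_\theta f(y\mid\theta)=\binom{m}{y}\frac{e^{y\theta}[y+ye^\theta-me^\theta]}{(1+e^\theta)^{m+1}}$, which reproduces exactly the stated expressions for $\partial_\mu g$ and $\partial_\tau g$.

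For the mean-zero claim I would differentiate the normalization identity $\sum_{r=0}^m g(y_r\mid\mu,\tau)=1$, which is a finite sum and hence differentiable term by term, obtaining $\sum_r\partial_\mu g(y_r\mid\mu,\tau)=0$ and likewise in $\tau$. Writing $\partial_\mu h=\partial_\mu g/g$ and taking expectation collapses the weight: $E_Y[\partial_\mu h]=\sum_r g(y_r\mid\mu,\tau)\frac{\partial_\mu g(y_r\mid\mu,\tau)}{g(y_r\mid\mu,\tau)}=\sum_r\partial_\mu g(y_r\mid\mu,\tau)=0$, using that $g(y_r\mid\mu,\tau)>0$ for every $r$, guaranteed by the strict positivity of $f$ in \eqref{eq:ass_discrete}; the $\tau$-case is identical. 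For finiteness, the key simplification is again the finite support of $Y$: $E_Y[(\partial_\mu h)^2]=\sum_r (\partial_\mu g(y_r\mid\mu,\tau))^2/g(y_r\mid\mu,\tau)$ is a finite sum, so it suffices that each $\partial_\mu g$ and $\partial_\tau g$ is finite with each $g(y_r\mid\mu,\tau)>0$. From the explicit formula, writing $p=e^\theta/(1+e^\theta)\in(0,1)$ one has $|\partial_\theta f(y\mid\theta)|=\binom{m}{y}\,p^y(1-p)^{m-y}\,|y-mp|\le\binom{m}{y}\,m$, so $|\partial_\mu g|\le\binom{m}{y}\,m$ and $|\partial_\tau g|\le\binom{m}{y}\frac{m}{2\tau}E[|\theta-\mu|]<\infty$ by the finiteness of the Gaussian first absolute moment; hence every summand is finite and both second moments are finite.

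The main obstacle I anticipate is purely in the two justifications of passing a derivative through an integral, namely differentiation under the integral sign and the subsequent integration by parts; both hinge on the superpolynomial decay of $N$ against the uniformly bounded $f$ and $\partial_\theta f$, so once the domination is set up they are routine. Everything else reduces either to an explicit computation of $\partial_\theta f$ and the Gaussian identities, or to exploiting the finite support of $Y$ and the positivity of $g$.
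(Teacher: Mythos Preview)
Your argument is correct and, for the explicit formulas, essentially identical to the paper's: both differentiate under the integral via dominated convergence and then integrate by parts, transferring a $\theta$-derivative from the Gaussian kernel onto $f(y\mid\theta)$ to produce the factor $e^{y\theta}[y+ye^\theta-me^\theta]/(1+e^\theta)^{m+1}$.

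The one genuine difference is in how you establish the mean-zero and finite-second-moment claims. The paper first rewrites the score as a posterior moment, $\partial_\mu h(y\mid\mu,\tau)=E[\theta-\mu\mid y,\mu,\tau]$ and $\partial_\tau h(y\mid\mu,\tau)=\tfrac{1}{2\tau}-\tfrac{1}{2}E[(\theta-\mu)^2\mid y,\mu,\tau]$, and then lets the claims follow by iterated expectation against the Gaussian prior for $\theta$. You instead differentiate the normalization $\sum_r g(y_r\mid\mu,\tau)=1$ and use the finite support of $Y$ together with the explicit bound $|\partial_\theta f|\le\binom{m}{y}m$. Both routes are short and valid; the paper's posterior-moment expression is mildly more informative (it is reused later, e.g.\ in the Fisher-information analysis), while your argument is slightly more self-contained and avoids appealing to prior moments.
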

\begin{proof}
Through Dominated Convergence Theorem it is easy to verify that
\[
\frac{\partial}{\partial \mu}g(y \mid \mu, \tau) = \binom{m}{y}\int \frac{e^{y\theta}}{(1+e^{\theta})^m}\frac{\partial}{\partial \mu}\left\{\sqrt{\frac{\tau}{2\pi}}e^{-\frac{\tau}{2}(\theta-\mu)^2} \right\} \, \d \theta
\]
and
\[
\frac{\partial}{\partial \tau}g(y \mid \mu, \tau) = \binom{m}{y}\int \frac{e^{y\theta}}{(1+e^{\theta})^m}\frac{\partial}{\partial \tau}\left\{\sqrt{\frac{\tau}{2\pi}}e^{-\frac{\tau}{2}(\theta-\mu)^2} \right\} \, \d \theta,
\]
that is integrals and derivatives can be exchanged. Therefore
\[
\frac{\partial}{\partial \mu} h(y \mid, \mu, \tau) = E\left[ \theta -\mu \mid y, \mu, \tau\right], \quad \frac{\partial}{\partial \mu} h(y \mid, \mu, \tau) = \frac{1}{2\tau}-\frac{1}{2}E\left[ (\theta -\mu)^2 \mid y, \mu, \tau\right]
\]
and the statements on $h(y \mid \mu, \tau)$ easily follow. Moreover
\[
\begin{aligned}
\frac{\partial}{\partial \mu}g(y \mid \mu, \tau) &= \binom{m}{y}\int \frac{e^{y\theta}}{(1+e^{\theta})^m}(\theta-\mu)\sqrt{\frac{\tau}{2\pi}}e^{-\frac{\tau}{2}(\theta-\mu)^2} \, \d \theta \\
&= \binom{m}{y}\int \frac{e^{y\theta}\left[y+ye^\theta-me^\theta \right]}{(1+e^\theta)^{m+1}}\sqrt{\frac{\tau}{2\pi}}e^{-\frac{\tau}{2}(\theta-\mu)^2} \, \d \theta
\end{aligned}
\]
integrating by parts. Similarly
\[
\begin{aligned}
\frac{\partial}{\partial \tau}g(y \mid \mu, \tau) &= \binom{m}{y}\frac{1}{2\tau}\int \frac{e^{y\theta}}{(1+e^{\theta})^m}\sqrt{\frac{\tau}{2\pi}}e^{-\frac{\tau}{2}(\theta-\mu)^2} \, \d \theta \\
&- \binom{m}{y}\frac{1}{2}\int \frac{e^{y\theta}}{(1+e^{\theta})^m}(\theta-\mu)^2\sqrt{\frac{\tau}{2\pi}}e^{-\frac{\tau}{2}(\theta-\mu)^2} \, \d \theta\\
& = -\binom{m}{y}\frac{1}{2\tau}\int (\theta-\mu)\frac{e^{y\theta}\left[y+ye^\theta-me^\theta \right]}{(1+e^\theta)^{m+1}}\sqrt{\frac{\tau}{2\pi}}e^{-\frac{\tau}{2}(\theta-\mu)^2} \, \d \theta.
\end{aligned}
\]
\end{proof}

\begin{lemma}\label{mean_binary}
Consider the setting of Proposition \ref{prop_logit_model} with likelihood \eqref{logit_link} and let $y, y' \in \{0, 1, \dots, m\}$ be such that $y < y'$ and $m \geq 1$. Then
\[
E\left[\theta \mid y, \mu, \tau \right] < E\left[\theta \mid y', \mu, \tau \right]
\]
for every $(\mu, \tau_1)$.
\end{lemma}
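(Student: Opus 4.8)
The plan is to exploit the monotone likelihood ratio (MLR) structure of the posterior of $\theta$ in the observation $y$. First I would write, for $y\in\{0,\dots,m\}$, the posterior density of $\theta$ with respect to Lebesgue measure as
\[
p(\theta \mid y, \mu, \tau) = \frac{f(y \mid \theta)\,N(\theta \mid \mu, \tau^{-1})}{g(y \mid \mu, \tau)}, \qquad \theta \in \R,
\]
with $f$ and $g$ as in \eqref{logit_link} and \eqref{marginal_finite}; here $g(y\mid\mu,\tau)\in(0,\infty)$ since $f(y\mid\cdot)$ is bounded and $N(\cdot\mid\mu,\tau^{-1})$ is a probability density. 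The key observation is that for $y<y'$ the likelihood ratio
\[
r(\theta) := \frac{p(\theta \mid y', \mu, \tau)}{p(\theta \mid y, \mu, \tau)} = C_{y,y'}\,e^{(y'-y)\theta}, \qquad C_{y,y'} := \frac{\binom{m}{y'}\,g(y \mid \mu, \tau)}{\binom{m}{y}\,g(y' \mid \mu, \tau)}\in(0,\infty),
\]
is a strictly increasing function of $\theta$, because $y'-y\geq 1>0$.

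Second, I would convert this MLR property into a strict ordering of the two posterior means. Since $\int_\R r(\theta)\,p(\theta\mid y,\mu,\tau)\,\d\theta = 1$ and $r$ is continuous and strictly increasing, there is a unique $\theta_0\in\R$ with $r(\theta)\le 1$ for $\theta\le\theta_0$ and $r(\theta)\ge 1$ for $\theta\ge\theta_0$. Both posterior means are finite (a bounded density times a Gaussian kernel has moments of all orders). Using $\int_\R \bigl(r(\theta)-1\bigr)p(\theta\mid y,\mu,\tau)\,\d\theta = 0$, I would write
\[
E\left[\theta \mid y', \mu, \tau\right] - E\left[\theta \mid y, \mu, \tau\right] = \int_\R \theta\,\bigl(r(\theta)-1\bigr)\,p(\theta\mid y,\mu,\tau)\,\d\theta = \int_\R (\theta-\theta_0)\,\bigl(r(\theta)-1\bigr)\,p(\theta\mid y,\mu,\tau)\,\d\theta.
\]
The integrand $(\theta-\theta_0)(r(\theta)-1)$ is nonnegative on $\R$ (its two factors share the same sign for every $\theta$) and strictly positive for $\theta\neq\theta_0$, hence on a set of full posterior measure; so the right-hand side is strictly positive, which is the claim.

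I do not expect a real obstacle: this is a textbook MLR / stochastic-dominance computation, and every integrability claim reduces to the posterior density being a uniformly bounded function of $\theta$ times a Gaussian. The only delicate point is the strict (as opposed to weak) inequality, handled above via the crossing-point identity together with the strict monotonicity of $r$. As an alternative I would mention viewing $\{p(\cdot\mid y,\mu,\tau)\}_y$ as a one-parameter exponential family with natural parameter $y$ and carrier $\theta\mapsto(1+e^\theta)^{-m}N(\theta\mid\mu,\tau^{-1})$; then $y\mapsto E[\theta\mid y,\mu,\tau]$ equals $A'(y)$ for the log-partition function $A(y)=\log\int_\R e^{y\theta}(1+e^\theta)^{-m}N(\theta\mid\mu,\tau^{-1})\,\d\theta$, which is smooth on $\R$ by dominated convergence with $A''(y)=\Var(\theta\mid y,\mu,\tau)>0$, so $A'$ is strictly increasing and restricting to integers $y<y'$ concludes.
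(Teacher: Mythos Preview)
Your proof is correct. Your primary argument uses monotone likelihood ratio and a crossing-point identity to turn stochastic dominance into a strict inequality of means; this is different from the paper's route, which treats the posterior mean as a function of a continuous parameter $x$ (extending $y$ to $(0,m)$), differentiates, and identifies the derivative as the posterior variance, hence strictly positive. Interestingly, your \emph{alternative} at the end---viewing the family as exponential in the natural parameter $y$ and noting $A''(y)=\Var(\theta\mid y,\mu,\tau)>0$---is exactly the paper's argument. The MLR route is slightly more elementary (no differentiation under the integral sign) and makes the stochastic dominance explicit; the paper's variance argument is shorter once one is willing to differentiate and gives the derivative of the map $y\mapsto E[\theta\mid y]$ explicitly.
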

\begin{proof}
Fix $(\mu, \tau)$. Consider the function
\[
r(x) = \frac{\int \theta\frac{e^{x\theta}}{(1+e^\theta)^{m}}\sqrt{\frac{\tau_1}{2\pi}}e^{-\frac{\tau}{2}(\theta-\mu)^2} \, \d \theta}{\int\frac{e^{x\theta}}{(1+e^\theta)^{m}}\sqrt{\frac{\tau}{2\pi}}e^{-\frac{\tau}{2}(\theta-\mu)^2} \, \d \theta}.
\]
with $x \in (0, m)$. Notice that
\[
r(y) = E\left[\theta \mid y, \mu, \tau \right] \quad \text{and} \quad r(y') = E\left[\theta \mid y', \mu, \tau \right].
\]
Notice that
\[
\begin{aligned}
\frac{\d}{\d x} r(x) &= \frac{\int \theta^2\frac{e^{x\theta}}{(1+e^\theta)^{m}}\sqrt{\frac{\tau}{2\pi}}e^{-\frac{\tau}{2}(\theta-\mu)^2} \, \d \theta}{\int\frac{e^{x\theta}}{(1+e^\theta)^{m}}\sqrt{\frac{\tau}{2\pi}}e^{-\frac{\tau}{2}(\theta-\mu)^2} \, \d \theta} - \left[\frac{\int \theta\frac{e^{x\theta}}{(1+e^\theta)^{m}}\sqrt{\frac{\tau}{2\pi}}e^{-\frac{\tau}{2}(\theta-\mu)^2} \, \d \theta}{\int\frac{e^{x\theta}}{(1+e^\theta)^{m}}\sqrt{\frac{\tau}{2\pi}}e^{-\frac{\tau}{2}(\theta-\mu)^2} \, \d \theta} \right]^2 > 0
\end{aligned}
\]
for every $x \in (0, m)$ by Jensen inequality. Therefore $r(x)$ is strictly increasing and $r(y) < r(y')$.
\end{proof}
\begin{lemma}\label{fisher_information_binary}
Consider the setting of Proposition \ref{prop_logit_model} with likelihood \eqref{logit_link}. Then the Fisher Information Matrix $I(\mu,\tau)$ is non-singular in $(\mu, \tau)$ if and only if there exists $\alpha = \alpha(\mu, \tau) \neq 0$ such that
\[
\frac{\partial}{\partial \mu}g(y \mid \mu, \tau) = \alpha \frac{\partial}{\partial \tau}g(y \mid \mu, \tau)
\]
for every $y = 0, \dots, m$.
\end{lemma}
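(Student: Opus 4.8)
The plan is to prove the stated equivalence by viewing $\Fisher(\mu,\tau)$ as a weighted Gram matrix of the two discrete gradient vectors
\[
u=\bigl(\partial_\mu g(y_0\mid\mu,\tau),\dots,\partial_\mu g(y_m\mid\mu,\tau)\bigr),\qquad
v=\bigl(\partial_\tau g(y_0\mid\mu,\tau),\dots,\partial_\tau g(y_m\mid\mu,\tau)\bigr),
\]
both regarded as elements of $\R^{m+1}$, and tying non-singularity of $\Fisher(\mu,\tau)$ to the collinearity of $u$ and $v$. The point of the reduction is that the only way the two-dimensional score $(\partial_\mu\log g,\partial_\tau\log g)$ can carry ``genuinely two-dimensional'' information is through a linear relation of the form $\partial_\mu g(y)=\alpha\,\partial_\tau g(y)$ holding simultaneously for all atoms $y_0,\dots,y_m$.

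First I would rewrite the entries of $\Fisher(\mu,\tau)$. Since $Y$ is supported on the finite set $\{y_0,\dots,y_m\}$ with $g(y_r\mid\mu,\tau)>0$ by \eqref{eq:ass_discrete}, and $\partial_{\psi_d}\log g=\partial_{\psi_d}g/g$, the defining expectation collapses to the finite sum
\[
\Fisher_{dd'}(\mu,\tau)=\sum_{r=0}^m \frac{\partial_{\psi_d} g(y_r\mid\mu,\tau)\,\partial_{\psi_{d'}} g(y_r\mid\mu,\tau)}{g(y_r\mid\mu,\tau)},
\]
so that $\Fisher(\mu,\tau)$ is exactly the Gram matrix of $u$ and $v$ for the inner product $\langle a,b\rangle=\sum_{r=0}^m a_r b_r/g(y_r\mid\mu,\tau)$, which is a genuine (positive definite) inner product because each weight $g(y_r\mid\mu,\tau)^{-1}$ is finite and strictly positive; finiteness of the entries themselves is guaranteed by the second-moment control in Lemma \ref{derivative_binary}. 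I would then use the Cauchy--Schwarz identity
\[
\det \Fisher(\mu,\tau)=\langle u,u\rangle\,\langle v,v\rangle-\langle u,v\rangle^2,
\]
for which equality (vanishing determinant) holds precisely when $u$ and $v$ are collinear. To turn collinearity into the stated relation with a \emph{nonzero} ratio, I would rule out the degenerate cases in which $u$ or $v$ vanishes: the explicit formula from Lemma \ref{derivative_binary} gives at $y_0=0$ the value $\partial_\mu g(y_0\mid\mu,\tau)=-m\int e^\theta(1+e^\theta)^{-(m+1)}N(\theta\mid\mu,\tau^{-1})\,\d\theta<0$, so $u\neq\bm{0}$, and a parallel argument shows $v\neq\bm{0}$; hence a collinearity relation can only be written as $u=\alpha v$ with $\alpha\neq0$.

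The main obstacle will be the precise orientation of the equivalence together with the degenerate configurations, i.e.\ making sure that the vanishing of $\det\Fisher(\mu,\tau)$ is matched to the existence of a nonzero $\alpha$ with $\partial_\mu g(y_r\mid\mu,\tau)=\alpha\,\partial_\tau g(y_r\mid\mu,\tau)$ for every $r$, and that this cannot be produced spuriously by the sum constraints $\sum_{r=0}^m\partial_\mu g(y_r\mid\mu,\tau)=\sum_{r=0}^m\partial_\tau g(y_r\mid\mu,\tau)=0$ inherited from $\sum_{r=0}^m g(y_r\mid\mu,\tau)=1$. Once the non-vanishing of $u$ and $v$ and the Cauchy--Schwarz equality case are in place, the Gram-matrix characterization yields the biconditional of the lemma directly, with the proportionality constant $\alpha=\alpha(\mu,\tau)$ read off from $u=\alpha v$.
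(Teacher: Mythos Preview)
Your approach is essentially the paper's: both reduce to the Cauchy--Schwarz equality case, with $\Fisher(\mu,\tau)$ singular iff the two score components are linearly dependent. The paper works with $\partial_\mu h,\partial_\tau h$ for $h=\log g$, writes the dependence as $\partial_\mu h=\alpha\,\partial_\tau h+\beta$, kills $\beta$ via $E[\partial_\mu h]=E[\partial_\tau h]=0$ (Lemma~\ref{derivative_binary}), and then forces $\alpha\neq0$ by invoking Lemma~\ref{mean_binary} (strict monotonicity of $y\mapsto E[\theta\mid y]$). Your Gram-matrix framing with $\partial g$ and the $g^{-1}$-weighted inner product is marginally cleaner, since no affine constant ever appears, and your $u\neq\bm0$ argument from the explicit integral at $y=0$ plays the same role as the paper's appeal to Lemma~\ref{mean_binary}.

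One small flag: the ``parallel argument shows $v\neq\bm0$'' is not actually parallel. For $u$, the integrand in $\partial_\mu g(0\mid\mu,\tau)$ is strictly negative, giving an immediate sign; the formula for $\partial_\tau g(y_r\mid\mu,\tau)$ carries a factor $(\theta-\mu)$ against a positive kernel and has no obvious sign at any single $y_r$. Ruling out $v=\bm0$ (equivalently $E[(\theta-\mu)^2\mid y_r]=\tau^{-1}$ for every $r$) requires a separate argument. The paper is equally informal on this point---it later asserts $\partial_\tau g(0\mid\mu,\tau)\neq0$ without proof---so this is not a discrepancy with the paper, just a shared loose end.
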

\begin{proof}
Fix a pair $(\mu, \tau)$. By Lemma \ref{derivative_binary} the matrix $I(\mu,\tau)$ is well-defined. The determinant is given by
\[
E_Y \left[\left(\frac{\partial}{\partial \mu}h(Y \mid \mu, \tau) \right)^2\right]E_Y \left[\left(\frac{\partial}{\partial \tau}h(Y \mid \mu, \tau) \right)^2\right] -E^2 \left[\left(\frac{\partial}{\partial \mu}h(Y \mid \mu, \tau) \right)\left(\frac{\partial}{\partial \tau}h(Y \mid \mu, \tau) \right)\right].
\]
By Cauchy--Schwartz inequality, the above formula is always non-negative and it is equal to $0$ if and only if $\frac{\partial}{\partial \mu}h(Y \mid \mu, \tau)$ and $\frac{\partial}{\partial \tau}h(Y \mid \mu, \tau)$ are linearly dependent, that is
\begin{equation}\label{loglik_equality}
\frac{\partial}{\partial \mu}h(y \mid \mu, \tau) = \alpha \frac{\partial}{\partial \tau}h(y \mid \mu, \tau)+\beta
\end{equation}
for every $y \in \{0, 1, \dots, m\}$ and for constants $\alpha$ and $\beta$. By Lemma \ref{derivative_binary} it is immediate to prove $\beta = 0$. Moreover, by Lemma \ref{mean_binary}, we deduce that $\alpha \neq 0$. Multiplying by $g(y \mid \mu, \tau)$ on both sides of \eqref{loglik_equality} we get the final result.
\end{proof}
\begin{proof}[Proof of Lemma \ref{fisher_finite}]
Fix $(\mu, \tau)$ and let $m = 1$. Define
\[
\alpha := \frac{\frac{\partial}{\partial \mu}g(0 \mid \mu, \tau)}{\frac{\partial}{\partial \tau}g(0 \mid \mu, \tau)}.
\]
Notice that $\alpha$ is well defined, since $\frac{\partial}{\partial \tau}g(0 \mid \mu, \tau) \neq 0$ for every $(\mu, \tau)$. Then by construction
\[
\frac{\partial}{\partial \mu}g(0 \mid \mu, \tau) = \alpha \frac{\partial}{\partial \tau}g(0 \mid \mu, \tau)
\]
and
\[
\frac{\partial}{\partial \mu}g(1 \mid \mu, \tau) =-\frac{\partial}{\partial \mu}g(0 \mid \mu, \tau) = - \alpha \frac{\partial}{\partial \tau}g(0 \mid \mu, \tau) = \alpha \frac{\partial}{\partial \tau}g(1 \mid \mu, \tau),
\]
so that the Fisher Information matrix is singular by Lemma \ref{fisher_information_binary}.

Let $m \geq 2$ and fix $(\mu, \tau)$. Assume by contradiction that $I(\mu,\tau)$ is singular. By Lemma \ref{fisher_information_binary} we have that there exists $\alpha \neq 0$ such that
\[
\frac{\partial}{\partial \mu}g(y \mid \mu, \tau) = \alpha \frac{\partial}{\partial \tau}g(y \mid \mu, \tau)
\]
for every $y \in \{0, 1, \dots, m\}$. By the second part of Lemma \ref{derivative_binary} for $y = 0$ and $y = m$ it implies
\[
-m\int \frac{e^{\theta}}{(1+e^\theta)^{m+1}}\sqrt{\frac{\tau_1}{2\pi}}e^{-\frac{\tau}{2}(\theta-\mu)^2} \, \d \theta = \alpha\frac{m}{2\tau}\int (\theta-\mu)\frac{e^{\theta}}{(1+e^\theta)^{m+1}}\sqrt{\frac{\tau}{2\pi}}e^{-\frac{\tau}{2}(\theta-\mu)^2} \, \d \theta
\]
and
\[
m\int \frac{e^{m\theta}}{(1+e^\theta)^{m+1}}\sqrt{\frac{\tau}{2\pi}}e^{-\frac{\tau}{2}(\theta-\mu)^2} \, \d \theta = -\alpha\frac{m}{2\tau}\int (\theta-\mu)\frac{e^{m\theta}}{(1+e^\theta)^{m+1}}\sqrt{\frac{\tau}{2\pi}}e^{-\frac{\tau}{2}(\theta-\mu)^2} \, \d \theta.
\]
Since $\alpha \neq 0$, we conclude
\[
\frac{\int (\theta-\mu)\frac{e^{m\theta}}{(1+e^\theta)^{m+1}}\sqrt{\frac{\tau}{2\pi}}e^{-\frac{\tau}{2}(\theta-\mu)^2} \, \d \theta}{\int\frac{e^{m\theta}}{(1+e^\theta)^{m+1}}\sqrt{\frac{\tau}{2\pi}}e^{-\frac{\tau}{2}(\theta-\mu)^2} \, \d \theta} = \frac{\int (\theta-\mu)\frac{e^{\theta}}{(1+e^\theta)^{m+1}}\sqrt{\frac{\tau}{2\pi}}e^{-\frac{\tau}{2}(\theta-\mu)^2} \, \d \theta}{\int\frac{e^{\theta}}{(1+e^\theta)^{m+1}}\sqrt{\frac{\tau}{2\pi}}e^{-\frac{\tau}{2}(\theta-\mu)^2} \, \d \theta},
\]
that means
\[
E[\theta \mid m, \mu, \tau] = E[\theta \mid 1, \mu, \tau].
\]
Since $m > 1$, the above equality directly contradicts Lemma \ref{mean_binary}. Therefore the Fisher Information matrix is non singular.
\end{proof}

\subsection{Proof of Proposition \ref{prop_extended_normal_model}}
Define a one-to-one transformation of $\psi = (\mu, \tau_1, \tau_0)$ as
\begin{equation}\label{psi_normal}
\tilde{\psi} = \sqrt{J}\left(\psi-\psi^* \right)-\Delta_J, \quad \Delta_J = \frac{1}{\sqrt{J}}\sum_{j = 1}^J\Fisher^{-1}(\psi^*) \nabla \log g(Y_j \mid \psi^*),
\end{equation}
with $g(\cdot)$ as in \eqref{marginal_normal} and $\Fisher(\psi^*)$ as in \eqref{eq:fisher_normal_hier}.
\begin{lemma}\label{asymptotic_distribution_psi_extend}
Consider the assumptions of Proposition \ref{prop_extended_normal_model}. Then it holds
\[
\left \lvert \left \lvert \L(\d\tilde{\psi} \mid Y_{1:J})-N\left(\bm{0}, \Fisher^{-1}(\psi^*) \right) \right \rvert \right \rvert_{TV} \to 0,
\]
as $J \to \infty$ in $Q_{\psi^*}^{(\infty)}$-probability, with $\Fisher(\psi^*)$ non singular matrix as in \eqref{eq:fisher_normal_hier}.
\end{lemma}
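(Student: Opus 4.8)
The plan is to obtain the statement as a direct application of the Bernstein--von Mises Theorem (Theorem \ref{BvM}) to the parametric model in which $Y_1,Y_2,\dots$ are i.i.d.\ with common density $g(\cdot\mid\psi)$ given in \eqref{marginal_normal} and unknown parameter $\psi=(\mu,\tau_1,\tau_0)\in\R\times\R_+\times\R_+$. Indeed, integrating out $\bm{\theta}$ in \eqref{extended_normal_model} shows that the full likelihood factorizes as $\prod_{j=1}^J g(Y_j\mid\psi)$, so $\L(\d\psi\mid Y_{1:J})$ is exactly the posterior of model \eqref{parametric_model} with $n=J$ observations and likelihood $g$. Moreover the centering $\Delta_J$ in \eqref{psi_normal} equals $\mathcal{I}^{-1}(\psi^*)\Delta_{J,\psi^*}$ in the notation of Theorem \ref{BvM}, and $\tilde\psi=\sqrt{J}(\psi-\psi^*)-\Delta_J$ is precisely the rescaling appearing there. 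Hence it suffices to verify the hypotheses of Theorem \ref{BvM} for the map $\psi\mapsto g(\cdot\mid\psi)$.

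First I would check the regularity and identifiability conditions. The prior assumption is given directly in Proposition \ref{prop_extended_normal_model}. Differentiability in quadratic mean of $\psi\mapsto\sqrt{g(y\mid\psi)}$ and continuity of the Fisher information hold because $g(y\mid\psi)$ is a smooth, strictly positive Gaussian density of $y$ whose mean $\mu\One$ and covariance $\tau_0^{-1}I+\tau_1^{-1}\H$ depend smoothly on $\psi$ on the open set $\R\times\R_+\times\R_+$; the explicit entries of $\mathcal{I}(\psi)$ are those computed in Lemma \ref{fisher_normal_hier} (with $\psi$ in place of $\psi^*$), which are continuous in $\psi$. Injectivity of $\psi\mapsto g(\cdot\mid\psi)$ follows since a multivariate Gaussian density determines its mean vector and covariance matrix, hence $\mu$ and, when $m\geq 2$ so that $I$ and $\H$ are linearly independent, also the pair $(\tau_0,\tau_1)$; this is exactly where the hypothesis $m\geq 2$ is used. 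Finally, non-singularity of $\mathcal{I}(\psi)$: by \eqref{eq:fisher_normal_hier} the matrix is block-diagonal with a positive $(1,1)$-entry for $\mu$ and a $2\times 2$ block in $(\tau_1,\tau_0)$; evaluating the determinant of that block shows it is strictly positive for every $(\mu,\tau_1,\tau_0)$ precisely when $m\geq 2$, which I would record.

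It remains to construct the tests required by \eqref{tests} that separate $\psi^*$ from the complement of a compact neighborhood $\Psi$. I would mirror Lemma \ref{tests_normal}, taking $\Psi=\Psi_\mu\times\Psi_{\tau_1}\times\Psi_{\tau_0}$ a small closed box around $\psi^*$ and building $u_J$ from three statistics that are strongly consistent for $\mu^*$, $(\tau_0^*)^{-1}$ and $(\tau_1^*)^{-1}$, respectively: the grand mean $\bar Y=\frac{1}{mJ}\sum_{j,i}Y_{j,i}$; the pooled within-group variance $\frac{1}{J(m-1)}\sum_{j=1}^J\sum_{i=1}^m(Y_{j,i}-\bar Y_j)^2$, which is well-defined since $m\geq 2$ and estimates $(\tau_0^*)^{-1}$; and the empirical between-group variance $\frac{1}{J}\sum_{j=1}^J(\bar Y_j-\bar Y)^2$, which estimates $(\tau_1^*)^{-1}+(m\tau_0^*)^{-1}$ and together with the previous statistic pins down $\tau_1^*$. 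Letting $u_J$ be the indicator that at least one of these deviates from the corresponding population value by more than a fixed tolerance, the law of large numbers gives vanishing type-I error under $\psi^*$, while a reverse-triangle-inequality argument uniform in $\psi\notin\Psi$, identical in spirit to the estimates at the end of the proof of Lemma \ref{tests_normal}, gives vanishing worst-case type-II error. With \eqref{tests} verified, Theorem \ref{BvM} applies and yields total variation convergence of $\L(\d\tilde\psi\mid Y_{1:J})$ to $N(\bm{0},\mathcal{I}^{-1}(\psi^*))$ in $Q^{(\infty)}_{\psi^*}$-probability, with $\mathcal{I}(\psi^*)$ non-singular by the previous step.

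The main obstacle I anticipate is the bookkeeping in the two quantitative steps: evaluating the determinant of the $2\times 2$ $(\tau_1,\tau_0)$-block of $\mathcal{I}(\psi)$ in \eqref{eq:fisher_normal_hier} and confirming it is positive exactly for $m\geq 2$ (the case $m=1$ genuinely fails, as flagged in the Remark after Corollary \ref{extended_spectral_normal}); and making the uniform-in-$\psi$ type-II error bound precise, since now three statistics and three parameter directions are involved rather than two as in Lemma \ref{tests_normal}. Both are routine but require care; everything else is an immediate appeal to Theorem \ref{BvM} and Lemma \ref{fisher_normal_hier}.
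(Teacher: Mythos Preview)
Your proposal is correct and follows essentially the same route as the paper: verify the hypotheses of Theorem \ref{BvM} for the marginal likelihood $g(\cdot\mid\psi)$, using Lemma \ref{fisher_normal_hier} for the Fisher information and a variant of Lemma \ref{tests_normal} for the tests. The only minor difference is in the third test statistic: you use the pooled within-group variance $\frac{1}{J(m-1)}\sum_{j,i}(Y_{j,i}-\bar Y_j)^2$ to estimate $(\tau_0^*)^{-1}$, whereas the paper uses the empirical cross-covariance $\frac{1}{J}\sum_j(Y_{j,1}-\hat Y_1)(Y_{j,2}-\hat Y_2)$ to estimate $(\tau_1^*)^{-1}$ directly; both choices require $m\geq 2$, both yield uniformly consistent separation of $\psi^*$ from $\Psi^c$, and the remaining arguments are identical.
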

\begin{proof}
The result follows by Theorem \ref{BvM}. Indeed, the map $\psi \to g(y \mid \psi)$ clearly satisfies identifiability and smoothness requirements. Moreover, by Lemma \ref{fisher_normal_hier} we have
\[
\text{det}\left(\mathcal{I}(\psi^*)\right) = \frac{m^3(m-1)\tau_0^*}{4\tau_1^*(\tau_1^*+m\tau_0^*)^3},
\]
that is strictly positive for every $\psi^*$, with $m \geq 2$. As regards the testing conditions, analogously to Lemma \ref{tests_normal} define
\[
\begin{aligned}
\Psi =& \Psi_1 \times \Psi_2 \times \Psi_3  = \left[\mu^* - 1, \mu^* + 1\right] \times \left[\frac{\tau_1^*}{2}, 2\tau_1^* \right] \times \left[\frac{\tau_0^*}{2}, 2\tau_0^* \right]
\end{aligned}
\]
compact neighborhood of $\psi^*$ and
\[
u_J(Y_{1:J}) = 1- \mathbbm{1}_{g_1(Y_{1:J}) \leq c_1} \, \mathbbm{1}_{g_2(Y_{1:J}) \leq c_2}\,\mathbbm{1}_{g_3(Y_{1:J}) \leq c_3},
\]
where $(c_1, c_2, c_3)$ are positive constants to be fixed and
\[
\begin{aligned}
g_1(Y_{1:J}) =& \left\lvert \bar{Y} - \mu^*\right \rvert, \quad g_2(Y_{1:J}) = \left \lvert \frac{1}{J}\sum_{j = 1}^J\left(\bar{Y}_j-\bar{Y}\right)^2-\frac{1}{\tau_1^*}-\frac{1}{m\tau_0^*} \right \rvert,\\
& g_3(Y_{1:J}) = \left \lvert \frac{1}{J}\sum_{j = 1}^J\left(Y_{j,1}-\hat{Y}_1\right)\left(Y_{j,2}-\hat{Y}_2\right)-\frac{1}{\tau_1^*} \right \rvert,
\end{aligned}
\]
where
\[
\bar{Y} = \frac{1}{J}\sum_{j = 1}^J\bar{Y}_j, \quad \hat{Y}_i = \frac{1}{J}\sum_{j = 1}^JY_{j,i}.
\]
By definition of $g(\cdot)$ in \eqref{marginal_normal}, by the Law of Large numbers we have
\[
\begin{aligned}
\int u_J(y_{1:J})& \, \prod_{j = 1}^Jg(\d y_j \mid \psi^*)\\
& \leq P\left(g_1(Y_{1:J})  > c_1 \right)+P\left(g_2(Y_{1:J})  > c_2 \right)+P\left(g_3(Y_{1:J})  > c_3 \right) \to 0,
\end{aligned}
\]
as $J \to \infty$ for every strictly positive constants $(c_1, c_2, c_3)$. Moreover, notice that
\[
\begin{aligned}
\sup_{\psi \not \in \Psi} \,& \int [1-u_J(y_{1:J})] \, \prod_{j = 1}^Jg(\d y_j \mid \psi)\\
& \leq \sup_{\tau_1 \not\in \Psi_2}P\left(g_3(Y_{1:J})  \leq c_3 \right) +\sup_{\tau_1 \in \Psi_2, \, \tau_0 \not\in \Psi_3} \,P\left(g_2(Y_{1:J})  \leq c_2 \right)+\sup_{\mu \not\in \Psi_1, \, \tau_0 \in \Psi_3, \, \tau_1 \in \Psi_2 }\,P\left(g_1(Y_{1:J})  > c_1 \right).
\end{aligned}
\]
With the same reasoning of the proof of Lemma \ref{tests_normal}, we can find $(c_1, c_2, c_3)$ such that the three suprema goes to $0$ as $J \to \infty$.
\end{proof}
We need another technical Lemma.
\begin{lemma}\label{moments_original_scale}
Consider the setting of Proposition \ref{prop_extended_normal_model}. Then we have
\[
\begin{aligned}
&E \left[(\theta_j-\mu)^2 \mid Y, \psi \right] = \frac{1}{m\tau_0+\tau_1}+ \left(\frac{m\tau_0}{m\tau_0+\tau_1} \right)^2(\bar{Y}_j-\mu)^2,\\
&E \left[(\theta_j-\bar{Y}_j)^2   \mid Y, \psi\right] = \frac{1}{m\tau_0+\tau_1}+ \left(\frac{\tau_1}{m\tau_0+\tau_1} \right)^2(\bar{Y}_j-\mu)^2
\end{aligned}
\]
and
\[
\begin{aligned}
&\text{Var} \left((\theta_j-\mu)^2  \mid Y, \psi \right) = \frac{2}{(m \tau_0+\tau_1)^2}+4\frac{m^2\tau_0^2}{(m\tau_0+\tau_1)^3}(\bar{Y}_j-\mu)^2,\\
& \text{Var} \left((\theta_j-\bar{Y}_j)^2 \mid Y, \psi \right) = \frac{2}{(m \tau_0+\tau_1)^2}+4\frac{\tau_1^2}{(m\tau_0+\tau_1)^3}(\bar{Y}_j-\mu)^2
\end{aligned}
\]
and
\[
\text{Cov}\left((\theta_j-\mu)^2, (\theta_j-\bar{Y}_j)^2  \mid Y, \psi \right) = \frac{2}{(m \tau_0+\tau_1)^2}-4\frac{m\tau_0\tau_1}{(m\tau_0+\tau_1)^3}(\bar{Y}_j-\mu)^2.
\]
\end{lemma}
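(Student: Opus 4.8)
The plan is to reduce every quantity in the statement to a moment of a squared univariate Gaussian, using the exact conjugate posterior of $\theta_j$. First I would recall that, conditionally on $\psi=(\mu,\tau_1,\tau_0)$ and $Y_j=(Y_{j,1},\dots,Y_{j,m})$, the likelihood of $\theta_j$ is proportional to $\exp\{-\tfrac{\tau_0}{2}\sum_i(Y_{j,i}-\theta_j)^2\}$, which as a function of $\theta_j$ is an unnormalised $N(\bar{Y}_j,(m\tau_0)^{-1})$ density; combining it with the $N(\mu,\tau_1^{-1})$ prior via Bayes' theorem gives, exactly as in \eqref{posterior_theta},
\[
\theta_j \mid Y_j,\psi \sim N\!\left(m_j,\sigma^2\right),\qquad m_j=\frac{m\tau_0\bar{Y}_j+\tau_1\mu}{m\tau_0+\tau_1},\qquad \sigma^2=\frac{1}{m\tau_0+\tau_1}\,,
\]
the only difference with Section \ref{normal_case} being that $\tau_0$ is now part of $\psi$, which is immaterial once we condition on it. Writing $Z:=\theta_j-m_j\sim N(0,\sigma^2)$ we then have $(\theta_j-\mu)^2=(Z+\nu_1)^2$ and $(\theta_j-\bar{Y}_j)^2=(Z+\nu_2)^2$ with the centering constants $\nu_1=m_j-\mu=\frac{m\tau_0}{m\tau_0+\tau_1}(\bar{Y}_j-\mu)$ and $\nu_2=m_j-\bar{Y}_j=-\frac{\tau_1}{m\tau_0+\tau_1}(\bar{Y}_j-\mu)$.

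Next I would invoke the elementary Gaussian moment identities, obtainable from Lemma \ref{normal_moments}: for $X\sim N(\nu,\sigma^2)$ one has $E[X^2]=\nu^2+\sigma^2$ and $\text{Var}(X^2)=2\sigma^4+4\nu^2\sigma^2$. Applying these to $X=Z+\nu_1$ and $X=Z+\nu_2$ and substituting $\sigma^2=(m\tau_0+\tau_1)^{-1}$, $\nu_1^2=\left(\frac{m\tau_0}{m\tau_0+\tau_1}\right)^2(\bar{Y}_j-\mu)^2$ and $\nu_2^2=\left(\frac{\tau_1}{m\tau_0+\tau_1}\right)^2(\bar{Y}_j-\mu)^2$ yields the four formulas for $E[(\theta_j-\mu)^2\mid Y,\psi]$, $E[(\theta_j-\bar{Y}_j)^2\mid Y,\psi]$, $\text{Var}((\theta_j-\mu)^2\mid Y,\psi)$ and $\text{Var}((\theta_j-\bar{Y}_j)^2\mid Y,\psi)$. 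For the covariance I would note that both squared variables are quadratic polynomials in the \emph{same} centered Gaussian $Z$, so the constant terms drop and, since $Z$ is symmetric and hence $E[Z^3]=0$, the linear-in-$Z$ cross terms contribute nothing; this gives $\text{Cov}((Z+\nu_1)^2,(Z+\nu_2)^2)=\text{Var}(Z^2)+4\nu_1\nu_2\sigma^2=2\sigma^4+4\nu_1\nu_2\sigma^2$, and plugging in $\nu_1\nu_2=-\frac{m\tau_0\tau_1}{(m\tau_0+\tau_1)^2}(\bar{Y}_j-\mu)^2$ produces the last displayed identity.

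I do not expect a genuine obstacle: Lemma \ref{moments_original_scale} is a bookkeeping computation with the conjugate Gaussian posterior. The only point requiring mild care is the covariance, where one should resist computing the joint law of the pair $\bigl((\theta_j-\mu)^2,(\theta_j-\bar{Y}_j)^2\bigr)$ directly and instead use the common-$Z$ representation so that the symmetry argument $E[Z^3]=0$ applies cleanly; everything else is substitution of the posterior mean and variance into standard fourth-moment formulas.
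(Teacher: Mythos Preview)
Your proposal is correct and follows essentially the same approach as the paper: both identify the conjugate Gaussian posterior $\theta_j\mid Y_j,\psi\sim N(m_j,(m\tau_0+\tau_1)^{-1})$, shift to centered variables, and reduce everything to second and fourth Gaussian moments via Lemma~\ref{normal_moments}. Your covariance computation via bilinearity and $E[Z^3]=0$ is a minor streamlining of the paper's direct expansion of $E[(\sigma Z+\mu_1)^2(\sigma Z+\mu_2)^2]$, but the substance is identical.
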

\begin{proof}
Notice that by \eqref{posterior_theta} we have
\[
(\theta_j-\mu) \mid Y_j, \psi \sim N \left(\frac{m \tau_0}{m\tau_0+\tau_1}(\bar{Y}_j-\mu) ,(m\tau_0+\tau_1)^{-1}\right)
\]
and
\[
(\theta_j-\bar{Y}_j)\mid Y_j, \psi \sim N \left(\frac{\tau_1}{m\tau_0+\tau_1}(\mu-\bar{Y}_j) ,(m\tau_0+\tau_1)^{-1}\right).
\]
Therefore we have
\[
\begin{aligned}
E \left[(\theta_j-\mu)^2  \mid Y, \psi \right]  = \frac{1}{m\tau_0+\tau_1}+ \left(\frac{m\tau_0}{m\tau_0+\tau_1} \right)^2(\bar{Y}_j-\mu)^2,
\end{aligned}
\]
and similarly for the other case. If $X \sim N(\mu, \sigma^2)$, by Lemma \ref{normal_moments} we have $E[X^4] = 3\sigma^4+6\mu^2\sigma^2+\mu^4$.
In our case, considering $\sigma = \left(m \tau_0+\tau_1 \right)^{-1/2}$ and $\mu = \frac{m \tau_0}{m\tau_0+\tau_1}(\bar{Y}_j-\mu)$, we have
\[
E \left[(\theta_j-\mu)^4  \mid Y, \psi \right] = \frac{3}{(m \tau_0+\tau_1)^2}+6\frac{m^2\tau_0^2}{(m\tau_0+\tau_1)^3}(\bar{Y}_j-\mu)^2+\left(\frac{m\tau_0}{m\tau_0+\tau_1} \right)^4(\bar{Y}_j-\mu)^4
\]
and
\[
E^2 \left[(\theta_j-\mu)^2  \mid Y, \psi \right] = \frac{1}{(m \tau_0+\tau_1)^2}+2\frac{m^2\tau_0^2}{(m\tau_0+\tau_1)^3}(\bar{Y}_j-\mu)^2+\left(\frac{m\tau_0}{m\tau_0+\tau_1} \right)^4(\bar{Y}_j-\mu)^4.
\]
Therefore
\[
\text{Var} \left((\theta_j-\mu)^2  \mid Y, \psi \right) = \frac{2}{(m \tau_0+\tau_1)^2}+4\frac{m^2\tau_0^2}{(m\tau_0+\tau_1)^3}(\bar{Y}_j-\mu)^2,
\]
and similarly for the other one. Finally, again by Lemma \ref{normal_moments}, if $Z \sim N(0,1)$ we have $E[\left(\sigma Z+\mu_1\right)^2\left(\sigma Z+\mu_2\right)^2] =  3 \sigma^4+\sigma^2(\mu_1^2+4\mu_1\mu_2+\mu_2^2)+\mu_1^2\mu_2^2$.
In our case, considering $\sigma = \left(m \tau_0+\tau_1 \right)^{-1/2}$, $\mu_1 = \frac{m \tau_0}{m\tau_0+\tau_1}(\bar{Y}_j-\mu)$ and $\mu_2 = \frac{\tau_1}{m\tau_0+\tau_1}(\mu-\bar{Y}_j)$, we have
\[
\begin{aligned}
E\left[(\theta_j-\mu)^2(\theta_j-\bar{Y}_j)^2  \mid Y, \psi \right] =& \frac{3}{(m \tau_0+\tau_1)^2} + \frac{m^2\tau_0^2}{(m\tau_0+\tau_1)^3}(\bar{Y}_j-\mu)^2+\frac{\tau_1^2}{(m\tau_0+\tau_1)^3}(\bar{Y}_j-\mu)^2\\
&-4\frac{m\tau_0\tau_1}{(m\tau_0+\tau_1)^3}(\bar{Y}_j-\mu)^2+\frac{m^2\tau_0^2\tau_1^2}{(m\tau_0+\tau_1)^4}(\bar{Y}_j-\mu)^4
\end{aligned}
\]
and
\[
\begin{aligned}
&E\left[(\theta_j-\mu)^2  \mid Y, \psi\right]E \left[(\theta_j-\bar{Y}_j)^2  \mid Y, \psi \right] =\\
&\frac{1}{(m \tau_0+\tau_1)^2} + \frac{m^2\tau_0^2}{(m\tau_0+\tau_1)^3}(\bar{Y}_j-\mu)^2+\frac{\tau_1^2}{(m\tau_0+\tau_1)^3}(\bar{Y}_j-\mu)^2+\frac{m^2\tau_0^2\tau_1^2}{(m\tau_0+\tau_1)^4}(\bar{Y}_j-\mu)^4.
\end{aligned}
\]
Therefore
\[
\text{Cov}\left((\theta_j-\mu)^2, (\theta_j-\bar{Y}_j)^2  \mid Y, \psi \right) = \frac{2}{(m \tau_0+\tau_1)^2}-4\frac{m\tau_0\tau_1}{(m\tau_0+\tau_1)^3}(\bar{Y}_j-\mu)^2,
\]
as desired. 
\end{proof}
Define
\begin{equation}\label{defin_C_V_normal}
C(\psi) = \begin{bmatrix}
0 & \frac{1}{(m\tau_0+\tau_1)^2} & \frac{m}{(m\tau_0+\tau_1)^2}\\
0 & \frac{1}{(m\tau_0+\tau_1)^2} & \frac{m}{(m\tau_0+\tau_1)^2}
\end{bmatrix}, \quad V(\psi) = \begin{bmatrix}
\frac{2}{(m \tau_0+\tau_1)^2}+4\frac{m\tau_0(\tau_1)^{-1}}{(m\tau_0+\tau_1)^2} & -\frac{2}{(m \tau_0+\tau_1)^2}\\
 -\frac{2}{(m \tau_0+\tau_1)^2} & \frac{2}{(m \tau_0+\tau_1)^2}+4\frac{\tau_1 (m\tau_0)^{-1}}{(m\tau_0+\tau_1)^2}
\end{bmatrix}.
\end{equation}
Now we define a linear rescaling of $\bm{T} = \left(\sum_{j = 1}^J(\theta_j -\bar{Y}_j)^2, \sum_{j = 1}^J(\theta_j -\mu)^2 \right) $ as
\begin{equation}\label{tilde_T_normal}
\tilde{\bm{T}} = \frac{1}{\sqrt{J}}\sum_{j = 1}^J
\begin{bmatrix}
(\theta_j -\bar{Y}_j)^2 - \frac{1}{m\tau_0^*+\tau_1^*}-\left(\frac{\tau_1^*}{m\tau_0^*+\tau_1^*} \right)^2\left(\bar{Y}_j-\mu^* \right)^2\\
(\theta_j -\mu)^2 - \frac{1}{m\tau_0^*+\tau_1^*}-\left(\frac{m\tau_0^*}{m\tau_0^*+\tau_1^*} \right)^2\left(\bar{Y}_j-\mu^* \right)^2
\end{bmatrix}
- C(\psi^*)\Delta_J,
\end{equation}
with $\Delta_J$ as in \eqref{psi_normal}. The next lemma shows the asymptotic distribution of $\tilde{\bm{T}}$  using the weak topology.
\begin{lemma}\label{weak_convergence_normal}
Define $\tilde{\psi}$ and $\tilde{\bm{T}}$ as in \eqref{psi_normal} and \eqref{tilde_T_normal}, respectively. For every $\tilde{\psi}\in\R^D$ it holds
\[
\left\| \L(\d\tilde{\bm{T}} \mid Y_{1:J}, \tilde{\psi})-N\left(C(\psi^*)\tilde{\psi}, V(\psi^*) \right) \right\|_{W} \to 0,
\]
$Q_{\psi^*}^{(\infty)}$-almost surely as $J \to \infty$.
\end{lemma}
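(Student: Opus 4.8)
The plan is to transcribe the proof of Lemma \ref{weak_asymptotic_distribution_T} to the present statistic, since the two structural novelties here --- that $T(\theta_j)=\left((\theta_j-\bar{Y}_j)^2,(\theta_j-\mu)^2\right)$ depends both on the data and on the component $\mu$ of $\psi$, and that $\tau_0$ enters the likelihood --- do not affect what that argument actually uses, namely the conditional independence of $\theta_1,\dots,\theta_J$ given $(\psi,Y_{1:J})$. Write $\psi^{(J)}=\psi^*+(\tilde{\psi}+\Delta_J)/\sqrt{J}$, with $\Delta_J$ as in \eqref{psi_normal}. Conditional on $Y_{1:J}$ and $\tilde{\psi}$, equivalently on $\psi^{(J)}$, the vectors $T(\theta_1),\dots,T(\theta_J)$ are independent, each determined by $\theta_j\mid Y_j,\psi^{(J)}\sim N(m_j,(m\tau_0^{(J)}+\tau_1^{(J)})^{-1})$ as in \eqref{posterior_theta}; hence $\tilde{\bm{T}}$ in \eqref{tilde_T_normal} is a normalised sum of $J$ independent, non-identically-distributed, centred vectors. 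By the Lyapunov central limit theorem for triangular arrays it therefore suffices to establish, $Q_{\psi^*}^{(\infty)}$-almost surely as $J\to\infty$, the three limits
\begin{align*}
&\tfrac{1}{\sqrt{J}}\textstyle\sum_j\left[M^{(1)}(\psi^{(J)}\mid Y_j)-M^{(1)}(\psi^*\mid Y_j)\right]-C(\psi^*)\Delta_J\ \longrightarrow\ C(\psi^*)\tilde{\psi},\\
&\tfrac{1}{J}\textstyle\sum_j\text{Cov}\!\left(T_s(\theta_j),T_{s'}(\theta_j)\mid Y_j,\psi^{(J)}\right)\ \longrightarrow\ V_{s,s'}(\psi^*),\\
&\tfrac{1}{J^{3/2}}\textstyle\sum_j E\!\left[\,\bigl|T(\theta_j)-M^{(1)}(\psi^*\mid Y_j)\bigr|^3\,\Big|\,Y_j,\psi^{(J)}\right]\ \longrightarrow\ 0,
\end{align*}
for $s,s'\in\{1,2\}$, where $M^{(1)}(\psi\mid y)$ collects the two conditional means $E[(\theta_j-\bar{Y}_j)^2\mid Y_j=y,\psi]$ and $E[(\theta_j-\mu)^2\mid Y_j=y,\psi]$, which are precisely the quantities subtracted in \eqref{tilde_T_normal} (evaluated at $\psi^*$) and given explicitly by Lemma \ref{moments_original_scale}.

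I would verify these three limits by the same Taylor-expansion-plus-law-of-large-numbers scheme used for \eqref{first_to_prove}, \eqref{second_to_prove} and \eqref{third_to_prove}. By Lemmas \ref{normal_moments} and \ref{moments_original_scale}, the first and second conditional moments of $T(\theta_j)$ are explicit rational functions of $\psi$ whose coefficients are polynomials in $\bar{Y}_j$. Expanding each around $\psi^*$, the linear term of the first display equals $\sum_d(\tilde{\psi}_d+\Delta_{J,d})\,\tfrac1J\sum_j\partial_{\psi_d}M^{(1)}_s(\psi^*\mid Y_j)$, which converges by the strong law of large numbers (the $\bar{Y}_j$ being i.i.d.\ $N(\mu^*,(\tau_1^*)^{-1}+(m\tau_0^*)^{-1})$ by \eqref{distribution_mean}) to $\sum_d(\tilde{\psi}_d+\Delta_{J,d})C_{s,d}(\psi^*)$; subtracting $C(\psi^*)\Delta_J$ leaves exactly $C(\psi^*)\tilde{\psi}$, the vanishing first column of $C(\psi^*)$ in \eqref{defin_C_V_normal} reflecting that $E_{Y_j}[\partial_\mu(\bar{Y}_j-\mu)^2]|_{\mu^*}=0$. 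Likewise the leading terms of the second and third displays converge to $V_{s,s'}(\psi^*)$ and to $0$. In every case the Taylor remainder is disposed of exactly as in the proof of Lemma \ref{weak_asymptotic_distribution_T}: its $\psi$-derivatives are dominated, uniformly over a small ball $B_\delta$ around $\psi^*$, by a polynomial in $\bar{Y}_j$ with finite $E_{Y_j}$-expectation (the analogue of assumption $(B4)$, valid here because $\theta_j\mid Y_j,\psi$ is Gaussian with mean and precision bounded on $B_\delta$ and $\bar{Y}_j$ has finite moments of all orders), while the prefactor $(\tilde{\psi}_d+\Delta_{J,d})/J^{1/4}\to0$ almost surely by Lemma \ref{convergence_delta} and the remaining average $\tfrac1J\sum_j(\cdot)$ stays bounded by the law of large numbers.

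I expect the main effort to lie not in a new idea but in the bookkeeping: re-deriving the analogue of $(B4)$ and, in particular, evaluating $C(\psi^*)$ and $V(\psi^*)$ in \eqref{defin_C_V_normal} by integrating the covariance formulas of Lemma \ref{moments_original_scale} against $N(\mu^*,(\tau_1^*)^{-1}+(m\tau_0^*)^{-1})$ while keeping careful track of the joint dependence of $T$ on $\mu$ and on $\bar{Y}_j$. The only genuinely conceptual point is that, because $T$ and the likelihood both now involve components of $\psi$, this lemma is not a literal instance of Lemma \ref{weak_asymptotic_distribution_T} and its proof must be re-run; but since that proof only exploits conditional independence of the $\theta_j$, the explicit Gaussian posterior moments, and finiteness of the moments of $\bar{Y}_j$, all of which persist in model \eqref{extended_normal_model}, the re-run is a transcription rather than a fundamentally new argument.
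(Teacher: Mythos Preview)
Your proposal is correct and follows essentially the same approach as the paper's own proof: the paper also reduces to the Lyapunov central limit theorem, invokes ``the same reasoning in the proofs of \eqref{first_to_prove} and \eqref{second_to_prove}'' to obtain the mean and covariance limits via Taylor expansion and the law of large numbers, and handles the Lyapunov condition by noting finiteness of high-order Gaussian moments. The paper does spell out the explicit evaluation of the limiting covariance entries by plugging the formulas from Lemma \ref{moments_original_scale} into \eqref{distribution_mean}, but this is exactly the bookkeeping you anticipate in your final paragraph.
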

\begin{proof}
The result follows by arguments similar to the proof of Lemma \ref{weak_asymptotic_distribution_T}. First of all notice that $C(\psi)$ defined in \eqref{defin_C_V_normal} is such that
\[
C(\psi) =
\begin{bmatrix}
E_{Y_j} \left[\partial_\mu E[(\theta_j-\bar{Y}_j)^2 \mid Y_j, \psi \right] & E_{Y_j} \left[\partial_{\tau_1} E[(\theta_j-\bar{Y}_j)^2 \mid Y_j, \psi \right] & E_{Y_j} \left[\partial_{\tau_0} E[(\theta_j-\bar{Y}_j)^2 \mid Y_j, \psi \right]\\
E_{Y_j} \left[\partial_\mu E[(\theta_j-\mu)^2 \mid Y_j, \psi \right] & E_{Y_j} \left[\partial_{\tau_1} E[(\theta_j-\mu)^2 \mid Y_j, \psi \right] & E_{Y_j} \left[\partial_{\tau_0} E[(\theta_j-\mu)^2 \mid Y_j, \psi \right]
\end{bmatrix},
\]
since by Lemma \ref{moments_original_scale} we have
\[
\begin{aligned}
&E_{Y_j} \left[\partial_\mu E[(\theta_j-\bar{Y}_j)^2 \mid Y_j, \psi \right] = E_{Y_j} \left[\partial_\mu E[(\theta_j-\mu)^2 \mid Y_j, \psi \right] = 0,\\
&E_{Y_j} \left[\partial_{\tau_0} E[(\theta_j-\bar{Y}_j)^2 \mid Y_j, \psi \right]  = E_{Y_j} \left[\partial_{\tau_0} E[(\theta_j-\mu)^2 \mid Y_j, \psi \right] = \frac{m}{(m\tau_0+\tau_1)^2},\\
&E_{Y_j} \left[\partial_{\tau_1} E[(\theta_j-\bar{Y}_j)^2 \mid Y_j, \psi \right]  = E_{Y_j} \left[\partial_{\tau_1} E[(\theta_j-\mu^*)^2 \mid Y_j, \psi \right] = \frac{1}{(m\tau_0+\tau_1)^2}.
\end{aligned}
\]
By the same reasoning in the proofs of \eqref{first_to_prove} and \eqref{second_to_prove} we get
\[
\begin{aligned}
& E_{Y_j} \left[\tilde{T} \mid Y_{1:J}, \psi^*+\frac{\tilde{\psi}+\Delta_J}{\sqrt{J} } \right]  \quad \to \quad C(\psi^*)\tilde{\psi}
\end{aligned}
\]
and
\[
\left \lvert\text{Cov} \left(\tilde{\bm{T}}\mid Y_{1:J}, \psi^*+\frac{\tilde{\psi}+\Delta_J}{\sqrt{J}} \right)-\text{Cov} \left(\tilde{\bm{T}}\mid Y_{1:J}, \psi^* \right) \right\rvert \quad \to \quad 0,
\]
$Q_{\psi^*}^{(\infty)}$-almost surely as $J \to \infty$. Then by \eqref{distribution_mean}, Lemma \ref{moments_original_scale} and the Law of Large Numbers we have
\[
\begin{aligned}
\Var \left( \frac{1}{\sqrt{J}}\sum_{j = 1}^J(\theta_j-\bar{Y}_j)^2 \mid Y_{1:J}, \psi^*  \right) &= \frac{2}{(m \tau^*_0+\tau^*_1)^2}+4\frac{(\tau_1^*)^2}{(m\tau^*_0+\tau^*_1)^3}\frac{1}{J}\sum_{j = 1}^J(\bar{Y}_j-\mu^*)^2\\
& \to \frac{2}{(m \tau^*_0+\tau^*_1)^2}+4\frac{(m\tau_0^*)^{-1}\tau_0^*}{(m\tau^*_0+\tau^*_1)^2}
\end{aligned}
\]
and
\[
\begin{aligned}
\Var \left( \frac{1}{\sqrt{J}}\sum_{j = 1}^J(\theta_j-\mu^*)^2  \mid Y_{1:J}, \psi^* \right) &= \frac{2}{(m \tau^*_0+\tau^*_1)^2}+4\frac{(m\tau_0^*)^2}{(m\tau^*_0+\tau^*_1)^3}\frac{1}{J}\sum_{j = 1}^J(\bar{Y}_j-\mu^*)^2\\
& \to \frac{2}{(m \tau^*_0+\tau^*_1)^2}+4\frac{m\tau_0^*(\tau_1^*)^{-1}}{(m\tau^*_0+\tau^*_1)^2}
\end{aligned}
\]
and
\[
\begin{aligned}
\text{Cov} &\left( \frac{1}{\sqrt{J}}\sum_{j = 1}^J(\theta_j-\bar{Y}_j)^2  ,\frac{1}{\sqrt{J}}\sum_{j = 1}^J(\theta_j-\mu^*)^2  \mid Y_{1:J}, \psi^* \right) = \frac{2}{(m \tau_0+\tau_1)^2}-4\frac{m\tau_0\tau_1}{(m\tau_0+\tau_1)^3}\frac{1}{J}\sum_{j = 1}^J(\bar{Y}_j-\mu)^2\\
& \to  -\frac{2}{(m \tau^*_0+\tau^*_1)^2},
\end{aligned}
\]
$Q_{\psi^*}^{(\infty)}$-almost surely as $J \to \infty$. Finally, by the Law of Large Numbers and calculations similar to Lemma \ref{moments_original_scale}, we have
\[
E\left[(\theta_j-\bar{Y}_j)^{12}  \mid Y_J, \psi\right] < \infty, \quad E\left[(\theta_j-\mu)^{12}  \mid Y_J, \psi\right] < \infty
\]
for every $\psi$. Therefore, with the same arguments in the proof of \eqref{third_to_prove} we conclude that
\[
\frac{1}{J^{3/2}}\sum_{j = 1}^JE \left[(\theta_j-\bar{Y}_j)^{12}  \mid Y_j, \psi ^*+\frac{\tilde{\psi}+\Delta_J}{\sqrt{J}}\right] \to 0, \quad \frac{1}{J^{3/2}}\sum_{j = 1}^JE \left[(\theta_j-\mu^*)^{12}  \mid Y_j, \psi ^*+\frac{\tilde{\psi}+\Delta_J}{\sqrt{J}}\right] \to 0,
\]
$Q_{\psi^*}^{(\infty)}$-almost surely, as $J \to \infty$. The result then follows by Lyapunov version of Central Limit Theorem.
\end{proof}
We need another technical Lemma.
\begin{lemma}\label{single_characteristic}
Consider the assumptions of Proposition \ref{prop_extended_normal_model}. Then it holds
\[
\left \lvert E \left[e^{it_1(\theta_j-\mu)^2+it_2(\theta_j-\bar{Y}_j)^2} \mid Y_j, \psi \right] \right \rvert \leq \frac{e^{-\frac{2\sigma^2\left[\nu_j(t_1+t_2)-(t_1\mu+t_2\bar{Y}_j) \right]^2}{1+4\sigma^4(t_1+t_2)^2}}}{\left[1+4(t_1+t_2)^2\sigma^4\right]^{1/4}},
\]
with $(t_1, t_2) \in \R_2$ and
\[
\nu_j = \frac{m\tau_0}{m\tau_0+\tau_1}\mu+\frac{\tau_1}{m\tau_0+\tau_1}\bar{Y}_j, \quad \sigma^2 = \frac{1}{m\tau_0+\tau_1}.
\]
\end{lemma}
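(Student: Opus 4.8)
\textbf{Proof proposal for Lemma \ref{single_characteristic}.} The plan is to reduce the bivariate characteristic function to a univariate one of the form appearing in Lemma \ref{auxiliary_characteristic}, and then invoke that lemma directly. First I would recall that, by Bayes' theorem (cf.\ \eqref{posterior_theta}), conditional on $Y_j$ and $\psi=(\mu,\tau_1,\tau_0)$ the group-specific parameter $\theta_j$ is Gaussian: $\theta_j \mid Y_j,\psi \sim N(\nu_j,\sigma^2)$ with $\nu_j = \frac{m\tau_0}{m\tau_0+\tau_1}\mu + \frac{\tau_1}{m\tau_0+\tau_1}\bar{Y}_j$ and $\sigma^2 = (m\tau_0+\tau_1)^{-1}$, exactly the quantities in the statement.

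Next I would expand the exponent. Writing $t_1(\theta_j-\mu)^2 + t_2(\theta_j-\bar{Y}_j)^2 = a\theta_j^2 + b\theta_j + c$ with $a = t_1+t_2$, $b = -2(t_1\mu+t_2\bar{Y}_j)$ and $c = t_1\mu^2 + t_2\bar{Y}_j^2$, the constant $c$ contributes only a multiplicative factor $e^{ic}$ of unit modulus, so
\[
\left\lvert E\left[e^{it_1(\theta_j-\mu)^2+it_2(\theta_j-\bar{Y}_j)^2} \mid Y_j,\psi\right]\right\rvert = \left\lvert E\left[e^{i(a\theta_j^2+b\theta_j)}\mid Y_j,\psi\right]\right\rvert.
\]
Applying Lemma \ref{auxiliary_characteristic} with $X=\theta_j$, $\nu=\nu_j$ and the above $a,b$ gives the bound with exponent $-\tfrac{\sigma^2}{2}\tfrac{(2\nu_j a+b)^2}{1+4a^2\sigma^4}$ and denominator $(1+4a^2\sigma^4)^{1/4}$. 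It then remains to substitute $2\nu_j a + b = 2\big[\nu_j(t_1+t_2) - (t_1\mu+t_2\bar{Y}_j)\big]$ and $a = t_1+t_2$, so that $(2\nu_j a+b)^2 = 4[\nu_j(t_1+t_2)-(t_1\mu+t_2\bar{Y}_j)]^2$, which turns $\tfrac{\sigma^2}{2}\tfrac{(2\nu_j a+b)^2}{1+4a^2\sigma^4}$ into $\tfrac{2\sigma^2[\nu_j(t_1+t_2)-(t_1\mu+t_2\bar{Y}_j)]^2}{1+4(t_1+t_2)^2\sigma^4}$ and $1+4a^2\sigma^4$ into $1+4(t_1+t_2)^2\sigma^4$, matching the claimed inequality.

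There is essentially no obstacle here: the only thing to be careful about is the bookkeeping of the affine change of variables in the exponent and correctly matching the constants when specializing Lemma \ref{auxiliary_characteristic}; everything else is a direct substitution. This bound will later be the workhorse for verifying the integrability and off-origin conditions ((B5) and (B6) analogues) in the proof of Proposition \ref{prop_extended_normal_model}, since summing over the conditionally independent groups $j=1,\dots,J$ and integrating in $(t_1,t_2)$ reduces, after the same Hölder-type manipulations used in Lemmas \ref{lemma_normal_B5B6} and \ref{B5_finite}, to a one-dimensional Gaussian integral in the direction $t_1+t_2$ times a convergent integral in the orthogonal direction.
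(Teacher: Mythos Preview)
Your proposal is correct and follows essentially the same route as the paper: expand the quadratic exponent into $a\theta_j^2+b\theta_j+c$ with $a=t_1+t_2$, $b=-2(t_1\mu+t_2\bar Y_j)$, drop the unit-modulus factor $e^{ic}$, and apply Lemma~\ref{auxiliary_characteristic} with the posterior Gaussian parameters. Your write-up is in fact slightly tidier than the paper's, since you make explicit that the constant term yields an equality of moduli (the paper writes it as an inequality) and you carry out the substitution $2\nu_j a+b=2[\nu_j(t_1+t_2)-(t_1\mu+t_2\bar Y_j)]$ in full.
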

\begin{proof}
By simple computations we get
\[
t_1(\theta_j-\mu)^2+t_2(\theta_j-\bar{Y}_j)^2 = (t_1+t_2)\theta_j^2-2\theta_j(t_1\mu+t_2\bar{Y}_j)+t_1\mu^2+t_2\bar{Y}_j^2.
\]
Therefore
\[
\left \lvert E \left[e^{it_1(\theta_j-\mu)^2+it_2(\theta_j-\bar{Y}_j)^2} \right] \right \rvert \leq \left \lvert E \left[e^{i\left((t_1+t_2)\theta_j^2-2\theta_j(\mu+\bar{Y}_j) \right)} \right] \right \rvert.
\]
Then we can apply Lemma \ref{auxiliary_characteristic}, with
\[
a = t_1+t_2, \quad b = -2(t_1\mu+t_2\bar{Y}_j), \quad \nu = \frac{m\tau_0}{m\tau_0+\tau_1}\mu+\frac{\tau_1}{m\tau_0+\tau_1}\bar{Y}_j, \quad \sigma^2 = \frac{1}{m\tau_0+\tau_1}.
\]
\end{proof}
Consistently with the previous Sections, we denote
\[
\varphi(t \mid Y_j, \psi) = E\left[e^{it_1(\theta_j - \bar{Y}_j)^2+it_2(\theta_j - \mu)^2} \mid Y_j, \psi\right], \quad \tilde{\varphi}(t \mid Y_{1:J}, \psi)=
\E\left[e^{it^\top\tilde{\bT}}\mid Y_{1:J},\psi\right] 
\]
for every $\psi$ and $t = (t_1, t_2) \in \R^2$. The next lemma proves the same convergence of Lemma \ref{weak_convergence_normal} using the total variation distance. 
\begin{lemma}\label{convergence_normal}
Define $\tilde{\psi}$ and $\tilde{\bm{T}}$ as in \eqref{psi_normal} and \eqref{tilde_T_normal}, respectively. For every $\tilde{\psi}\in\R^D$ it holds
\[
\left\| \L(\d\tilde{\bm{T}} \mid Y_{1:J}, \tilde{\psi})-N\left(C(\psi^*)\tilde{\psi}, V(\psi^*) \right) \right\|_{TV} \to 0,
\]
$Q_{\psi^*}^{(\infty)}$-almost surely as $J \to \infty$.
\end{lemma}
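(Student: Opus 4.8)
The plan is to upgrade the weak convergence of $\tilde{\bm{T}}$ given $\tilde{\psi}$, already established in Lemma \ref{weak_convergence_normal}, to total variation convergence, exactly as was done for the general case in the proof of Lemma \ref{asymptotic_distribution_T}. Concretely, I would invoke Lemma \ref{lemma:conditions_TV_conv} and verify its three hypotheses (a), (b), (c). Hypothesis (a) is Lemma \ref{weak_convergence_normal}. Hypothesis (b) is immediate: the limiting law $N(C(\psi^*)\tilde{\psi}, V(\psi^*))$ has characteristic function $e^{i\mu^\top t - \frac12 t^\top V(\psi^*) t}$, which is integrable on $\R^2$ since $V(\psi^*)$ in \eqref{defin_C_V_normal} is positive definite for all $\psi^*$ with $m\geq 2$ (one checks its determinant is strictly positive). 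So the real work is hypothesis (c), namely that $\lim_{A\to\infty}\limsup_{J\to\infty}\int_{|t|>A}|\tilde{\varphi}(t\mid Y_{1:J},\tilde{\psi})|\,\d t = 0$.

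For (c) I would follow the same two-region split used in the proof of Lemma \ref{asymptotic_distribution_T}. Writing $\tilde{\bm{T}}$ as in \eqref{tilde_T_normal}, the conditional independence of the $\theta_j$ given $\tilde{\psi}$ gives $|\tilde{\varphi}(t\mid Y_{1:J},\tilde{\psi})| = \prod_{j=1}^J |\varphi(t/\sqrt{J}\mid Y_j, \psi^{(J)})|$ with $\psi^{(J)} = \psi^* + (\tilde{\psi}+\Delta_J)/\sqrt J$; the centering vector contributes only a unit-modulus phase. After the change of variables $x = t/\sqrt J$, split the region $|t|>A$ into $A<|t|<\epsilon\sqrt J$ and $|t|>\epsilon\sqrt J$. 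On the inner region, I would use the quadratic bound from Lemma \ref{single_characteristic}, which shows $|\varphi(x\mid Y_j,\psi)|^2$ decays like a Gaussian in $x$ with curvature controlled by $\frac1J\sum_j \operatorname{Var}(T(\theta_j)\mid Y_j,\psi^{(J)})\to V(\psi^*)$ (this limit is exactly the covariance computation in Lemma \ref{weak_convergence_normal}), plus a cubic remainder that is negligible once $\epsilon$ is chosen small relative to the twelfth moments, just as in Lemma \ref{first_part_cf}; dominated convergence then bounds the $\limsup_J$ by $\int_{|t|>A}e^{-t^\top(V(\psi^*)-\lambda I)t}\,\d t\to 0$ as $A\to\infty$. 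On the outer region $|t|>\epsilon\sqrt J$, I would mimic Lemma \ref{second_part_cf}: since Lemma \ref{finite_B4B6} (specifically Lemmas \ref{B5_finite} and \ref{B6_finite}, which apply verbatim here because the model \eqref{extended_normal_model} has normal group-level priors) gives that $(B5)$ and $(B6)$ hold with $k=k'=5$, the product over disjoint blocks of size $5$ is geometrically small, uniformly over $\psi$ in a neighborhood of $\psi^*$, and the $J^{S/2}$ volume factor is dominated. Then (c) follows by sending $A\to\infty$ after fixing $\epsilon$, and Lemma \ref{lemma:conditions_TV_conv} concludes.

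The main subtlety, rather than obstacle, is that the sufficient statistic here, $\bm{T} = (\sum_j(\theta_j-\bar{Y}_j)^2, \sum_j(\theta_j-\mu)^2)$, depends on the data $Y_{1:J}$ through the $\bar{Y}_j$ and on $\mu$ through the second coordinate, so the characteristic-function bounds must be made uniform in a neighborhood of $\psi^*$ and the relevant moments (up to twelfth order, needed to control the cubic Taylor remainder in $\varphi$) must be shown finite and well-behaved in average — but these are exactly the estimates already carried out in Lemma \ref{weak_convergence_normal} via Lemma \ref{moments_original_scale} and the normal-moment formulas of Lemma \ref{normal_moments}. A secondary point is that $(B5)$ and $(B6)$ were stated for $T(\theta_j)=(\theta_j,\theta_j^2)$ in Section \ref{binary_data}, whereas here $T(\theta_j)=((\theta_j-\bar{Y}_j)^2,(\theta_j-\mu)^2)$; since this is an invertible affine reparametrization of $(\theta_j,\theta_j^2)$ (as noted in Lemma \ref{single_characteristic}) with coefficients continuous in $\psi$ and bounded over the data, the integrability and the strict-separation-from-one of the characteristic function transfer directly, so no new work is required there.

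Finally, I would remark that, combined with Lemma \ref{asymptotic_distribution_psi_extend} and the marginal-conditional gluing of Lemma \ref{lemma_marginal_conditional}, Lemma \ref{convergence_normal} yields that $(\tilde{\bm{T}},\tilde{\psi})$ converges in total variation to a Gaussian with covariance $\Sigma = \begin{bmatrix} V(\psi^*)+C(\psi^*)\Fisher^{-1}(\psi^*)C^\top(\psi^*) & C(\psi^*)\Fisher^{-1}(\psi^*)\\ \Fisher^{-1}(\psi^*)C^\top(\psi^*) & \Fisher^{-1}(\psi^*)\end{bmatrix}$, which is nonsingular because $\det\Sigma = \det(\Fisher^{-1}(\psi^*))\det(V(\psi^*))>0$; this is the input needed to verify $(A1)$ for $\hat\pi_J$ and to apply Corollary \ref{mixingCorollary} (and Corollary \ref{mixing_gap}) in the proof of Proposition \ref{prop_extended_normal_model} and Corollary \ref{extended_spectral_normal}, but that lies beyond the statement of Lemma \ref{convergence_normal} itself.
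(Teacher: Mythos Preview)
Your overall plan---invoke Lemma \ref{lemma:conditions_TV_conv} and verify (a), (b), (c)---is the right shape, and (a), (b) are fine. The gap is in your treatment of the outer region of (c). You claim that Lemmas \ref{B5_finite} and \ref{B6_finite} ``apply verbatim here because the model \eqref{extended_normal_model} has normal group-level priors''. They do not: those lemmas are proved for the categorical-likelihood model \eqref{finite_model} and rely essentially on the constant $c=\min_r g(y_r\mid\mu,\tau)>0$ over the \emph{finite} support of $Y_j$, which is what makes the Plancherel and Taylor arguments go through. The extended normal model has a continuous Gaussian likelihood and those proofs do not carry over. Your fallback---that $((\theta_j-\bar Y_j)^2,(\theta_j-\mu)^2)$ is an invertible affine image of $(\theta_j,\theta_j^2)$ so integrability and separation from $1$ ``transfer directly''---also fails: the linear map has determinant $2(\mu-\bar Y_j)$, which is data-dependent, unbounded in $j$, and can be arbitrarily close to zero as $\mu$ ranges over a neighborhood of $\mu^*$. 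In particular, on the line $t_1+t_2=0$ the bound from Lemma \ref{single_characteristic} reduces to $\exp\{-2\sigma^2 t_1^2(\bar Y_j-\mu)^2\}$, so for a single $j$ the supremum over $\psi\in B_{\delta}$ of $|\varphi(t\mid Y_j,\psi)|$ can equal $1$; no uniform $(B6)$-type bound with a fixed $k'$ independent of the data follows from your reparametrization argument.

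The paper's proof takes a different, more direct route that exploits precisely this degeneracy structure. Instead of the isotropic split $|t|\lessgtr\epsilon\sqrt J$, it performs the change of variables $u=t_1+t_2$, $v=t_1$ and works on the rectangle $\{u^2\leq A,\ v^2\leq B\}^c$ with a \emph{double} limit $\lim_{A\to\infty}\lim_{B\to\infty}\limsup_J$. This is tailored to Lemma \ref{single_characteristic}, whose bound decays polynomially in $u$ (the denominator $[1+4u^2\sigma^4]^{1/4}$) and Gaussian in a $u$-dependent recentering of $v$ (the numerator). One then integrates the Gaussian in $v$ explicitly, reduces to a one-dimensional integral in $u$, and controls the data-dependent constants via the Law of Large Numbers applied to $\frac1J\sum_j(\mu^{(J)}-\bar Y_j)^2$ and a Cauchy--Schwarz lower bound. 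No abstract $(B5)$/$(B6)$ machinery is invoked; the argument is a self-contained characteristic-function calculation. If you want to salvage your approach you would have to \emph{prove} analogues of $(B5)$ and $(B6)$ for the new data-dependent statistic directly---which amounts to essentially the same explicit computation the paper does, only organized less efficiently.
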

\begin{proof}
Since the result holds under the weak metric by Lemma \ref{weak_convergence_normal}, with the same reasoning of Lemma \ref{lemma:conditions_TV_conv} it suffices to prove
\[
\lim_{A \to \infty} \lim_{B \to \infty}\limsup_{J\to\infty}\int_{\left( (t_1+t_2)^2 \leq A, t_1^2 \leq B\right)^c} \left \lvert \tilde{\varphi}(t \mid Y_{1:J}, \psi^{(J)})\right \rvert \, \d t= 0
\]
$Q_{\psi^*}^{(\infty)}$-almost surely as $J \to \infty$, where
\[
\psi^{(J)} = \psi^*+\frac{\tilde{\psi}+\Delta_J}{\sqrt{J}}
\]
Analogously, denote also
\[
\mu^{(J)} = \mu^*+\frac{\tilde{\mu}+\Delta_{J,1}}{\sqrt{J}}, \quad \tau_1^{(J)} = \tau_1^*+\frac{\tilde{\tau}_1+\Delta_{J,2}}{\sqrt{J}}, \quad \tau_0^{(J)} = \tau_0^*+\frac{\tilde{\tau}_0+\Delta_{J,3}}{\sqrt{J}}.
\]
As in \eqref{eq:varphi_tilda_and_not} we have
\[
\left \lvert \tilde{\varphi}(t \mid Y_{1:J},\psi) \right \rvert = \left \lvert \prod_{j = 1}^J\varphi\left(\frac{t}{\sqrt{J}} \mid Y_j, \psi\right) \right \rvert.
\]
Therefore, with the change of variables $u = t_1+t_2$ and $v = t_1$, we have
\[
\begin{aligned}
\int_{\left( (t_1+t_2)^2 \leq A, t_1^2 \leq B\right)^c} &\left \lvert \tilde{\varphi}(t \mid Y_{1:J}, \psi^{(J)})\right \rvert \, \d t \\
&= \int_{\left( u^2 \leq A, v^2 \leq B\right)^c}\prod_{j = 1}^J \left \lvert \varphi\left(\frac{(v, u-v)}{\sqrt{J}} \mid Y_j, \psi^{(J)}\right) \right \rvert \, \d u\d v
\end{aligned}
\] 
Moreover it is easy to see that
\[
\left\{(u, v) \mid u^2 \leq A \text{ and } v^2 \leq B \right\}^c \subset \left\{(u, v) \mid u^2 > A \right\} \cup \left\{(u, v) \mid u^2 \leq A \text{ and } v^2 > B \right\},
\] 
so that 
\begin{equation}\label{decomposition_char}
\begin{aligned}
 \int_{\left( u^2 \leq A, v^2 \leq B\right)^c}&\prod_{j = 1}^J \left \lvert \varphi\left(\frac{(v, u-v)}{\sqrt{J}} \mid Y_j, \psi^{(J)}\right) \right \rvert \, \d u\d v \leq  \int_{ u^2 > A}\prod_{j = 1}^J \left \lvert \varphi\left(\frac{(v, u-v)}{\sqrt{J}} \mid Y_j, \psi^{(J)}\right) \right \rvert \, \d u\d v\\
 &+\int_{ (u^2 \leq A, v^2 > B)}\prod_{j = 1}^J \left \lvert \varphi\left(\frac{(v, u-v)}{\sqrt{J}} \mid Y_j, \psi^{(J)}\right) \right \rvert \, \d u\d v.
\end{aligned}
\end{equation}
For every $\psi$, by Lemma \ref{single_characteristic} with
\[
\nu_j = \frac{m\tau_0}{m\tau_0+\tau_1}\mu+\frac{\tau_1}{m\tau_0+\tau_1}\bar{Y}_j, \quad \sigma^2 = \frac{1}{m\tau_0+\tau_1}
\] 
we have
\[
\prod_{j = 1}^J \left \lvert \varphi\left(\frac{(v, u-v)}{\sqrt{J}} \mid Y_j, \psi\right) \right \rvert \leq \frac{e^{-\frac{2\sigma^2\frac{1}{J}\sum_{j = 1}^J\left[u(\nu_j-\bar{Y}_j)-v(\mu-\bar{Y}_j) \right]^2}{1+4\sigma^4u^2}}}{\left[1+4u^2\sigma^4\right]^{J/4}}.
\]
Notice that
\[
\begin{aligned}
\frac{1}{J}&\sum_{j = 1}^J\left[u(\nu_j-\bar{Y}_j)-v(\mu-\bar{Y}_j) \right]^2 = \\
& = v^2\left[\frac{1}{J}\sum_{j = 1}^J(\mu-\bar{Y}_j)^2 \right]-2uv\left[\frac{1}{J}\sum_{j = 1}^J(\nu_j-\bar{Y}_j)(\mu-\bar{Y}_j)\right]+u^2\left[\frac{1}{J}\sum_{j = 1}^J(\nu_j-\bar{Y}_j)^2 \right]\\
&
\begin{aligned}
= \left[\frac{1}{J}\sum_{j = 1}^J(\mu-\bar{Y}_j)^2 \right]&\left[v-u\frac{\frac{1}{J}\sum_{j = 1}^J(\nu_j-\bar{Y}_j)(\mu-\bar{Y}_j)}{\frac{1}{J}\sum_{j = 1}^J(\mu-\bar{Y}_j)^2} \right]^2\\
&+u^2\left[ \frac{1}{J}\sum_{j = 1}^J(\nu_j-\bar{Y}_j)^2-\frac{\left\{\frac{1}{J}\sum_{j = 1}^J(\nu_j-\bar{Y}_j)(\mu-\bar{Y}_j)\right\}^2}{\frac{1}{J}\sum_{j = 1}^J(\mu-\bar{Y}_j)^2}\right].
\end{aligned}
\end{aligned}
\]
As regards the first element in \eqref{decomposition_char}, by integrating with respect to $v$ we get
\[
\begin{aligned}
\int_{u^2 > A}&\prod_{j = 1}^J \left \lvert \varphi\left(\frac{(v, u-v)}{\sqrt{J}} \mid Y_j, \psi^{(J)} \right) \right \rvert \, \d u\d v \leq \int_{u^2 > A}\frac{e^{-\frac{2\sigma_J^2\frac{1}{J}\sum_{j = 1}^J\left[u(\nu_j-\bar{Y}_j)-v(\mu^{(J)}-\bar{Y}_j) \right]^2}{1+4\sigma_J^4u^2}}}{\left[1+4u^2\sigma_J^4\right]^{J/4}} \, \d u\d v\\
&\leq \sqrt{\frac{\pi}{2\sigma_J^2\frac{1}{J}\sum_{j = 1}^J(\mu^{(J)}-\bar{Y}_j)^2}}\int_A^\infty\frac{e^{-\frac{2\sigma_J^2}{1+4\sigma_J^4u^2}u^2\left[ \frac{1}{J}\sum_{j = 1}^J(\nu_j-\bar{Y}_j)^2-\frac{\left\{\frac{1}{J}\sum_{j = 1}^J(\nu_j-\bar{Y}_j)(\mu_J-\bar{Y}_j)\right\}^2}{\frac{1}{J}\sum_{j = 1}^J(\mu^{(J)}-\bar{Y}_j)^2}\right]}}{{\left[1+4u^2\sigma_J^4\right]^{J/4-1/2}}} \, \d u,
\end{aligned}
\]
where
\[
\sigma_J^2 = \frac{1}{m\tau_0^{(J)}+\tau_1^{(J)}}, \quad \nu_j = \frac{m\tau_0^{(J)}}{m\tau_0^{(J)}+\tau_1^{(J)}}\mu^{(J)}+\frac{\tau_1^{(J)}}{m\tau_0^{(J)}+\tau_1^{(J)}}\bar{Y}_j.
\]
By the Law of Large Numbers we have
\[
\lim \inf \, \frac{1}{J}\sum_{j = 1}^J(\mu^{(J)}-\bar{Y}_j)^2 = \lim \inf \, \frac{1}{J}\sum_{j = 1}^J(\mu^*-\bar{Y}_j)^2 = c_1 > 0
\]
$Q_{\psi^*}^{(\infty)}$-almost surely and similarly
\[
\lim \inf \, \left\{\frac{1}{J}\sum_{j = 1}^J(\nu_j-\bar{Y}_j)^2-\frac{\left\{\frac{1}{J}\sum_{j = 1}^J(\nu_j-\bar{Y}_j)(\mu^{(J)}-\bar{Y}_j)\right\}^2}{\frac{1}{J}\sum_{j = 1}^J(\mu^{(J)}-\bar{Y}_j)^2}\right\} = c_2 > 0,
\]
by Cauchy-Schwartz inequality, $Q_{\psi^*}^{(\infty)}$-almost surely. Moreover, by Lemma \ref{convergence_delta}
\[
\sigma^2_J \in \left(\frac{1}{2}\frac{1}{m\tau_0^*+\tau_1^*}, \frac{2}{m\tau_0^*+\tau_1^*} \right) = (\sigma_1^2, \sigma_2^2)
\]
$Q_{\psi^*}^{(\infty)}$-almost surely, for $J$ high enough. Therefore
\[
\begin{aligned}
\lim_{A \to \infty} \lim_{B \to \infty} \limsup_{J\to\infty} \,& \int_{u^2 > A}\prod_{j = 1}^J \left \lvert \varphi\left(\frac{(v, u-v)}{\sqrt{J}} \mid Y_j, \psi^{(J)} \right) \right \rvert \, \d u\d v \\
&\leq \lim_{A \to \infty} \,\sqrt{\frac{\pi}{2\sigma_1^2c_1}}\int_A^\infty\frac{e^{-\frac{2c_2\sigma_1^2}{1+4\sigma_2^4u^2}u^2}}{{\left[1+4u^2\sigma_1^4\right]^{J/4-1/2}}} \, \d u = 0
\end{aligned}
\]
$Q_{\psi^*}^{(\infty)}$-almost surely. As regards the second addend in \eqref{decomposition_char} we get
\[
\begin{aligned}
\limsup_{J\to\infty} \, \int_{(u^2 \leq A, v^2 > B)}&\prod_{j = 1}^J \left \lvert \varphi\left(\frac{(v, u-v)}{\sqrt{J}} \mid Y_j, \psi^{(J)} \right) \right \rvert \, \d u\d v \\
&\leq \int_{(u^2 \leq A, v^2 > B)} e^{-\frac{2\sigma_1^2}{1+\sigma_2^4A^2}\left[v-u\frac{\frac{1}{J}\sum_{j = 1}^J(\nu_j-\bar{Y}_j)(\mu^{(J)}-\bar{Y}_j)}{\frac{1}{J}\sum_{j = 1}^J(\mu^{(J)}-\bar{Y}_j)^2} \right]^2} \, \d u \d v,
\end{aligned}
\]
$Q_{\psi^*}^{(\infty)}$-almost surely. Fix $A > 0$ and notice that for every $u$ we have
\[
\lim_{B \to \infty}\int_B^\infty e^{-\frac{2\sigma_1^2}{1+\sigma_2^4A^2}\left[v-u\frac{\frac{1}{J}\sum_{j = 1}^J(\nu_j-\bar{Y}_j)(\mu^{(J)}-\bar{Y}_j)}{\frac{1}{J}\sum_{j = 1}^J(\mu^{(J)}-\bar{Y}_j)^2} \right]^2} \, \d v = 0.
\]
Moreover
\[
\int_{u^2 \leq A} e^{-\frac{2\sigma_1^2}{1+\sigma_2^4A^2}\left[v-u\frac{\frac{1}{J}\sum_{j = 1}^J(\nu_j-\bar{Y}_j)(\mu-\bar{Y}_j)}{\frac{1}{J}\sum_{j = 1}^J(\mu-\bar{Y}_j)^2} \right]^2} \, \d u \d v < \infty,
\]
so that, by Dominated Convergence Theorem we get
\[
\lim_{B \to \infty} \int_{(u^2 \leq A, v^2 > B)} e^{-\frac{2\sigma_1^2}{1+\sigma_2^4A^2}\left[v-u\frac{\frac{1}{J}\sum_{j = 1}^J(\nu_j-\bar{Y}_j)(\mu^{(J)}-\bar{Y}_j)}{\frac{1}{J}\sum_{j = 1}^J(\mu^{(J)}-\bar{Y}_j)^2} \right]^2} \, \d u \d v = 0,
\]
for every $A > 0$ and the result follows.
\end{proof}
\begin{proof}[Proof of Proposition \ref{prop_extended_normal_model}]
The result follows by arguments similar to the proof of Theorem \ref{theorem_one_level_nested}, that we briefly summarize. Since by construction
\[
\L\left(\d\psi \mid \bm{\theta}, Y_{1:J} \right) = \L\left(\d\psi \mid \bm{T}, Y_{1:J} \right)
\]
a direct analogue of Lemma \ref{sufficient_lemma} holds. Moreover, by Lemmas \ref{asymptotic_distribution_psi_extend} and \ref{convergence_normal}, we can use Lemma \ref{lemma_marginal_conditional} to prove that $\L\left(\d\tilde{\bm{T}}, \d\tilde{\psi} \mid Y_{1:J}\right)$, as in \eqref{psi_normal}, converges to a Gaussian vector with non singular covariance matrix. Finally, Lemma \ref{positiveDefinite_Gibbs} holds for $P$, being a two-block Gibbs sampler. Therefore the Gibbs sampler on the limit Gaussian target has a strictly positive spectral gap: thus the result follows by Corollary \ref{mixingCorollary}.
\end{proof}

\subsection{Proof of Corollary \ref{extended_spectral_normal}}
Let $\phi = (\tau_1, \tau_0)$ and define
\[
\mathcal{I}(\phi^*) =
\begin{bmatrix}
 \frac{m^2(\tau_0^*)^2}{2(\tau_1^*)^2(\tau_1^*+m\tau_0^*)^2}  & \frac{m}{2(\tau_1^*+m\tau_0^*)^2}  \\
 \frac{m}{2(\tau_1^*+m\tau_0^*)^2} & \frac{m-1}{2(\tau_0^*)^2}+\frac{(\tau_1^*)^2}{2(\tau_0^*)^2(\tau_1^*+m\tau_0^*)^2}
\end{bmatrix}, \quad C(\phi^*) = 
\begin{bmatrix}
\frac{1}{(m\tau_0^*+\tau_1^*)^2} & \frac{m}{(m\tau_0^*+\tau_1^*)^2}\\
\frac{1}{(m\tau_0^*+\tau_1^*)^2} & \frac{m}{(m\tau_0^*+\tau_1^*)^2}
\end{bmatrix}
\]
and 
\[
V(\phi^*) = 
\begin{bmatrix}
\frac{2}{(m \tau^*_0+\tau^*_1)^2}+4\frac{m\tau_0^*(\tau_1^*)^{-1}}{(m\tau^*_0+\tau^*_1)^2} & -\frac{2}{(m \tau^*_0+\tau^*_1)^2}\\
 -\frac{2}{(m \tau^*_0+\tau^*_1)^2} & \frac{2}{(m \tau^*_0+\tau^*_1)^2}+4\frac{\tau_1^* (m\tau_0^*)^{-1}}{(m\tau^*_0+\tau^*_1)^2}
\end{bmatrix}.
\]
We have a preliminary Lemma.
\begin{lemma}\label{rate_convergence_normal}
Consider the setting of Proposition \ref{prop_extended_normal_model}.  Then we have
\[
\gamma(\psi^*) = \min \left\{\frac{1}{1+\lambda_i} \, ; \, \lambda_i \text{ eigenvalue of } V^{-1}\left( \phi^* \right)C(\phi^*)\mathcal{I}^{-1}(\phi^*)C^\top(\phi^*) \right\}.
\]
\end{lemma}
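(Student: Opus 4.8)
The plan is to reduce the statement, as in the proof of Theorem~\ref{theorem_one_level_nested}, to a spectral-gap computation for a two-block Gibbs sampler on a multivariate Gaussian, and then to exploit a decoupling of the $\mu$ coordinate that is specific to model~\eqref{extended_normal_model}. First I would recall that, by the analogue of Lemma~\ref{sufficient_lemma} valid for \eqref{extended_normal_model} (as discussed in Section~\ref{sec:lik_pars}), the mixing times of $P$ coincide with those of the two-block Gibbs sampler $\hat{P}_J$ targeting $\hat{\pi}_J=\L(\d\bm{T},\d\mu,\d\tau_1,\d\tau_0\mid Y_{1:J})$ that alternates updates from $\L(\d\bm{T},\d\mu\mid\tau_1,\tau_0,Y_{1:J})$ and $\L(\d\tau_1,\d\tau_0\mid\bm{T},\mu,Y_{1:J})$, with $\bm{T}=\left(\sum_j(\theta_j-\bar{Y}_j)^2,\ \sum_j(\theta_j-\mu)^2\right)$. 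By Lemmas~\ref{asymptotic_distribution_psi_extend} and~\ref{convergence_normal} combined with Lemma~\ref{lemma_marginal_conditional}, the rescaled $\hat{\pi}_J$ (under \eqref{psi_normal} and \eqref{tilde_T_normal}) satisfies assumption~(A1) with limiting distribution $\tilde{\pi}=N(\bm{0},\Sigma)$ on $\R^{S+D}$, $S=2$, $D=3$, where $\Sigma$ has the block form of \eqref{eq:Sigma_def} assembled from $\mathcal{I}(\psi^*)$ as in \eqref{eq:fisher_normal_hier} and $C(\psi^*),V(\psi^*)$ as in \eqref{defin_C_V_normal}; in particular $\gamma(\psi^*)$ in \eqref{eq:bound_normal} is, via Corollary~\ref{mixing_gap}, the spectral gap $\text{Gap}(\tilde{P})$ of the limiting Gibbs sampler $\tilde{P}$.

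The key structural fact I would isolate is that $\mathcal{I}(\psi^*)$ in \eqref{eq:fisher_normal_hier} is block-diagonal between $\mu$ and $\phi=(\tau_1,\tau_0)$, while the column of $C(\psi^*)$ in \eqref{defin_C_V_normal} corresponding to $\mu$ vanishes. Ordering the coordinates of $\tilde{\pi}$ as $(\tilde{\bm{T}},\tilde{\mu},\tilde{\phi})$ and inspecting the block form of $\Sigma$, this forces $\text{Cov}_{\tilde{\pi}}(\tilde{\mu},\tilde{\bm{T}})=\bm{0}$ and $\text{Cov}_{\tilde{\pi}}(\tilde{\mu},\tilde{\phi})=\bm{0}$; hence $\tilde{\mu}$ is independent of $(\tilde{\bm{T}},\tilde{\phi})$ under $\tilde{\pi}$, and $(\tilde{\bm{T}},\tilde{\phi})\sim N(\bm{0},\Sigma')$ where $\Sigma'$ is the $(S+2)$-dimensional block matrix built from $C(\phi^*)$, $V(\phi^*)$ and $\mathcal{I}(\phi^*)$.

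Since $\tilde{\mu}$ is grouped with $\tilde{\bm{T}}$ in the first block and is independent of everything else, the first-block update of $\tilde{P}$ resamples $\tilde{\mu}$ i.i.d.\ from its marginal and $\tilde{\bm{T}}$ from $\tilde{\pi}(\d\tilde{\bm{T}}\mid\tilde{\phi})$, while the second-block update $\tilde{\pi}(\d\tilde{\phi}\mid\tilde{\bm{T}},\tilde{\mu})=\tilde{\pi}(\d\tilde{\phi}\mid\tilde{\bm{T}})$ ignores $\tilde{\mu}$. Thus $\tilde{P}$ is the product of the one-step-mixing chain on $\tilde{\mu}$ and the two-block Gibbs sampler $\tilde{P}'$ on $N(\bm{0},\Sigma')$, and by the product-chain identity for spectral gaps (Lemma~2 in \cite{PZ20}) one has $\text{Gap}(\tilde{P})=\min\{1,\text{Gap}(\tilde{P}')\}=\text{Gap}(\tilde{P}')$. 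Running the computation in the proof of Corollary~\ref{spectral_radius} — which uses only the block structure of the limiting Gaussian — for $\tilde{P}'$ then gives
\[
\text{Gap}(\tilde{P})=\min\left\{\frac{1}{1+\lambda_i}\ :\ \lambda_i\text{ eigenvalue of }V^{-1}(\phi^*)C(\phi^*)\mathcal{I}^{-1}(\phi^*)C^\top(\phi^*)\right\},
\]
which is the claim.

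I expect the main obstacle to be the decoupling step in its interaction with the non-standard blocking of $P$, in which $\mu$ travels with $\bm{\theta}$ rather than with the precisions: one must verify that every limiting covariance between $\tilde{\mu}$ and the other coordinates vanishes — which rests precisely on the block-diagonality of \eqref{eq:fisher_normal_hier} and the zero $\mu$-column of $C(\psi^*)$ in \eqref{defin_C_V_normal} — and then that the induced product structure genuinely allows the two-block formula of Corollary~\ref{spectral_radius} to be applied verbatim to $\tilde{P}'$. A secondary, purely bookkeeping point is that here $\bm{T}$ depends on $\mu$, so the dimension-reduction step must be invoked in the form tailored to \eqref{extended_normal_model}, as already noted in Section~\ref{sec:lik_pars}, rather than directly through Lemma~\ref{sufficient_lemma}.
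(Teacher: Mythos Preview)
Your proposal is correct and follows essentially the same route as the paper's proof: reduce via the sufficient-statistic lemma to the fixed-dimensional chain, identify the limiting Gaussian through Lemmas~\ref{asymptotic_distribution_psi_extend} and~\ref{convergence_normal}, observe from the block-diagonality of $\mathcal{I}(\psi^*)$ in \eqref{eq:fisher_normal_hier} and the vanishing $\mu$-column of $C(\psi^*)$ in \eqref{defin_C_V_normal} that $\tilde{\mu}$ decouples, and then apply the argument of Corollary~\ref{spectral_radius} to the remaining $(\tilde{\bm{T}},\tilde{\phi})$ block. Your write-up is somewhat more explicit than the paper's (which simply states the independence and invokes Corollary~\ref{spectral_radius}), in particular in justifying the decoupling via the product-chain spectral-gap identity, but the substance is identical.
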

\begin{proof}
With the same reasoning of Corollary \ref{spectral_radius}, $\gamma(\psi^*)$ is the spectral gap on the limiting Gaussian distribution of $\left(\tilde{\psi}, \tilde{\bm{T}} \right)$, given by by Lemmas \ref{asymptotic_distribution_psi_extend} and \ref{convergence_normal}. By inspecting $\Fisher(\psi^*)$ in \eqref{eq:fisher_normal_hier} and $C(\psi^*)$ in \eqref{defin_C_V_normal},  we have that $\tilde{\mu}$ is asymptotically independent from everything else, therefore it suffices to study the Gibbs sampler that alternates updates of $(\tilde{\tau}_1, \tilde{\tau}_0)$ and $\tilde{\bm{T}}$. Then the result follows by the same arguments of Corollary \ref{spectral_radius}.
\end{proof}
\begin{proof}[Proof of Corollary \ref{extended_spectral_normal}]
By Lemma \ref{rate_convergence_normal} we have to study the eigenvalues of 
\begin{equation}\label{eigenvalues}
V^{-1}\left( \phi^* \right)C(\phi^*)\mathcal{I}^{-1}(\phi^*)C^\top(\phi^*).
\end{equation}
Notice that
\[
\mathcal{I}(\phi^*) = \frac{1}{(m\tau_0^*+\tau_1^*)^2}
\begin{bmatrix}
\frac{m^2(\tau_0^*)^2}{2(\tau_1^*)^2}  & \frac{m}{2}\\
\frac{m}{2} & \frac{(m-1)(m\tau_0^*+\tau_1^*)^2+(\tau_1^*)^2}{2(\tau_0^*)^2}
\end{bmatrix}, \quad
C(\phi^*) = \frac{1}{(m\tau_0^*+\tau_1^*)^2}
\begin{bmatrix}
1 & m\\
1& m
\end{bmatrix}
\]
and
\[
V(\phi^*) = \frac{1}{(m\tau_0^*+\tau_1^*)^2}
\begin{bmatrix}
2+4\frac{m\tau_0^*}{\tau_1^*} & -2\\
-2 & 2+4\frac{\tau_1^*}{m\tau_0^*}
\end{bmatrix}
\]
Notice that
\[
\begin{aligned}
\left((m\tau_0^*+\tau_1^*)^2V(\phi^*) \right)^{-1}& = \frac{m\tau_0^*\tau_1^*}{8(m\tau_0^*+\tau_1^*)^2}
\begin{bmatrix}
2+4\frac{\tau_1^*}{m\tau_0^*} & 2\\
2 & 2+4\frac{m\tau_0^*}{\tau_1^*}
\end{bmatrix}\\
& = \frac{1}{4(m\tau_0^*+\tau_1^*)^2}
\begin{bmatrix}
m\tau_0^*\tau_1^*+2(\tau_1^*)^2 & m\tau_0^*\tau_1^*\\
 m\tau_0^*\tau_1^* & m\tau_0^*\tau_1^*+2(m\tau_0^*)^2
\end{bmatrix}
\end{aligned}
\]
and
\[
\begin{aligned}
\left((m\tau_0^*+\tau_1^*)^2\mathcal{I}(\phi^*) \right)^{-1}& = \frac{2(\tau_1^*)^2}{m^2(m-1)(m\tau_0^*+\tau_1^*)^2}
\begin{bmatrix}
\frac{(m-1)(m\tau_0^*+\tau_1^*)^2+(\tau_1^*)^2}{(\tau_0^*)^2} & -m\\
-m & \frac{(m\tau_0^*)^2}{(\tau_1^*)^2} 
\end{bmatrix}
\end{aligned}
\]
Therefore
\[
\begin{aligned}
 \frac{m^2(m-1)(m\tau_0^*+\tau_1^*)^4}{2(\tau_1^*)^2}&C(\phi^*)\mathcal{I}^{-1}(\phi^*)C^\top(\phi^*) =
\begin{bmatrix}
-m^2+\frac{(m-1)(m\tau_0^*+\tau_1^*)^2+(\tau_1^*)^2}{(\tau_0^*)^2} & \frac{m^3(\tau_0^*)^2}{(\tau_1^*)^2}-m \\
-m^2+\frac{(m-1)(m\tau_0^*+\tau_1^*)^2+(\tau_1^*)^2}{(\tau_0^*)^2} & \frac{m^3(\tau_0^*)^2}{(\tau_1^*)^2}-m 
\end{bmatrix}
\begin{bmatrix}
1 & 1\\
m & m
\end{bmatrix}\\
& = \left(\frac{m^4(\tau_0^*)^2}{(\tau_1^*)^2}-2m^2+\frac{(m-1)(m\tau_0^*+\tau_1^*)^2+(\tau_1^*)^2}{(\tau_0^*)^2} \right)
\begin{bmatrix}
1 & 1\\
1 & 1
\end{bmatrix}\\
& = \left(\frac{m^4(\tau_0^*)^4-2m^2(\tau_0^*)^2(\tau_1^*)^2+(m-1)(\tau_1^*)^2(m\tau_0^*+\tau_1^*)^2+(\tau_1^*)^4}{(\tau_0^*)^2(\tau_1^*)^2} \right)
\begin{bmatrix}
1 & 1\\
1 & 1
\end{bmatrix}
\end{aligned}
\]
and
\[
\begin{aligned}
V^{-1}\left( \phi^* \right)C(\phi^*)\mathcal{I}^{-1}(\phi^*)C^\top(\phi^*) =& \left(\frac{m^4(\tau_0^*)^4-2m^2(\tau_0^*)^2(\tau_1^*)^2+(m-1)(\tau_1^*)^2(m\tau_0^*+\tau_1^*)^2+(\tau_1^*)^4}{2m^2(m-1)(\tau_0^*)^2(m\tau_0^*+\tau_1^*)^4} \right)\\
& \begin{bmatrix}
2m\tau_0^*\tau_1^*+2(\tau_1^*)^2 & 2m\tau_0^*\tau_1^*+2(\tau_1^*)^2\\
2m\tau_0^*\tau_1^*+2(m\tau_0^*)^2 & 2m\tau_0^*\tau_1^*+2(m\tau_0^*)^2
\end{bmatrix}
\end{aligned}
\]
Notice that the matrix on the right hand side admits $0$ as an eigenvalue, so that the highest eigenvalue in absolute value is given by its trace, that is
\[
4m\tau_0^*\tau_1^*+2(\tau_1^*)^2+2(m\tau_0^*)^2 = 2(m\tau_0^*+\tau_1^*)^2,
\]
so that the highest eigenvalue of \eqref{eigenvalues} is given by
\[
\frac{m^4(\tau_0^*)^4-2m^2(\tau_0^*)^2(\tau_1^*)^2+(m-1)(\tau_1^*)^2(m\tau_0^*+\tau_1^*)^2+(\tau_1^*)^4}{m^2(m-1)(\tau_0^*)^2(m\tau_0^*+\tau_1^*)^2}.
\]
The result follows by noticing
\[
\begin{aligned}
m^4(\tau_0^*)^4&-2m^2(\tau_0^*)^2(\tau_1^*)^2+(\tau_1^*)^4 = \left[m^2(\tau_0^*)^2-(\tau_1^*)^2 \right]^2\\
& = (m\tau_0^*-\tau_1^*)^2(m\tau_0^*+\tau_1^*)^2.
\end{aligned}
\]
\end{proof}

\subsection{Proof of Lemma \ref{sufficient_lemma_GP}}
\begin{proof}
The proof follows the same lines of Lemma \ref{sufficient_lemma}, that we briefly summarize. Since 
\begin{equation}\label{suff_GP}
\L\left(\d\theta, \d \tau_\beta \mid \betag,  Y^{(n)}) = \L(\d\theta, \d \tau_\beta \mid \bm{T}(\betag),  Y^{(n)}\right)
\end{equation}
holds by definition of $\bm{T}$, reasoning  as in \eqref{key_equality} we can conclude
\[
\begin{aligned}
\L\biggl(\d\bm{T}^{(t)}, \d\theta^{(t)}, \d \tau_\beta^{(t)} \mid & \bm{T}^{(t-1)}, \theta^{(t-1)}, \tau_\beta^{(t-1)} \biggr)\\
& = \hat{\pi}_n\left(\d\bm{T}^{(t)} \mid \theta^{(t-1)}, \tau_\beta^{(t-1)} \right)\hat{\pi}_n\left(\d\theta^{(t)}, \d \tau_\beta^{(t)}\mid \bm{T}^{(t)} \right),
\end{aligned}
\]
which proves that the transition kernel of the induced chain $\left(\bT^{(t)}, \theta^{(t)},  \tau_\beta^{(t)} \right)_{t \geq 1}$ coincides with $\hat{P}_n$. The second part of the Lemma follows by the same reasoning used in \eqref{equality_suff}.
\end{proof}

\subsection{Proof of Corollary \ref{corollary_GP}}
\begin{proof}
By Lemma \ref{sufficient_lemma_GP} we have
\[
t^{(n)}_{mix}(\epsilon, M) = \sup_{\nu \in \sN \left(\hat{\pi}_n, M \right)} \hat{t}^{(n)}_{mix}(\epsilon, \nu).
\]
The result then follows by Corollary \ref{mixingCorollary}, whose conditions hold by assumption.
\end{proof}

\subsection{Proof of Corollary \ref{corollary_GP_ext}}
\begin{proof}
It is easy to show that an analogue of Lemma \ref{sufficient_lemma_GP} holds, with $\psi = (\theta, \tau_\beta, \tau_\epsilon)$ and $\bm{T} = \left(T_{\theta}, T_{\tau_\beta}, T_{\tau_\epsilon}  \right)$. Thus the result follows with the same reasoning of Corollary \ref{corollary_GP}.
\end{proof}

\subsection{Proof of Theorem \ref{theorem_feasible}}
Denote with $\tilde{\mu}_J$ the push-forward measure of $\mu_J$ according to transformations \eqref{tilde_psi} and \eqref{tilde_T}. The next theorem shows that the rescaled version of $\mu_J$ is a warm start for the limiting distribution in Proposition \ref{limiting_sigma}.
\begin{lemma}\label{lemma_feasible}
Let $\mu_J \in \mathcal{P}\left(\R^{lJ+D} \right)$ be as in \eqref{def_feasible}. Then under assumptions $(B1)-(B3)$ there exists a positive constant $M = M(c)$ such that
\[
Q_{\psi^*}^{(J)}\biggl(\tilde{\mu}_J \in \sN\left(N(\bm{0}, \Sigma) ,M\right) \biggr) \quad \to \quad 1,
\]
as $J \to \infty$, with $\Sigma$ as in Proposition \ref{limiting_sigma}.
\end{lemma}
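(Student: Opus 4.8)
The plan is to show that the rescaled feasible start $\tilde{\mu}_J$ has a density with respect to the limiting Gaussian $N(\bm 0,\Sigma)$ that is bounded by a deterministic constant $M=M(c)$, with $Q_{\psi^*}^{(J)}$-probability tending to $1$. The key structural observation is that the feasible start in \eqref{def_feasible} factorises as $\mu_J(\d\psi,\d\bm\theta)=\mathrm{Unif}(\hat\psi_J,c/\sqrt J)(\d\psi)\prod_{j=1}^J p(\theta_j\mid Y_j,\psi)\,\d\bm\theta$, i.e.\ the global block is a shrinking uniform around the MLE $\hat\psi_J$ and the local block is \emph{exactly} the posterior conditional $\L(\d\bm\theta\mid Y_{1:J},\psi)$. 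Since the two-block Gibbs sampler does not depend on the starting value of the first block updated (here $\bm\theta$) — this is precisely the point of the Remark after Theorem \ref{theorem_feasible} — the relevant object for mixing is really the marginal of $\mu_J$ on $\psi$, namely $\mathrm{Unif}(\hat\psi_J,c/\sqrt J)$, together with the fact that, conditionally on $\psi$, $\mu_J$ already matches the posterior.

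First I would use Lemma \ref{sufficient_lemma} to pass to the $(\bT,\psi)$ chain: the push-forward of $\mu_J$ under $(\bm\theta,\psi)\mapsto(\bT(\bm\theta),\psi)$, call it $\hat\mu_J$, has $\hat\mu_J(\d\bT,\d\psi)=\mathrm{Unif}(\hat\psi_J,c/\sqrt J)(\d\psi)\,\hat\pi_J(\d\bT\mid\psi)$, because the local block of $\mu_J$ is the exact posterior on $\bm\theta$ and $\bT$ is a function of $\bm\theta$. So $\frac{\d\hat\mu_J}{\d\hat\pi_J}(\bT,\psi)=\frac{\d\,\mathrm{Unif}(\hat\psi_J,c/\sqrt J)}{\d\hat\pi_J(\cdot)}(\psi)$, the Radon–Nikodym derivative of the uniform marginal w.r.t.\ the marginal posterior $\hat\pi_J(\d\psi)=\L(\d\psi\mid Y_{1:J})$. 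Then I would apply the coordinate-wise rescaling $\phi_J$ of \eqref{tilde_psi}–\eqref{tilde_T}, which is injective and acts coordinate-wise, so by the same argument as in Lemma \ref{equivalent_statement} it preserves warm-start constants: $\tilde\mu_J\in\sN(\tilde\pi_J,M)$ iff $\hat\mu_J\in\sN(\hat\pi_J,M)$, where $\tilde\pi_J$ is the law of $(\tilde\bT,\tilde\psi)$. Since the $\psi$-rescaling is linear, $\mathrm{Unif}(\hat\psi_J,c/\sqrt J)$ becomes $\mathrm{Unif}(\sqrt J(\hat\psi_J-\psi^*)-\Delta_J,\,c)$, a uniform on a ball of fixed radius $c$. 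Thus it remains to bound $\sup_\psi \frac{\d\,\mathrm{Unif}(\sqrt J(\hat\psi_J-\psi^*)-\Delta_J,\,c)}{\d\,\tilde\pi_{J,\psi}}$, where $\tilde\pi_{J,\psi}$ is the marginal law of $\tilde\psi$.

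The control of this supremum proceeds in three steps. (i) By Bernstein–von Mises in the form of Lemma \ref{asymptotic_distribution_psi}, $\tilde\pi_{J,\psi}\to N(\bm 0,\Fisher^{-1}(\psi^*))$ in total variation in $Q_{\psi^*}^{(J)}$-probability; in particular, on a ball of fixed radius $c$ the density $\tilde\pi_{J,\psi}$ is eventually bounded below by a positive constant (this needs a short argument upgrading TV convergence to a uniform lower bound on a compact set — one can combine TV convergence with the fact that the Gaussian limit density is bounded below on $B_c$, or invoke a local-uniform version of BvM; alternatively use that $Q^{(J)}$-eventually $\tilde\pi_{J,\psi}(B_c)\geq 1-\delta$ and argue on mass). (ii) The numerator is the uniform density $\frac{1}{\mathrm{vol}(B_c)}$ on a ball of center $\zeta_J:=\sqrt J(\hat\psi_J-\psi^*)-\Delta_J$; the asymptotic normality of the MLE $\hat\psi_J$ — which holds under $(B1)$–$(B3)$ since these are exactly the hypotheses of Theorem \ref{BvM} applied to the marginal likelihood $g(\cdot\mid\psi)$, and the MLE and posterior have the same first-order asymptotics — gives that $\zeta_J=\sqrt J(\hat\psi_J-\psi^*)-\Delta_J$ is $\sO_P(1)$. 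Hence $Q^{(J)}$-eventually, with probability $\to 1$, the support ball $B(\zeta_J,c)$ is contained in a fixed compact set $K$ on which $\tilde\pi_{J,\psi}$ is uniformly bounded below by some $b=b(c)>0$. (iii) Combining, $\frac{\d\,\mathrm{Unif}(\zeta_J,c)}{\d\tilde\pi_{J,\psi}}\leq \frac{1}{b(c)\,\mathrm{vol}(B_c)}=:M(c)$ on all of $\R^D$, with probability $\to 1$. Finally, to conclude $\tilde\mu_J\in\sN(N(\bm 0,\Sigma),M')$ for the \emph{limiting} $\Sigma$ of Proposition \ref{limiting_sigma} (rather than $\tilde\pi_J$), I would use Lemma \ref{constructive_lemma} together with the TV convergence $\|\tilde\pi_J-N(\bm 0,\Sigma)\|_{TV}\to 0$ from Proposition \ref{limiting_sigma}: a member of $\sN(\tilde\pi_J,M)$ is close to a member of $\sN(N(\bm 0,\Sigma),M)$, and a small further inflation of $M$ absorbs the discrepancy (alternatively, directly bound the density of $\tilde\mu_J$ w.r.t.\ $N(\bm 0,\Sigma)$ by noting the $\bT$-conditional of $\hat\mu_J$ is $\hat\pi_J(\cdot\mid\psi)$ whose rescaled version converges to the Gaussian conditional by Lemma \ref{asymptotic_distribution_T}).

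The main obstacle I anticipate is step (i): upgrading the total-variation Bernstein–von Mises convergence of Lemma \ref{asymptotic_distribution_psi} to a \emph{uniform lower bound} on the density $\tilde\pi_{J,\psi}$ over a fixed compact ball, which TV convergence alone does not immediately deliver (TV controls integrated discrepancy, not pointwise density ratios). Handling this cleanly will likely require either invoking a slightly stronger local form of BvM (convergence of densities on compacts, which holds under the stated smoothness of $\sqrt{g(y\mid\psi)}$), or an argument via Scheffé-type reasoning plus a tightness/uniform-integrability bound on $\tilde\pi_{J,\psi}$ restricted to $B_c$. A secondary, more routine point is establishing $\sqrt J(\hat\psi_J-\psi^*)-\Delta_J=\sO_P(1)$, which follows from standard M-estimation asymptotics under $(B1)$–$(B3)$ (indeed $\hat\psi_J$ is asymptotically linear with the same influence function as the posterior mean, so $\sqrt J(\hat\psi_J-\psi^*)$ and $\Delta_J$ have the same Gaussian limit and their difference is $\sO_P(1)$).
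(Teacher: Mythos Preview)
Your detour through $\sN(\tilde\pi_J,M)$ creates the obstacle you flag in step~(i) and is unnecessary. The paper bounds the rescaled uniform marginal $\mathrm{Unif}(\zeta_J,c)$ directly against the \emph{limiting} Gaussian marginal $N(\bm 0,\Fisher^{-1}(\psi^*))$ (the $\tilde\psi$-block of $N(\bm 0,\Sigma)$), whose density is explicit and bounded below on any fixed compact ball; no lower bound on the finite-$J$ posterior density $\tilde\pi_{J,\psi}$ is ever needed, and $M$ is explicit. The paper also invokes the asymptotic \emph{efficiency} of the MLE (Theorem~5.39 in \cite{V00}) to get $\zeta_J=\sqrt J(\hat\psi_J-\psi^*)-\Delta_J\to 0$ in $Q_{\psi^*}^{(J)}$-probability, stronger than the $\sO_P(1)$ you use; this places the support of the uniform inside the fixed ball $B_{c+1}(\bm 0)$ with probability tending to $1$, after which $M=\max_{\x\in B_{c+1}(\bm 0)}\bigl[\mathrm{vol}(B_c)\,N(\x\mid\bm 0,\Fisher^{-1}(\psi^*))\bigr]^{-1}$.

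Your step~6 contains a genuine error. Lemma~\ref{constructive_lemma} does not let you conclude $\tilde\mu_J\in\sN(N(\bm 0,\Sigma),M')$ from $\tilde\mu_J\in\sN(\tilde\pi_J,M)$: it only produces \emph{some other} measure in $\sN(N(\bm 0,\Sigma),M)$ that is TV-close to $\tilde\mu_J$. ``A small inflation of $M$'' cannot absorb a TV discrepancy, because warm-start membership is a pointwise density-ratio condition and TV closeness does not control density ratios (this is the same phenomenon as your obstacle~(i), in reverse). Your alternative via Lemma~\ref{asymptotic_distribution_T} has the same defect. More broadly, the passage from the marginal bound on $\tilde\psi$ to the full joint claim $\tilde\mu_J\in\sN(N(\bm 0,\Sigma),M)$ is delicate, since the $\tilde\bT$-conditional of $\tilde\mu_J$ is $\tilde\pi_J(\cdot\mid\tilde\psi)$ rather than the Gaussian conditional. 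What actually suffices for Theorem~\ref{theorem_feasible} is the marginal warm start $\tilde\mu_J^{(-1)}\in\sN(N(\bm 0,\Fisher^{-1}(\psi^*)),M)$ together with your observation in step~2 that $\tilde\mu_J$ already carries the correct posterior conditional $\tilde\pi_J(\cdot\mid\tilde\psi)$; one then builds $\tilde\nu_J\in\sN(\tilde\pi_J,M)$ with this same conditional via Lemma~\ref{constructive_lemma} applied to the $\tilde\psi$-marginals, and the proof of Theorem~\ref{theorem_feasible} proceeds.
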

\begin{proof}
According to transformations \eqref{tilde_psi}, we have
\[
\tilde{\mu}_J^{(-1)} = \text{Unif} \left(\sqrt{J}\left( \hat{\psi}_J-\psi^*\right)-\Delta_J, c \right).
\]
Denote with $B_r(\x)$ the closed ball of radius $r > 0$ and center $\x \in \mathbb{R}^D$. By Theorem $5.39$ in \cite{V00} it holds
\begin{equation}\label{efficient}
Q_{\psi^*}^{(J)} \left(\left(\sqrt{J}\left( \hat{\psi}_J-\psi^*\right)-\Delta_J\right) \in B_1(\textbf{0})\right) \quad \to \quad 1,
\end{equation}
as $J \to \infty$. Define now
\begin{equation}\label{M_definition}
M = \max_{\x \in B_{c+1}(\textbf{0}) } \, \frac{\text{Vol}\left(B_{c+1}(\textbf{0})\right)}{N(\x \mid \textbf{0}, \Sigma_D)},
\end{equation}
where Vol$(A)$ is the volume of set $A$ and $N(\textbf{0}, \Sigma_D)$ is the marginal distribution of $N(\textbf{0}, \Sigma)$ over the last $D$ components. It is easy to see that $M < \infty$ and it does not depend on $J$. Therefore, by \eqref{efficient}, we conclude
\[
\begin{aligned}
Q_{\psi^*}^{(J)}\biggl(\tilde{\mu}_J \in \sN\left(N(\bm{0}, \Sigma) ,M\right) \biggr) &\leq Q_{\psi^*}^{(J)} \left(\max_{\x \in B_{c+1}(\textbf{0}) } \, \frac{\d \tilde{\mu}_J^{(-1)}}{\d N(\textbf{0},\Sigma_D) }(\x) \leq M \right)\\
& \leq Q_{\psi^*}^{(J)} \left(\left(\sqrt{J}\left( \hat{\psi}_J-\psi^*\right)-\Delta_J\right) \in B_1(\textbf{0})\right) \quad \to \quad 1,
\end{aligned}
\]
as $J \to \infty$.
\end{proof}
\begin{proof}[Proof of Theorem \ref{theorem_feasible}]
Let $\mu_J \in \mathcal{P}\left(\R^{lJ+D} \right)$ be as in \eqref{def_feasible}. Thus, by Lemma \ref{lemma_feasible} the event $\left\{\tilde{\mu}_J \in \sN\left(\tilde{\pi} ,M\right)\right\}$ with $M$ as in \eqref{M_definition} holds with probability converging to $1$, with respect to the law $Q_{\psi^*}^{(J)}$. 
Then, by Lemma \ref{constructive_lemma}, there exists $\tilde{\nu}_J \in \sN(\tilde{\pi}_J, M)$ such that
\[
\lTV \tilde{\nu}_J - \tilde{\mu}_J \rTV \leq M\lTV \tilde{\pi}_J - \tilde{\pi} \rTV.
\]
Therefore, by the above facts, the triangle inequality and Lemma \ref{sufficient_lemma} we have
\[
\begin{aligned}
\lTV \mu_JP^t_J - \pi_J \rTV &= \lTV \tilde{\mu}_J\tilde{P}^t_J - \tilde{\pi}_J \rTV \\
&\leq \lTV \tilde{\mu}_J\tilde{P}^t_J - \tilde{\nu}_J\tilde{P}^t_J \rTV+\lTV \tilde{\nu}_J\tilde{P}^t_J - \tilde{\pi}_J \rTV\\
&\leq \lTV \tilde{\mu}_J - \tilde{\nu}_J \rTV+\lTV \tilde{\nu}_J\tilde{P}^t_J - \tilde{\pi}_J \rTV\\
& \leq M\lTV \tilde{\pi}_J - \tilde{\pi} \rTV+\sup_{\tilde{\nu}_J\in\sN(\tilde{\pi}_J, M)}\lTV \tilde{\nu}_J\tilde{P}^t_J - \tilde{\pi}_J \rTV\\
& = M\lTV \tilde{\pi}_J - \tilde{\pi} \rTV+\sup_{\nu_J\in\sN(\pi_J, M)}\lTV \nu_JP^t_J - \pi_J \rTV.
\end{aligned}
\]
Thus the result follows by Theorem \ref{theorem_one_level_nested}.
\end{proof}

\end{appendices}

\end{document}